\newif\ifConferenceVersion
\newif\ifLLNCSVersion
\newif\ifPublisherVersion
\newif\ifFullVersion
\newif\ifAnonymous
\newif\ifDraftNotes

\ConferenceVersionfalse
\Anonymousfalse
\DraftNotesfalse

\ifdefined\forceeprint
	\ConferenceVersionfalse
\fi
\ifdefined\forceconference
	\ConferenceVersiontrue
\fi
\ifdefined\forceauthor
	\Anonymousfalse
\fi
\ifdefined\forceanonymous
	\Anonymoustrue
\fi
\ifdefined\forcenotesoff
	\DraftNotesfalse
\fi
\ifdefined\forcenoteson
	\DraftNotestrue
\fi

\ifConferenceVersion
	\LLNCSVersiontrue
	\FullVersionfalse
\else
	\LLNCSVersionfalse
	\FullVersiontrue
\fi

\ifConferenceVersion
	\documentclass[english,runningheads,envcountsect]{llncs}
\else
	\documentclass[11pt,letterpaper,english,twoside]{article}
\fi

\NewExpandableDocumentCommand{\papertitle}{}{%
    Simple Extremely Lossy Functions from Small-Exponent Hashing%
}

\usepackage{iftex}
\ifPDFTeX
    \usepackage[T1]{fontenc}
    \usepackage[utf8]{inputenc}
\fi

\usepackage[english]{babel}
\usepackage{bm}
\usepackage{paralist}
\usepackage{codelines}
\ifLLNCSVersion\else
    \usepackage[
        framemethod = TikZ,
        ntheorem,
        skipabove   = 10pt,
        skipbelow   = 10pt
    ]{mdframed}
\fi
\usepackage{setspace}
\usepackage{xspace}

\ifPublisherVersion
    \let\LoadPaperPackages\LoadPublisherPackages
\else
    \ifLLNCSVersion
        \let\LoadPaperPackages\LoadLLNCSPackages
    \else
        \let\LoadPaperPackages\LoadArticlePackages
    \fi
\fi
\LoadPaperPackages

\usepackage[most]{tcolorbox}
\usepackage{tikz}

\ifFullVersion
    \usepackage{cases}
    \usepackage{dirtytalk}
    \usepackage{epigraph}
    \usepackage{fancyhdr}
    \usepackage[bottom]{footmisc}
    \usepackage{footnotehyper}
    \usepackage{orcidlink}
    \usepackage[nottoc]{tocbibind}
\fi

\ifDraftNotes
    \usepackage{lineno}
\fi

\ifLLNCSVersion
    
\fi

\ifDraftNotes
    
    \AtBeginDocument{\linenumbers}
\fi

\hypersetup{%
    colorlinks = true,
    allcolors  = PaleTurquoise4,
    citecolor  = DarkSlateGray4,
    linkcolor  = SteelBlue4,
    urlcolor   = SeaGreen4,
    pdfborder  = {0 0 0},
    pdftitle   = {\papertitle}
}

\ifPublisherVersion\else
\fi

\allowdisplaybreaks

\ifPublisherVersion
    \numberwithin{equation}{section}
\else
    \ifLLNCSVersion
        \numberwithin{equation}{section}
    \fi
\fi

\ifPublisherVersion
    \makeatletter
    \@ifundefined{theorem}{%
        \theoremstyle{plain}
        \newtheorem{theorem}{Theorem}[section]
        \theoremstyle{definition}
        \newtheorem{definition}[theorem]{Definition}
    }{}
    \makeatother
\fi

\ifLLNCSVersion
    \NewDeferredThm{deferredtheorem}{theorem}
    \NewDeferredThm{deferredlemma}{lemma}
    \NewDeferredThm{deferredcorollary}{corollary}
    \NewDeferredThmAlias{deferredtheoremsketch}{theorem}
    \NewDeferredThmAlias{deferredlemmasketch}{lemma}
\else
    \NewDeferredThmAlias{deferredtheorem}{theorem}
    \NewDeferredThmAlias{deferredlemma}{lemma}
    \NewDeferredThmAlias{deferredcorollary}{corollary}
    \NewDeferredThmAlias{deferredtheoremsketch}{theorem}
    \NewDeferredThmAlias{deferredlemmasketch}{lemma}
\fi

\ifLLNCSVersion
    \NewDeferredThm{theoremE}{theorem}
    \NewDeferredThm{lemmaE}{lemma}
    \NewDeferredThm{corollaryE}{corollary}
    \NewDeferredThmAlias{theoremEsketch}{theorem}
    \NewDeferredThmAlias{lemmaEsketch}{lemma}
\else
    \NewDeferredThmAlias{theoremE}{theorem}
    \NewDeferredThmAlias{lemmaE}{lemma}
    \NewDeferredThmAlias{corollaryE}{corollary}
    \NewDeferredThmAlias{theoremEsketch}{theorem}
    \NewDeferredThmAlias{lemmaEsketch}{lemma}
\fi

\ifPublisherVersion
\else\ifLLNCSVersion
    \else
        \docufont{%
            definitionhdr   = smallcaps,
            highlighthdr    = smallcaps,
            remarkhdr       = smallcaps,
            theoremhdr      = smallcaps,
            definitionbody  = normal,
            highlightbody   = normal,
            remarkbody      = normal,
            theorembody     = normal,
        }
    \fi\fi

\ifLLNCSVersion
    \zcRefTypeSetup{corollary}{name-sg={corollary},name-pl={corollaries},Name-sg={Corollary},Name-pl={Corollaries}}
    \zcRefTypeSetup{definition}{name-sg={definition},name-pl={definitions},Name-sg={Definition},Name-pl={Definitions}}
    \zcRefTypeSetup{example}{name-sg={example},name-pl={examples},Name-sg={Example},Name-pl={Examples}}
    \zcRefTypeSetup{fact}{name-sg={fact},name-pl={facts},Name-sg={Fact},Name-pl={Facts}}
    \zcRefTypeSetup{lemma}{name-sg={lemma},name-pl={lemmas},Name-sg={Lemma},Name-pl={Lemmas}}
    \zcRefTypeSetup{openproblem}{name-sg={open problem},name-pl={open problems},Name-sg={Open Problem},Name-pl={Open Problems}}
    \zcRefTypeSetup{proposition}{name-sg={proposition},name-pl={propositions},Name-sg={Proposition},Name-pl={Propositions}}
    \ProvideDocumentCommand{\labelcref}{O{}m}{\zcref[noname,#1]{#2}}
\else
    \AtBeginEnvironment{corollary}{\crefalias{theorem}{corollary}}
    \AtBeginEnvironment{definition}{\crefalias{theorem}{definition}}
    \AtBeginEnvironment{example}{\crefalias{theorem}{example}}
    \AtBeginEnvironment{fact}{\crefalias{theorem}{fact}}
    \AtBeginEnvironment{lemma}{\crefalias{theorem}{lemma}}
    \AtBeginEnvironment{openproblem}{\crefalias{theorem}{openproblem}}
    \AtBeginEnvironment{proposition}{\crefalias{theorem}{proposition}}
    \def\PaperAliasTheoremCounter#1{\AtBeginEnvironment{#1}{\crefalias{theorem}{#1}}}
    \forcsvlist{\PaperAliasTheoremCounter}{assertion,assumption,axiom,claim,conclusion,conjecture,criterion,folklore,hypothesis,informaltheorem,metatheorem,observation,postulate,property,application,construction,convention,experiment,notation,problem,protocol,result,solution,step,commentary,discussion,exercise,motivation,notationabuse,note,question,remark,syntax,guideline,recall,summary,tip,warning}

    \def\PaperSmallCapsCrefNameParse#1/#2/#3\PaperSmallCapsCrefNameEnd{%
        \crefname{#1}{\textsc{#2}}{\textsc{#3}}%
        \Crefname{#1}{\textsc{#2}}{\textsc{#3}}%
    }
    \def\PaperSmallCapsCrefName#1{%
        \PaperSmallCapsCrefNameParse#1\PaperSmallCapsCrefNameEnd
    }
    \forcsvlist{\PaperSmallCapsCrefName}{%
        application/Application/Applications,%
        assertion/Assertion/Assertions,%
        assumption/Assumption/Assumptions,%
        axiom/Axiom/Axioms,%
        claim/Claim/Claims,%
        commentary/Commentary/Commentaries,%
        conclusion/Conclusion/Conclusions,%
        conjecture/Conjecture/Conjectures,%
        construction/Construction/Constructions,%
        convention/Convention/Conventions,%
        corollary/Corollary/Corollaries,%
        criterion/Criterion/Criteria,%
        definition/Definition/Definitions,%
        discussion/Discussion/Discussions,%
        example/Example/Examples,%
        exercise/Exercise/Exercises,%
        experiment/Experiment/Experiments,%
        fact/Fact/Facts,%
        folklore/Folklore/Folklore,%
        guideline/Guideline/Guidelines,%
        hypothesis/Hypothesis/Hypotheses,%
        informaltheorem/Informal Theorem/Informal Theorems,%
        lemma/Lemma/Lemmas,%
        metatheorem/Meta Theorem/Meta Theorems,%
        motivation/Motivation/Motivations,%
        notation/Notation/Notations,%
        notationabuse/Notation Abuse/Notation Abuses,%
        note/Note/Notes,%
        observation/Observation/Observations,%
        openproblem/Open Problem/Open Problems,%
        postulate/Postulate/Postulates,%
        problem/Problem/Problems,%
        proof/Proof/Proofs,%
        property/Property/Properties,%
        proposition/Proposition/Propositions,%
        protocol/Protocol/Protocols,%
        question/Question/Questions,%
        recall/Recall/Recalls,%
        remark/Remark/Remarks,%
        result/Result/Results,%
        solution/Solution/Solutions,%
        step/Step/Steps,%
        summary/Summary/Summaries,%
        syntax/Syntax/Syntaxes,%
        theorem/Theorem/Theorems,%
        tip/Tip/Tips,%
        warning/Warning/Warnings,%
        figure/Figure/Figures,%
        table/Table/Tables,%
        algorithm/Algorithm/Algorithms,%
        appendix/Appendix/Appendices,%
        equation/Equation/Equations,%
        enumi/Item/Items%
    }

    \AtBeginDocument{%
        \crefformat{section}{#2\textsection#1#3}%
        \Crefformat{section}{#2\textsection#1#3}%
        \crefformat{subsection}{#2\textsection#1#3}%
        \Crefformat{subsection}{#2\textsection#1#3}%
    }
\fi

\ifLLNCSVersion\else
    \ProvideDocumentCommand{\zcref}{O{}m}{\cref{#2}}
\fi

\NewDocumentEnvironment{parts}{}{%
    \begin{enumerate}[label=(\alph*),ref=(\alph*)]
}{%
    \end{enumerate}
}

\ProvideDocumentCommand{\SetQEDSymbol}{m}{\renewcommand{\qedsymbol}{#1}}

\ifdefined\thmendsymb
    \RenewDocumentCommand{\thmendsymb}{}{%
        \ensuremath{\color{SteelBlue4}\scriptstyle\lrcorner}%
    }
\fi

\ifLuaTeX
    \ifdefined
        \AtBeginDocument{%
            \SetQEDSymbol{\textcolor{SteelBlue4}{\ensuremath{\scriptstyle\mdlgblksquare}}}%
        }
    \else
        \AtBeginDocument{%
            \SetQEDSymbol{\textcolor{SteelBlue4}{\ensuremath{\scriptstyle\mdlgblksquare}}}%
        }
    \fi
\else
    \AtBeginDocument{%
        \SetQEDSymbol{\textcolor{SteelBlue4}{\ensuremath{\scriptstyle\blacksquare}}}%
    }
\fi

\ProvideDocumentCommand{\parhead}{m}{%
    \par\smallskip\noindent\textbf{#1.}\enspace
}
\ProvideDocumentCommand{\subparhead}{m}{%
    \par\smallskip\noindent\textit{#1.}\enspace
}

\NewDocumentCommand{\clauselabel}{m}{\textbf{#1}\enspace}

\NewDocumentCommand{\Email}{m}{\texttt{\href{mailto:#1}{#1}}}

\NewDocumentEnvironment{aidisclosure}{}{%
    \addcontentsline{toc}{section}{AI Disclosure and Provenance}
    \parhead{AI Disclosure and Provenance}%
    \ignorespaces
}{}

\RenewDocumentEnvironment{acknowledgements}{}{%
    \addcontentsline{toc}{section}{Acknowledgements}
    \parhead{Acknowledgements}%
    \ignorespaces
}{}

\RenewDocumentEnvironment{abstract}{}
{\addcontentsline{toc}{section}{Abstract}%
    \begin{quotation}
    \small\parhead{Abstract.}\ignorespaces}
{\end{quotation}}

\NewDocumentCommand{\NewAuthorComment}{m m m}{%
    \ifDraftNotes
        \expandafter\newcommand\csname #1\endcsname[1]{%
            \textcolor{#3}{\textbf{#2:} ##1}%
        }%
    \else
        \expandafter\newcommand\csname #1\endcsname[1]{}%
    \fi
}

\ProvideDocumentCommand{\newguymarker}{m m m}{\NewAuthorComment{#1}{#2}{#3}}

\ProvideDocumentCommand{\paperheadtitle}{}{\papertitle}

\ifFullVersion

    \let\paperoriginalepigraph\epigraph
    \RenewDocumentCommand{\epigraph}{mm}{%
        \paperoriginalepigraph{\textsf{#1}}{\textsc{#2}}%
    }

    \fancypagestyle{plain}{%
        \fancyhf{}%
        \fancyfoot[C]{\normalfont\scriptsize\thepage}%
    }
\fi

\ifFullVersion
    \titleformat{name=\section,numberless}
    {\normalfont\normalsize\scshape\filcenter}{}{0pt}{}
\fi

\NewDocumentCommand{\PrintPaperContents}{}{%
    \clearpage
    {\setstretch{1}\setcounter{tocdepth}{2}\tableofcontents}%
    \clearpage
}

\NewDocumentCommand{\PrintPaperBibliography}{}{%
    \begingroup
    \clearpage
    \setstretch{1}
    \ifLLNCSVersion
        \bibliographystyle{splncs04}%
    \else
        \bibliographystyle{alphaurl}%
    \fi
    \bibliography{bib/abbrev0,bib/crypto,bib/cstheory,bib/master}%
    \clearpage
    \endgroup
}

\NewDocumentCommand{\BeginPaperAppendices}{}{%
    \clearpage
    \appendix
    \renewcommand*{\theHsection}{appendix.\Alph{section}}%
    \ifLLNCSVersion\else
        \appendixpage
    \fi
}

\NewDocumentCommand{\PrintDeferredProofAppendix}{}{%
    \IfDeferredProofsActive{%
        \section{Deferred Proofs}\label{app:deferred-proofs}%
        \printDeferredProofs
    }{}%
}

\makeatletter
\let\paper@flsetnum\@flsetnum
\def\@flsetnum#1{\ifx#1\@colnum\else\paper@flsetnum#1\fi}
\makeatother

\NewDocumentEnvironment{publicationtable}{O{\normalsize}}{%
    \begingroup
    #1%
    \setlength{\tabcolsep}{6pt}%
}{%
    \endgroup
}

\def\tablealtrow{\rowcolor{SteelBlue4!5}}
\NewDocumentCommand{\tablegroup}{m}{\textsc{#1}}
\NewDocumentCommand{\tablehead}{m}{\textsc{#1}}

\NewDocumentEnvironment{proofbox}{O{} m m}{%
    \begin{figure}[!htbp]
    \begin{mdframed}[
        linecolor         = black,
        backgroundcolor   = white,
        roundcorner       = 2pt,
        leftmargin        = 0pt,
        rightmargin       = 0pt,
        innertopmargin    = 10pt,
        innerbottommargin = 10pt,
        innerleftmargin   = 10pt,
        innerrightmargin  = 10pt,
        #1
    ]
}{%
    \end{mdframed}
    \caption{#2}\label{#3}%
    \end{figure}
}

\NewDocumentEnvironment{openproblemframe}{O{}}{%
    \begin{tcolorbox}[
        breakable,
        sharp corners,
        colframe = SteelBlue4,
        colback  = white,
        boxrule  = 0.5pt,
        #1
    ]
}{%
    \end{tcolorbox}
}

\usetikzlibrary{
    arrows.meta,
    backgrounds,
    calc,
    decorations.pathreplacing,
    fit,
    positioning,
    shapes.geometric
}

\Crefname{clcodei}{Line}{Lines}

\AtBeginEnvironment{codelines}{\footnotesize}

\newlength{\ProofBoxSidePadding}
\newlength{\ProofBoxRuleWidth}
\newlength{\CodeFigureWidth}
\newlength{\CodeLineIndentWidth}
\NewDocumentCommand{\SetCodeFigureWidth}{m}{%
    \settowidth{\CodeFigureWidth}{#1}%
    \addtolength{\CodeFigureWidth}{\codelineslinenrwidth}%
    \addtolength{\CodeFigureWidth}{\codelineslinenrsep}%
    \addtolength{\CodeFigureWidth}{2\ProofBoxSidePadding}%
    \addtolength{\CodeFigureWidth}{2\ProofBoxRuleWidth}%
    \ifdim\CodeFigureWidth>\linewidth
        \setlength{\CodeFigureWidth}{\linewidth}%
    \fi
}

\NewDocumentCommand{\SetIndentedCodeFigureWidth}{O{1}m}{%
    \SetCodeFigureWidth{\hspace*{\dimexpr#1\CodeLineIndentWidth\relax}#2}%
}

\ProvideDocumentCommand{\mathsc}{m}{\ifmmode\mbox{\normalfont\scshape #1}\else{\normalfont\scshape #1}\fi}

\NewDocumentCommand{\KK}{}{\mathbb{K}}

\RenewDocumentCommand{\xor}{}{\mathbin{\oplus}}
\NewDocumentCommand{\hybchange}{m}{\textcolor{IndianRed1}{#1}}

\NewDocumentCommand{\mbar}{}{\;\middle|\;}

\ifLuaTeX\else
    \DeclareSymbolFont{PaperSymbols}{OMS}{cmsy}{m}{n}
    \let\emptyset\relax
    \DeclareMathSymbol{\emptyset}{\mathord}{PaperSymbols}{"3B}
    \let\emptyset\varnothing
\fi

\ifLuaTeX
    \ifdefined
        \RenewDocumentCommand{\mathbf}{m}{\symbf{#1}}

    \else
        
    \fi
\else
    
\fi

\usepackage{placeins}

\hypersetup{%
    pdfkeywords = {extremely lossy functions, trapdoor hashing, small-exponent discrete logarithm, Damgard-Jurik, generic group model}
}
\ifAnonymous\else
    \hypersetup{pdfauthor = {Damiano Abram, Agni Datta, Archisman Dutta, Lawrence Roy}}
\fi

\ifDraftNotes
    \ifdefined\PrelimText
        \NewCommandCopy{\PaperPrelimText}{\PrelimText}
        \RenewDocumentCommand{\PrelimText}{}{\smash{\PaperPrelimText}}
    \fi
\fi

\ifConferenceVersion
    \def\distpairsep{\\}
\else
    \def\distpairsep{&}
\fi

\ifConferenceVersion
    \def\telfhybwidth{0.49\linewidth}
    \def\telfhybbreak{\par\medskip}
\else
    \def\telfhybwidth{0.32\linewidth}
    \def\telfhybbreak{\hfill}
\fi

\NewDocumentCommand{\prelimparhead}{mm}{\phantomsection\label{#2}\parhead{#1}}

\NewDocumentCommand{\paperpart}{mm}{%
    \ifConferenceVersion
        \prelimparhead{#1}{#2}%
    \else
        \subsection{#1}\label{#2}%
    \fi
}

\RenewDocumentCommand{\codelinesrightalign}{m}{\hfill{\scriptsize\textcolor{gray}{\textit{$\ast$\,#1\,$\ast$}}}}

\AddToHook{env/proofbox/begin}{\codelinesresetlinenr}

\colorlet{tableaccent}{SteelBlue4}
\colorlet{tableheadbackground}{SteelBlue4!14}
\colorlet{tablegroupbackground}{SteelBlue4!9}
\colorlet{tablealtbackground}{SteelBlue4!5}

\ifConferenceVersion\else
\fi

\counterwithin{figure}{section}
\counterwithin{table}{section}

\NewDocumentCommand{\ConferenceProofSketch}{m +m m m}{%
    \ifConferenceVersion
        \par\addvspace{\medskipamount}%
        \noindent{\itshape Proof Sketch.}\enspace\ignorespaces#2\qed
        \par\addvspace{\medskipamount}%
    \fi
}

\ExplSyntaxOn
\msg_new:nnn { paper } { unknown-relocated }
    { No~relocated~material~is~stored~under~'#1';~rerun~LaTeX. }
\tl_new:N \l__paper_relocated_tl
\cs_new_protected:Npn \__paper_relocated_store:nn #1#2
    {
        \tl_set:Nn \l__paper_relocated_tl {#2}
        \regex_replace_once:nnN { (?: \s | \c{par} )* \Z } { } \l__paper_relocated_tl
        \tl_trim_spaces:N \l__paper_relocated_tl
        \tl_put_right:Nn \l__paper_relocated_tl { ~ }
        \tl_gclear_new:c { g__paper_relocated_#1_tl }
        \tl_gset_eq:cN { g__paper_relocated_#1_tl } \l__paper_relocated_tl
    }
\NewDocumentCommand{\RelocatedAuxBody}{m +m}{ \__paper_relocated_store:nn {#1} {#2} }
\NewDocumentEnvironment{relocated}{m +b}
    {
        \legacy_if:nTF { ConferenceVersion }
            {
                \__paper_relocated_store:nn {#1} {#2}
                \exp_args:Nc \iow_now:Nx { @auxout }
                    { \token_to_str:N \RelocatedAuxBody {#1} { \exp_not:n {#2} } }
            }
            { \tl_trim_spaces_apply:nN {#2} \use:n }
    }
    {}
\NewDocumentCommand{\PlaceRelocated}{m}
    {
        \tl_if_exist:cTF { g__paper_relocated_#1_tl }
            { \tl_use:c { g__paper_relocated_#1_tl } }
            { \msg_warning:nnn { paper } { unknown-relocated } {#1} }
    }
\ExplSyntaxOff

\newguymarker{damiano}{Damiano}{Maroon3}
\newguymarker{agni}{Agni}{Blue4}
\newguymarker{arch}{Archisman}{DarkOrange3}
\newguymarker{lance}{Lawrence}{DarkOliveGreen4}

\RenewDocumentCommand{\advA}{}{\mathcal{A}}
\RenewDocumentCommand{\advB}{}{\mathcal{B}}
\RenewDocumentCommand{\advC}{}{\mathcal{C}}
\RenewDocumentCommand{\advD}{}{\mathcal{D}}
\RenewDocumentCommand{\advV}{}{\mathcal{V}}
\RenewDocumentCommand{\Distribution}{}{\mathfrak{D}}

\NewDocumentCommand{\bndexp}{}{\mathfrak{B}}
\NewDocumentCommand{\uniformexp}{}{\mathfrak{U}}

\DeclareMathOperator*{\indist}{\equiv}

\NewDocumentCommand{\DSEDL}{}{\schemestyle{DSEDL}}
\NewDocumentCommand{\vDSEDL}{}{\schemestyle{vDSEDL}}
\NewDocumentCommand{\SEDL}{}{\schemestyle{SEDL}}
\NewDocumentCommand{\ggdsedl}{}{\mathsf{ggm\text{-}DSEDL}}
\NewDocumentCommand{\aiggdsedl}{}{\schemestyle{ai\text{-}ggm\text{-}DSEDL}}
\NewDocumentCommand{\AIGG}{}{\schemestyle{AI\text{-}GG}}
\NewDocumentCommand{\BFGG}{}{\schemestyle{BF\text{-}GG}}
\NewDocumentCommand{\numforms}{}{\mathfrak{n}}
\ProvideDocumentCommand{\DCR}{}{\schemestyle{DCR}}
\ProvideDocumentCommand{\CCA}{}{\schemestyle{CCA}}

\NewDocumentCommand{\DDSEDL}{}{\Distribution^{\DSEDL}}
\NewDocumentCommand{\DvDSEDL}{}{\Distribution^{\vDSEDL}}
\NewDocumentCommand{\Doff}{}{\Distribution^{\var{off}}}
\NewDocumentCommand{\AdvGG}{}{\Adv^{\ggdsedl}}

\ProvideDocumentCommand{\CGen}{}{\mathsf{CGen}}
\NewDocumentCommand{\SampG}{}{\mathsf{Samp}_{\GG}}
\ProvideDocumentCommand{\pp}{}{\mathsf{pp}}
\ProvideDocumentCommand{\DJGen}{}{\mathsf{DJGen}}
\ProvideDocumentCommand{\SafeGen}{}{\mathsf{SafeGen}}
\ProvideDocumentCommand{\SafePrimes}{}{\mathcal{P}}
\ProvideDocumentCommand{\ord}{}{\mathsf{ord}}
\ProvideDocumentCommand{\DDLog}{}{\mathsf{DDLog}}
\ProvideDocumentCommand{\expl}{}{L_{\mathsf{exp}}}
\ProvideDocumentCommand{\Expo}{}{\mathsf{exp}}
\ProvideDocumentCommand{\Logg}{}{\mathsf{log}}

\NewDocumentCommand{\ZN}{}{\ZZ/N\ZZ}
\NewDocumentCommand{\ZNsqx}{}{(\ZZ/N^2\ZZ)^\times}
\NewDocumentCommand{\easypart}{}{1+N\ZZ/N^2\ZZ}
\NewDocumentCommand{\DamgardJurik}{}{Damg{\aa}rd--Jurik\xspace}
\NewDocumentCommand{\BSGS}{}{baby-step--giant-step\xspace}
\ProvideDocumentCommand{\ELTDH}{}{\schemestyle{ELTDH}}

\NewDocumentCommand{\ELTDHSetupLos}{}{\ELTDH.\funct{Setup}}
\NewDocumentCommand{\ELTDHHash}{}{\ELTDH.\funct{Hash}}
\NewDocumentCommand{\ELTDHGen}{}{\ELTDH.\funct{Gen}}
\NewDocumentCommand{\ELTDHEnc}{}{\ELTDH.\funct{Enc}}
\NewDocumentCommand{\ELTDHDec}{}{\ELTDH.\funct{Dec}}

\NewDocumentCommand{\Rfull}{}{R_{\var{full}}}
\NewDocumentCommand{\ffull}{}{f_{\var{full}}}
\NewDocumentCommand{\Efull}{}{\Event_{\var{full}}}
\NewDocumentCommand{\hkbot}{}{\hk_\bot}
\NewDocumentCommand{\Lina}{}{L_{\bvec a}}
\NewDocumentCommand{\Qhash}{}{Q_{\var{hash}}}
\NewDocumentCommand{\rcap}{}{\overline r}
\NewDocumentCommand{\binw}{}{\binencode_w}

\NewDocumentCommand{\TELF}{}{\schemestyle{TELF}}
\NewDocumentCommand{\TELFEval}{}{\TELF.\funct{Eval}}
\NewDocumentCommand{\TELFGenInj}{}{\TELF.\funct{GenInj}}
\NewDocumentCommand{\TELFGenLos}{}{\TELF.\funct{GenLos}}
\NewDocumentCommand{\TELFInv}{}{\TELF.\funct{Inv}}

\NewDocumentCommand{\ABOTELF}{}{\schemestyle{ABO\text{-}TELF}}
\NewDocumentCommand{\ABO}{}{\schemestyle{ABO}}
\NewDocumentCommand{\ABOGenInj}{}{\ABOTELF.\funct{GenInj}}
\NewDocumentCommand{\ABOGenLos}{}{\ABOTELF.\funct{GenLos}}
\NewDocumentCommand{\ABOEval}{}{\ABOTELF.\funct{Eval}}
\NewDocumentCommand{\ABOInv}{}{\ABOTELF.\funct{Inv}}

\NewDocumentCommand{\ABSTELF}{}{\schemestyle{ABS\text{-}TELF}}
\NewDocumentCommand{\ABS}{}{\schemestyle{ABS}}
\NewDocumentCommand{\ABSGenInj}{}{\ABSTELF.\funct{GenInj}}
\NewDocumentCommand{\ABSGenLos}{}{\ABSTELF.\funct{GenLos}}
\NewDocumentCommand{\ABSEval}{}{\ABSTELF.\funct{Eval}}
\NewDocumentCommand{\ABSInv}{}{\ABSTELF.\funct{Inv}}

\NewDocumentCommand{\Yfk}{}{\calY_\fk}

\NewDocumentCommand{\THF}{}{\schemestyle{THF}}

\NewDocumentCommand{\ELFGen}{}{\ELF.\funct{Gen}}
\NewDocumentCommand{\FuncClass}{}{\mathcal{C}}

\RenewDocumentCommand{\ELFEval}{}{\ELF.\funct{Eval}}

\RenewDocumentCommand{\fk}{}{\mathsf{fk}}
\RenewDocumentCommand{\ek}{}{\mathsf{ek}}
\RenewDocumentCommand{\hk}{}{\mathsf{hk}}
\RenewDocumentCommand{\td}{}{\mathsf{td}}

\NewDocumentCommand{\mat}{m}{\mathbf{#1}}

\NewDocumentCommand{\bvec}{m}{\mathbf{#1}}
\ProvideDocumentCommand{\symbf}{m}{\boldsymbol{#1}}
\NewDocumentCommand{\grplen}{}{\ell_{\mathsf{grp}}}
\NewDocumentCommand{\groupsec}{}{\lambda_{\mathsf{grp}}}

\NewDocumentCommand{\ind}{m}{\mathbb{1}\!\left[#1\right]}
\NewDocumentCommand{\binencode}{}{\var{bin}}
\NewDocumentCommand{\IdMap}{}{\var{id}}
\NewDocumentCommand{\transpose}{}{\mathsf{T}}

\NewDocumentCommand{\clog}{m}{\ceil{\log_2#1}}

\DeclareMathOperator{\Supp}{\mathsf{supp}}
\DeclareMathOperator{\rep}{\mathsf{rep}}
\DeclareMathOperator{\Enumerate}{\mathsf{Enumerate}}
\DeclareMathOperator{\Deduplicate}{\mathsf{Deduplicate}}

\NewDocumentCommand{\GN}{}{\GG_N}
\NewDocumentCommand{\piq}{}{\pi_q}
\NewDocumentCommand{\nuG}{}{\nu_{\GG}}
\NewDocumentCommand{\Nbot}{}{N_\bot}
\NewDocumentCommand{\NTelf}{}{N_{\var{T}}}
\NewDocumentCommand{\dlam}{}{d_\secparam}
\NewDocumentCommand{\nlam}{}{n_\secparam}
\NewDocumentCommand{\Rlam}{}{R_\secparam}
\NewDocumentCommand{\wlam}{}{w_\secparam}
\NewDocumentCommand{\Tai}{}{T_{\var{ai}}}
\NewDocumentCommand{\ceval}{}{c_{\var{eval}}}

\NewDocumentCommand{\Qbar}{}{\overline Q}
\NewDocumentCommand{\Qvec}{}{Q_{\var{vec}}}
\NewDocumentCommand{\Qplain}{}{Q_{\var{plain}}}
\NewDocumentCommand{\dhat}{}{\widehat\delta}
\NewDocumentCommand{\phat}{}{\widehat p}
\NewDocumentCommand{\that}{}{\widehat t}
\NewDocumentCommand{\deltastar}{}{\delta^\star}
\NewDocumentCommand{\pstar}{}{p^\star}

\NewDocumentCommand{\Sstar}{}{\calS^\star}
\NewDocumentCommand{\Bstar}{}{B^\star}
\NewDocumentCommand{\bstar}{}{b^\star}
\NewDocumentCommand{\betastar}{}{\beta^\star}
\NewDocumentCommand{\sstar}{}{s^\star}
\NewDocumentCommand{\jstar}{}{j^\star}
\NewDocumentCommand{\Ptau}{}{P_\tau}

\NewDocumentCommand{\Phihk}{}{\Phi_{\hk}}
\NewDocumentCommand{\ffk}{}{f_\fk}
\NewDocumentCommand{\fbot}{}{f_\bot}
\NewDocumentCommand{\fstar}{}{f_\star}
\NewDocumentCommand{\finj}{}{f_{\var{inj}}}
\NewDocumentCommand{\fDJ}{}{f_{\var{DJ}}}
\NewDocumentCommand{\Flin}{}{\FuncClass^{\var{lin}}}
\NewDocumentCommand{\Oxi}{}{\calO_\xi}
\NewDocumentCommand{\Oinv}{}{\calO_{\var{inv}}}
\NewDocumentCommand{\Hfp}{}{\Hybrid^{\var{fp}}}
\NewDocumentCommand{\DSEDLgamma}{}{\DSEDL_\gamma}
\NewDocumentCommand{\SEDLgamma}{}{\SEDL_\gamma}

\NewDocumentCommand{\Ein}{}{\Event_{\var{in}}}
\NewDocumentCommand{\Eout}{}{\Event_{\var{out}}}
\NewDocumentCommand{\Ecoll}{}{\Event_{\var{coll}}}
\NewDocumentCommand{\Edistinct}{}{\Event_{\var{distinct}}}
\NewDocumentCommand{\Einj}{}{\Event_{\var{inj}}}
\NewDocumentCommand{\Elos}{}{\Event_{\var{los}}}
\NewDocumentCommand{\nulos}{}{\nu_{\var{los}}}
\NewDocumentCommand{\epsinj}{}{\varepsilon_{\var{inj}}}
\NewDocumentCommand{\epslos}{}{\varepsilon_{\var{los}}}
\NewDocumentCommand{\epsunused}{}{\varepsilon_{\var{unused}}}
\NewDocumentCommand{\deltapar}{}{\delta_{\var{par}}}
\NewDocumentCommand{\pbot}{}{p^\bot}

\NewDocumentCommand{\DDCR}{}{\Distribution^{\DCR}}
\NewDocumentCommand{\Dinput}{}{\Distribution^{\var{input}}}

\NewDocumentCommand{\Dmode}{}{\Distribution^{\var{mode}}}

\NewDocumentCommand{\kLin}{}{\schemestyle{k}-\schemestyle{Lin}}
\NewDocumentCommand{\ZNsq}{}{\ZZ/N^2\ZZ}

\NewDocumentCommand{\Nout}{}{N_{\var{out}}}

\NewDocumentCommand{\shat}{}{\hat{s}}
\NewDocumentCommand{\qhat}{}{\hat{q}}
\NewDocumentCommand{\dstar}{}{d^\star}

\NewDocumentCommand{\Deltahk}{}{\Delta_{\hk,\ek,\td}}
\NewDocumentCommand{\Fhk}{}{F_{\hk,\ek}}
\NewDocumentCommand{\Jtdc}{}{J_{\td,c}}
\NewDocumentCommand{\explbot}{}{{\expl}_{\bot}}
\NewDocumentCommand{\Dinpriv}{}{\Distribution^{\var{in}}}
\NewDocumentCommand{\Dfnpriv}{}{\Distribution^{\var{fn}}}
\NewDocumentCommand{\Dref}{}{\Distribution^{\var{ref}}}
\NewDocumentCommand{\pcond}{}{\widetilde p}
\NewDocumentCommand{\replen}{m}{\algostyle{len}\left(#1\right)}

\NewDocumentCommand{\Qtelf}{}{Q_{\TELF}}
\NewDocumentCommand{\ELFGenAI}{}{\ELFGen^{\star}}
\NewDocumentCommand{\JtdZero}{}{J_{\td}}

\NewDocumentCommand{\nuinj}{}{\nu_{\var{inj}}}
\NewDocumentCommand{\Qhashbar}{}{\overline{\Qhash}}
\NewDocumentCommand{\ABSInvBundle}{m}{\ABSInv_{#1}}

\ifcsname propositionE\endcsname\else
    \ifLLNCSVersion
        \NewDeferredThm{propositionE}{proposition}
    \else
        \NewDeferredThmAlias{propositionE}{proposition}
    \fi
\fi

\ifConferenceVersion
	\title{\papertitle}
	\ifAnonymous
		\titlerunning{Simple Extremely Lossy Functions from Small-Exponent Hashing}
		\authorrunning{Anonymous Authors}
		\author{}
		\institute{}
	\else
		\titlerunning{Simple Extremely Lossy Functions from Small-Exponent Hashing}
		\authorrunning{D. Abram, A. Datta, A. Dutta, and L. Roy}
		\author{Damiano Abram\inst{1} \and
			Agni Datta\inst{2} \and
			Archisman Dutta\inst{3} \and
			Lawrence Roy\inst{4}}
		\institute{The University of Edinburgh, UK\\
			\Email{damianoabram@gmail.com}
			\and
			The University of Edinburgh, UK\\
			\Email{agnidatta.org@gmail.com}
			\and
			Aarhus University, Denmark\\
			\Email{archisman@cs.au.dk}
			\and
			IBM Zurich Research, Switzerland\\
			\Email{ldr709@gmail.com}}
	\fi
\else
	\title{\papertitle}
	\ifAnonymous
		\author{Anonymous Authors}
	\else
		\author[1]{Damiano Abram}
		\author[1]{Agni Datta}
		\author[2]{Archisman Dutta}
		\author[3]{Lawrence Roy}
		\affil[1]{The University of Edinburgh, UK\par
			\Email{abram.damiano@protonmail.com}\par
			\Email{agnidatta.org@gmail.com}}
		\affil[2]{Aarhus University, Denmark\par
			\Email{archismandutta@proton.me}}
		\affil[3]{IBM Zurich Research, Switzerland\par
			\Email{ldr709@gmail.com}}
	\fi
	\date{\today}
\fi

\begin{document}

\maketitle

\begin{abstract}
Extremely Lossy Functions (\ELF{}s) are a standard model primitive that captures many useful properties of random oracles (Zhandry, Crypto 2016). While there are many variations of \ELF{}s with additional properties, every construction (excluding obfuscation) has followed essentially the same template of bootstrapping from a sequence of \ELF{}s secure only against fixed-size adversaries, and every construction was based on only the exponential hardness of $\DDH$ (or \kLin{}), an assumption that is only reasonable over elliptic curves.

We introduce and construct Extremely Lossy Trapdoor Hashing (\ELTDH{}), a stronger notion that implies all known variations of \ELF{}s. Our construction achieves \ELTDH{} in one go, without bootstrapping from schemes secure for only fixed-size adversaries, which makes it simpler and more efficient than existing \ELF{}s. We assume exponential security of the small-exponent discrete logarithm, together with polynomial security of decisional composite residuosity (\DCR{}). Exponential security is only required in the size of the secret exponent, not the size of the group, so the assumption plausibly holds for multiplication modulo $N^2$ (and for many other cryptographic groups), despite the subexponential-time discrete logarithm attacks from index calculus. Our results diversify the assumptions underlying \ELF{}s, while also giving a simpler construction.
\end{abstract}

\ifConferenceVersion\else
	{\PrintPaperContents}
\fi

\section{Introduction}\label{sec:introduction}

Cryptographic hash functions are used for a wide variety of different purposes, many of which do not follow from standard properties of collision resistance or pseudorandomness. Security is often analyzed in the random oracle model, a very powerful model in which the hash function behaves as a statistically uniform function, which the adversary can only ever see a negligible piece of through its queries. However, real hash functions are described by small circuits, and so are not random oracles, making any result proven in the random oracle model potentially vacuous when instantiated with real cryptographic hash functions. To bridge this gap, many strong security properties for standard model hash functions have been defined, including correlation intractability~\cite{STOC:CanGolHal98}, correlation robustness~\cite{C:IKNP03}, correlated input security~\cite{TCC:GoyONeRao11}, universal computational extraction~\cite{C:BelHoaKee13}, and extreme lossiness~\cite{C:Zhandry16}.

This last property captures how a random oracle can only be queried on a polynomial number of positions, allowing a reduction to efficiently iterate over the effective range of the oracle. An extremely lossy function (\ELF{}) is a hash function that can be set up in two indistinguishable modes: an injective mode, and a lossy mode where the range of the \ELF{} becomes polynomial in size. Applications of \ELF{}s include auxiliary input point obfuscation (a model of password hashing), leakage resilience, and output intractable hash functions~\cite{C:Zhandry16}. Several variants of the \ELF{} definition have also been proposed, with applications to deterministic encryption~\cite{EC:Zhandry19a}, correlated-input password hashing~\cite{EC:Zhandry19a}, and anamorphic-resistant encryption~\cite{EC:ABGMM26}.

We address four limitations in the \ELF{} literature.
\begin{enumerate}[itemsep=0.5ex,topsep=0.5ex]
\item There is only one construction (excluding obfuscation) of \ELF{}s and its variants, based on the exponential hardness of $\DDH$ (or \kLin{}).
\item Due to subexponential-time index calculus attacks on most cryptographic groups, this construction can only be plausibly instantiated using elliptic curve groups.
\item This construction builds a sequence of \ELF{}s targeting different fixed adversary runtimes, and then uses a combiner to build an \ELF{} for all polynomial adversaries. This structure is complicated, inefficient, and makes it harder to expose additional structure or homomorphism in the \ELF{}.
\item Each variant of \ELF{}s requires redoing this entire subexponential $\DDH$ and combiner pipeline, creating a mess of many similar constructions.
\end{enumerate}

In this work, we define and construct Extremely Lossy Trapdoor Hashing (\ELTDH{}), a primitive that combines \ELF{}s with the Trapdoor Hashing (\TDH{}) of \cite{C:DGIMMO19}. We show how this notion implies \ELF{}s as well as all known variants of \ELF{}s; given the wide application of \TDH{}s, we expect that new applications of \ELTDH{}s can be found as well.

Our construction follows a significantly different structure from \cite{C:Zhandry16}. We assume exponential security of small-exponent discrete logarithm, together with polynomial security of decisional composite residuosity (\DCR{}), diversifying the set of assumptions from which \ELF{}s can be built. For small-exponent discrete logarithm, we only need security to be exponential in the size of the secret exponent, not the size of the group, so the assumption plausibly holds for any cryptographic group, even ones where index calculus gives a subexponential-time attack on discrete logarithm. Making the group large enough foils index calculus attacks, and in this case there are no known algorithms for small-exponent discrete logarithm that do better than the generic algorithm of \BSGS, which takes exponential time. The security in this assumption also can easily be tuned without changing out the group, simply by changing the size of the small exponent; this tunability lets us avoid using a combiner.

\subsection{Our Results: Extremely Lossy Trapdoor Hashing}
Extremely Lossy Trapdoor Hashing can be viewed as standard trapdoor hashing with optimal compression. We recall the interaction pattern of trapdoor hashing in \cref{fig:eltdh-interaction}. At the beginning of a trapdoor hashing execution, the evaluator holds no information about the output $f(x)$; however, if we reveal the hash $h$ and the hasher output share $e$, the value of $f(x)$ can be derived with (almost) perfect accuracy. This means that, even given $\ek$, the pair $(h, e)$ carries at least as much information as the output $f(x)$.

\begin{figure}[!hbt]
\centering

\begin{tikzpicture}[
    x=1cm,
    y=1cm,
    font=\small,
    >=Latex,
    msgLR/.style={
        -{Latex[length=1.7mm,width=1.05mm]},
        line width=.5pt,
        draw=SlateBlue4
    },
    msgRL/.style={
        -{Latex[length=1.7mm,width=1.05mm]},
        line width=.5pt,
        draw=HotPink4
    },
    hair/.style={
        line width=.4pt
    },
    note/.style={
        font=\scriptsize,
        inner sep=1pt,
        fill=white
    },
    role/.style={
        font=\small,
        align=center
    },
    local/.style={
        font=\small,
        align=left
    },
    sub/.style={
        font=\scriptsize,
        align=center
    }
]

\draw[hair] (-5.45,0.95) rectangle (6,-6);

\node at (0,0.62)
    {\textcolor{Chartreuse4}{$\hk$}};

\draw[hair] (-5.00,0.30) -- (5.55,0.30);

\begin{scope}[shift={(-3.45,-0.23)},scale=.76,color={SlateBlue4}]
    \draw[hair] (0,0) circle (.18);
    \draw[hair]
        (-.34,-.56)
        arc[start angle=180,end angle=0,x radius=.34,y radius=.30];
\end{scope}

\node[role] at (-3.45,-0.86)
    {\textcolor{SlateBlue4}{\textsc{Hasher}}};

\node[sub,text width=3.8cm] at (-3.45,-1.17)
    {$x,\rho$};

\begin{scope}[shift={(3.45,-0.23)},scale=.76,color={HotPink4}]
    \draw[hair] (0,0) circle (.18);
    \draw[hair]
        (-.34,-.56)
        arc[start angle=180,end angle=0,x radius=.34,y radius=.30];
\end{scope}

\node[role] at (3.45,-0.86)
    {\textcolor{HotPink4}{\textsc{Evaluator}}};

\node[sub,text width=3.8cm] at (3.45,-1.17)
    {$f$};

\node[local,anchor=west] at (-4.85,-1.85)
    {$
        h \gets \Hash(\hk,x;\rho)
    $};

\draw[msgLR]
    (-2.28,-2.85) -- (2.28,-2.85)
    node[midway,above=2pt,fill=white,inner sep=1pt]
        {$h$};

\node[local,anchor=west] at (2,-1.85)
    {$
        (\ek,\td) \sample \Gen(\hk,f)
    $};

\draw[msgRL]
    (2.28,-3.05) -- (-2.28,-3.05)
    node[midway,below=2pt,fill=white,inner sep=1pt]
        {$\ek$};

\node[local,anchor=west] at (-4.85,-4.05)
    {$
        e \gets \Enc(\ek,x;\rho)
    $};

\node[local,anchor=west] at (2.3,-4.05)
    {$
        e' \gets \Dec(\td,h)
    $};

\draw[hair] (-5,-5) -- (5.45,-5);

\node[local,anchor=west] at (-1,-5.50)
    {\textcolor{DodgerBlue4}{$f(x)=e-e'$}};

\end{tikzpicture}

\caption{Interaction in trapdoor hashing. \textcolor{Chartreuse4}{Green} denotes the public hash key $\hk$, \textcolor{SlateBlue4}{purple} the hasher with input $x$ and random tape $\rho$, \textcolor{HotPink4}{pink} the evaluator with predicate $f$, and \textcolor{DodgerBlue4}{blue} the predicate value recovered at the end of the interaction. The hasher computes $h\gets \Hash(\hk,x;\rho)$ and sends $h$. The evaluator runs $(\ek,\td)\sample \Gen(\hk,f)$, sends the public encoding key $\ek$, and keeps the trapdoor $\td$. Reusing the same tape $\rho$, the hasher computes $e\gets \Enc(\ek,x;\rho)$. The evaluator then derives $e'\gets \Dec(\td,h)$, so that $f(x)=e-e'$.}
\label{fig:eltdh-interaction}
\end{figure}

For security reasons, however, the entropy of $(h, e)$ conditioned on $\ek$ must be strictly greater than that of $f(x)$. If that were not the case, the triple $(h, e, \ek)$ would be uniquely determined by $f(x)$. The hasher could exploit this to distinguish between the encoding key for a function $f_0$ and that for a function $f_1$. Specifically, the hasher could pick two inputs $x$ and $\hat{x}$ such that $f_0(x)= f_0(\hat{x})$ but $f_1(x)\ne f_1(\hat{x})$. Then, it can be derived
\begin{align*}
    h&\gets \Hash(\hk, x)\qquad\qquad & e&\gets\Enc(\ek, x)\\
    \hat{h}&\gets \Hash(\hk, \hat{x}) \qquad\qquad& \hat{e}&\gets\Enc(\ek, \hat{x})
\end{align*}
and check whether $(h, e)=(\hat{h}, \hat{e})$. In such a case, the hasher can be confident that the evaluator encoded the function $f_0$.

This attack can be generalized to the setting in which, for every $\ek$, there are at most $r$ pairs $(h, e)$ that encode $f(x)$: a malicious hasher can repeat the attack on independently sampled inputs and obtain a collision with probability at the birthday scale, namely $\Omega(k^2/r)$ after $k$ evaluations, up to the probability of repeated inputs. Thus, when $r$ is polynomially bounded, polynomially many evaluations suffice to distinguish the two cases. To overcome this issue, in standard trapdoor hashing constructions, the value of $r$ grows superpolynomially in the security parameter $\secparam$. This ensures security against all polynomial-time attackers at the same time.

In this work, we take a more fine-grained approach, following the blueprint of \cite{C:Zhandry16}. We allow the setup procedure of trapdoor hashing to take as input the targeted parameter $r$. If we pick a superpolynomial $r$, the construction will be secure against any polynomial-time attacker; however, when we pick a polynomial $r$, the construction will still provide weaker but nevertheless useful security guarantees. Specifically, for all $\PPT$ attackers of bounded complexity, the hash key $\hk$ corresponding to a polynomial choice of $r$ will be almost indistinguishable from a $\hk$ for a superpolynomial $r$. More formally, for every fixed $\PPT$ adversary $\advA$ and every targeted inverse polynomial function $\delta(\secparam)$, there exists a sufficiently large polynomial $r(\secparam)$ such that $\advA$ cannot distinguish between $\hk\sample\Setup(1^\secparam, r)$ and $\hk\sample\Setup(1^\secparam, R)$ with advantage greater than $\delta(\secparam)$, where $R$ is the superpolynomial parameter of the standard scheme.

To understand why this property is useful, consider a cryptographic construction that relies on trapdoor hashing and that has a security loss proportional to the number $r$ of pairs $(h, e)$ that encode the same output $f(x)$. In standard trapdoor hashing, the parameter $r$ is superpolynomial, and therefore the security proof will incur a superpolynomial security loss. Using extremely lossy trapdoor hashing, we can avoid this. In the real world, we instantiate the construction using a superpolynomial $r$, e.g.\ $2^\secparam$, as for standard trapdoor hashing. Now, suppose that there exists a $\PPT$ attacker $\advA$ that breaks the security of this unspecified cryptographic construction with some inverse polynomial advantage $\varepsilon(\secparam)$. By the properties of the extremely lossy trapdoor hashing, there exists a polynomial choice for $r$ that makes the hash key $\hk$ at most $\varepsilon/2$-indistinguishable from the original one. In other words, even if we instantiate the construction with this weaker key $\hk$, the adversary $\advA$ would still break security with advantage at least $\varepsilon-\varepsilon/2=\varepsilon/2$. At this point, we can continue the security proof for the construction as before; however, given that $r$ is polynomial, we will incur only a polynomial security loss.

\subsection{Our Results: Applications}

\paragraph{Extremely Lossy Functions.} It is easy to see that extremely lossy trapdoor hashing implies extremely lossy functions (\ELF{}s)~\cite{C:Zhandry16} following the connection between trapdoor hashing and lossy cryptography \cite{STOC:PeiWat08} highlighted by D\"ottling et al.~\cite{C:DGIMMO19}. The \ELF{} description corresponds to the pair $(\hk, \ek)$, whereas the evaluation on some value $x$ corresponds to the pair $(h, e)$ where $h=\Hash(\hk, x)$ and $e=\Enc(\ek, x)$. When the \ELF{} is in injective mode, we generate $\hk$ at the full image parameter, and we let $\ek$ encode the identity function $\IdMap$. Notice that the trapdoor $\td$ allows us to invert $(h, e)$ to recover $x=e-\Dec(\td, h)$.\footnote{This trapdoor lets us turn it into a trapdoor \ELF{} as well, since it allows efficient inversion.} For the lossy mode, instead, we generate $\hk$ using a polynomial value $r$ and we let $\ek$ encode the zero function $f\equiv 0$.

However, extremely lossy trapdoor hashing is stronger than extremely lossy functions. First of all, due to their potential injectivity, \ELF{}s must have a large output. Moreover, they typically allow us to extract information about the input from the output when they are set in injective mode~\cite{C:Zhandry16}. On the other hand, by introducing small modifications to the above blueprint of \ELF{}s from extremely lossy trapdoor hashing, we can easily build a stronger variant of the primitive that achieves compression and allows the extraction of partial information about the input even when the \ELF{} is in lossy mode. This more closely resembles the properties of a random oracle, which \ELF{}s attempt to emulate~\cite{C:Zhandry16}. We additionally show that the image of the public map is at most the image parameter times the range of the predicate, so a predicate of small range leaves the map lossy while we can still recover its value with the trapdoor.

\paragraph{A Plain \ELF{} and Exact Images.} In fact, we can drop the trapdoor altogether and get something simpler: an ordinary \ELF{} over any cyclic group sampler that satisfies tunable $\DSEDL$. It is published as $n$ bases and evaluated to a single group element. Indeed, composite residuosity does not play any part here. Both \ELF{}s here have exactly enumerable images, as do the selected branches of the branch variants. This means that, from a lossy key and $r$ alone, we can use a prefix dynamic program to list the image in time $\poly(r,\log m)$, which lets an application sample from it uniformly.

\paragraph{Branch Variants.} Applications involving multiple correlated \ELF{} evaluations (such as correlated input secure one-way functions and \CCA-secure deterministic public key encryption~\cite{EC:Zhandry19a}) require more than a single function from a trapdoor \ELF{}. We therefore add branch functionality, where a designated set of branches is lossy, and every other branch stays invertible under an inversion oracle~\cite{EC:Zhandry19a}. The same hash gives this directly. We publish two independent bundles of encoding rows, so that one bundle still answers the inversion oracle while a function-privacy hybrid rewrites the other, as selective security requires. The all-but-one primitive of~\cite{EC:Zhandry19a} is the case of a single selected branch.

\ifConferenceVersion\else
\paragraph{Anamorphic-Resistant Encryption.} Without its trapdoor, our trapdoor \ELF{} is always injective in the injective mode, as the \ELF{} definition of~\cite{EC:ABGMM26} requires, so the anamorphic-resistant encryption of~\cite{EC:ABGMM26} from \ELF{}s and fully unique non-interactive zero-knowledge arguments holds under our assumptions. Its second construction, which avoids indistinguishability obfuscation, uses a partial trapdoor of a variant of the trapdoor \ELF{} of~\cite{EC:Zhandry19a}, and whether our construction admits such a partial trapdoor is open.
\fi

\subsection{Our Construction}\label{sec:construction}

In this work, we present an extremely lossy trapdoor hashing scheme that supports the evaluation of any linear function.\footnote{We expect the same technique to reach functions described by multivariate polynomials whose monomials come from a fixed set $\Omega$ of polynomial size, although we prove nothing about that class here. The recipe would ask one party to hash the evaluation of all the monomials in $\Omega$ on their input $x$, and the other party to provide the encoding key corresponding to the coefficients of the multivariate polynomial describing $f$.} Our construction rests on tunable $\DSEDL$ over a cyclic group $\GG=\langle g\rangle$, a decisional form of the small-exponent discrete logarithm problem. At a fixed interval $[\gamma]$, it asks whether $g^s$, for $s$ sampled from $[\gamma]$, can be distinguished from a uniform group element. A fixed polynomial interval is searchable by \BSGS in $\bigO(\sqrt\gamma)$ group operations, so the assumption rests on the order of the quantifiers, with the tunable form fixing the observer first and only then choosing an interval large enough to defeat that one observer. For every polynomial size bound $t$ and every inverse polynomial $\delta$, there is a polynomial $Q$ such that every adversary $\advA$ of size $t$ separates a uniform element of $\GG$ from $g^s$ with advantage at most $\delta$, for every interval bound $\gamma\ge Q(\secparam)$. Zhandry argues that \ELF{}s call for an exponentially hard assumption too~\cite{C:Zhandry16}.

\paragraph{How Small the Image Can Be.} The tuning is forced on us. Evaluation is public, so anyone can run the function on $k$ random inputs and watch for a repeat, and an image of $r$ points produces one at the birthday scale $k^2/r$. Any \ELF{} that hides its lossy mode from every observer of size $k$ up to advantage $\varepsilon$ therefore needs $r=\Omega(k^2/\varepsilon)$, which is why the image parameter has to be chosen after the observer and never before. We obtain the matching upper bound in the generic group model, where a distinguisher making $T$ queries gains $\bigO(T^2/\gamma)$ in a group whose order has large prime factors, and \BSGS attains it.

\paragraph{Recalling $\DDLog$-based Trapdoor Hashing.} To better understand our construction, it is worth recalling the group-based trapdoor hashing schemes of~\cite{C:DGIMMO19,C:RoySin21}. These build on the distributed discrete logarithm framework of~\cite{C:ADOS22}, which can be instantiated over \DamgardJurik groups or class groups (which have a transparent setup). The framework is defined over a group $\bbH$ that can be decomposed as the direct product of two cyclic groups $\bbD=\langle \hat g\rangle$ and $\GG=\langle g\rangle$, where the uniform distribution over $\bbH$ is computationally indistinguishable from the uniform distribution over $\GG$. The groups also satisfy the property that discrete logarithms are easy in $\bbD$ and hard in $\GG$. Moreover, an efficient procedure called distributed discrete logarithm converts group elements $g_0,g_1$ with $g_1/g_0=\hat g^\sigma$ into integers $s_0=\DDLog(g_0)$ and $s_1=\DDLog(g_1)$ satisfying $s_1-s_0=\sigma \bmod N$, where $N\cqq\ord(\hat g)$.

In the construction, the hash key $\hk$ consists of $d$ random group elements $g_1, \dots, g_d$, where $g_i\sample\GG$. To hash an integer string $\bvec{x}\cqq(x_1, \dots, x_d)\in(\ZN)^d$ we simply compute $h=g_1^{x_1}\cdot \dots \cdot g_d^{x_d}$.

To encode the linear function $f_{\bvec{y}}(\bvec{x})\cqq\langle \bvec{x}, \bvec{y}\rangle$, where $\bvec{y}=(y_1, \dots, y_d)\in(\ZN)^d$, we sample a random $\sigma$ over a sufficiently large interval and we compute \begin{align*}K_1&=\hat g^{y_1}\cdot g_1^\sigma,& &\dots & K_d&=\hat g^{y_d}\cdot g_d^\sigma.\end{align*}
The encoding key $\ek$ consists of the tuple $(K_1, \dots, K_d)$, whereas the trapdoor is $\td=\sigma$. Observe that $\ek$ hides $\bvec{y}$, since a uniform element of $\GG$ is indistinguishable from a uniform element of $\bbH$.

Finally, we recall how the parties can derive a secret-sharing of $f_{\bvec{y}}(\bvec{x})=\langle \bvec{x}, \bvec{y}\rangle$ from $\ek$ and $h$. The hasher computes
\begin{align*}h_1&= K_1^{x_1}\cdot \dots \cdot K_d^{x_d}\\
&= \hat g^{x_1\cdot y_1}\cdot g_1^{x_1\cdot \sigma}\cdot \dots \cdot \hat g^{x_d\cdot y_d}\cdot g_d^{x_d\cdot \sigma}\\
&=\hat g^{\langle \bvec{x}, \bvec{y}\rangle}\cdot h^\sigma.\end{align*}
On the other hand, the evaluator can compute $h_0=h^\sigma$ using the fact that $\td=\sigma$. At this point, the parties hold group elements $h_1$ and $h_0$ such that $h_1/h_0=\hat g^{\langle \bvec{x},\bvec{y}\rangle}$. By locally applying $\DDLog$ to their group elements, the parties derive a secret sharing of the result $\langle \bvec{x},\bvec{y}\rangle$.

\paragraph{Our Extremely Lossy Trapdoor Hashing Scheme.} Our construction is essentially the same as the construction of \cite{C:DGIMMO19,C:RoySin21} with two main modifications. First of all, instead of directly hashing $\bvec{x}$, we hash the bit decomposition $\mat{G}^{-1}(\bvec{x})$, the bits of the coordinates of $\bvec{x}$ laid out in one vector (the $\hk$ needs to become $\clog{N}$ times bigger). On the other hand, instead of encoding $f_{\bvec{y}}$, we ask the evaluator to encode $f_{\bvec{y}\otimes \bvec{g}}$ where $\bvec{g}=(1,2, \dots, 2^{\clog{N}-1})$. Notice that correctness is preserved as \[f_{\bvec{y}\otimes \bvec{g}}(\mat{G}^{-1}(\bvec{x}))=\langle \mat{G}^{-1}(\bvec{x}), \bvec{y}\otimes \bvec{g}\rangle=\langle \bvec{x}, \bvec{y}\rangle.\]

The second adjustment is that, instead of sampling the elements $g_1,\dots,g_d$ in the hash key $\hk$ uniformly at random, we sample $s_i$ over a small set $[\gamma]$ of polynomial size and set $g_i\gets g^{s_i}$. Tunable $\DSEDL$ immediately makes the new hash key indistinguishable from the old one for every adversary running in time well below $\sqrt\gamma$, up to a small inverse polynomial advantage.

Now, fix $\hk$ and $\ek$ and let us count how many pairs $(h, e)$ encode a given output $\langle \bvec{x}, \bvec{y}\rangle$. First of all, observe that for every $h$, there is at most one $e$ that is consistent with $\langle \bvec{x}, \bvec{y}\rangle$ (assuming perfect correctness). Indeed, this must be $\langle \bvec{x}, \bvec{y}\rangle+\Dec(\td, h)$. Therefore, the question boils down to computing the size of the image of $\Hash(\hk, \cdot)$. Notice that every hash $h$ is of the form $g^z$ where $z=\langle \mat{G}^{-1}(\bvec{x}), \bvec{s}\rangle$. Since $\mat{G}^{-1}(\bvec{x})$ is a bit vector of length $d\clog{N}$ and every $s_i$ lies in $[\gamma]$, we conclude that $z\in[0,d\gamma\clog{N}]$. Notice that this is an interval of polynomial size. In other words, we have proven that there are at most polynomially many pairs $(h, e)$ that encode a given output $\langle \bvec{x}, \bvec{y}\rangle$. The rest of the scheme survives the change. The two shares still differ by exactly the predicate value, so the construction is perfectly correct and rate one, and the digest still hides the input statistically.

\paragraph{From the Hash to a Trapdoor \ELF{}.} A single residue is already enough for a whole \ELF{} input. For the domain $[m]$, we send $u$ to the residue $x=u-1$ and publish the pair $(h,e)$, the digest beside the encoder's share. Injective generation puts the identity predicate in the encoding key at a superpolynomial $r$, which makes $e-\Dec(\td,h)=x$, so the trapdoor inverts every output. Turning to the lossy mode, we instead take the zero predicate at a polynomial $r$; the share $e$ is then already fixed by $h$, and the public map inherits the image of the hash, of size at most $r$. The bit decomposition never leaves the hash.

\paragraph{Branch Arithmetic.} We encode a branch $b\in\bits^a$ by the integer $\beta(b)$ and associate a selected set $\Sstar$ with the polynomial
\[
P_{\Sstar}(T)=\prod_{\bstar\in\Sstar}\bigl(\beta(\bstar)-T\bigr)
=(-1)^sT^s+\sum_{i=0}^{s-1}c_iT^i.
\]
Each coefficient $c_i$ becomes one scalar \ELTDH{} predicate $x\mapsto c_i x$. On branch $b$, weighting the decoded rows by $1,\beta(b),\ldots,\beta(b)^{s-1}$ reconstructs $P_{\Sstar}(\beta(b))x$. For selective security, we duplicate these coefficient rows into two bundles: while one bundle changes row by row under function privacy, the other retains the trapdoors needed to answer inversion queries.

\subsection{Comparison with Prior Work}\label{sec:comparison}
\Cref{tab:elf-quantitative-comparison} sets the constructions side by side at domain size $m$, writing $n=\clog{m}$ for the input length, $a$ for the branch length, and $\sstar=\abs{\Sstar}$ for the number of selected branches, with \ABO{} and \ABS{} the all-but-one and all-but-some variants. Group, field, exponent, residue, hash, and group-operation units appear as $\var{G}$, $\var{F}$, $\var{E}$, $\var{R}$, $\var{H}$, and $\var{Op}$. Our rows count elements beside the common group parameters and integer metadata, with $w=2\ell(\groupsec(\cdot))$ the padded bit width of one residue, for the group-security function $\groupsec$ and the prime bit length $\ell$, evaluated at $n$ for the trapdoor \ELF{}, at $n+a$ for \ABO{}, and at $n+a\sstar$ for \ABS{}. One group element of the plain construction takes $\grplen(\secparam)$ bits at $\secparam=\ceil{cn}$, for the representation-length bound $\grplen$ of its group sampler, so its element count is linear while its bit length grows faster.

The trapdoor rows follow from the construction of~\cite{EC:Zhandry19a} rather than from a bound stated there. That construction composes $t=\ceil{\sqrt{\log\log m}}$ layers of the matrix-in-the-exponent function of~\cite{PKC:FGKRS10} at $r_i=2^{2^i}$, and the bounded-adversary output bound $\log N\le C\log M\log r$ leaves the last layer with $n^{3/2+o(1)}$ input bits, so its square matrix dominates the key at $n^{3+o(1)}$ elements. A branch length $a$ multiplies each layer by the $2a+1$ matrices of $A_b=B+\sum_iA_{i,b_i}$.

\begin{table}[!hbt]
	\centering
	\begin{publicationtable}[\small]
		\begin{adjustbox}{max width=\textwidth,center}
			\begin{tabular}{@{}lcccc@{}}
				\toprule
				\tablehead{Construction}                       & \tablehead{Public}                          & \tablehead{Trapdoor}         & \tablehead{Output}           & \tablehead{Evaluation}                      \\
				\midrule
				\multicolumn{5}{@{}l@{}}{\tablegroup{Prior Work}}                                                                                                                                                        \\
				\ELF{}~\cite{C:Zhandry16}                 & \(\Theta(n^2)\var{G}+\Theta(n)\var{H}\) & --                           & \(3n\) bits                  & \(\Theta(n^2)\var{Op}+\Theta(n)\) hashes \\
				\tablealtrow \TELF{}~\cite{EC:Zhandry19a} & \(n^{3+o(1)}\var{G}\)                     & \(n^{3+o(1)}\var{F}\)      & \(n^{3/2+o(1)}\var{G}\)    & \(n^{3+o(1)}\var{Op}\)                   \\
				\ABO{}~\cite{EC:Zhandry19a}               & \(an^{3+o(1)}\var{G}\)                    & \(an^{3+o(1)}\var{F}\)     & \(n^{3/2+o(1)}\var{G}\)    & \(an^{3+o(1)}\var{Op}\)                  \\
				\multicolumn{5}{@{}l@{}}{\tablegroup{This Work}}                                                                                                                                                         \\
				Plain \ELF{}                                      & \(n\var{G}\)                              & --                           & \(1\var{G}\)               & \(\bigO(n)\var{Op}\)                     \\
				\tablealtrow Trapdoor \ELF{}                      & \((2w+2)\var{G}\)                        & \(1\var{E}\)               & \(1\var{G}+1\var{R}\)  & \(\bigO(n)\var{Op}\)                     \\
				Selective \ABO{}                                  & \(3(w+1)\var{G}\)                         & \(2\var{E}\)               & \(1\var{G}+2\var{R}\)  & \(\bigO(n)\var{Op}\)                     \\
				\tablealtrow Selective \ABS{}                     & \((2\sstar+1)(w+1)\var{G}\)              & \(2\sstar\var{E}\)        & \(1\var{G}+2\var{R}\)  & \(\bigO(\sstar n)\var{Op}\)             \\
				\bottomrule
			\end{tabular}
		\end{adjustbox}
	\end{publicationtable}
	\caption{Asymptotic efficiency comparison of extremely lossy functions and their trapdoor variants.}
	\label{tab:elf-quantitative-comparison}
\end{table}

\parhead{Outline} We state our results in \cref{sec:introduction} and give the technical overview in \cref{sec:construction}. In \cref{sec:preliminaries} we fix the syntax and algebraic facts that our constructions use, and in \cref{sec:limits-small-exponent} we take up the image-size barrier, the tunable \DSEDL{} assumption, and the evidence from the generic group model. We construct \ELTDH{} in \cref{sec:hashing}, compile it into trapdoor and plain \ELF{}s in \cref{sec:elf}, and build the branch variants in \cref{sec:branch-telf}, with polynomially many selected branches and \ABO{} as the singleton case. \ifConferenceVersion We give the formal preliminary definitions and the full group-sampling derivations in \cref{app:supplementary-preliminaries}, the collision bounds and the complete generic-group analysis in \cref{app:small-exponent-details}, and additional instantiations and exact enumeration algorithms in \cref{app:instantiations-enumeration}. \IfDeferredProofsActive{We defer every remaining proof to \cref{app:deferred-proofs}. }{}\fi We compare with prior work at length in \cref{sec:related-work}.

\ifConferenceVersion\else
\subsection{Related Work}\label{sec:related-work}
\fi
\begin{relocated}{related-work}
\parhead{Lossy and Extremely Lossy Functions} For lossy trapdoor functions~\cite{STOC:PeiWat08} and their many refinements, namely compact and correlation-secure variants~\cite{ACNS:BoyWat10,PKC:FGKRS10}, slight and identity-based lossiness~\cite{PKC:MolYil10,EC:BKPW12,EC:Wee12}, lossy codes~\cite{EC:DotMul13}, lossy encryption~\cite{AC:HLOV11,AC:HemOst13}, all-but-many branches~\cite{EC:Hofheinz12,C:BoyLi17,C:LSSS17}, adversary-dependent and targeted lossiness~\cite{C:YYHK16,C:QuaWatWic21}, and code-based constructions~\cite{C:DaoJai24}, the size of the lossy image is fixed before the adversary is known, and lossiness does not amplify generically~\cite{TCC:PieRosSeg12}. For extremely lossy functions~\cite{C:Zhandry16}, in contrast, a polynomial image is chosen after the observer, which is exactly what random-oracle-style reductions need, and Zhandry~\cite{EC:Zhandry19a} uses their trapdoor and branch variants for deterministic encryption and correlated-input security. Outside obfuscation~\cite{FOCS:IlaLom25}, every construction is obtained by bootstrapping from a sequence of exponentially $\DDH$-hard instances, while direct algebraic constructions are consistent with the black-box separations of~\cite{TCC:FisRoh23}. We give such a direct construction, in which we replace the sequence of security levels by one bounded subset-sum space and obtain the trapdoor, all-but-one, and all-but-some variants from one hash.

\parhead{Trapdoor Hashing and Succinct Computation} Trapdoor hashing~\cite{C:DGIMMO19} followed work on short openings at hidden positions~\cite{AC:OPWW15,ITCS:HubWic15}, and from its \DCR{} instance, Roy and Singh~\cite{C:RoySin21} obtain rate-one trapdoor hashing and large-message \HSS{}. It shares the goal of succinct computation with laconic oblivious transfer, function evaluation, and conditional disclosure~\cite{C:CDGGMP17,FOCS:QuaWeeWic18,FOCS:DGGM19,AC:BDHL25,STOC:AbrMalRoy25}, with somewhere-extractable hashing~\cite{FOCS:DGKV22,PKC:BDSZ24}, and with enhanced, vector, and sublinear trapdoor hashes~\cite{EC:BCDJMS25,EC:BitGar25,EC:CouHegPu25,TCC:AbrMalRoy25}. Its group-based instances are based on distributed discrete logarithms from homomorphic and function secret sharing~\cite{EC:GilIsh14,EC:BoyGilIsh15,CCS:BoyGilIsh16,C:BoyGilIsh16,CCS:BCGIO17,ITCS:BGILT18,AC:LaiMalSch18,C:DinKelKle18,EC:BoyKohSch19,EC:OrlSchYak21,PKC:IshLaiMal21,C:DIJL23,PKC:AttCapKoh23}. In all of these instances, the hash bases are uniform, and we change only their distribution, so that the digest depends on one short subset sum and trapdoor hashing acquires a tunable, extremely small image. With a counting argument, we show that independently bounded rows cannot do the same.

\parhead{Short Exponents and Generic Evidence} Short exponents were studied for efficiency and pseudorandomness~\cite{EC:vOoWie96,C:PatSun98,PKC:KosKur04} and as the $c$-decisional short discrete logarithm problem~\cite{EC:AFLT12,JC:AFLT16}, always over superpolynomial intervals, while constrained discrete logarithms depend on the shape of the allowed set~\cite{EPRINT:MirMitNis06} and preprocessing can help decisional problems~\cite{EC:CorKog18}. Our tunable assumption instead needs polynomial intervals chosen after the observer, so we analyze it directly. Against the classical generic lower bounds~\cite{Nechaev94,EC:Shoup97,IMA:Maurer05} and square-root searches~\cite{Sha1971,Pol1978,EC:vOoWie96,JC:Pollard00}, we prove an $\bigO(T^2/\gamma+T^2/\piq)$ bound for $T$ generic operations, tight even for one challenge, and extend it to auxiliary input by presampling~\cite{C:CorDodGuo18}, while the results of~\cite{EC:CorHenWu26} in structured generic models do not yet include the decisional claim. The interval must grow quadratically in the observer's work, the same scale that public collision testing forces on every \ELF{}.

\end{relocated}%

\FloatBarrier

\section{Preliminaries}\label{sec:preliminaries}
\ifConferenceVersion
\prelimparhead{Trapdoor Hashing}{subsec:trapdoor-hashing}\phantomsection\label{subsec:notation}%
We use the trapdoor hashing of~\cite{C:DGIMMO19}, which consists of setup, hashing, predicate-dependent generation of an encoding key and a trapdoor, encoding with the same random tape as hashing, and deterministic decoding of a digest with the trapdoor into the two candidate encodings indexed by the possible predicate values. Correctness requires the encoding to equal the candidate indexed by $f(x)$, input privacy requires the digests of any two inputs of equal length to be indistinguishable, and function privacy requires the encoding keys of any two predicates to be indistinguishable~\cite{C:DGIMMO19}.

\prelimparhead{Extremely Lossy Functions}{subsec:elf-syntax}\phantomsection\label{subsec:telf-syntax}%
An extremely lossy function~\cite{C:Zhandry16} has computationally indistinguishable injective and lossy modes, and the bound on its lossy image can be any sufficiently large polynomial chosen after the running time and target advantage of the adversary. Following~\cite{C:Zhandry16}, the image of a trapdoor extremely lossy function is weakly or strongly efficiently enumerable when an algorithm that runs in time polynomial in $r$ and $\log m$ outputs, from a lossy description with image bound $r$, a superset of its image or its image, respectively, except with negligible probability.

\PlaceRelocated{telf-definition}

\prelimparhead{Cyclic Group Samplers and \DamgardJurik Groups}{subsec:cryptographic-groups}\phantomsection\label{subsec:dj-groups}%
We state our assumptions over a cyclic group sampler, an algorithm that outputs the parameters of a finite cyclic group with a public generator, efficient group operations, and efficient uniform sampling, whose order need not be computable from the parameters, and that outputs $\bot$ with negligible probability. Under the safe-prime setup convention, the \DamgardJurik sampler~\cite{PKC:DamJur01,C:RoySin21} is such a sampler: for a product $N=pq$ of two safe primes, the unit group $\ZNsqx$ contains the cyclic subgroup $\GN$ of $2N$-th powers and the easy subgroup $\easypart$, and $\GN\cap\easypart=\{1\}$. In our construction, the hidden shared exponents lie in $\GN$ and the messages lie in $\easypart$, to which $\Expo(x)\cqq1+Nx$ maps $x\in\ZN$, and with $\DDLog_N$, we remove the shared multiplicative shift.
\else
We fix our notation and recall the definitions of trapdoor hashing~\cite{C:DGIMMO19} and of extremely lossy functions~\cite{C:Zhandry16}, which we construct in this paper. We also define the cyclic group samplers over which we state our assumptions, and we recall the \DamgardJurik group, which is such a sampler under the safe-prime setup convention, and the composite-residuosity assumption.

\fi
\begin{relocated}{notation}
\subsection{Notation and Conventions}\label{app:notation-conventions}

We fix the encodings of integers and residues, our probability notation, and the quantitative indistinguishability relation in which we state every mode-indistinguishability bound.

\subparhead{Sets and encodings} We denote the security parameter by $\secparam$ and write $\bits$ for the binary alphabet. For integers $a\le b$, we write $[a..b]\cqq\{a,a+1,\dots,b\}$, and for every positive integer $n$, we write $[n]\cqq[1..n]$. For $u\in[2^n]$, we write $x(u)=(x_1(u),\dots,x_n(u))\in\bits^n$ for the unique bit vector such that
\[
	u-1=\sum_{j=1}^n2^{j-1}x_j(u).
\]
For an integer $M\ge1$ and $v\in\ZZ/M\ZZ$, we write $\rep(v)\in[0..M-1]$ for the least nonnegative representative of $v$. For every integer $w\ge\ceil{\log_2M}$, we write $\binw(v)=(v_0,\dots,v_{w-1})\in\bits^w$ for the unique padded binary expansion of $v$, that is, the unique bit vector such that
\[
	\rep(v)=\sum_{k=0}^{w-1}2^kv_k.
\]
For a residue vector $\bvec x=(x_1,\dots,x_d)$, we apply $\binw$ coordinatewise and write $x_{j,k}$ for the $k$th bit of $x_j$. We write vectors and matrices in boldface, $\bvec{v}_j$ or $\bvec{v}[j]$ for a coordinate of a vector, $\mat{A}_{i,j}$ for an entry of a matrix, and $\mat{A}^{\transpose}$ for the transpose.

\subparhead{Algorithms and probability} For a randomized algorithm $\advA$, $y\sample\advA(x)$ denotes the output of $\advA$ on input $x$ and independent uniform coins, and $y\gets\advA(x;\rho)$ denotes its output on random tape $\rho$. For a deterministic algorithm $\advA$, we write $y\gets\advA(x)$, and for a finite set $S$, $y\sample S$ denotes a uniform sample from $S$. The notation $\rho\sample\bits^*$ denotes an infinite tape of independent uniform bits, and a polynomial-time algorithm depends only on a polynomially bounded prefix of it. For $k\in\NN$, we write $\rho\vert_k\in[0..2^k-1]$ for the integer whose binary representation is the first $k$ bits of $\rho$. For a proposition $P$, we write $\ind{P}\in\bits$ for the indicator that equals one if and only if $P$ holds. Unless we say otherwise, we allow every $\PPT$ adversary to be nonuniform.

\subparhead{Asymptotics} We write $\poly(n)$ for an unspecified function bounded above by a polynomial in $n$. Unless we say otherwise, a polynomial in an asymptotic quantifier is positive on all sufficiently large inputs. When we use such a polynomial as an integer parameter or as an endpoint of an integer interval, we replace it by an integer-valued polynomial upper bound, and this replacement does not change any asymptotic statement.

\subparhead{Distributions} For ensembles $X=(X_k)_{k\in\NN}$ and $Y=(Y_k)_{k\in\NN}$ over a common domain, we write $X\equivdist Y$, $X\sindist Y$, and $X\cindist Y$ when $X$ and $Y$ are identically distributed, statistically indistinguishable, and computationally indistinguishable against nonuniform $\PPT$ distinguishers, respectively. For random variables $X_k$ and $Y_k$ at a fixed parameter $k$, we write $X_k\indist_{s,\alpha}Y_k$ when every probabilistic circuit of size at most $s(k)$ has advantage at most $\alpha(k)$ in distinguishing $X_k$ from $Y_k$. The first subscript is a size bound and the second an advantage bound, so a target advantage $1/\delta(k)$ is written as the subscript $1/\delta$, and the relation is required at every parameter $k$ that the surrounding quantifiers fix.

In every proof of mode indistinguishability, we fix one threshold for all circuits of a given polynomial size, and such a threshold exists by the following consequence of nonuniform indistinguishability.

\begin{lemmaE}[Threshold for Polynomial-Size Circuits]\label{lem:uniform-quantitative-threshold}
	Let $X=(X_k)_{k\in\NN}$ and $Y=(Y_k)_{k\in\NN}$ be ensembles with $X\cindist Y$. For every polynomial $p$ and every positive polynomial $\delta$, there exists $k_0\in\NN$ such that $X_k\indist_{p,1/\delta}Y_k$ for every $k\ge k_0$.
\end{lemmaE}
\begin{proofE}
	We combine one violating circuit for each bad parameter into one nonuniform family, whose advantage then exceeds $1/\delta(k)$ infinitely often. We suppose, toward a contradiction, that the conclusion fails for some $p$ and $\delta$. Then for infinitely many $k$, there is a probabilistic circuit $A_k$ of size at most $p(k)$ such that
	\[
		\abs{\Prob[A_k(X_k)=1]-\Prob[A_k(Y_k)=1]}>1/\delta(k).
	\]
	For each such $k$, we choose one violating circuit and complement its output if necessary, so that
	\[
		\Prob[A_k(X_k)=1]-\Prob[A_k(Y_k)=1]>1/\delta(k),
	\]
	and for every other $k$, we let $A_k$ output $0$. The family $A=(A_k)_{k\in\NN}$ is nonuniform and has polynomial size, and its distinguishing advantage exceeds $1/\delta(k)$ for infinitely many $k$. Since this advantage is not negligible, $X\cindist Y$ is false, a contradiction.
\end{proofE}

\end{relocated}%
\ifConferenceVersion\else
\subsection{Trapdoor Hashing}\label{app:tdh-definitions}

In trapdoor hashing~\cite{C:DGIMMO19}, a short digest of a long input $x$ and a trapdoor for a predicate $f$ determine a short encoding of $f(x)$. We require that the encoding computed from $x$ equals the candidate that the trapdoor and the digest select for $f(x)$, that the digests of any two inputs of equal length are indistinguishable, and that the encoding keys of any two predicates are indistinguishable. When a scheme has auxiliary public parameters shared by all algorithms, we give them to every algorithm and adversary and suppress them from the syntax.

We use the syntax of~\cite{C:DGIMMO19} with three changes: the predicate class can depend on the hash key, we leave the lengths of digests and encodings open, and we require the decoding algorithm to be deterministic.

\begin{definition}[Trapdoor Hashing]\label{def:trapdoor-hashing}
	Let $\FuncClass$ be a predicate class indexed by $\secparam,n\in\NN$ and by a hash key, such that every $f\in\FuncClass_{\secparam,n,\hk}$ maps $\bits^n$ to $\bits$. A trapdoor-hashing scheme for $\FuncClass$ is a tuple $\THF=(\TDHSetup,\allowbreak\TDHHash,\allowbreak\TDHGen,\allowbreak\TDHEnc,\allowbreak\TDHDec)$ of the following five probabilistic polynomial-time algorithms.
	\begin{itemize}
		\item $\hk\sample\TDHSetup(1^\secparam,1^n)$. On input the security parameter and an input length $n$, the setup algorithm samples a hash key $\hk$ for inputs in $\bits^n$.
		\item $h\gets\TDHHash(\hk,x;\rho)$. On input a hash key $\hk$, an input $x\in\bits^n$, and a random tape $\rho$, the hashing algorithm deterministically outputs a digest $h$.
		\item $(\ek,\td)\sample\TDHGen(\hk,f)$. On input a hash key $\hk$ and a predicate $f\in\FuncClass_{\secparam,n,\hk}$, the generation algorithm samples a public encoding key $\ek$ and a trapdoor $\td$.
		\item $t\gets\TDHEnc(\ek,x;\rho)$. On input an encoding key $\ek$, an input $x\in\bits^n$, and a random tape $\rho$, the encoding algorithm deterministically outputs an encoding $t$.
		\item $(t_0,t_1)\gets\TDHDec(\td,h)$. On input a trapdoor $\td$ and a digest $h$, the deterministic decoding algorithm outputs the two candidate encodings $t_0$ and $t_1$, indexed by the two possible values of $f(x)$.
	\end{itemize}
\end{definition}

We state the correctness of~\cite{C:DGIMMO19} as one failure bound $\delta$, for every hash key in the support of setup, on the event that the encoding equals the candidate indexed by $f(x)$ and the two candidates differ. For a negligible $\delta$, this implies the full correctness of~\cite{C:DGIMMO19}, and in the lossy constructions, we add $\delta$ to the other failure events by a union bound.

\begin{definition}[Correctness]\label{def:tdh-correctness}
	Let $\delta\colon\NN\to[0,1]$. A trapdoor-hashing scheme $\THF$ for $\FuncClass$ is $\delta$-correct if for every $\secparam,n\in\NN$, every $x\in\bits^n$, every $\hk\in\Supp(\TDHSetup(1^\secparam,1^n))$, and every $f\in\FuncClass_{\secparam,n,\hk}$,
	\[
		\Prob\left[
			t=t_{f(x)} \land t_0 \neq t_1
			\mbar
			\begin{array}{l}
				\rho\sample\bits^*,             \\
				h\gets\TDHHash(\hk,x;\rho),     \\
				(\ek,\td)\sample\TDHGen(\hk,f), \\
				t\gets\TDHEnc(\ek,x;\rho),      \\
				(t_0,t_1)\gets\TDHDec(\td,h)
			\end{array}
			\right]\ge 1-\delta(\secparam).
	\]
	We call $\THF$ fully correct if it is $\varepsilon$-correct for some negligible function $\varepsilon$, and perfectly correct if it is $0$-correct.
\end{definition}

Input privacy is the notion of~\cite{C:DGIMMO19}, with nonuniform sequences of input pairs, to which we add a perfect variant.

\begin{definition}[Input Privacy]\label{def:tdh-input-privacy}
	Let $x^0=(x^0_\secparam)_{\secparam\in\NN}$ and $x^1=(x^1_\secparam)_{\secparam\in\NN}$ be sequences of polynomial length with $\nlam\cqq\abs{x^0_\secparam}=\abs{x^1_\secparam}$ for every $\secparam\in\NN$. For $b\in\bits$, define the ensemble $\Dinput_b=(\Dinput_{b,\secparam})_{\secparam\in\NN}$ by
	\[
		\Dinput_{b,\secparam}\cqq
		\left\{(\hk_\secparam,h^b_\secparam)\ \middle|\
		\begin{array}{l}
			\hk_\secparam\sample\TDHSetup(1^\secparam,1^{\nlam}), \\[-1pt]
			\rho^b_\secparam\sample\bits^*,                             \\[-1pt]
			h^b_\secparam\gets\TDHHash(\hk_\secparam,x^b_\secparam;\rho^b_\secparam)
		\end{array}\right\}.
	\]
	A trapdoor-hashing scheme $\THF$ is input private if $\Dinput_0\cindist\Dinput_1$ for every such pair of sequences. It has statistical input privacy if $\Dinput_0\sindist\Dinput_1$, and perfect input privacy if $\Dinput_0\equivdist\Dinput_1$, for every such pair.
\end{definition}

Function privacy is the analogue for trapdoor hashing of mode indistinguishability for lossy trapdoor functions~\cite{STOC:PeiWat08}. In the function privacy of~\cite{C:DGIMMO19}, the two predicate sequences are fixed in advance. Since our predicate class depends on the hash key, we let a nonuniform selector choose the two predicates after it receives the hash key, and we give its state $\var{st}$ to the distinguisher.

\begin{definition}[Function Privacy]\label{def:tdh-function-privacy}
	Let $n\colon\NN\to\NN$ be polynomially bounded, and write $\nlam\cqq n(\secparam)$. A selector is a nonuniform $\PPT$ algorithm $\advS$ that on every input $(1^\secparam,\hk)$ outputs a triple $(f^0,f^1,\var{st})$ with $f^0,f^1\in\FuncClass_{\secparam,\nlam,\hk}$. A trapdoor-hashing scheme $\THF$ for $\FuncClass$ is function private if for every selector $\advS$,
	\begin{align*}
		           & \left\{(\hk,\var{st},\ek)\ \middle|\
		\begin{array}{l}
			\hk\sample\TDHSetup(1^\secparam,1^{\nlam}),      \\[-1pt]
			(f^0,f^1,\var{st})\sample\advS(1^\secparam,\hk), \\[-1pt]
			(\ek,\td)\sample\TDHGen(\hk,f^0)
		\end{array}\right\}_{\secparam\in\NN} \\
		\cindist{} & \left\{(\hk,\var{st},\ek)\ \middle|\
		\begin{array}{l}
			\hk\sample\TDHSetup(1^\secparam,1^{\nlam}),      \\[-1pt]
			(f^0,f^1,\var{st})\sample\advS(1^\secparam,\hk), \\[-1pt]
			(\ek,\td)\sample\TDHGen(\hk,f^1)
		\end{array}\right\}_{\secparam\in\NN}.
	\end{align*}
\end{definition}
\fi
\begin{relocated}{elf-definition}
\subsection{Extremely Lossy Functions}\label{app:elf-syntax}

\cite{STOC:PeiWat08} defines lossy trapdoor functions, whose injective and lossy modes are computationally indistinguishable, and constructs chosen-ciphertext-secure encryption from them, and~\cite{C:BolFehONe08} constructs deterministic encryption from them. \cite{C:Zhandry16} defines extremely lossy functions, in which the bound $r$ on the lossy image can be any sufficiently large polynomial chosen after the running time and target advantage of the adversary.

We use the definition of~\cite{C:Zhandry16} with three changes. We write $m$ and $\Nout(m)$ for the domain size and the bound on the range, which~\cite{C:Zhandry16} calls $M$ and $N$, because $N$ denotes a modulus in this paper, and we require $\log\Nout(m)=\poly(\log m)$ as~\cite{EC:Zhandry19a} does when it recalls this definition. We give evaluation as a separate deterministic algorithm. We state mode indistinguishability for probabilistic circuits of size $t$, where~\cite{C:Zhandry16} quantifies over adversaries whose running time is at most a polynomial $p$.

\begin{definition}[Extremely Lossy Function]\label{def:extremely-lossy-function}
	For every $m\ge2$, write $n_m\cqq\clog{m}$. An extremely lossy function $\ELF=(\ELFGen,\ELFEval)$ consists of a function $\Nout$ with $\log\Nout(m)=\poly(n_m)$ and the following two algorithms.
	\begin{itemize}
		\item $f\sample\ELFGen(m,r)$. On input integers $m\ge2$ and $r\in[m]$, the probabilistic polynomial-time generation algorithm samples a description $f$ of a function $f\colon[m]\to[\Nout(m)]$.
		\item $y\gets\ELFEval(f,u)$. On input a description $f$ and $u\in[m]$, the deterministic evaluation algorithm outputs $y=f(u)$ in time $\poly(n_m)$.
	\end{itemize}
	We identify a description $f$ with the function $u\mapsto\ELFEval(f,u)$. The pair $\ELF$ satisfies the following three properties.
	\begin{itemize}
		\item \textbf{Injective Mode.} If $f\sample\ELFGen(m,m)$, then $f$ is injective except with probability negligible in $n_m$.
		\item \textbf{Lossy Mode.} For every $r\in[m]$, if $f\sample\ELFGen(m,r)$, then $\abs{\Image(f)}\le r$ except with probability negligible in $n_m$.
		\item \textbf{Mode Indistinguishability.} For every polynomial $t$ and every positive polynomial $\delta$, there exists a polynomial $Q$ such that for every sufficiently large $m$ and every $r\in[Q(n_m)..m]$,
		      \[
			      \left\{
			      \begin{array}{@{}c|l@{}}
				      f & f\sample\ELFGen(m,m)
			      \end{array}
			      \right\}
			      \indist_{t,1/\delta}
			      \left\{
			      \begin{array}{@{}c|l@{}}
				      f & f\sample\ELFGen(m,r)
			      \end{array}
			      \right\},
		      \]
		      where $t$ and $\delta$ are evaluated at $n_m$.
	\end{itemize}
\end{definition}

\end{relocated}%
\begin{relocated}{group-samplers}
\subsection{Cyclic Group Samplers}\label{app:cyclic-group-samplers}

We state our bounded-exponent assumptions and our trapdoor hash over a finite cyclic group with a public generator, efficient group operations, and an efficient uniform sampler, whose order need not be computable from the public parameters. Since the order can be hidden, we define the sampler without an exponent space.

\begin{definition}[Cyclic Group Sampler]\label{def:group-sampler}
	A cyclic group sampler is a randomized polynomial-time algorithm $\CGen(\cdot)$ that on input $1^\secparam$ outputs either a distinguished failure symbol $\bot$ or parameters $\pp$, such that the probability that $\CGen(1^\secparam)$ outputs $\bot$ is negligible in $\secparam$, and such that there exist a positive integer-valued polynomial $\grplen$ and a negligible function $\nuG$ for which every $\secparam\in\NN$ and every $\pp\in\Supp(\CGen(1^\secparam))\setminus\{\bot\}$ satisfy the following four conditions.
	\begin{itemize}
		\item \textbf{Group.} The parameters $\pp$ determine a finite cyclic group $\GG=\GG(\pp)$ of order $\ord(\pp)\cqq\abs{\GG}\ge2^\secparam$ and a generator $g=g(\pp)$ of $\GG$.
		\item \textbf{Representation.} The parameters, the generator, and every group element have canonical representations of length at most $\grplen(\secparam)$.
		\item \textbf{Operations.} The group operation, inversion, and exponentiation by an integer of polynomially bounded bit length take time $\poly(\secparam)$.
		\item \textbf{Sampling.} A randomized algorithm $\SampG(\pp)$ runs in time $\poly(\secparam)$ and outputs an element of $\GG$ or $\bot$, with
		      \[
			      \Prob[\SampG(\pp)=\bot]\le\nuG(\secparam),
		      \]
		      and its output conditioned on $\SampG(\pp)\ne\bot$ is uniform on $\GG$.
	\end{itemize}
\end{definition}

Membership in $\GG$ and the order $\ord(\pp)$ need not be efficiently computable from $\pp$. In our experiments, we only sample from $\GG$ and compute in it, and in our inversion algorithms, we decide membership only in an ambient group that contains $\GG$.

Every quantifier over $\Supp(\CGen(1^\secparam))$, or over the support of an algorithm that calls $\CGen$, ranges over outputs other than $\bot$. An algorithm that receives $\bot$ from its sampler or as input outputs $\bot$, unless it explicitly outputs a declared fallback object, and a generator with such an explicit fallback branch has the fallback object in its support. We write $G\sample\GG(\pp)$ for a call to $\SampG(\pp)$ and identify a call that does not output $\bot$ with the uniform distribution on $\GG$, and an experiment in which such a call outputs $\bot$ outputs $\bot$. By a union bound, the probability that one of polynomially many such calls outputs $\bot$ is negligible, and in each quantitative statement with a target advantage $1/\delta$, we enlarge the threshold by this probability.

\end{relocated}%
\begin{relocated}{dj-groups}
\subsection{\DamgardJurik Groups}\label{app:dj-groups}\label{app:dj-sampling-details}\label{app:dj-structure-details}

Modulo the square of an RSA modulus $N$, the unit group is the direct product of a subgroup with easy discrete logarithms and a subgroup in which computing them is assumed hard. \cite{EC:Paillier99} bases Paillier encryption on this decomposition, and~\cite{PKC:DamJur01} generalizes it to higher powers of $N$. In our construction, the encoded predicate value lies in the easy subgroup, and the shared exponent, which the trapdoor removes, lies in the hard subgroup. We follow~\cite{C:RoySin21}, which computes distributed discrete logarithms without error in this setting, and we use only the modulus $N^2$, whose message space is $\ZN$. For a function $\ell\colon\NN\to\NN$, we write
\[
	\SafePrimes_\ell(\secparam)\cqq\left\{p\in\Primes\ \middle|\ (p-1)/2\in\Primes,\ 2^{\ell(\secparam)-1}<p<2^{\ell(\secparam)}\right\},
\]
and for $p,q\in\SafePrimes_\ell(\secparam)$, we write $p'\cqq(p-1)/2$, $q'\cqq(q-1)/2$, and $\varphi\cqq(p-1)(q-1)$ for Euler's totient of $N\cqq pq$.

We use the standard setup convention that distinct safe primes of a prescribed bit length can be sampled in probabilistic polynomial time with negligible failure probability. Its unconditional asymptotic validity is open, because the density, and even the infinitude, of safe primes is not known~\cite{VS2013}. We use this convention only to generate the public parameters, and each of our computational assumptions is an assumption about the resulting distribution of moduli. \cite{C:RoySin21} also uses safe-prime moduli in its trapdoor-hashing construction, and we realize the convention by a bounded search with $W(\secparam)$ trials for a polynomial $W$.

For our sampler, we take the safe-prime key generation of~\cite{C:RoySin21} and the \DamgardJurik setting of~\cite{PKC:DamJur01} at the modulus $N^2$, with three changes: we publish a generator $g$ of the subgroup $\GN$ of $2N$-th powers, which we obtain by rejection sampling with the factorization of $N$, we output the ordered pair $(N,g)$, and we output $\bot$ when a search exhausts its trials. Since we reject every element that does not generate $\GN$, the published element always generates $\GN$.

\begin{definition}[\DamgardJurik Sampler]\label{def:dj-group}\label{def:dj-group-full}
	Let $\ell$ be an efficiently computable positive integer-valued polynomial with $\ell(\secparam)\ge\secparam+2$ for every $\secparam\in\NN$, and let $W$ be a positive integer-valued polynomial such that each search in the first step exhausts its $W(\secparam)$ trials with probability negligible in $\secparam$. The \DamgardJurik sampler $\DJGen(1^\secparam)$ is the following randomized algorithm.
	\begin{itemize}
		\item \textbf{Safe primes.} It runs $(p,q)\sample\SafeGen_{\ell,W}(1^\secparam)$, which performs two independent searches, each drawing at most $W(\secparam)$ independent uniform odd integers from $\left(2^{\ell(\secparam)-1},2^{\ell(\secparam)}\right)$ and keeping the first one that lies in $\SafePrimes_\ell(\secparam)$. If either search exhausts its trials or $p=q$, then $\SafeGen_{\ell,W}$ outputs $\bot$, and so does $\DJGen$.
		\item \textbf{Group.} It sets $N\cqq pq$ and $\GN\cqq\left\{z^{2N}\mid z\in\ZNsqx\right\}$.
		\item \textbf{Generator.} With the factorization of $N$, it repeats the steps $z\sample\ZNsqx$ and $g\gets z^{2N}$, for at most $\secparam$ independent trials, until $g^{p'}\ne1$ and $g^{q'}\ne1$, and it outputs $\bot$ if all $\secparam$ trials are rejected.
		\item \textbf{Output.} Otherwise, it outputs $\pp\cqq(N,g)$ and discards the factorization.
	\end{itemize}
\end{definition}

We use safe primes as in~\cite{C:RoySin21}, so that $\GN$ is cyclic and $\varphi/4=p'q'$ is a product of two primes. \cite{C:DGIMMO19} assumes composite residuosity for a uniform Blum integer and uses only the subgroup decomposition. Under the setup convention, $\SafeGen_{\ell,W}$ outputs $\bot$ with negligible probability, and conditioned on success, its output is uniform among ordered pairs of distinct elements of $\SafePrimes_\ell(\secparam)$. An inverse-polynomial lower bound on the density of $\SafePrimes_\ell(\secparam)$ among the odd integers of the interval suffices to realize the setup convention.

We need the easy subgroup at every modulus $N\ge2$, including moduli that are not products of two safe primes, so we state it without the subgroup decomposition. \cite{C:RoySin21} uses the same isomorphism modulo $N^{s+1}$ for every $s\ge1$, and we need only $s=1$.

\begin{lemmaE}[Easy Part at an Arbitrary Modulus]\label{lem:easy-part-arbitrary-modulus}
	Let $N\ge2$ be an integer. The map $\Expo(x)\cqq1+Nx$ is a group isomorphism from $(\ZN)^+$ onto the subgroup $\easypart$ of $\ZNsqx$, which has order $N$, and its inverse $\Logg$ satisfies $\Logg(1+Nx)=x$. Both maps are computable in time $\poly(\log N)$.
\end{lemmaE}

\begin{proofE}
	In $\ZNsq$, we have
	\[
		(1+Nx)(1+Ny)=1+N(x+y)+N^2xy=1+N(x+y),
	\]
	so $\Expo$ is a homomorphism from $(\ZN)^+$. It is injective, since $1+Nx=1+Ny$ implies $N(x-y)\equiv0\pmod{N^2}$ and hence $x=y$ in $\ZN$. Its image is the set of the $N$ residues congruent to $1$ modulo $N$, and each of them is coprime to $N$, hence a unit modulo $N^2$. We compute $\Expo$ with one multiplication and $\Logg$ with one division by $N$, in time $\poly(\log N)$.
\end{proofE}

For a successfully sampled modulus, we need the order of $\ZNsqx$, the cyclicity and order of $\GN$, uniform sampling from $\GN$ with the public modulus, and the trivial intersection of $\GN$ with the easy subgroup. \cite{C:BraGol10} shows that $\ZNsqx$ is the direct product of $\easypart$ and the subgroup of $N$-th residues, of coprime orders $N$ and $\varphi$, for primes $p$ and $q$ of equal length. We prove the corresponding facts for the subgroup $\GN$ of $2N$-th powers and safe primes, and we prove that $\DJGen$ is a cyclic group sampler.

\begin{lemmaE}[Structure of the \DamgardJurik Group]\label{lem:dj-structure}
	Let $\ell$ be the polynomial of \cref{def:dj-group}, let $p,q\in\SafePrimes_\ell(\secparam)$ with $p\ne q$, and let $N\cqq pq$ and $M\cqq p'q'$. Then the following hold.
	\begin{parts}
		\item\label{lem:dj-structure:a} $\abs{\ZNsqx}=N\varphi$.
		\item\label{lem:dj-structure:b} $\GN$ is cyclic of order $M=\varphi/4>2^{2\ell(\secparam)-4}\ge2^{2\secparam}$, and the map $z\mapsto z^{2N}$ sends the uniform distribution on $\ZNsqx$ to the uniform distribution on $\GN$.
		\item\label{lem:dj-structure:c} $\GN\cap\easypart=\{1\}$, and the map $\GN\times\ZN\to\ZNsqx$, $(c,x)\mapsto c\Expo(x)$, is injective.
		\item\label{lem:dj-structure:d} Under the setup convention of \cref{def:dj-group}, $\DJGen$ is a cyclic group sampler in the sense of \cref{def:group-sampler}.
	\end{parts}
\end{lemmaE}

\begin{proofE}
	We compute the image of $z\mapsto z^{2N}$ in each factor of the Chinese-remainder decomposition of $\ZNsqx$. For \cref{lem:dj-structure:a} of \cref{lem:dj-structure}, by the Chinese remainder theorem, $\ZNsqx\cong(\ZZ/p^2\ZZ)^\times\times(\ZZ/q^2\ZZ)^\times$, and the two factors are cyclic of orders $p(p-1)$ and $q(q-1)$. Hence, $\abs{\ZNsqx}=pq(p-1)(q-1)=N\varphi$.

	For \cref{lem:dj-structure:b} of \cref{lem:dj-structure}, we fix a cyclic group $C$ of order $p(p-1)=2pp'$. The image of $c\mapsto c^{2N}$ on $C$ is the unique subgroup of order $2pp'/\gcd(2N,2pp')$, and $\gcd(2N,2pp')=2p\gcd(q,p')$. Every element of $\SafePrimes_\ell(\secparam)$ has bit length $\ell(\secparam)$, so $p'=(p-1)/2<2^{\ell(\secparam)-1}<q$. Hence, $q$ and $p'$ are distinct primes, $\gcd(2N,2pp')=2p$, and the image has order $p'$. In the same way, $q'<2^{\ell(\secparam)-1}<p$, and the image in the second factor has order $q'$. Since $p'$ and $q'$ are distinct primes, the product of the two images is cyclic of order $p'q'=M$, and $M=\varphi/4$ because $\varphi=4p'q'$. Since $p$ and $q$ are odd, distinct, and larger than $2^{\ell(\secparam)-1}$, we have $p',q'\ge2^{\ell(\secparam)-2}$ with at least one strict inequality, so $M>2^{2\ell(\secparam)-4}\ge2^{2\secparam}$ because $\ell(\secparam)\ge\secparam+2$. By the definition of $\GN$, the map $z\mapsto z^{2N}$ is a homomorphism from $\ZNsqx$ onto $\GN$. All its preimages are cosets of its kernel and have the same size, so it sends the uniform distribution on $\ZNsqx$ to the uniform distribution on $\GN$.

	For \cref{lem:dj-structure:c} of \cref{lem:dj-structure}, the orders $M=p'q'$ and $N=pq$ are coprime, because $p',q'<2^{\ell(\secparam)-1}<p,q$ and all four numbers are prime. By \cref{lem:easy-part-arbitrary-modulus}, $\easypart$ is a subgroup of order $N$. An element of $\GN\cap\easypart$ generates a subgroup whose order divides both $M$ and $N$, hence equals $1$. If $c\Expo(x)=c'\Expo(x')$ with $c,c'\in\GN$ and $x,x'\in\ZN$, then $c'^{-1}c=\Expo(x'-x)\in\GN\cap\easypart=\{1\}$, so $c=c'$, and $x=x'$ because $\Expo$ is injective.

	For \cref{lem:dj-structure:d} of \cref{lem:dj-structure}, we first analyze the generator step of \cref{def:dj-group}. Since $\GN$ is cyclic of order $p'q'$, its elements $g$ with $g^{p'}=1$ or $g^{q'}=1$ form its subgroups of orders $p'$ and $q'$, so $g\in\GN$ generates $\GN$ if and only if $g^{p'}\ne1$ and $g^{q'}\ne1$. Hence, every accepted $g$ generates $\GN$. By \cref{lem:dj-structure:b} of \cref{lem:dj-structure}, each trial samples $g$ uniformly from $\GN$, and $\GN$ has $(p'-1)(q'-1)$ generators, so each trial is accepted with probability
	\[
		\left(1-\frac{1}{p'}\right)\left(1-\frac{1}{q'}\right)\ge\left(1-\frac{1}{2}\right)\left(1-\frac{1}{3}\right)=\frac{1}{3},
	\]
	where we use that $p'$ and $q'$ are distinct primes. The $\secparam$ trials use independent samples $z$, so they are independent, and all of them are rejected with probability at most $(2/3)^\secparam$.

	We verify the conditions of \cref{def:group-sampler} for an output $(N,g)\ne\bot$ of $\DJGen(1^\secparam)$. The group $\GN$ is cyclic, $g$ generates it, and since $p,q>2^{\ell(\secparam)-1}$ and $\ell(\secparam)\ge\secparam+2$, its order satisfies $M=p'q'\ge2^{2\ell(\secparam)-4}\ge2^{2\secparam}\ge2^\secparam$. The sampler $\DJGen(1^\secparam)$ outputs $\bot$ only if $\SafeGen_{\ell,W}$ does, which happens with negligible probability under the setup convention, or if all generator trials are rejected, which happens with probability at most $(2/3)^\secparam$. Every element of $\ZNsq$ has a canonical representative in $[0..N^2-1]$ of length at most $4\ell(\secparam)$ bits, and the group operation, inversion, and exponentiation by an integer of polynomially bounded bit length take time $\poly(\log N)=\poly(\secparam)$.

	For the sampling condition, we define $\SampG(N,g)$ as follows: it draws $z\sample\ZNsq$, outputs $\bot$ if $\gcd(z,N)\ne1$, and otherwise outputs $z^{2N}$. Conditioned on $\gcd(z,N)=1$, the element $z$ is uniform on $\ZNsqx$, so the output is uniform on $\GN$ by \cref{lem:dj-structure:b} of \cref{lem:dj-structure}. Moreover, since $p,q>2^{\ell(\secparam)-1}$ and $\ell(\secparam)\ge\secparam+2$,
	\[
		\Prob\left[\gcd(z,N)\ne1\mbar z\sample\ZNsq\right]=\frac{1}{p}+\frac{1}{q}-\frac{1}{N}<2^{2-\ell(\secparam)}\le2^{-\secparam},
	\]
	so $\SampG$ runs in strict polynomial time and satisfies the sampling condition with $\nuG(\secparam)\cqq2^{-\secparam}$.
\end{proofE}

\ifConferenceVersion
\PlaceRelocated{fallback-modulus}
\fi

To decode, we must recover the easy part of a product $c\Expo(x)$ for an unknown $c\in\GN$. \cite{C:BoyGilIsh16} computes distributed discrete logarithms in prime-order groups with inverse-polynomial error. \cite{EC:OrlSchYak21} and~\cite{C:RoySin21} independently compute them without error, over Paillier and \DamgardJurik groups, respectively. Following~\cite{C:RoySin21}, we recover the easy part of a unit by dividing it by its reduction modulo $N$. The resulting map equals the distance function of~\cite{C:RoySin21} at $s=1$ and the output of the algorithm of~\cite{EC:OrlSchYak21}, and we define it at every modulus $N\ge2$.

\begin{definition}[Distributed Discrete Logarithm]\label{def:ddlog}
	Let $N\ge2$ be an integer. For $c\in\ZNsqx$, let $w\in[0..N-1]$ be the least nonnegative representative of $c$ modulo $N$, viewed as an element of $\ZNsq$, and define $\DDLog_N(c)\cqq\Logg\left(c\cdot w^{-1}\right)\in\ZN$.
\end{definition}

Since $\gcd(w,N)=\gcd(c,N)=1$, we have $w\in[1..N-1]$, and $w$ is a unit modulo $N^2$ because $N$ and $N^2$ have the same prime divisors. The element $cw^{-1}$ is congruent to $1$ modulo $N$, so it lies in $\easypart$, the domain of $\Logg$ in \cref{lem:easy-part-arbitrary-modulus}, and $\DDLog_N$ is well defined and computable in time $\poly(\log N)$ at every modulus $N\ge2$.

\ifConferenceVersion
\PlaceRelocated{ddlog-shift}

\PlaceRelocated{dj-accounting}

Since every predicate value and every encoder share of our constructions is one residue in $\ZN$, we use only the modulus power $k=1$ of the general \DamgardJurik scheme~\cite{PKC:DamJur01}.

\begin{remark}[On the Choice of Modulus Power]\label{rem:modulus-power}
	\cite{PKC:DamJur01} defines the scheme modulo $N^{k+1}$ for every $k\ge1$, with message space $\ZZ/N^k\ZZ$, and~\cite{C:RoySin21} uses $k>1$ because its homomorphic secret-sharing construction encrypts its own secret key. In the \ELTDH{}, we write the input residues in binary, and its predicate value and each encoder share remain one residue, so the message space $\ZN$ of $k=1$ suffices for the trapdoor \ELF{} and the branch constructions.

	For $k>1$, the residue input space and every group-valued output become larger, the bounded-subset-sum argument for lossiness is unchanged, and we would need arithmetic modulo $N^{k+1}$ and the $k$-term series for $\Expo$ and $\Logg$, with $\ZN$ replaced by $\ZZ/N^k\ZZ$ and $\ZNsq$ by $\ZZ/N^{k+1}\ZZ$. \cite{C:RoySin21} amortizes one digest over many encodings in rate-one string oblivious transfer, where the ratio $(k+1)/k$ tends to $1$ as $k$ grows.
\end{remark}
\fi

Since $\ord(\GN)$ is not public, we sample exponents from an interval whose length is a power of two and exceeds the modulus by a factor of at least $2^\secparam$, and we bound once the statistical distance of the reduced exponent from uniform. \cite{C:BraGol10} bounds this distance by $M'/2^{\expl}$, and we prove the sharper bound $M'/(4\cdot2^{\expl})$, which is at most $2^{-\secparam-2}$ for $M'\le N$.

\begin{lemmaE}[Exponent Smoothing]\label{lem:exponent-smoothing}
	Let $N\ge2$ and $\secparam\in\NN$, let $\expl(\secparam,N)\cqq\clog{N}+\secparam$, and let $\rho$ be uniform on $[0..2^{\expl(\secparam,N)}-1]$. For every integer $M'$ with $1\le M'\le N$, the statistical distance between $\rho\bmod M'$ and the uniform distribution on $\ZZ/M'\ZZ$ is at most $2^{-\secparam-2}$.
\end{lemmaE}

\begin{proofE}
	We count the preimages of each residue modulo $M'$. We write $\expl\cqq\expl(\secparam,N)$ and $2^{\expl}=a M'+b$ with $a\in\NN$ and $b\in[0..M'-1]$. Exactly $b$ residues have $a+1$ preimages and the remaining $M'-b$ residues have $a$ preimages, so
	\begin{align*}
		(a+1)2^{-\expl}-\frac{1}{M'}
		  & =\frac{M'(a+1)-2^{\expl}}{2^{\expl}M'}
		=\frac{M'-b}{2^{\expl}M'},                \\
		\frac{1}{M'}-a2^{-\expl}
		  & =\frac{2^{\expl}-aM'}{2^{\expl}M'}
		=\frac{b}{2^{\expl}M'}.
	\end{align*}
	Both differences are nonnegative, and the $b$ residues of the first kind and the $M'-b$ residues of the second kind contribute equally to the statistical distance. Therefore, the distance is
	\[
		\frac{b(M'-b)}{2^{\expl}M'}
		\le\frac{M'}{4\cdot2^{\expl}}
		\le\frac{N}{4\cdot2^{\clog{N}+\secparam}}
		\le2^{-\secparam-2},
	\]
	where we use $b(M'-b)\le(M'/2)^2$ in the first inequality, $M'\le N$ in the second, and $N\le2^{\clog{N}}$ in the last.
\end{proofE}

In our proof of function privacy, we use the decisional composite residuosity assumption of~\cite{EC:Paillier99}, the indistinguishability of a uniform unit modulo $N^2$ from a uniform $N$-th power, which we state for the moduli of $\DJGen$.

\begin{definition}[Decisional Composite Residuosity]\label{def:dcr}
	\ifConferenceVersion
	For $\beta\in\bits$, we define the ensemble $\DDCR_\beta=(\DDCR_{\beta,\secparam})_{\secparam\in\NN}$ by
	\begin{align*}
		\DDCR_{0,\secparam} & \cqq
		\left\{(N,z)\ \middle|\
		\begin{array}{l}
			(N,g)\sample\DJGen(1^\secparam), \\[-1pt]
			z\sample\ZNsqx
		\end{array}\right\},      \distpairsep
		\DDCR_{1,\secparam} & \cqq
		\left\{(N,z^N)\ \middle|\
		\begin{array}{l}
			(N,g)\sample\DJGen(1^\secparam), \\[-1pt]
			z\sample\ZNsqx
		\end{array}\right\}.
	\end{align*}
	The decisional composite residuosity ($\DCR$) assumption holds for $\DJGen(\cdot)$ if $\DDCR_0\cindist\DDCR_1$.
	\else
	The decisional composite residuosity ($\DCR$) assumption holds for $\DJGen(\cdot)$ if
	\[
		\left\{(N,z)\ \middle|\
		\begin{array}{l}
			(N,g)\sample\DJGen(1^\secparam), \\[-1pt]
			z\sample\ZNsqx
		\end{array}\right\}_{\secparam\in\NN}
		\cindist
		\left\{(N,z^N)\ \middle|\
		\begin{array}{l}
			(N,g)\sample\DJGen(1^\secparam), \\[-1pt]
			z\sample\ZNsqx
		\end{array}\right\}_{\secparam\in\NN}.
	\]
	\fi
\end{definition}

\ifConferenceVersion
\PlaceRelocated{vector-game}

Under $\DCR$, membership in $\GN$ is hard to decide from $(N,g)$, because a uniform $N$-th power lies in $\GN$ with probability $1/4$, and a uniform unit with probability $1/(4N)$. For this reason, we omit a membership test from \cref{def:group-sampler}.

\begin{remark}[Hardness of Deciding Membership in $\GN$]\label{rem:no-membership-test}
	Suppose that $\DCR$ holds for $\DJGen(\cdot)$, and let $\varepsilon\in[0,1/5)$ be a constant. For every \PPT{} algorithm $\advM$, there are infinitely many $\secparam$ for which some $(N,g)\in\Supp(\DJGen(1^\secparam))$ and some $z\in\ZNsqx$ satisfy
	\[
		\Prob\left[\advM((N,g),z)\ne\ind{z\in\GN}\right]>\varepsilon,
	\]
	where the probability is over the coins of $\advM$.

	We suppose, toward a contradiction, that for all sufficiently large $\secparam$, every $(N,g)\in\Supp(\DJGen(1^\secparam))$, and every $z\in\ZNsqx$, $\advM$ errs with probability at most $\varepsilon$. Given a challenge $(N,z)$ from $\DDCR_\beta$, the reduction samples $w\sample\ZNsqx$, sets $g\gets w^{2N}$, and outputs $\advM((N,g),z)$, and on the challenge $\bot$, it outputs $0$. It runs $\advM$ once and otherwise performs one sampling and one exponentiation modulo $N^2$. By \cref{lem:dj-structure:b} of \cref{lem:dj-structure}, $g$ is uniform on $\GN$ and independent of $z$, and $g$ fails to generate $\GN$ with probability at most $1/p'+1/q'\le2^{3-\ell(\secparam)}$. If $g$ generates $\GN$, then $(N,g)\in\Supp(\DJGen(1^\secparam))$, because the generator step of \cref{def:dj-group} outputs every generator with positive probability.

	It remains to compare the two acceptance probabilities. We repeat the image computation of \cref{lem:dj-structure} with the exponent $N$ in place of $2N$ and obtain $\gcd(N,2pp')=p$ in the first cyclic factor and $\gcd(N,2qq')=q$ in the second, so the $N$-th powers form a subgroup of order $4p'q'=4M$. This subgroup contains $\GN$, and the map $z\mapsto z^N$ sends the uniform distribution on $\ZNsqx$ to the uniform distribution on it, so a challenge from $\DDCR_1$ lies in $\GN$ with probability $1/4$. By \cref{lem:dj-structure:a,lem:dj-structure:b} of \cref{lem:dj-structure}, a challenge from $\DDCR_0$ lies in $\GN$ with probability $M/(N\varphi)=1/(4N)$. Let $\nu(\secparam)$ be the probability that $\DJGen(1^\secparam)$ outputs $\bot$. For all sufficiently large $\secparam$, the reduction outputs $1$ on $\DDCR_1$ with probability at least $(1-\nu)(1-2^{3-\ell(\secparam)})(1-\varepsilon)/4$, and on $\DDCR_0$ with probability at most $(1-\nu)\left(2^{3-\ell(\secparam)}+1/(4N)+\varepsilon\right)$. Since $\nu$, $2^{3-\ell(\secparam)}$, and $1/N$ are negligible, the advantage of the reduction is at least $(1-5\varepsilon)/4$ minus a negligible function. This advantage is not negligible for $\varepsilon<1/5$, so $\DCR$ is false, a contradiction.
\end{remark}
\fi

\end{relocated}%

\section{Image-Size Bounds and Small-Exponent Assumptions}\label{sec:limits-small-exponent}
Every publicly evaluable \ELF{} has a lossy image of size $\Omega(\delta k^2)$ against observers that make $k$ evaluations and have advantage at most $1/\delta$, and in the generic group model, an algorithm that makes $T$ group operations distinguishes $d$ powers with exponents from an interval of length $\gamma$ from uniform group elements with advantage $\bigO((T+d)^2/\gamma)$ when the smallest prime factor of the group order is large. Since we can search every polynomially bounded interval in polynomial time, the interval threshold in the bounded-exponent assumption that we use depends on the size bound and the target advantage of the adversary.

\paperpart{A Lower Bound on the Image Size}{elf-birthday-bound}
Since evaluation of an \ELF{} is public, an observer can evaluate the function on independent uniform inputs and test whether two distinct inputs collide, so the image parameter cannot be arbitrarily small. We make quantitative the observation of~\cite{C:Zhandry16} that $\bigO(\sqrt r)$ evaluations detect an image of size $r$. \cite{Fel1968a} treats uniform outputs, \cite{Mun1977,Ber1980} prove that a nonuniform distribution on $r$ points has collision probability at least that of the uniform distribution on $r$ points, and \cite{EC:BelKoh04} gives the corresponding analysis for hash functions. With $k$ evaluations, the collision probability in an image of size $r$ is of order $k^2/r$, whereas distinct uniform inputs from $[m]$ collide with probability $\bigO(k^2/m)$, and from these two bounds we determine the smallest image size that an \ELF{} can hide, independently of the algebraic assumption used to construct it.

\ifConferenceVersion
When we invert the birthday bound for a fixed pair of functions, we obtain that every function on $[m]$ with an image of size $r$ that $k$ evaluations distinguish from an injective function with advantage at most $1/\delta$ satisfies $r\ge\delta k^2/12$, provided that $\delta\ge3$ and $m\ge\delta k(k-1)$. In an \ELF{}~\cite{C:Zhandry16}, we compare two generation algorithms, and since the collision distinguisher depends only on $m$ and $k$, we can run it on a sampled public description, average over the generation coins, and absorb the negligible failures of the two modes. We then solve the birthday inequality and obtain that the threshold $Q$ of every \ELF{} satisfies $Q(n)=\Omega(\delta(n)k(n)^2)$, where $k(n)$ is the number of evaluations that a circuit of the permitted size can perform.
\else
\phantomsection\label{app:collision-bounds}
\fi
\begin{relocated}{collision-bounds}
Our starting point is the birthday bound of~\cite{Fel1968a} for uniform outputs and its extension to nonuniform outputs~\cite{Mun1977,Ber1980}. We adapt it to a comparison with an injective oracle on a domain of size $m$, which adds the term $k(k-1)/(2m)$ for repeated inputs, and we use a distinguisher that depends only on $m$ and $k$, so that the same algorithm applies to every oracle, including a sampled function.

\begin{theoremE}[Birthday Collision Bound]\label{thm:elf-birthday-lower-bound}
	Let $m,k\in\NN$ with $m\ge1$ and $k\ge2$, let $\calY$ be a set, and let $\advA_{m,k}$ be the oracle algorithm that samples $(X_1,\dots,X_k)\sample[m]^k$, queries its oracle at $X_1,\dots,X_k$, receives the answers $Y_1,\dots,Y_k$, and returns $1$ if and only if $X_1,\dots,X_k$ are pairwise distinct and $Y_1,\dots,Y_k$ are not.
	\begin{parts}
		\item\label{thm:elf-birthday-lower-bound:a} Every injective function $\finj\colon[m]\to\calY$ satisfies
		\[
			\Prob\left[\advA_{m,k}^{\finj}=1\right]=0.
		\]
		\item\label{thm:elf-birthday-lower-bound:b} Every $r\in\NN_{>0}$ and every function $f\colon[m]\to\calY$ with $\abs{\Image(f)}\le r$ satisfy
		\[
			\Prob\left[\advA_{m,k}^f=1\right]
			\ge 1-\exp\left(-\frac{k(k-1)}{2r}\right)-\frac{k(k-1)}{2m}.
		\]
	\end{parts}
\end{theoremE}

\begin{proofE}
	We bound the probability that the outputs collide from below by a comparison with the uniform distribution on the image, and we bound the probability of repeated inputs by a union bound.

	We write $\Ein\cqq\{\exists i<j:X_i=X_j\}$ and $\Eout\cqq\{\exists i<j:Y_i=Y_j\}$, so that $\advA_{m,k}$ returns $1$ exactly on $\Ecoll\cqq\Eout\setminus\Ein$. If the oracle is injective, then $\Eout\subseteq\Ein$, so $\Ecoll=\emptyset$, and \cref{thm:elf-birthday-lower-bound:a} of \cref{thm:elf-birthday-lower-bound} holds.

	For \cref{thm:elf-birthday-lower-bound:b} of \cref{thm:elf-birthday-lower-bound}, we write $S\cqq\Image(f)$, $s\cqq\abs S\le r$, $\Edistinct\cqq\neg\Eout$, and
	\[
		p_y\cqq\Prob\left[f(X_1)=y\right]
	\]
	for every $y\in S$. If $s<k$, then two of the $k$ outputs coincide, so $\Edistinct=\emptyset$. We suppose that $s\ge k$ and write $e_k$ for the $k$th elementary symmetric polynomial. Since $X_1,\dots,X_k$ are independent and the $p_y$ sum to one, we obtain from Maclaurin's inequality~\cite{Mac1729,Ber1980}
	\[
		\Prob[\Edistinct]
		=k!\,e_k\bigl((p_y)_{y\in S}\bigr)
		\le k!\binom{s}{k}\left(\frac{1}{s}\sum_{y\in S}p_y\right)^k
		=k!\binom{s}{k}s^{-k}.
	\]
	Since $s\ge k$ and $r\ge s$, every factor $1-i/s$ and $1-i/r$ with $0\le i<k$ is nonnegative. We use $s\le r$ and $1-z\le e^{-z}$ for every real $z$, and obtain
	\begin{align*}
		k!\binom{s}{k}s^{-k}
		 & =\prod_{i=0}^{k-1}\left(1-\frac{i}{s}\right)
		\le\prod_{i=0}^{k-1}\left(1-\frac{i}{r}\right) \\
		 & \le\exp\left(-\frac{1}{r}\sum_{i=0}^{k-1}i\right)
		=\exp\left(-\frac{k(k-1)}{2r}\right).
	\end{align*}
	The resulting bound also holds when $s<k$, since then $\Edistinct=\emptyset$. By a union bound over the $\binom{k}{2}$ unordered pairs of inputs,
	\[
		\Prob[\Ein]
		\le\sum_{1\le i<j\le k}\Prob\left[X_i=X_j\right]
		=\binom{k}{2}\frac1m
		=\frac{k(k-1)}{2m}.
	\]
	Finally, $\Eout=\Ecoll\sqcup(\Eout\cap\Ein)$ and $\Eout\cap\Ein\subseteq\Ein$, so
	\begin{align*}
		\Prob\left[\advA_{m,k}^f=1\right]
		 & =\Prob[\Ecoll] \\*
		 & =\Prob[\Eout]-\Prob[\Eout\cap\Ein] \\*
		 & \ge 1-\Prob[\Edistinct]-\Prob[\Ein]                \\*
		 & \ge 1-\exp\left(-\frac{k(k-1)}{2r}\right)-\frac{k(k-1)}{2m}.
	\end{align*}
	Hence, \cref{thm:elf-birthday-lower-bound:b} of \cref{thm:elf-birthday-lower-bound} holds.
\end{proofE}

We invert \cref{thm:elf-birthday-lower-bound} and obtain a lower bound on the image size of every function that $k$ evaluations do not distinguish from an injective function.

\begin{corollaryE}[Quadratic Lower Bound on the Image Size]\label{cor:elf-quadratic-image-size}
	Let $m,r,k\in\NN$ and $\delta\in\RR$ with $m,r\ge1$, $k\ge2$, $\delta\ge3$, and $m\ge\delta k(k-1)$, let $f\colon[m]\to\calY$ satisfy $\abs{\Image(f)}\le r$, and let $\finj\colon [m] \to \calY$ be injective. Suppose that every oracle algorithm $\advA$ that makes at most $k$ oracle queries satisfies
	\[
		\abs{\Prob\left[\advA^f=1\right]-\Prob\left[\advA^{\finj}=1\right]}\le\frac1\delta.
	\]
	Then
	\[
		r \ge \frac{k(k-1)}{2\ln\left(1/\left(1-3/(2\delta)\right)\right)}
		\ge \frac{(2\delta-3)k(k-1)}{6}
		\ge \frac{\delta k^2}{12}.
	\]
\end{corollaryE}
\begin{proofE}
	We apply the hypothesis to $\advA_{m,k}$ of \cref{thm:elf-birthday-lower-bound}, solve the resulting inequality for $r$, and bound the logarithm.

	The algorithm $\advA_{m,k}$ makes $k$ oracle queries, so by \cref{thm:elf-birthday-lower-bound:a,thm:elf-birthday-lower-bound:b} of \cref{thm:elf-birthday-lower-bound} and the hypothesis, $1-\exp(-k(k-1)/(2r))-k(k-1)/(2m)\le1/\delta$. Since $m\ge\delta k(k-1)$, we have $k(k-1)/(2m)\le1/(2\delta)$, so
	\[
		1-\exp\left(-\frac{k(k-1)}{2r}\right)
		\le\frac{1}{\delta}+\frac{k(k-1)}{2m}
		\le\frac{3}{2\delta}.
	\]
	We write $z\cqq3/(2\delta)$. Since $\delta\ge3$, we have $0<z\le1/2$, and we take logarithms in $\exp(-k(k-1)/(2r))\ge1-z$ and obtain
	\[
		r\ge\frac{k(k-1)}{2\ln(1/(1-z))}.
	\]
	Since $\ln(1+x)\le x$ for every $x>-1$,
	\[
		\ln\left(\frac{1}{1-z}\right)
		=\ln\left(1+\frac{z}{1-z}\right)
		\le\frac{z}{1-z}
		=\frac{3}{2\delta-3},
	\]
	and we obtain $r\ge(2\delta-3)k(k-1)/6$. Since $2\delta-3\ge\delta$ and $k(k-1)\ge k^2/2$, we also obtain $r\ge\delta k^2/12$.
\end{proofE}

\end{relocated}%

\begin{relocated}{elf-threshold-corollary}
\Cref{cor:elf-quadratic-image-size} concerns one pair of functions, whereas the mode indistinguishability of an \ELF{} compares two generation algorithms. Since the algorithm $\advA_{m,k}$ of \cref{thm:elf-birthday-lower-bound} depends only on $m$ and $k$, we can run it on a sampled public description and transfer the bound to the threshold $Q$ of \cref{def:extremely-lossy-function}.

\begin{propositionE}[Lower Bound on the \ELF{} Threshold]\label{cor:elf-threshold-lower-bound}
	Let $\ELF=(\ELFGen,\ELFEval)$ satisfy \cref{def:extremely-lossy-function}, and for every $m\ge2$, write $n_m\cqq\clog{m}$. Let $t$ be a polynomial and $\ceval$ a positive polynomial such that, for every $m\ge2$ and every $k\in\NN$ with $2\le k\le t(n_m)$, the algorithm $\advA_{m,k}$ of \cref{thm:elf-birthday-lower-bound}, with its oracle queries replaced by evaluations of a public description of $\ELF$, is a probabilistic circuit of size at most $k\,\ceval(n_m)$. Let $\delta$ be a positive polynomial with $\delta(n)\ge3$ for every $n$, let $Q$ be a polynomial witnessing mode indistinguishability for $t$ and $\delta$ with $Q(n)\in\NN_{>0}$ for every sufficiently large $n$, and define
	\[
		k(n)\cqq\min\left\{\floor{t(n)},\left\lfloor\frac{t(n)}{\ceval(n)}\right\rfloor\right\}.
	\]
	Then $Q(n_m)\ge\delta(n_m)k(n_m)^2/24$ for every sufficiently large $m$ with $k(n_m)\ge2$.
\end{propositionE}
\begin{proofE}
	Since $\advA_{m,k}$ depends only on $m$ and $k$, we can run it on a sampled public description and average its advantage over the generation coins. Since the two failure events of the modes are negligible, we absorb them into the target advantage, and we then repeat the algebra of the proof of \cref{cor:elf-quadratic-image-size} for $r=Q(n_m)$.

	For every $m\ge2$ with $Q(n_m)\in[m]$, we write $\epsinj(m)$ for the probability that $f\sample\ELFGen(m,m)$ is not injective and $\epslos(m)$ for the probability that $f\sample\ELFGen(m,Q(n_m))$ satisfies $\abs{\Image(f)}>Q(n_m)$. We choose $m_0$ such that every $m\ge m_0$ satisfies the following conditions.
	\begin{enumerate}[label=(\roman*)]
		\item\label{item:elf-threshold-indist} The mode-indistinguishability clause of \cref{def:extremely-lossy-function} holds at $m$, which is the case from some $m$ on by the asymptotic quantifier of that clause.
		\item\label{item:elf-threshold-positive} $Q(n_m)\in\NN_{>0}$, which is the case from some $m$ on because $n_m$ grows without bound.
		\item\label{item:elf-threshold-small} $Q(n_m)\le m$, which is the case from some $m$ on because $m>2^{n_m-1}$ and $Q(n_m)$ is polynomial in $n_m$.
		\item\label{item:elf-threshold-failure} $\epslos(m)+\epsinj(m)\le1/(2\delta(n_m))$, which is the case from some $m$ on because $\epslos(m)$ and $\epsinj(m)$ are negligible in $n_m$ by the injective-mode and lossy-mode clauses of \cref{def:extremely-lossy-function}, while $\delta$ is polynomial.
		\item\label{item:elf-threshold-domain} $k(n_m)(k(n_m)-1)/(2m)\le1/(2\delta(n_m))$, which is the case from some $m$ on because $k$ and $\delta$ are polynomially bounded and $m>2^{n_m-1}$.
	\end{enumerate}
	We fix $m\ge m_0$ with $k(n_m)\ge2$, abbreviate $n\cqq n_m$ and $k\cqq k(n)$, and set $r\cqq Q(n)$. By \cref{item:elf-threshold-positive,item:elf-threshold-small}, we have $r\in[Q(n)..m]\subseteq[m]$, so $\ELFGen(m,r)$ is defined, and by \cref{item:elf-threshold-indist}, mode indistinguishability holds at the image parameter $r$. We take $\advA$ to be $\advA_{m,k}$ with its oracle queries replaced by evaluations of the public description. Since $k$ is the smaller of two floors, $k$ is an integer with $k\le t(n)$ and $k\,\ceval(n)\le t(n)$. By the first inequality, $k$ lies in the range where the cost hypothesis applies, and by the second, $\advA$ is a probabilistic circuit of size at most $t(n)$, so mode indistinguishability holds for $\advA$.

	We write $B\cqq1-\exp(-k(k-1)/(2r))$. By \cref{thm:elf-birthday-lower-bound:a,thm:elf-birthday-lower-bound:b} of \cref{thm:elf-birthday-lower-bound}, every $f$ with $\abs{\Image(f)}\le r$ and every injective $f'$ satisfy
	\[
		\Prob\left[\advA(f)=1\right]\ge B-\frac{k(k-1)}{2m},
		\qquad
		\Prob\left[\advA(f')=1\right]=0.
	\]
	After we average over the generation coins, we obtain
	\begin{align*}
		\Prob\left[\advA(f)=1\mbar f\sample\ELFGen(m,r)\right] & \ge B-\frac{k(k-1)}{2m}-\epslos(m), \\
		\Prob\left[\advA(f)=1\mbar f\sample\ELFGen(m,m)\right] & \le\epsinj(m).
	\end{align*}
	By mode indistinguishability, the difference of these two probabilities is at most $1/\delta(n)$. Therefore, by \cref{item:elf-threshold-failure,item:elf-threshold-domain},
	\[
		B\le\frac1{\delta(n)}+\frac{k(k-1)}{2m}+\epslos(m)+\epsinj(m)\le\frac2{\delta(n)}.
	\]
	We write $z\cqq2/\delta(n)$, so that $0<z\le2/3$ because $\delta(n)\ge3$. As in the proof of \cref{cor:elf-quadratic-image-size}, the inequality $\exp(-k(k-1)/(2r))\ge1-z$ implies $r\ge k(k-1)/(2\ln(1/(1-z)))$, and $\ln(1/(1-z))\le z/(1-z)=2/(\delta(n)-2)$, so $r\ge(\delta(n)-2)k(k-1)/4$. Finally, $\delta(n)-2\ge\delta(n)/3$, and since $k\ge2$, we have $k(k-1)\ge k^2/2$, so $r\ge\delta(n)k^2/24$. Since $r=Q(n)$, the claim holds.
\end{proofE}

\end{relocated}%

\ifConferenceVersion\else
\subsection{Small-Exponent Assumptions with a Fixed Bound}\label{app:fixed-bound-dsedl}

By \cref{cor:elf-quadratic-image-size}, an image that $k$ evaluations distinguish from an injective one with advantage at most $1/\delta$ has size at least $\delta k^2/12$, so the lossy mode needs bounded exponents from an interval whose length is polynomial and grows with the resources of the observer. We first state the bounded-exponent assumptions for a fixed interval and prove that every polynomially bounded fixed interval violates them.

\cite{EC:vOoWie96} studies short exponents in Diffie--Hellman key agreement for faster exponentiation, and \cite{C:PatSun98} uses them for pseudorandom generation. With \BSGS{}~\cite{Sha1971} or the kangaroo method~\cite{Pol1978,JC:Pollard00}, we recover an exponent from an interval of length $\gamma$ with $\bigO(\sqrt\gamma)$ group operations, and \cite{Pol1978} gives the rho method, a generic collision algorithm over the whole exponent space. \cite{EC:vOoWie96,C:PatSun98} take the interval large enough for generic security and also study bounded exponents modulo a composite, where \BSGS{}~\cite{Sha1971} and the kangaroo method~\cite{Pol1978,JC:Pollard00} recover bounded exponents in the same number of group operations. We are not aware of an attack on the subgroup of \cref{def:dj-group} that is faster than these square-root algorithms.

We sample bounded exponents from $[\gamma]=\{1,\dots,\gamma\}$. The interval $[0..\gamma-1]$ yields equivalent search and decision problems, since multiplication by the generator $g$ maps $g^u$ to $g^{u+1}$ and multiplication by $g^{-1}$ inverts this map. The order of the group of \cref{def:dj-group} is hidden, so we state every assumption of this subsection for an arbitrary cyclic group sampler $\CGen(\cdot)$ in the sense of \cref{def:group-sampler}, with a uniform group element as the alternative to a bounded power. Since $g$ generates $\GG(\pp)$, a uniform group element is distributed as $g^u$ for $u\sample\ZZ_{\ord(\pp)}$, so this alternative agrees with a uniform exponent whenever $\ord(\pp)$ is public.

The following search assumption adapts the short-exponent discrete-logarithm problem of~\cite{EC:vOoWie96,C:PatSun98} to a cyclic group sampler with a possibly hidden order and to exponents from $[\gamma(\secparam)]$.

\begin{definition}[Bounded \SEDL{}]\label{def:small-exponent-dlog}
	Let $\gamma\colon\NN\to\NN_{>0}$ satisfy $\gamma(\secparam)<\ord(\pp)$ for every sufficiently large $\secparam$ and every $\pp\in\Supp(\CGen(1^\secparam))$. For every adversary $\advA$, define
	\[
		\advantage{\SEDLgamma}{\CGen(\cdot),\advA}(\secparam)\cqq
		\Prob\left[
			\shat=s
			\mbar
			\begin{array}{l}
				\pp\sample\CGen(1^\secparam), \\
				s\sample[\gamma(\secparam)],        \\
				\shat\sample\advA(1^\secparam,\pp,\gamma(\secparam),g^s)
			\end{array}
			\right].
	\]
	The $\gamma$-bounded small-exponent discrete-logarithm assumption $\SEDLgamma$ holds for $\CGen(\cdot)$ if, for every \PPT{} adversary $\advA$, the function $\advantage{\SEDLgamma}{\CGen(\cdot),\advA}$ is negligible.
\end{definition}

Since $\gamma(\secparam)<\ord(\pp)$, the map $s\mapsto g^s$ is injective on $[\gamma(\secparam)]$, so $g^s$ determines $s$. In the decisional version, the adversary must distinguish a bounded power from a uniform group element, and a distinguisher need not recover the exponent. Decisional short-exponent problems appear as the $c$-decisional short discrete logarithm problem of~\cite{EC:AFLT12}, in which the exponent has $c$ bits, and we state ours for the interval $[\gamma]$ and a uniform group element as the alternative.

\begin{definition}[Bounded \DSEDL{}]\label{def:decisional-small-exponent-dlog}
	For every $\secparam\in\NN$ and every positive integer $\gamma$ smaller than the order of every group in $\Supp(\CGen(1^\secparam))$, define
	\begin{align*}
		\DDSEDL_{0,\secparam,\gamma} & \cqq
		\left\{(1^\secparam,\pp,\gamma,g^s)\ \middle|\
		\begin{array}{l}
			\pp\sample\CGen(1^\secparam), \\[-1pt]
			s\sample[\gamma]
		\end{array}\right\},               \\
		\DDSEDL_{1,\secparam,\gamma} & \cqq
		\left\{(1^\secparam,\pp,\gamma,G)\ \middle|\
		\begin{array}{l}
			\pp\sample\CGen(1^\secparam), \\[-1pt]
			G\sample\GG(\pp)
		\end{array}\right\}.
	\end{align*}
	Let $\gamma\colon\NN\to\NN_{>0}$ satisfy $\gamma(\secparam)<\ord(\pp)$ for every sufficiently large $\secparam$ and every $\pp\in\Supp(\CGen(1^\secparam))$. The $\gamma$-bounded decisional small-exponent discrete-logarithm assumption $\DSEDLgamma$ holds for $\CGen(\cdot)$ if
	\[
		\left(\DDSEDL_{0,\secparam,\gamma(\secparam)}\right)_{\secparam\in\NN}\cindist\left(\DDSEDL_{1,\secparam,\gamma(\secparam)}\right)_{\secparam\in\NN}.
	\]
\end{definition}

A recovered exponent is a witness that we can verify by exponentiation, so $\DSEDLgamma$ implies $\SEDLgamma$ whenever $[\gamma(\secparam)]$ has negligible density in the group. The implication from $\SEDLgamma$ to $\DSEDLgamma$ would require a search-to-decision reduction.

\begin{theoremE}[\DSEDL{} Implies \SEDL{}]\label{thm:dsedl-implies-sedl}
	Let $\gamma\colon\NN\to\NN_{>0}$, and suppose that there is a negligible function $\varepsilon_0$ such that $\gamma(\secparam)/\ord(\pp)\le\varepsilon_0(\secparam)$ for every sufficiently large $\secparam$ and every $\pp\in\Supp(\CGen(1^\secparam))$. If $\DSEDLgamma$ holds for $\CGen(\cdot)$, then $\SEDLgamma$ holds for $\CGen(\cdot)$.
\end{theoremE}

\begin{proofE}
	We verify a recovered exponent by exponentiation, and on a uniform challenge, the check passes only if the challenge is a power with exponent in $[\gamma(\secparam)]$, which has probability at most $\varepsilon_0(\secparam)$.

	We prove the contrapositive and suppose that $\SEDLgamma$ does not hold, so that there is a \PPT{} adversary $\advA$ whose success function $\alpha(\secparam)\cqq\advantage{\SEDLgamma}{\CGen(\cdot),\advA}(\secparam)$ is not negligible. We construct a \PPT{} distinguisher $\advB$ that returns $0$ on the failure symbol. On input $(1^\secparam,\pp,\gamma(\secparam),h)\ne\bot$, it samples $\shat\sample\advA(1^\secparam,\pp,\gamma(\secparam),h)$ and returns $\ind{\shat\in[\gamma(\secparam)]\land g^{\shat}=h}$.

	If $h=g^s$ for $s\sample[\gamma(\secparam)]$, then $\advB$ accepts whenever $\advA$ returns $s$, hence
	\[
		\Prob\left[
			\advB(1^\secparam,\pp,\gamma(\secparam),g^s)=1
			\mbar
			\begin{array}{l}
				\pp\sample\CGen(1^\secparam), \\[-1pt]
				s\sample[\gamma(\secparam)]
			\end{array}
			\right]\ge\alpha(\secparam).
	\]
	For a uniform group element $G\sample\GG(\pp)$, acceptance implies that $G=g^s$ for one of the $\gamma(\secparam)$ integers $s\in[\gamma(\secparam)]$. After we condition on parameters $\pp\in\Supp(\CGen(1^\secparam))$, the acceptance probability is at most $\gamma(\secparam)/\ord(\pp)$, and we average over the sampler and obtain, for every sufficiently large $\secparam$,
	\[
		\Prob\left[
			\advB(1^\secparam,\pp,\gamma(\secparam),G)=1
			\mbar
			\begin{array}{l}
				\pp\sample\CGen(1^\secparam), \\[-1pt]
				G\sample\GG(\pp)
			\end{array}
			\right]\le\varepsilon_0(\secparam).
	\]
	Hence, the distinguishing advantage of $\advB$ is at least $\alpha(\secparam)-\varepsilon_0(\secparam)$. Since $\alpha$ is not negligible, there are a positive polynomial $p$ and an infinite set $I\subseteq\NN$ with $\alpha(\secparam)\ge1/p(\secparam)$ for every $\secparam\in I$, and since $\varepsilon_0$ is negligible, $\varepsilon_0(\secparam)\le1/(2p(\secparam))$ for every sufficiently large $\secparam$. Hence, $\advB$ has advantage at least $1/(2p(\secparam))$ for infinitely many $\secparam$, which is impossible under $\DSEDLgamma$, so $\SEDLgamma$ holds.
\end{proofE}

For a polynomially bounded $\gamma$, the decisional assumption is false. We write $v\cqq\ceil{\sqrt{\gamma(\secparam)}}$, and with \BSGS{}~\cite{Sha1971} restricted to $[\gamma(\secparam)]$, we compute $g^j$ for $j\in[0..v]$ and $hg^{-iv}$ for $i\in[v]$ from a challenge $h\ne\bot$, and we accept if and only if $hg^{-iv}=g^j$ for some $i\in[0..v]$ and $j\in[0..v-1]$ with $iv+j\in[\gamma(\secparam)]$. Every $s\in[\gamma(\secparam)]$ satisfies $s\le v^2$, so by Euclidean division, $s=iv+j$ with $i\in[0..v]$ and $j\in[0..v-1]$, and we accept every bounded power. Conversely, after we condition on $\pp\in\Supp(\CGen(1^\secparam))$, we accept a uniform $G\sample\GG(\pp)$ only if $G\in\{g^s\mid s\in[\gamma(\secparam)]\}$, which has probability $\gamma(\secparam)/\ord(\pp)\le\gamma(\secparam)2^{-\secparam}$ by \cref{def:group-sampler}. Hence, we distinguish the two ensembles of \cref{def:decisional-small-exponent-dlog} with $\bigO(\sqrt{\gamma(\secparam)})$ group operations and advantage at least $1-\gamma(\secparam)2^{-\secparam}$ minus the negligible failure probability of $\CGen(1^\secparam)$, so $\DSEDLgamma$ is false for every polynomially bounded $\gamma$ against \PPT{} adversaries.
\fi

\ifConferenceVersion
\paperpart{Small-Exponent Assumptions}{subsec:small-exponent-assumptions}
Bounded \DSEDL{} for a fixed interval family $\gamma(\secparam)$ is the assumption that $g^s$ for $s\sample[\gamma(\secparam)]$ is indistinguishable from a uniform group element. With \BSGS{}~\cite{Sha1971}, we can search every polynomially bounded interval with $\bigO(\sqrt{\gamma(\secparam)})$ group operations, so no fixed polynomial interval satisfies this assumption against all polynomial-size adversaries. Therefore, for the extremely lossy construction, we use the quantifier order of~\cite{C:Zhandry16}: we first fix the circuit-size bound and the target advantage of the adversary and then choose a sufficiently large polynomial interval.
\else
\paperpart{Tunable Small-Exponent Assumptions}{subsec:small-exponent-assumptions}
Since $\DSEDLgamma$ is false for every polynomially bounded $\gamma$, we use the quantifier order of~\cite{C:Zhandry16} for the extremely lossy construction: we first fix the circuit-size bound and the target advantage of the adversary and then choose a sufficiently large polynomial interval.
\fi
Hence, in tunable \DSEDL{}, the interval threshold $Q$ depends on the size bound $p$ and the target advantage $1/\delta$. The definition follows the short-exponent assumptions of~\cite{EC:vOoWie96,C:PatSun98}, except that the alternative is a uniform group element and the interval threshold is chosen after the size bound and the target advantage, as in~\cite{C:Zhandry16}. The exponent bound $\gamma$ is part of both distributions, so the adversary knows it.

\begin{definition}[Tunable \DSEDL{}]\label{def:tunable-dsedl}
	\ifConferenceVersion
	For every $\secparam\in\NN$ and every positive integer $\gamma$ smaller than the order of every group in $\Supp(\CGen(1^\secparam))$, define
	\begin{align*}
		\DDSEDL_{0,\secparam,\gamma} & \cqq
		\left\{(1^\secparam,\pp,\gamma,g^s)\middle|
		\begin{array}{l}
			\pp\sample\CGen(1^\secparam), \\[-1pt]
			s\sample[\gamma]
		\end{array}\right\},               \distpairsep
		\DDSEDL_{1,\secparam,\gamma} & \cqq
		\left\{(1^\secparam,\pp,\gamma,G)\middle|
		\begin{array}{l}
			\pp\sample\CGen(1^\secparam), \\[-1pt]
			G\sample\GG(\pp)
		\end{array}\right\}.
	\end{align*}
	We say
	\else
	Let $\DDSEDL_{0,\secparam,\gamma}$ and $\DDSEDL_{1,\secparam,\gamma}$ be the distributions of \cref{def:decisional-small-exponent-dlog}. We say
	\fi
	that $\CGen(\cdot)$ satisfies tunable decisional small-exponent discrete logarithm if, for every polynomial $p$ and every positive polynomial $\delta$, there exist a polynomial $Q$ and $\secparam_0\in\NN$ such that $\DDSEDL_{0,\secparam,\gamma}\indist_{p,1/\delta}\DDSEDL_{1,\secparam,\gamma}$ for every $\secparam\ge\secparam_0$ and every $\gamma\in\NN$ satisfying $Q(\secparam)\le\gamma<\ord(\pp)$ for all $\pp\in\Supp(\CGen(1^\secparam))$.
\end{definition}

\ifConferenceVersion\else
We can sample both distributions of \cref{def:tunable-dsedl} from public parameters, so we can check a claimed violation of tunable \DSEDL{} at given $\secparam$ and $\gamma$ by sampling. In the sense of~\cite{C:Naor03,STOC:GenWic11}, an assumption is falsifiable when an efficient algorithm can check a claimed violation of it, and in the following lemma, we give such an algorithm for tunable \DSEDL{}.

\begin{lemmaE}[Falsifiability of Tunable \DSEDL{}]\label{lem:dsedl-falsifiable}
	Let $\CGen(\cdot)$ be a cyclic group sampler, let $\secparam\in\NN$, and let $\gamma$ be a positive integer smaller than the order of every group in $\Supp(\CGen(1^\secparam))$. For a probabilistic circuit $C$ and $b\in\bits$, write
	\[
		q_b(C)\cqq\Prob\left[C(X)=1\mbar X\sample\DDSEDL_{b,\secparam,\gamma}\right].
	\]
	There is a probabilistic algorithm $\advV$ that, on input $(1^\secparam,\gamma,1^d,C)$ with $d\in\NN_{>0}$, runs in time polynomial in $\secparam$, $d$, and the size of $C$, returns a bit, and satisfies the following.
	\begin{parts}
		\item\label{lem:dsedl-falsifiable:a} If $\abs{q_0(C)-q_1(C)}\ge2/d$, then $\advV$ returns $1$ with probability at least $1-2^{-\secparam}$.
		\item\label{lem:dsedl-falsifiable:b} If $\abs{q_0(C)-q_1(C)}\le1/d$, then $\advV$ returns $1$ with probability at most $2^{-\secparam}$.
	\end{parts}
\end{lemmaE}

\begin{proofE}
	\label{proof:dsedl-falsifiable}
	We estimate both acceptance probabilities empirically and compare the difference of the estimates with $3/(2d)$, the midpoint between the two cases.

	We write $k\cqq8d^2(\secparam+2)$. For each $b\in\bits$, the algorithm $\advV$ draws $k$ independent samples $X_{b,1},\dots,X_{b,k}$ of $\DDSEDL_{b,\secparam,\gamma}$: it runs $\CGen(1^\secparam)$ and then either samples $s\sample[\gamma]$ and computes $g^s$ when $b=0$, or calls $\SampG(\pp)$ when $b=1$, with the propagation convention of \cref{app:cyclic-group-samplers} on failures. It runs $C$ on every sample with fresh coins, computes the empirical frequencies
	\[
		\qhat_b\cqq\frac1k\sum_{i=1}^kC(X_{b,i}),
	\]
	and returns $1$ if and only if $\abs{\qhat_0-\qhat_1}>3/(2d)$. By \cref{def:group-sampler}, every group in $\Supp(\CGen(1^\secparam))$ has at most $2^{\grplen(\secparam)+1}$ elements, so $\gamma$ has at most $\grplen(\secparam)+1$ bits, and each of the $2k=16d^2(\secparam+2)$ samples costs time $\poly(\secparam)$. With the $2k$ evaluations of $C$, the running time is polynomial in $\secparam$, $d$, and the size of $C$.

	For each $b$, the samples and the coins of $C$ are independent, so $C(X_{b,1}),\dots,C(X_{b,k})$ are independent bits with mean $q_b(C)$. By Hoeffding's inequality~\cite{Hoe1963},
	\[
		\Prob\left[\abs{\qhat_b-q_b(C)}\ge\frac1{4d}\right]
		\le2\exp\left(-\frac{2k}{16d^2}\right)
		=2e^{-(\secparam+2)}
		\le2^{-\secparam-1}.
	\]
	By a union bound over $b\in\bits$, the inequality $\abs{\qhat_b-q_b(C)}<1/(4d)$ holds for both $b$ with probability at least $1-2^{-\secparam}$. On this event, we obtain from the triangle inequality
	\[
		\abs{\abs{\qhat_0-\qhat_1}-\abs{q_0(C)-q_1(C)}}<\frac1{2d}.
	\]
	If $\abs{q_0(C)-q_1(C)}\ge2/d$, then $\abs{\qhat_0-\qhat_1}>2/d-1/(2d)=3/(2d)$ on this event, and $\advV$ returns $1$. If $\abs{q_0(C)-q_1(C)}\le1/d$, then $\abs{\qhat_0-\qhat_1}<1/d+1/(2d)=3/(2d)$ on this event, and $\advV$ returns $0$.
\end{proofE}

A violation of \cref{def:tunable-dsedl} consists of polynomials $p$ and $\delta$ such that, for every polynomial $Q$, infinitely many $\secparam$ admit an interval bound $\gamma\ge Q(\secparam)$ below the group order and a circuit of size at most $p(\secparam)$ whose advantage exceeds $1/\delta(\secparam)$. With $d\cqq\ceil{2\delta(\secparam)}$, this advantage is at least $2/d$, so by \cref{lem:dsedl-falsifiable:a} of \cref{lem:dsedl-falsifiable}, we accept each such instance in time polynomial in $\secparam$ with probability at least $1-2^{-\secparam}$, and by \cref{lem:dsedl-falsifiable:b} of \cref{lem:dsedl-falsifiable}, we accept a circuit of advantage at most $1/d$ with probability at most $2^{-\secparam}$.
\fi

Since our constructions publish $\dlam$ bounded powers at once, we need tunable \DSEDL{} for vectors, and the coordinate hybrid costs a factor $\dlam$ in the target advantage.

\begin{lemmaE}[Tunable Vector \DSEDL{}]\label{lem:tunable-scalar-to-vec-dsedl}
	Suppose that $\CGen(\cdot)$ satisfies tunable \DSEDL{}. Let $d\colon\NN\to\NN_{>0}$ be polynomially bounded, and write $\dlam\cqq d(\secparam)$. For every $\secparam\in\NN$ and every positive integer $\gamma$ smaller than the order of every group in $\Supp(\CGen(1^\secparam))$, define
	\begin{align*}
		\DvDSEDL_{0,\secparam,\gamma} & \cqq
		\left\{(1^\secparam,\pp,\gamma,(g^{s_j})_{j\in[\dlam]})\ \middle|\
		\begin{array}{l}
			\pp\sample\CGen(1^\secparam), \\[-1pt]
			(s_j)_{j\in[\dlam]}\sample[\gamma]^{\dlam}
		\end{array}\right\},      \\
		\DvDSEDL_{1,\secparam,\gamma} & \cqq
		\left\{(1^\secparam,\pp,\gamma,(G_j)_{j\in[\dlam]})\ \middle|\
		\begin{array}{l}
			\pp\sample\CGen(1^\secparam), \\[-1pt]
			(G_j)_{j\in[\dlam]}\sample\GG(\pp)^{\dlam}
		\end{array}\right\}.
	\end{align*}
	Then, for every polynomial $p$ and every positive polynomial $\delta$, there exist a polynomial $Q$ and $\secparam_0\in\NN$ such that $\DvDSEDL_{0,\secparam,\gamma}\indist_{p,1/\delta} \DvDSEDL_{1,\secparam,\gamma}$ for every $\secparam\ge\secparam_0$ and every $\gamma\in\NN$ satisfying $Q(\secparam)\le\gamma<\ord(\pp)$ for all $\pp\in\Supp(\CGen(1^\secparam))$.
\end{lemmaE}
\ConferenceProofSketch{proof:tunable-vector-dsedl}{
	We replace the coordinates one at a time by uniform group elements. For a uniformly chosen transition, we embed the scalar challenge, sample the other coordinates ourselves, and apply tunable \DSEDL{} with target advantage $1/(\dstar\delta)$ for a polynomial $\dstar\ge d$. The $\dlam$ differences telescope, so the vector advantage is at most $\dlam/(\dstar(\secparam)\delta(\secparam))\le1/\delta(\secparam)$.
}{}{}
\begin{proofE}
	\label{proof:tunable-vector-dsedl}
	We apply the scalar assumption with the target advantage multiplied by a polynomial upper bound on the vector dimension, and we obtain the vector bound from the telescoping identity over the $\dlam$ transitions of a coordinate hybrid.

	We fix $p$ and $\delta$, and we let $\pstar$ be a polynomial that bounds the size of a circuit that samples $\dlam-1$ exponents or uniform group elements, performs the corresponding exponentiations, runs a circuit of size at most $p(\secparam)$, and performs one equality test against the failure symbol and the corresponding branch. Since $d$ is polynomially bounded, there is a positive polynomial $\dstar$ with $\dstar(\secparam)\ge d(\secparam)$ for every $\secparam$, and we define the positive polynomial $\deltastar(\secparam)\cqq\dstar(\secparam)\delta(\secparam)$. We apply \cref{def:tunable-dsedl} with $\pstar$ and $\deltastar$ and obtain $Q$ and $\secparam_0$.

\begin{proofbox}{One coordinate transition in the tunable vector-\DSEDL{} hybrid.}{fig:tunable-vector-dsedl-coordinate-hybrid}
		\begin{minipage}[t]{0.48\linewidth}
			\vspace{0pt}
			{\footnotesize\underline{$\Hybrid_{k-1}$}\codelinesrightalign{coordinate $k$ bounded}\par}
			\smallskip
			\begin{codelines}
				\item $\pp\sample\CGen(1^\secparam)$
				\item \pcif $\pp=\bot$ \pcreturn $\bot$
				\item $g\gets g(\pp)$
				\item $(G_j)_{j<k}\sample\GG(\pp)^{k-1}$
				\item $s_k\sample[\gamma]$; $G_k\gets g^{s_k}$
				\item $(s_j)_{j>k}\sample[\gamma]^{\dlam-k}$
				\item $(G_j)_{j>k}\gets(g^{s_j})_{j>k}$; $\bvec G\gets(G_j)_{j\in[\dlam]}$
				\item \pcreturn $(1^\secparam,\pp,\gamma,\bvec G)$
			\end{codelines}
		\end{minipage}\hfill%
		\begin{minipage}[t]{0.48\linewidth}
			\vspace{0pt}
			{\footnotesize\underline{$\Hybrid_k$}\codelinesrightalign{coordinate $k$ uniform}\par}
			\smallskip
			\begin{codelines}
				\item $\pp\sample\CGen(1^\secparam)$
				\item \pcif $\pp=\bot$ \pcreturn $\bot$
				\item $g\gets g(\pp)$
				\item $(G_j)_{j<k}\sample\GG(\pp)^{k-1}$
				\item \hybchange{$G_k\sample\GG(\pp)$}
				\item $(s_j)_{j>k}\sample[\gamma]^{\dlam-k}$
				\item $(G_j)_{j>k}\gets(g^{s_j})_{j>k}$; $\bvec G\gets(G_j)_{j\in[\dlam]}$
				\item \pcreturn $(1^\secparam,\pp,\gamma,\bvec G)$
			\end{codelines}
		\end{minipage}
	\end{proofbox}

	\subparhead{Hybrids} We fix $\secparam\ge\secparam_0$ and $\gamma\in\NN$ with $Q(\secparam)\le\gamma<\ord(\pp)$ for every $\pp\in\Supp(\CGen(1^\secparam))$, and we let $\advA$ be a circuit of size at most $p(\secparam)$. For every $k\in[0..\dlam]$, we let $\Hybrid_k$ be the distribution with uniform group elements in its first $k$ coordinates and bounded powers in its remaining coordinates, as in \cref{fig:tunable-vector-dsedl-coordinate-hybrid}, and we define
	\[
		p_k\cqq\Prob\left[\advA(X)=1\mbar X\sample\Hybrid_k\right].
	\]
	Under the propagation convention of \cref{app:cyclic-group-samplers}, if a parameter or group-element sample fails, then the hybrid outputs $\bot$, and this outcome contributes zero to $p_k$. In particular, $p_0$ and $p_{\dlam}$ are the acceptance probabilities of $\advA$ on $\DvDSEDL_{0,\secparam,\gamma}$ and $\DvDSEDL_{1,\secparam,\gamma}$, respectively.

	\subparhead{The Reduction} We construct a scalar circuit $\advB$ of size at most $\pstar(\secparam)$. On input $\bot$, the circuit $\advB$ returns $0$. On a challenge other than $\bot$, it samples $L\sample[\dlam]$, places its scalar challenge in coordinate $L$, samples the coordinates $j<L$ as uniform group elements of $\GG(\pp)$ and the coordinates $j>L$ as bounded powers, runs $\advA$ on the resulting vector, and returns the output of $\advA$, and it returns $0$ if a further group-element sample fails. We test for $\bot$ and branch on failures because the circuits in \cref{def:tunable-dsedl} output a bit and because $\advB$ cannot sample the remaining coordinates without $\pp$. For $b\in\bits$, we write
	\[
		q_b\cqq\Prob\left[\advB(X)=1\mbar X\sample\DDSEDL_{b,\secparam,\gamma}\right].
	\]
	For each value $L=k$, the circuit $\advB$ and the hybrids $\Hybrid_{k-1}$ and $\Hybrid_k$ make the same parameter and group-element sampling calls and assign acceptance probability zero if a call fails, so on a bounded scalar challenge the input of $\advA$ is distributed as $\Hybrid_{k-1}$, and on a uniform one as $\Hybrid_k$. Hence,
	\[
		q_0-q_1
		=\frac1{\dlam}\sum_{k=1}^{\dlam}\left(p_{k-1}-p_k\right)
		=\frac{p_0-p_{\dlam}}{\dlam}.
	\]
	By the choice of $Q$ and $\secparam_0$ in \cref{def:tunable-dsedl}, we have $\abs{q_0-q_1}\le1/\deltastar(\secparam)$. We multiply the identity for $q_0-q_1$ by $\dlam$ and obtain
	\[
		\abs{p_0-p_{\dlam}}
		=\dlam\abs{q_0-q_1}
		\le\frac{d(\secparam)}{\deltastar(\secparam)}
		=\frac{d(\secparam)}{\dstar(\secparam)\delta(\secparam)}
		\le\frac{1}{\delta(\secparam)},
	\]
	as required.
\end{proofE}

\paperpart{Generic-Group Bounds for Bounded Powers}{subsec:generic-group-dsedl}
\ifConferenceVersion
In the generic group model~\cite{Nechaev94,EC:Shoup97}, we prove that every generic algorithm $\advA$ that makes at most $T$ group-oracle queries satisfies
\[
	\AdvGG_{q,\gamma,d}(\advA)
	\le\min\left\{1,\binom{T+d+1}{2}\left(\frac1\gamma+\frac2{\piq}\right)\right\},
\]
where $\AdvGG_{q,\gamma,d}(\advA)$ is the difference between the acceptance probabilities of $\advA$ on $d$ powers of a generator of a group of order $q$ with exponents uniform in $[\gamma]^d$ and with exponents uniform in $\ZZ_q^d$, and $\piq$ is the smallest prime factor of $q$. For $d=1$, with \BSGS{}~\cite{Sha1971} restricted to $[\gamma]$, we obtain a distinguisher that makes at most $2\ceil{\sqrt\gamma}+1$ queries and has advantage $1-\gamma/q$, and with a truncated version of this distinguisher, we obtain the same quadratic dependence on $T$ at every inverse-polynomial advantage. These bounds hold in the generic group model, also with encoding-dependent preprocessing~\cite{C:CorDodGuo18}, so concrete parameters for tunable \DSEDL{} remain open, and attacks that use the \DamgardJurik representation are outside their scope.

The factor $\dlam$ in the target advantage of \cref{lem:tunable-scalar-to-vec-dsedl} comes from the $\dlam$ transitions of the coordinate hybrid. For $T\ge d$, we have
\[
	\binom{T+d+1}{2}\le\binom{2T+1}{2}=T(2T+1)\le6\binom{T+2}{2},
\]
so the generic bound for $d$ challenge powers is at most six times the bound for one challenge power with the same $T$. Hence, a vector version of tunable \DSEDL{}, assumed directly, would remove the factor $\dlam$ from every interval threshold that we derive from \cref{lem:tunable-scalar-to-vec-dsedl}, and we take the scalar assumption as our starting point and keep this factor.
\else
\label{app:generic-group-evidence}
\fi
\begin{relocated}{generic-evidence}
Tunable \DSEDL{} is the only nonstandard decisional assumption in our constructions, and we bound its dependence on the resources of the adversary in the generic group model~\cite{Nechaev94,EC:Shoup97}. We must allow a composite order, since the group of \cref{def:dj-group} has composite order and its exponent ring is not a field. Therefore, we carry out the analysis for an arbitrary order $q$ and write $\piq$ for its smallest prime factor, so that $\piq=q$ for prime $q$. A nonzero coefficient in $\ZZ_q$ need not be invertible, so a nonzero affine form can vanish at many points, but after we fix all variables except one with a nonzero coefficient, the solutions of the resulting univariate congruence form an arithmetic progression with common difference at least $\piq$.

\ifConferenceVersion
For finite sets $X$ and $Y$ of equal cardinality, we write $\psi(X,Y)$ for the set of bijections from $X$ to $Y$. We let $\calL$ be a set of cardinality $q$, sample $\xi\sample\psi(\ZZ_q,\calL)$, and define the generic group oracle by $\Oxi(\xi(a),\xi(b),\varepsilon)\cqq\xi\left(a+(-1)^\varepsilon b\right)$ for $a,b\in\ZZ_q$ and $\varepsilon\in\bits$. A generic algorithm queries $\Oxi$ only on labels that it received as input or as answers of $\Oxi$, and its nonuniform advice is a function of $q$, $\gamma$, and $d$ and is independent of $\xi$. The bounded-power and uniform-power experiments are those of \cref{fig:generic-dsedl-games}.

\begin{proofbox}{Generic bounded-power and uniform-power experiments.}{fig:generic-dsedl-games}
		\begin{minipage}[t]{0.48\linewidth}
			\vspace{0pt}
			{\footnotesize\underline{$\bndexp^{\advA}_{q,\gamma,d}$}\par}
			\smallskip
			\begin{codelines}
				\item \label{line:gg-bounded-exponents}$(x_j)_{j\in[d]}\sample[\gamma]^d$
				\item \label{line:gg-bounded-encoding}$\xi\sample\psi(\ZZ_q,\calL)$
				\item \label{line:gg-bounded-adversary}$b\sample\advA^{\Oxi}(q,\gamma,\xi(1),(\xi(x_j))_{j\in[d]})$
				\item \pcreturn $b$
			\end{codelines}
		\end{minipage}\hfill%
		\begin{minipage}[t]{0.48\linewidth}
			\vspace{0pt}
			{\footnotesize\underline{$\uniformexp^{\advA}_{q,\gamma,d}$}\par}
			\smallskip
			\begin{codelines}
				\item \label{line:gg-uniform-exponents}$(x_j)_{j\in[d]}\sample\ZZ_q^d$
				\item \label{line:gg-uniform-encoding}$\xi\sample\psi(\ZZ_q,\calL)$
				\item \label{line:gg-uniform-adversary}$b\sample\advA^{\Oxi}(q,\gamma,\xi(1),(\xi(x_j))_{j\in[d]})$
				\item \pcreturn $b$
			\end{codelines}
		\end{minipage}
	\end{proofbox}

For a generic algorithm $\advA$, we define
\[
	\AdvGG_{q,\gamma,d}(\advA)
	\cqq\abs{\Prob\left[\bndexp^{\advA}_{q,\gamma,d}=1\right]-\Prob\left[\uniformexp^{\advA}_{q,\gamma,d}=1\right]}.
\]
\else
In the generic group model, the group elements are random labels, and the adversary computes with them only through an oracle for the group operation, so its view is determined by which of its affine combinations of the challenge exponents coincide.

\begin{definition}[Generic Bounded-Power Problem]\label{def:generic-dsedl}
	For finite sets $X$ and $Y$ with $\abs X=\abs Y$, we write $\psi(X,Y)$ for the set of bijections from $X$ to $Y$. Let $q,\gamma,d\in\NN$ with $q\ge2$ and $1\le\gamma<q$, and let $\calL$ be a set with $\abs\calL=q$. For $\xi\in\psi(\ZZ_q,\calL)$, the generic group oracle $\Oxi$ is defined by $\Oxi(\xi(a),\xi(b),\varepsilon)\cqq\xi\left(a+(-1)^\varepsilon b\right)$ for $a,b\in\ZZ_q$ and $\varepsilon\in\bits$. A generic algorithm is an oracle algorithm $\advA$ that queries $\Oxi$ only on labels that it received as input or as answers of $\Oxi$, and whose nonuniform advice depends only on $q$, $\gamma$, and $d$. For $b\in\bits$, let $\calX_0\cqq[\gamma]^d$ and $\calX_1\cqq\ZZ_q^d$, and define
	\[
		p_b(\advA)\cqq
		\Prob\left[
			\advA^{\Oxi}\left(q,\gamma,\xi(1),(\xi(x_j))_{j\in[d]}\right)=1
			\mbar
			\begin{array}{l}
				(x_j)_{j\in[d]}\sample\calX_b, \\
				\xi\sample\psi(\ZZ_q,\calL)
			\end{array}
		\right].
	\]
	The generic advantage of $\advA$ is $\AdvGG_{q,\gamma,d}(\advA)\cqq\abs{p_0(\advA)-p_1(\advA)}$.
\end{definition}
\fi

For bounded exponents and composite order, we adapt the symbolic-transcript argument of~\cite{Nechaev94,EC:Shoup97}. We have not found this decisional bounded-interval statement in the literature.

\begin{theoremE}[Generic-Group Bound]\label{thm:generic-vector-dsedl}
	Let $q,\gamma,d,T\in\NN$ with $q\ge2$, $1\le\gamma<q$, and $d\ge1$, let $\piq\cqq\min\{p\in\Primes\mid q\in p\ZZ\}$, and let $\advA$ be a generic algorithm that makes at most $T$ oracle queries. Then
	\[
		\AdvGG_{q,\gamma,d}(\advA)
		\le\min\left\{1,\binom{T+d+1}{2}\left(\frac{1}{\gamma}+\frac{2}{\piq}\right)\right\}.
	\]
\end{theoremE}

\begin{proofE}
	We compare both experiments with one symbolic execution in which each label stands for an affine form in the challenge exponents. The transcript of each experiment has the same distribution as the symbolic transcript unless two distinct forms of the execution take the same value, and we bound the probability of this collision event separately for bounded and for uniform exponents.

	We write $\numforms\cqq T+d+1$ within this proof. If
	\[
		\binom{\numforms}{2}\left(\frac1\gamma+\frac2\piq\right)\ge1,
	\]
	then the bound holds because $\AdvGG_{q,\gamma,d}(\advA)\le1$. We suppose henceforth that the left-hand side is smaller than one, so that $\binom{\numforms}{2}<\piq/2\le q/2$. Then $\numforms\le q$, since $\binom{v}{2}\ge v$ for every integer $v\ge3$, and $\numforms\le2\le q$ otherwise. Hence, the at most $\numforms$ forms of the symbolic execution can receive distinct labels from $\calL$, and the symbolic oracle always has an unused label.

	We fix the coins and the nonuniform advice of $\advA$. Each label of the symbolic execution stands for an affine form
	\[
		F(\bvec{X})=\alpha_0+\sum_{j=1}^d\alpha_jX_j\in\ZZ_q[X_1,\dots,X_d],
	\]
	and the initial labels stand for $1,X_1,\dots,X_d$. The symbolic oracle keeps an injective table from coefficient vectors in $\ZZ_q^{d+1}$ to labels and assigns uniformly random distinct labels to these $d+1$ forms. On a query with the labels of forms $F_1$ and $F_2$ and with $\varepsilon\in\bits$, the oracle computes $F_1+(-1)^\varepsilon F_2$, returns the existing label if its coefficient vector is in the table, and otherwise assigns a uniformly random unused label. The execution begins with $d+1$ forms, and each query adds at most one form, so at most $\numforms$ forms occur. The symbolic transcript depends on the coins of $\advA$ and on the assigned labels and is independent of the values of $X_1,\dots,X_d$.

	We fix a complete symbolic transcript and, for a distribution $\Distribution$ on $\ZZ_q^d$, write $\Event_{\Distribution}$ for the event that two distinct forms of the transcript take the same value at $\bvec{X}\sample\Distribution$. We fix distinct forms $F_1$ and $F_2$ of the transcript and write
	\[
		R(\bvec{X})\cqq F_1(\bvec{X})-F_2(\bvec{X})=\beta_0+\sum_{j=1}^d\beta_jX_j,
	\]
	which is nonzero because $F_1\ne F_2$. If $\beta_1=\dots=\beta_d=0$, then $\beta_0\neq0$, and $R$ does not vanish. Otherwise, we fix $j$ with $\beta_j\ne0$. After we fix every variable other than $X_j$, the condition $R(\bvec{X})=0$ becomes the congruence $\beta_jX_j\equiv c\pmod q$ for some $c\in\ZZ_q$, whose solution set is either empty or an arithmetic progression of $\gcd(\beta_j,q)$ residues with common difference $q/\gcd(\beta_j,q)$. Since $\beta_j\neq0$, the integer $q/\gcd(\beta_j,q)$ is a divisor of $q$ larger than one, hence at least $\piq$. Since $\gamma<q$, the elements of $[\gamma]$ are distinct modulo $q$, and an interval of $\gamma$ consecutive integers meets an arithmetic progression with common difference at least $\piq$ in at most $\ceil{\gamma/\piq}\le\gamma/\piq+1$ points. Since $X_j$ is uniform on $[\gamma]$ and independent of the other coordinates, we obtain
	\[
		\Prob\left[R(\bvec{X})=0\mbar\bvec{X}\sample[\gamma]^d\right]\le\frac{1}{\gamma}\left(\frac{\gamma}{\piq}+1\right)=\frac{1}{\piq}+\frac{1}{\gamma}.
	\]
	For uniform exponents, after the same conditioning, $\gcd(\beta_j,q)$ of the $q$ residues of $X_j$ satisfy the congruence, so
	\[
		\Prob\left[R(\bvec{X})=0\mbar\bvec{X}\sample\ZZ_q^d\right]\le\frac{\gcd(\beta_j,q)}{q}=\frac{1}{q/\gcd(\beta_j,q)}\le\frac{1}{\piq}.
	\]
	By a union bound over the at most $\binom{\numforms}{2}$ unordered pairs of forms,
	\[
		\Prob\left[\Event_{[\gamma]^d}\right]\le\binom{\numforms}{2}\left(\frac{1}{\gamma}+\frac{1}{\piq}\right),
		\qquad
		\Prob\left[\Event_{\ZZ_q^d}\right]\le\binom{\numforms}{2}\frac{1}{\piq}.
	\]

	Outside the corresponding collision event, two evaluated forms take the same value exactly when their coefficient vectors agree, so the equality pattern among the group elements of the experiment agrees with the symbolic table. The symbolic oracle assigns uniformly random distinct labels to the forms that occur, which is the distribution of the labels of their values under a uniformly random bijection $\xi\colon\ZZ_q\to\calL$. Hence, outside the corresponding collision event, the transcript of each of the two experiments has the same distribution as the symbolic transcript, and we obtain
	\[
		\AdvGG_{q,\gamma,d}(\advA)
		\le\Prob\left[\Event_{[\gamma]^d}\right]+\Prob\left[\Event_{\ZZ_q^d}\right]
		\le\binom{\numforms}{2}\left(\frac{1}{\gamma}+\frac{2}{\piq}\right).
	\]
	This bound holds for every fixed choice of coins, hence also after we average over the coins of a randomized $\advA$, and with the case treated at the start of the proof, the bound in the statement holds.
\end{proofE}

\ifConferenceVersion\else
\Cref{thm:generic-vector-dsedl} holds for every $\gamma\in[q-1]$, including $\gamma\ge\piq$, for which a nonzero affine form can vanish at several points of an interval of length $\gamma$. For the sampler of \cref{def:dj-group}, the group order is $q=p'q'$ with $p',q'\ge2^{\ell(\secparam)-2}$, and since $\ell(\secparam)\ge\secparam+2$, we have $\piq\ge2^{\ell(\secparam)-2}\ge2^\secparam$. Hence, every interval with $\gamma<2^\secparam$ satisfies $\gamma<\piq$, and the term $2/\piq$ is at most $2^{1-\secparam}$.

The factor $\dlam$ in the target advantage of \cref{lem:tunable-scalar-to-vec-dsedl} comes from the $\dlam$ transitions of the coordinate hybrid. For $T\ge d$, we have
\[
	\binom{T+d+1}{2}\le\binom{2T+1}{2}=T(2T+1)\le6\binom{T+2}{2},
\]
so the bound of \cref{thm:generic-vector-dsedl} for $d$ challenge powers is at most six times the bound for one challenge power with the same $T$. Hence, a vector version of tunable \DSEDL{}, assumed directly, would remove the factor $\dlam$ from every interval threshold that we derive from \cref{lem:tunable-scalar-to-vec-dsedl}, and we take the scalar assumption as our starting point and keep this factor.
\fi

With square-and-multiply, we can compute $\xi(\sigma x)$ from $\xi(x)$ for $\sigma\in[0..q-1]$ with at most $2\clog{q}$ oracle queries. Hence, a generic algorithm that makes $T$ queries and, in addition, computes the $\sigma$th powers of $D$ challenge elements makes at most $T+2D\clog{q}$ queries, and \cref{thm:generic-vector-dsedl} applies with $T$ replaced by $T+2D\clog{q}$.

To show that the dependence on $\gamma$ in \cref{thm:generic-vector-dsedl} is tight at constant advantage, we restrict \BSGS{}~\cite{Sha1971} to the interval $[\gamma]$, as in~\cite{EC:vOoWie96}.

\begin{theoremE}[Matching Generic Attack]\label{thm:generic-group-dsedl-attack}
	For all $q,\gamma\in\NN$ with $q\ge2$ and $1\le\gamma<q$, there is a deterministic generic algorithm $\advA$ that makes at most $2\ceil{\sqrt\gamma}+1$ oracle queries and satisfies $\AdvGG_{q,\gamma,1}(\advA)=1-\gamma/q$.
\end{theoremE}

\begin{proofE}
	We run \BSGS{} on the interval $[\gamma]$ and add a range check. Because of the range check, acceptance is equivalent to membership of the challenge exponent in $[\gamma]$.

	We set $v\cqq\ceil{\sqrt\gamma}$. On input $\xi(1)$ and the challenge label $\xi(x)$, the algorithm $\advA$ computes and stores the labels $\xi(j)$ for $j\in[0..v]$ with at most $v$ oracle queries, then computes $\xi(x-iv)$ for every $i\in[v]$ with $v$ further queries, and accepts if and only if there are $i\in[0..v]$ and $j\in[0..v-1]$ with $\xi(x-iv)=\xi(j)$ and $iv+j\in[\gamma]$.

	If $x\in[\gamma]$, then $x\le\gamma\le v^2$, and by Euclidean division, $x=iv+j$ with $i\cqq\floor{x/v}\in[0..v]$ and $j\in[0..v-1]$, so $\advA$ accepts every bounded challenge. Conversely, if $\advA$ accepts, then $x\equiv iv+j\pmod q$ for an integer $iv+j\in[\gamma]$. Since $\gamma<q$, both $x$ and $iv+j$ are their own representatives in $[0..q-1]$, so $x=iv+j\in[\gamma]$. Hence, $\advA$ accepts a uniform $x\sample\ZZ_q$ with probability $\gamma/q$, so $\AdvGG_{q,\gamma,1}(\advA)=1-\gamma/q$, and $\advA$ makes at most $2v+1$ queries.
\end{proofE}

\end{relocated}%

\begin{relocated}{aux-generic-evidence}
\subsection{Tightness and Preprocessing}\label{app:auxiliary-generic-evidence}

By \cref{thm:generic-group-dsedl-attack}, the quadratic dependence on the number of queries in \cref{thm:generic-vector-dsedl} is tight at constant advantage, and we show that it is tight at every advantage and that the upper bound extends to encoding-dependent preprocessing. When we truncate the grid of \cref{thm:generic-group-dsedl-attack}, we accept exactly the challenge exponents in an initial segment of $[\gamma]$ whose size is quadratic in the grid parameter.

\begin{propositionE}[Truncated Generic Attack]\label{cor:generic-dsedl-attack-truncated}
	Let $q,\gamma,v\in\NN$ with $q\ge2$, $1\le\gamma<q$, and $1\le v\le\floor{\sqrt\gamma}$. There is a deterministic generic algorithm $\advA$ that makes at most $2v-1$ oracle queries and satisfies
	\[
		\AdvGG_{q,\gamma,1}(\advA)=\left(v^2-1\right)\left(\frac1\gamma-\frac1q\right).
	\]
	In particular, if $\gamma\le q/2$, then for every odd $T\in\NN$ with $3\le T\le2\floor{\sqrt\gamma}-1$, the algorithm for $v=(T+1)/2$ makes at most $T$ queries and satisfies $\AdvGG_{q,\gamma,1}(\advA)\ge T^2/(8\gamma)$.
\end{propositionE}
\begin{proofE}
	The two acceptance probabilities are the densities of the segment $[v^2-1]$ in $[\gamma]$ and in $\ZZ_q$.

	On input $\xi(1)$ and the challenge label $\xi(x)$, the algorithm $\advA$ computes and stores the labels $\xi(j)$ for $j\in[0..v]$ with at most $v$ oracle queries, then computes $\xi(x-iv)$ for every $i\in[v-1]$ with at most $v-1$ further queries, and accepts if and only if there are $i,j\in[0..v-1]$ with $\xi(x-iv)=\xi(j)$ and $iv+j\in[\gamma]$.

	Since $v\le\floor{\sqrt\gamma}$, we have $v^2-1<\gamma<q$, so every element of $[v^2-1]$ is its own representative modulo $q$ and lies in $[\gamma]$. Since $\{iv+j\mid i,j\in[0..v-1]\}=[0..v^2-1]$, the algorithm $\advA$ accepts if and only if $x\bmod q\in[v^2-1]$. Hence, $\advA$ accepts a bounded challenge $x\sample[\gamma]$ with probability $(v^2-1)/\gamma$ and a uniform challenge $x\sample\ZZ_q$ with probability $(v^2-1)/q$, and $\AdvGG_{q,\gamma,1}(\advA)$ is their difference.

	For the last claim, $v=(T+1)/2$ satisfies $1\le v\le\floor{\sqrt\gamma}$, and since $T\ge3$, we have $v^2-1=T^2/4+(2T-3)/4\ge T^2/4$. Since $\gamma\le q/2$, we have $1/\gamma-1/q\ge1/(2\gamma)$, so $\AdvGG_{q,\gamma,1}(\advA)\ge T^2/(8\gamma)$.
\end{proofE}

\ifConferenceVersion
\parhead{Preprocessing} A concrete nonuniform adversary can preprocess one fixed group representation. In the auxiliary-input generic-group model of~\cite{C:CorDodGuo18}, a two-stage algorithm $\advA$ consists of a computationally unbounded preprocessing algorithm, which receives the bijection $\xi$ and outputs a string $z\in\bits^{\le S}$, and an online algorithm, which receives $z$ and the challenge of \cref{fig:generic-dsedl-games} and makes at most $T$ queries to $\Oxi$. We write $\Adv^{\aiggdsedl}_{q,\gamma,d,S,T}(\advA)$ for the difference between the acceptance probabilities of $\advA$ with exponents $(x_j)_{j\in[d]}\sample[\gamma]^d$ and with exponents $(x_j)_{j\in[d]}\sample\ZZ_q^d$. We combine the presampling theorem of~\cite{C:CorDodGuo18} with the symbolic argument of \cref{thm:generic-vector-dsedl} and obtain a polynomial interval threshold against such algorithms.
\else
\parhead{Preprocessing} In \cref{thm:generic-vector-dsedl}, the nonuniform advice is independent of the encoding $\xi$, whereas a concrete nonuniform adversary can preprocess one fixed group representation. We extend \cref{thm:generic-vector-dsedl} to the auxiliary-input generic-group model of~\cite{C:CorDodGuo18}, in which an unbounded preprocessing algorithm receives the complete encoding $\xi$ and passes an $S$-bit advice string to an online algorithm that makes at most $T$ group-oracle queries.

We first bound the advantage in the bit-fixing model of~\cite{C:CorDodGuo18}, in which at most $P$ points of the encoding are fixed in advance. For a list $\calF = \bigl((r_i,\ell_i)\bigr)_{i\in[P']}$ with $P'\le P$, pairwise distinct $r_i\in\ZZ_q$, and pairwise distinct labels $\ell_i\in\calL$, we write $\psi_\calF$ for the set of bijections $\xi\in\psi(\ZZ_q,\calL)$ with $\xi(r_i)=\ell_i$ for every $i\in[P']$. The online algorithm receives the complete list $\calF$ and arbitrary advice derived from it, and since it can ignore $\calF$, this model contains the bit-fixing model of~\cite{C:CorDodGuo18}. As in the bit-fixing bounds of~\cite{C:CorDodGuo18}, we adapt the symbolic argument of \cref{thm:generic-vector-dsedl} to this model.

\begin{theoremE}[Bit-Fixing Generic Bound]\label{thm:bit-fixing-generic-dsedl}
	Let $q,\gamma,d,P,T\in\NN$ with $q\ge2$ and $1\le\gamma<q$, let $\piq$ be the smallest prime factor of $q$, let $\calF=((r_i,\ell_i))_{i\in[P']}$ with $P'\le P$, pairwise distinct $r_i\in\ZZ_q$, and pairwise distinct $\ell_i\in\calL$, and let $z$ be a string. Let $\advA$ be a generic algorithm that makes at most $T$ queries to $\Oxi$, and for $b\in\bits$, let $\calX_0\cqq[\gamma]^d$, $\calX_1\cqq\ZZ_q^d$, and
	\[
		p_b\cqq
		\Prob\left[
			\advA^{\Oxi}\left(z,\calF,q,\gamma,\xi(1),(\xi(x_j))_{j\in[d]}\right)=1
			\mbar
			\begin{array}{l}
				(x_j)_{j\in[d]}\sample\calX_b, \\
				\xi\sample\psi_\calF
			\end{array}
		\right].
	\]
	Then
	\[
		\abs{p_0-p_1}\le
		\min\left\{
			1,
			\binom{T+d+P+1}{2}
			\left(
				\frac1\gamma+
				\frac2{\piq}
			\right)
		\right\}.
	\]
\end{theoremE}

\begin{proofE}
	We repeat the symbolic execution of the proof of \cref{thm:generic-vector-dsedl}, which now begins with the constant forms of the exponents $r_i$ of $\calF$ in addition to $1,X_1,\ldots,X_d$. The labels of the constant forms are prescribed by $\calF$, every new coefficient vector receives a uniformly random unused label, and the collision calculation does not depend on $z$ and $\calF$, which are fixed.

	We fix the coins of $\advA$ and write $\numforms\cqq T+d+P+1$ within this proof. At most $P+d+1$ forms occur initially, and every query adds at most one form, so the transcript contains at most $\numforms$ forms. For two distinct forms $F_1$ and $F_2$, the difference $F_1-F_2$ is a nonzero affine polynomial in $X_1,\ldots,X_d$. If one of its variable coefficients is nonzero, we repeat the progression argument of the proof of \cref{thm:generic-vector-dsedl}, conditioned on the other variables, and obtain that $F_1-F_2$ vanishes with probability at most $1/\gamma+1/\piq$ for bounded exponents and at most $1/\piq$ for uniform exponents. If all its variable coefficients are zero, then $F_1-F_2$ is a nonzero constant and does not vanish.

	As in the proof of \cref{thm:generic-vector-dsedl}, if
	\[
		\binom{\numforms}{2}\left(\frac1\gamma+\frac2\piq\right)\ge1,
	\]
	then the bound holds because the advantage is at most one. Otherwise, $\binom{\numforms}{2}<\piq/2\le q/2$, so $\numforms\le q$ by the argument of that proof, and the symbolic execution always has an unused label. By a union bound over the unordered pairs of forms, the collision event has probability at most
	\[
		\binom{\numforms}{2}\left(\frac1\gamma+\frac1\piq\right)
		,\qquad
		\binom{\numforms}{2}\frac1\piq
	\]
	for bounded and for uniform exponents, respectively. Outside the collision event, the equality pattern among the evaluated exponents agrees with the symbolic table, and the labels of the exponents outside $\calF$ are uniformly distributed without replacement among the unused labels, which is their distribution under a uniformly random bijection $\xi\sample\psi_\calF$. We compare each of the two experiments with the symbolic execution, average over the coins of $\advA$, and obtain the bound.
\end{proofE}

We state the auxiliary-input version of the problem of \cref{def:generic-dsedl}, adapted from the auxiliary-input generic group of~\cite{C:CorDodGuo18}.

\begin{definition}[Auxiliary-Input Generic Bounded-Power Problem]\label{def:auxiliary-input-generic-dsedl}
	Let $q,\gamma,d,S,T\in\NN$ with $q\ge2$ and $1\le\gamma<q$, and let $\calX_0\cqq[\gamma]^d$ and $\calX_1\cqq\ZZ_q^d$. A two-stage auxiliary-input generic algorithm is a pair $\advA=(\advA_{\var{pre}},\advA_{\var{on}})$ of a computationally unbounded algorithm $\advA_{\var{pre}}$ that, on input $\xi\in\psi(\ZZ_q,\calL)$, outputs a string $z\in\bits^{\le S}$, and a generic algorithm $\advA_{\var{on}}$ that makes at most $T$ queries to $\Oxi$. For $b\in\bits$, define
	\[
		p_b^{\aiggdsedl}(\advA)\cqq
		\Prob\left[
			\advA_{\var{on}}^{\Oxi}\left(z,q,\gamma,\xi(1),(\xi(x_j))_{j\in[d]}\right)=1
			\mbar
			\begin{array}{l}
				\xi\sample\psi(\ZZ_q,\calL), \\
				z\sample\advA_{\var{pre}}(\xi), \\
				(x_j)_{j\in[d]}\sample\calX_b
			\end{array}
		\right],
	\]
	and $\Adv^{\aiggdsedl}_{q,\gamma,d,S,T}(\advA)\cqq\abs{p_0^{\aiggdsedl}(\advA)-p_1^{\aiggdsedl}(\advA)}$.
\end{definition}

We combine \cref{thm:bit-fixing-generic-dsedl} with the additive part of the auxiliary-input-to-bit-fixing theorem of~\cite{C:CorDodGuo18} and obtain the following bound.

\begin{theoremE}[Auxiliary-Input Generic Bound]\label{thm:auxiliary-input-generic-dsedl}
	Let $q,\gamma,d,S,T\in\NN$ with $q\ge16$ and $1\le\gamma<q$, let $\piq$ be the smallest prime factor of $q$, let $\vartheta\in(0,1)$, and let $P\in[q]$. Every two-stage auxiliary-input generic algorithm $\advA$ in the sense of \cref{def:auxiliary-input-generic-dsedl} satisfies
	\begin{align*}
		\Adv^{\aiggdsedl}_{q,\gamma,d,S,T}(\advA)
		\le{}
		 & \min\left\{
			1,
			\binom{T+d+P+1}{2}
			\left(
				\frac1\gamma+
				\frac2{\piq}
			\right)
		\right\} \\
		 & {}+
		\frac{12\bigl(S+\log_2(1/\vartheta)\bigr)(T+d+1)}{P}
		+2\vartheta.
	\end{align*}
\end{theoremE}

\begin{proofE}
	We apply the additive part of the auxiliary-input-to-bit-fixing theorem of~\cite{C:CorDodGuo18} to the problem of \cref{def:auxiliary-input-generic-dsedl}. We take the auxiliary-input and bit-fixing generic groups of~\cite{C:CorDodGuo18} to be $\AIGG(q,q)$ and $\BFGG(P,q,q)$, so the random injection of~\cite{C:CorDodGuo18} is the bijection $\xi$, and the range set published by the bit-fixing oracle of~\cite{C:CorDodGuo18} is $\calL$, which carries no information about $\xi$. The theorem of~\cite{C:CorDodGuo18} concerns random injections, so it holds for every $q$, including a composite order. The slack parameter of~\cite{C:CorDodGuo18}, which~\cite{C:CorDodGuo18} calls $\gamma$, is our $\vartheta$, and the advantage of~\cite{C:CorDodGuo18} for an indistinguishability application is twice the deviation of the success probability from one half, which equals $\Adv^{\aiggdsedl}_{q,\gamma,d,S,T}(\advA)$. The combined query complexity of the online algorithm and the challenger is at most $T+d+1$, since the challenger reveals $\xi(1)$ and the $d$ challenge labels in addition to the $T$ queries of $\advA_{\var{on}}$. By the theorem of~\cite{C:CorDodGuo18}, we can replace the auxiliary-input experiment by a convex combination of bit-fixing experiments with at most $P$ fixed points, at additive loss
	\[
		\frac{12\bigl(S+\log_2(1/\vartheta)\bigr)(T+d+1)}{P}
		+2\vartheta.
	\]
	Since the bound of \cref{thm:bit-fixing-generic-dsedl} holds for every list of at most $P$ pairs and every auxiliary string, we apply it to every member of the convex combination and substitute the result into the presampling bound.
\end{proofE}

We choose the free parameters of \cref{thm:auxiliary-input-generic-dsedl} and obtain a polynomial interval threshold against encoding-dependent preprocessing.
\fi

\begin{corollaryE}[Generic Bound with Preprocessing]\ifConferenceVersion\label{thm:auxiliary-input-generic-dsedl}\else\label{cor:auxiliary-input-threshold}\fi
	Let $\delta,S,T,d\colon\NN\to\NN$ be polynomially bounded with $\delta(\secparam)\ge1$ for every $\secparam$, and for every $\secparam\in\NN$, define
	\begin{align*}
		\vartheta(\secparam) & \cqq\frac{1}{8\delta(\secparam)},                                          \\
		P(\secparam)            & \cqq\ceil{48\delta(\secparam)\bigl(S(\secparam)+\ceil{\log_2(8\delta(\secparam))}\bigr)(T(\secparam)+d(\secparam)+1)}, \\
		\Tai(\secparam) & \cqq T(\secparam)+d(\secparam)+P(\secparam)+1.
	\end{align*}
	Let $\secparam,q,\gamma\in\NN$ satisfy
	\[
		\piq\ge8\delta(\secparam)\binom{\Tai(\secparam)}{2}
		,\qquad
		4\delta(\secparam)\binom{\Tai(\secparam)}{2}\le\gamma<q,
	\]
	where $\piq$ is the smallest prime factor of $q$. Then every two-stage auxiliary-input generic algorithm $\advA$ whose preprocessing stage outputs at most $S(\secparam)$ bits and whose online stage makes at most $T(\secparam)$ queries to $\Oxi$ satisfies $\Adv^{\aiggdsedl}_{q,\gamma,d(\secparam),S(\secparam),T(\secparam)}(\advA)\le1/\delta(\secparam)$. Moreover, the function $\secparam\mapsto4\delta(\secparam)\binom{\Tai(\secparam)}{2}$ is polynomially bounded.
\end{corollaryE}
\begin{proofE}
	We bound the three terms of the presampling bound separately by $1/(2\delta)$, $1/(4\delta)$, and $1/(4\delta)$, where we write $\delta$, $S$, $T$, $d$, $\vartheta$, $P$, and $\Tai$ for their values at $\secparam$.

	Since $\Tai\ge P+1\ge49$, we have $\binom{\Tai}{2}\ge\Tai>P$, so $q\ge\piq\ge8\delta\binom{\Tai}{2}\ge P$ and $q\ge16$.
	\ifConferenceVersion
	By the additive presampling theorem of~\cite{C:CorDodGuo18}, we can replace an execution with $S$-bit auxiliary input by a convex combination of executions in which at most $P$ points of the encoding are fixed. After we condition on those points, the symbolic execution in the proof of \cref{thm:generic-vector-dsedl} has at most $P$ further constant forms, and we repeat the collision calculation of that proof and obtain that the advantage of $\advA$ is at most
	\[
		\binom{\Tai}{2}\left(\frac1\gamma+\frac2{\piq}\right)
		+\frac{12(S+\log_2(1/\vartheta))(T+d+1)}P+2\vartheta.
	\]
	\else
	Hence, \cref{thm:auxiliary-input-generic-dsedl} applies with these $\vartheta$ and $P$, and its bound is at most
	\[
		\binom{\Tai}{2}\left(\frac1\gamma+\frac2{\piq}\right)
		+\frac{12(S+\log_2(1/\vartheta))(T+d+1)}P+2\vartheta.
	\]
	\fi
	By the conditions on $\gamma$ and $\piq$, the first term is at most $1/(4\delta)+1/(4\delta)=1/(2\delta)$. Since $\log_2(1/\vartheta)=\log_2(8\delta)\le\ceil{\log_2(8\delta)}$, the choice of $P$ bounds the second term by $12/(48\delta)=1/(4\delta)$, and $2\vartheta=1/(4\delta)$. Hence, the advantage is at most $1/\delta$. Since $S$, $T$, $d$, and $\delta$ are polynomially bounded, so are $P$, $\Tai$, and $4\delta\binom{\Tai}{2}$.
\end{proofE}

\ifConferenceVersion\else
\parhead{Parameter Scale} By \cref{thm:generic-vector-dsedl}, every generic algorithm that makes at most $T$ queries satisfies $\AdvGG_{q,\gamma,d}(\advA)\le1/\delta$ whenever
\[
	\gamma\ge2\delta\binom{T+d+1}{2}
	,\qquad
	\piq\ge4\delta\binom{T+d+1}{2},
\]
since each of the two terms of the bound is then at most $1/(2\delta)$. By \cref{cor:auxiliary-input-threshold}, a polynomial interval threshold also suffices against polynomial encoding-dependent advice. By \cref{thm:generic-group-dsedl-attack,cor:generic-dsedl-attack-truncated}, the quadratic dependence on $T$ is tight for $d=1$, at advantage $1-\gamma/q$ and at every inverse-polynomial advantage, respectively.

\parhead{Scope of the Generic Bounds}\phantomsection\label{rem:ggm-scope}\label{rem:ggm-scope-full} \Cref{thm:generic-vector-dsedl,thm:auxiliary-input-generic-dsedl} hold in the generic group model, so concrete parameters for tunable \DSEDL{} remain open, and \DCR{}, which \cref{def:dcr} states, lies outside this analysis. In \cref{thm:auxiliary-input-generic-dsedl}, the advice depends on the complete random encoding, and the interval threshold grows with the advice length $S$, which is necessary, since advice that contains the labels of a set $\calI\subseteq[\gamma]$ yields a distinguisher with zero queries and advantage $\abs{\calI}/\gamma-\abs{\calI}/q$. In both theorems, the algorithm receives the group order $q$, so the bounds hold without the secrecy of the order in \cref{def:dj-group}. In the structured generic group model of~\cite{EC:CorHenWu26}, labels have a controlled amount of structure, but the bounds of~\cite{EC:CorHenWu26} concern search discrete logarithms and do not imply the bounded-versus-uniform statement of \cref{thm:generic-vector-dsedl}, so generic bounds in a model that exposes the group representation remain open.

The generic model does not cover the following attacks that use the representation of the \DamgardJurik group: arithmetic modulo $N$ and modulo $N^2$, both of which apply to a \DamgardJurik element; factoring $N$, which splits the group by the Chinese remainder theorem; index-calculus and number-field-sieve methods, which use the arithmetic of the ambient ring~\cite{EC:Weber95,PKC:ComSem06}; preprocessing against a fixed family of moduli, which is a number-theoretic object shared by every instance, whereas the advice in \cref{thm:auxiliary-input-generic-dsedl} depends on a random encoding; and statistics of the representation that are independent of the exponent. Whether one of these attacks distinguishes the two distributions of \cref{def:tunable-dsedl} faster than \BSGS{}~\cite{Sha1971} is open.
\fi

\end{relocated}%

\section{Extremely Lossy Trapdoor Hashing}\label{sec:hashing}
Trapdoor hashing~\cite{C:DGIMMO19} and extremely lossy functions~\cite{C:Zhandry16} have been developed independently, and we construct one algebraic object with the properties of both. From its short digest, we can recover a predicate value with a trapdoor, as from a trapdoor hash, and under an indistinguishable choice of public parameters, its hash maps an exponential domain onto a polynomial image, as an extremely lossy function does. We define this primitive and build it over a \DamgardJurik group by replacing the uniform hash bases with bounded powers of a generator. From~\cite{C:RoySin21}, we adapt the distributed discrete logarithm $\DDLog_N$, which maps the digest and the encoding to exact additive shares of a linear predicate.

In our formulation, the inputs are residue vectors modulo the modulus $N$ of the hash key. Inside the construction, we decompose each residue into bits and multiply the offset of bit $k$ by $2^k$, so that the predicate value is a linear function of the residues and the binary expansion appears only in hashing and encoding.

We extend trapdoor hashing~\cite{C:DGIMMO19}\ifConferenceVersion\else, recalled in \cref{def:trapdoor-hashing},\fi{} by an image parameter $r$ of the setup algorithm and change its clauses as follows. Since in the \DamgardJurik construction the inputs are residue vectors modulo the modulus of the hash key, the input set $\calX$ and the predicate range $\KK$ depend on the hash key. The encoding takes values in an abelian group $\KK$, decoding returns one element of $\KK$, and for correctness, we require the encoding and the decoded value to differ by the predicate value. At the full parameter $\Rfull(\secparam,d)$, the hash key has the full-range distribution, and at every smaller parameter $r$, the hash with the all-zero tape must take at most $r$ values, so that the image bound is independent of the encoding key. For mode indistinguishability, we require the image parameters it covers to form a nonempty interval for all sufficiently large $\secparam$, since otherwise the clause would hold vacuously. For input privacy and function privacy, we require only the full parameter, since at nonfull parameters with polynomially bounded exponents, we can recover the predicate offsets from the encoding key.

\begin{definition}[Extremely Lossy Trapdoor Hashing]\label{def:eltdh}
	Let $\calX$ and $\KK$ be families of finite sets and finite abelian groups indexed by $\secparam\in\NN$, $d\in\NN_{>0}$, and a hash key. Let $\FuncClass$ be a function class indexed in the same way, with every $f\in\FuncClass_{\secparam,d,\hk}$ mapping $\calX_{\secparam,d,\hk}$ to $\KK_{\secparam,d,\hk}$. When no ambiguity arises, we write $\calX = \calX_{\secparam,d,\hk}$, $\KK =\KK_{\secparam,d,\hk}$, and $\FuncClass = \FuncClass_{\secparam,d,\hk}$. We require efficient membership testing and canonical representations for $\calX$ and $\KK$, efficient equality and group operations in $\KK$, and polynomial-size descriptions and polynomial-time evaluation for every predicate in $\FuncClass$. Let $\Rfull\colon\NN\times\NN_{>0}\to\NN$ be efficiently computable with $\log \Rfull(\secparam,d)=\poly(\secparam+d)$.

	An extremely lossy trapdoor-hashing scheme for $\FuncClass$ is a tuple of probabilistic polynomial-time algorithms $\ELTDH\cqq(\ELTDHSetupLos,\allowbreak\ELTDHHash,\allowbreak\ELTDHGen,\allowbreak\ELTDHEnc,\allowbreak\ELTDHDec)$ with the following syntax.
	\begin{itemize}
		\item $\hk\sample\ELTDHSetupLos(1^\secparam,1^d,r)$. On input the security parameter, the arity $d$, and an image parameter $r\in[\Rfull(\secparam,d)]$, the setup algorithm samples a hash key $\hk$, which fixes $\calX$ and $\KK$.
		\item $h\gets\ELTDHHash(\hk,x;\rho)$. The hashing algorithm maps the hash key, an input $x\in\calX$, and a random tape $\rho$ to a digest $h$.
		\item $(\ek,\td)\sample\ELTDHGen(\hk,f)$. The generation algorithm samples a public encoding key $\ek$ and a trapdoor $\td$ for a predicate $f\in\FuncClass$.
		\item $e\gets\ELTDHEnc(\ek,x;\rho)$. The encoding algorithm maps the encoding key, an input $x\in\calX$, and the random tape $\rho$ of hashing to an encoding $e\in\KK$.
		\item $e'\gets\ELTDHDec(\td,h)$. The deterministic decoding algorithm maps the trapdoor $\td$ and a digest $h$ to a decoded value $e'\in\KK$.
	\end{itemize}
	For fixed keys, we write
	\[
		\begin{aligned}
			\Phihk(x)   & \cqq\ELTDHHash(\hk,x;0),                               \\
			\Deltahk(x) & \cqq\ELTDHEnc(\ek,x;0)-\ELTDHDec(\td,\Phihk(x))
		\end{aligned}
	\]
	for the hash and for the difference between the encoding and the decoded value at the all-zero tape $0$. The scheme satisfies the following properties.
	\begin{itemize}
		\item \clauselabel{Compact Descriptions.} Hash keys, encoding keys, and trapdoors have length $\poly(\secparam+d)$, digests and encodings have length $\poly(\secparam)$ independent of $d$, and every algorithm runs in time $\poly(\secparam+d)$.
		\item \clauselabel{Correctness.} For a function $\delta\colon\NN\to[0,1]$, the scheme is $\delta$-correct if, for every $\secparam\in\NN$, every $d\in\NN_{>0}$, every $r\in[\Rfull(\secparam,d)]$, every $\hk\in\Supp(\ELTDHSetupLos(1^\secparam,1^d,r))$, every $f\in\FuncClass$, and every $x\in\calX$,
		      \[
			      \Prob\left[
				      e-e'=f(x)
				      \mbar
				      \begin{array}{l}
					      \rho\sample\bits^*,                     \\
					      h\gets\ELTDHHash(\hk,x;\rho),           \\
					      (\ek,\td)\sample\ELTDHGen(\hk,f), \\
					      e\gets\ELTDHEnc(\ek,x;\rho),           \\
					      e'\gets\ELTDHDec(\td,h)
				      \end{array}
				      \right]\ge 1-\delta(\secparam).
		      \]
		      The scheme is fully correct if it is $\varepsilon$-correct for some negligible function $\varepsilon$, and perfectly correct if it is $0$-correct.
		\item \clauselabel{Lossiness.} There exists a negligible function $\nulos$ such that, for every $\secparam\in\NN$, every $d\in\NN_{>0}$, and every $r\in[\Rfull(\secparam,d)]$,
		      \[
			      \Prob\left[
				      \abs{\Image(\Phihk)}\le r
				      \mbar
				      \hk\sample\ELTDHSetupLos(1^\secparam,1^d,r)
				      \right]\ge1-\nulos(\secparam).
		      \]
		\item \clauselabel{Mode Indistinguishability.} Let $d\colon\NN\to\NN_{>0}$ be polynomially bounded, write $\dlam\cqq d(\secparam)$, and define
		      \[
			      \Dmode_{\secparam,r}
			      \cqq\left\{\hk\ \middle|\ \hk\sample\ELTDHSetupLos(1^\secparam,1^{\dlam},r)\right\}.
		      \]
		      For every polynomial $t$ and every positive polynomial $\delta$, there exist a polynomial $Q$ and $\secparam_0\in\NN$ such that, for every $\secparam\ge\secparam_0$, the interval $[Q(\secparam)..\Rfull(\secparam,\dlam)]$ is nonempty and $\Dmode_{\secparam,r_0} \indist_{t,1/\delta} \Dmode_{\secparam,r_1}$ for all $r_0,r_1\in[Q(\secparam)..\Rfull(\secparam,\dlam)]$.
		\item \clauselabel{Input Privacy.} Let $d\colon\NN\to\NN_{>0}$ be polynomially bounded, and let $\advS$ be a nonuniform $\PPT$ selector that on input $(1^\secparam,\hk)$ returns $(x^0,x^1,\var{st})$ with $x^0,x^1\in\calX$. For $b\in\bits$, define the ensemble $\Dinpriv_b=(\Dinpriv_{b,\secparam})_{\secparam\in\NN}$ by
		      \[
			      \Dinpriv_{b,\secparam}\cqq
			      \left\{(\hk,\var{st},h^b)\ \middle|\
			      \begin{array}{l}
				      \hk\sample\ELTDHSetupLos(1^\secparam,1^{d(\secparam)},\Rfull(\secparam,d(\secparam))), \\
				      (x^0,x^1,\var{st})\sample\advS(1^\secparam,\hk),                                     \\
				      \rho\sample\bits^*,\quad h^b\gets\ELTDHHash(\hk,x^b;\rho)
			      \end{array}\right\}.
		      \]
		      The scheme has input privacy if $\Dinpriv_0\cindist\Dinpriv_1$ for every such $\advS$, statistical input privacy if $\Dinpriv_0\sindist\Dinpriv_1$ for every such $\advS$, and perfect input privacy if $\Dinpriv_0\equivdist\Dinpriv_1$ for every such $\advS$.
		\item \clauselabel{Function Privacy.} Let $d\colon\NN\to\NN_{>0}$ be polynomially bounded, and let $\advS$ be a nonuniform $\PPT$ selector that on input $(1^\secparam,\hk)$ returns $(f^0,f^1,\var{st})$ with $f^0,f^1\in\FuncClass$. For $b\in\bits$, define the ensemble $\Dfnpriv_b=(\Dfnpriv_{b,\secparam})_{\secparam\in\NN}$ by
		      \[
			      \Dfnpriv_{b,\secparam}\cqq
			      \left\{(\hk,\var{st},\ek^b)\ \middle|\
			      \begin{array}{l}
				      \hk\sample\ELTDHSetupLos(1^\secparam,1^{d(\secparam)},\Rfull(\secparam,d(\secparam))), \\
				      (f^0,f^1,\var{st})\sample\advS(1^\secparam,\hk),                                     \\
				      (\ek^b,\td^b)\sample\ELTDHGen(\hk,f^b)
			      \end{array}\right\}.
		      \]
		      The scheme is function private if $\Dfnpriv_0\cindist\Dfnpriv_1$ for every such $\advS$.
		\item \clauselabel{Rate.} For $\mu\colon\NN\to\RR_{\ge0}$, the scheme has rate at least $\mu$ if, for all $\secparam$, $d$, $r$, $\hk$, $f$, and $x$ quantified as in the correctness clause, every $(\ek,\td)\in\Supp(\ELTDHGen(\hk,f))$ and every random tape $\rho$ satisfy
		      \[
			      \frac{\replen{f(x)}}{\replen{\ELTDHEnc(\ek,x;\rho)}}\ge\mu(\secparam),
		      \]
		      where $\replen{u}$ denotes the length of the canonical representation of $u$. The scheme has rate one if the two lengths are equal for all these objects.
	\end{itemize}
\end{definition}

\ifConferenceVersion\else
For $\calX=\bits^d$, $\KK=\ZZ/2\ZZ$, and the full parameter, a scheme as in \cref{def:eltdh} is a trapdoor-hashing scheme in the sense of \cref{def:trapdoor-hashing}, with setup $\ELTDHSetupLos(1^\secparam,1^d,\Rfull(\secparam,d))$, encoder $\ELTDHEnc$, and decoder $\TDHDec(\td,h)\cqq(e',e'+1)$ for $e'\gets\ELTDHDec(\td,h)$. Since $e-e'=f(x)$ holds if and only if $e=e'+f(x)$, the candidate indexed by $f(x)$ equals the encoding, and $e'\ne e'+1$, so we obtain the correctness condition of \cref{def:tdh-correctness}. The input-privacy and function-privacy clauses of \cref{def:eltdh} at the full parameter imply those of \cref{def:tdh-input-privacy,def:tdh-function-privacy}, because the selector of \cref{def:eltdh} may choose the inputs after it sees the hash key.

\fi
The lossiness clause of \cref{def:eltdh} concerns only the hash, and its correctness clause holds for each input separately. To obtain a trapdoor extremely lossy function from the scheme, we also need correctness for every input under one key, for an injective predicate and for a constant predicate.

\begin{remark}[From Lossy Hashing to Trapdoor Extremely Lossy Functions]\label{rem:eltdh-gives-telf}
	Suppose that a predicate $f$ of a scheme is injective with an efficient inverse, that a predicate $\fstar$ of the scheme is constant with value $c$, and that there is a negligible function $\nu_{\var{sim}}$ such that both of the following conditions hold.
	\begin{parts}
		\item For every supported hash key $\hk$ at the full parameter, with probability at least $1-\nu_{\var{sim}}(\secparam)$ over $(\ek,\td)\sample\ELTDHGen(\hk,f)$, we have $\Deltahk(x)=f(x)$ for every input $x$.
		\item For every image parameter $r$ of the lossy mode and every supported hash key $\hk$ at $r$, with probability at least $1-\nu_{\var{sim}}(\secparam)$ over $(\ek,\td)\sample\ELTDHGen(\hk,\fstar)$, we have $\Deltahk(x)=\fstar(x)$ for every input $x$.
	\end{parts}
	We define the deterministic public map $\Fhk(x)\cqq(\Phihk(x),\ELTDHEnc(\ek,x;0))$. At the full parameter and under the first condition, we recover $f(x)=e-\ELTDHDec(\td,h)$ from $(h,e)=\Fhk(x)$ and $\td$, and then $x$ with the inverse of $f$. Under the second condition, $e=c+\ELTDHDec(\td,h)$, so
	\[
		\Fhk(x)=\Jtdc(\Phihk(x)),
		\qquad
		\Jtdc(h)\cqq(h,c+\ELTDHDec(\td,h)).
	\]
	Since the first coordinate of $\Jtdc(h)$ is $h$, the map $\Jtdc$ is injective and $\abs{\Image(\Fhk)}=\abs{\Image(\Phihk)}$. We first change $f$ to $\fstar$ at the full parameter under function privacy and then change the image parameter of the hash key to $r$ under mode indistinguishability. By hash lossiness and the two conditions, $\Fhk$ is a trapdoor extremely lossy function.
\end{remark}

\paperpart{The \DamgardJurik Construction}{subsec:eltdh-construction}
We realize \cref{def:eltdh} over a \DamgardJurik group, with uniform hash bases at the full parameter and bounded powers of a generator at every other parameter.
\begin{relocated}{fallback-modulus}
We use the sampler $\DJGen$ of \cref{def:dj-group} in every setup, and support quantifiers exclude $\bot$ as in \cref{app:cyclic-group-samplers}. We make every construction total by a fallback output on the event that $\DJGen$ outputs $\bot$, and on this event, we use one fixed modulus that satisfies the same two-sided size bound as the honest moduli. Since we use the same fallback hash key at every image parameter, the fallback branch adds no term to the bound for mode indistinguishability, and its probability appears only in the bounds that we derive from function privacy.

\begin{definition}[Fallback Modulus]\label{def:fallback-modulus}
	Let $\ell$ be the integer-valued polynomial of \cref{def:dj-group}, and define $\Nbot(\secparam)\cqq2^{2\ell(\secparam)-1}+1$.
\end{definition}

The fallback modulus $\Nbot$ is odd and satisfies
\[
	2^{2\ell(\secparam)-2}<\Nbot(\secparam)<2^{2\ell(\secparam)},
	\qquad
	\ceil{\log_2\Nbot(\secparam)}=2\ell(\secparam),
\]
and since $2^{\ell(\secparam)-1}<p,q<2^{\ell(\secparam)}$, every honest modulus $N=pq$ of \cref{def:dj-group} satisfies $2^{2\ell(\secparam)-2}<N<2^{2\ell(\secparam)}$ as well.

\end{relocated}%
We fix $\secparam\in\NN$ and $d\in\NN_{>0}$, and we set $w\cqq 2\ell(\secparam)$ and $\Rfull(\secparam,d) \cqq 2^{dw}$, where $\ell(\secparam)$ is the bit length of each prime factor of a modulus sampled by $\DJGen$\ifConferenceVersion~\cite{PKC:DamJur01,C:RoySin21}\else{} in \cref{def:dj-group}\fi, so that $w$ is the width of one residue. Every modulus of a hash key is smaller than $2^w$, including the fallback modulus $\Nbot(\secparam)=2^{2\ell(\secparam)-1}+1$\ifConferenceVersion\else{} of \cref{def:fallback-modulus}\fi, which is the modulus of the hash key when parameter generation fails. For a hash key with modulus $N$, we take $\calX\cqq (\ZN)^d$ and $\KK\cqq \ZN$, and the predicate class $\Flin = \Flin_{\secparam,d,\hk}$ consists of the residue-linear predicates
\[
	\Lina(\bvec x)\cqq \sum_{j=1}^d a_jx_j\in\ZN,
	\qquad
	\bvec a=(a_1,\dots,a_d)\in(\ZN)^d.
\]
We give the algorithms in \cref{fig:eltdh-dcr}.

For a nonfull image parameter $r$, we write $\rcap\cqq\min\{r,2^\secparam\}$, and for $r>2^\secparam$, we sample the hash key from the distribution for $2^\secparam$, whose image bound $2^\secparam$ is smaller than $r$. With this cap, every exponent bound $\gamma$ in \cref{fig:eltdh-dcr} satisfies $\gamma<2^\secparam<\abs{\GN}$. The hash key contains one base for every bit of the padded binary expansions of the $d$ residues, and for $\binw(x_j)=(x_{j,0},\dots,x_{j,w-1})$, the encoding-key entry indexed by $(j,k)$ is $g_{j,k}^\sigma\Expo(2^ka_j)$. Since $\binw(x_j)$ is the binary expansion of $\rep(x_j)$, the product of the offsets selected by an input $\bvec x$ is
\[
	\Expo\left(\sum_{j=1}^d\sum_{k=0}^{w-1}2^ka_jx_{j,k}\right)
	=\Expo\left(\sum_{j=1}^da_jx_j\right).
\]

The same exponent $\rho'\cqq\rho\vert_{\expl}$, with $\expl\cqq\clog{N}+\secparam$, appears in hashing and in encoding. We totalize the setup with one fixed fallback key, independent of the image parameter,
\[
	\hkbot\cqq\left((\Nbot(\secparam),1),d,w,\explbot,1,(1)_{j,k}\right),
	\qquad
	\explbot\cqq\clog{\Nbot(\secparam)}+\secparam,
\]
which the setup returns when $\DJGen$ fails and, at the full parameter, also when one of the $dw+1$ samples from $\GN$ fails. Every other hash key differs from $\hkbot$, because its generator $g$ generates $\GN\ne\{1\}$. We sample group elements only at the full parameter, so at every nonfull parameter, the setup returns $\hkbot$ exactly when $\DJGen$ fails. We write $\fDJ(\secparam)$ for the failure probability of $\DJGen(1^\secparam)$, $\Efull$ for the event that the setup at the full parameter returns $\hkbot$, and
\[
	\ffull(\secparam,d)\cqq\Prob\left[\hk=\hkbot\mbar\hk\sample\ELTDHSetupLos(1^\secparam,1^d,\Rfull(\secparam,d))\right]
\]
for its probability. By the union bound, $\ffull(\secparam,d)\le \fDJ(\secparam)+(dw+1)\nuG(\secparam)$, where $\nuG(\secparam)$ bounds the failure probability of one sample from $\GN$\ifConferenceVersion\else{} as in \cref{def:group-sampler}\fi, so $\ffull$ is negligible whenever $d$ is polynomially bounded.

\begin{proofbox}{Extremely lossy trapdoor hash for residue-linear predicates. Here $\binw$ is the padded $w$-bit binary expansion, $\rho\vert_{\expl}$ is the integer formed by the first $\expl$ bits of $\rho$, $[\gamma]=\{1,\dots,\gamma\}$, $\hkbot$ is the fallback key, and an algorithm whose \pcassert{}condition fails returns $\bot$.}{fig:eltdh-dcr}
		\begin{minipage}[t]{0.49\linewidth}
			\vspace{0pt}
			{\footnotesize\underline{$\ELTDHSetupLos(1^\secparam,1^d,r)$}\par}
			\smallskip
			\begin{codelines}
				\item $w\gets2\ell(\secparam)$; $R\gets2^{dw}$
				\item \pcassert $d\ge1\wedge r\in[R]$
				\item $(N,g)\sample\DJGen(1^\secparam)$
				\item \pcif $(N,g)=\bot$ \pcreturn $\hk\gets\hkbot$
				\item $\expl\gets\clog{N}+\secparam$
				\item \pcif $r=R$
				\codelinesplus
					\item $g_0\sample\GN$; $(g_{j,k})_{j,k}\sample\GN^{dw}$
					\item \pcif $g_0=\bot\vee(\exists j,k\colon g_{j,k}=\bot)$ \pcreturn $\hk\gets\hkbot$
				\codelinesminus
				\item \pcelse
				\codelinesplus
				\item $\rcap\gets\min\{r,2^\secparam\}$; $\gamma\gets\floor{(\rcap-1)/(dw)}$
				\item \pcif $\gamma=0$
				\codelinesplus
				\item $g_0\gets1$; $(g_{j,k})_{j,k}\gets(1)_{j,k}$
				\codelinesminus
				\item \pcelse
				\codelinesplus
				\item $s_0\sample[\gamma]$; $g_0\gets g^{s_0}$
				\item \pcfor $(j,k)\in[d]\times[0..w-1]$
				\codelinesplus
				\item $s_{j,k}\sample[\gamma]$; $g_{j,k}\gets g^{s_{j,k}}$
				\codelinesminus
				\codelinesminus
				\codelinesminus
				\item \pcreturn $\hk\gets((N,g),d,w,\expl,g_0,(g_{j,k})_{j,k})$
			\end{codelines}

			\medskip
			{\footnotesize\underline{$\ELTDHHash(\hk,\bvec x;\rho)$}\par}
			\smallskip
			\begin{codelines}
				\item \pcparse $\hk=((N,g),d,w,\expl,g_0,(g_{j,k})_{j,k})$
				\item \pcassert $\bvec x\in(\ZN)^d$
				\item \pcfor $j\in[d]$
				\codelinesplus
				\item $(x_{j,0},\dots,x_{j,w-1})\gets\binw(x_j)$
				\codelinesminus
				\item $\rho'\gets\rho\vert_{\expl}$
				\item \pcreturn $h\gets g_0^{\rho'}\prod_{j=1}^d\prod_{k=0}^{w-1}g_{j,k}^{x_{j,k}}$
			\end{codelines}
		\end{minipage}\hfill%
		\begin{minipage}[t]{0.49\linewidth}
			\vspace{0pt}
			{\footnotesize\underline{$\ELTDHGen(\hk,\bvec a)$}\par}
			\smallskip
			\begin{codelines}
				\item \pcparse $\hk=((N,g),d,w,\expl,g_0,(g_{j,k})_{j,k})$
				\item \pcassert $\bvec a=(a_1,\dots,a_d)\in(\ZN)^d$
				\item $\sigma\sample[0..2^{\expl}-1]$
				\item $K_0\gets g_0^\sigma$
				\item \pcfor $(j,k)\in[d]\times[0..w-1]$
				\codelinesplus
				\item $K_{j,k}\gets g_{j,k}^\sigma\Expo(2^ka_j)$
				\codelinesminus
				\item $\ek\gets(N,d,w,\expl,K_0,(K_{j,k})_{j,k})$; $\td\gets(N,\sigma)$
				\item \pcreturn $(\ek,\td)$
			\end{codelines}

			\medskip
			{\footnotesize\underline{$\ELTDHEnc(\ek,\bvec x;\rho)$}\par}
			\smallskip
			\begin{codelines}
				\item \pcparse $\ek=(N,d,w,\expl,K_0,(K_{j,k})_{j,k})$
				\item \pcassert $\bvec x\in(\ZN)^d$
				\item \pcfor $j\in[d]$
				\codelinesplus
				\item $(x_{j,0},\dots,x_{j,w-1})\gets\binw(x_j)$
				\codelinesminus
				\item $\rho'\gets\rho\vert_{\expl}$
				\item $c\gets K_0^{\rho'}\prod_{j=1}^d\prod_{k=0}^{w-1}K_{j,k}^{x_{j,k}}$
				\item \pcreturn $e\gets\DDLog_N(c)$
			\end{codelines}

			\medskip
			{\footnotesize\underline{$\ELTDHDec(\td,h)$}\par}
			\smallskip
			\begin{codelines}
				\item \pcparse $\td=(N,\sigma)$
				\item \pcassert $h\in\ZNsqx$
				\item \pcreturn $e'\gets\DDLog_N(h^\sigma)$
			\end{codelines}
		\end{minipage}
	\end{proofbox}

\paperpart{Correctness, Input Privacy, and Lossiness}{subsec:eltdh-unconditional} Correctness, input privacy, and lossiness of the construction hold unconditionally, and we need computational assumptions only for function privacy and mode indistinguishability. Correctness follows from the common exponent $\sigma$ of the encoding key and the trapdoor, input privacy from the uniform randomizer base $g_0$ at the full parameter, and lossiness from the bounded exponents of the hash bases at nonfull parameters.

\begin{relocated}{dj-accounting}
Every element that we multiply in the construction is a unit modulo $N^2$. At a modulus that $\DJGen$ samples successfully, the published bases $g_0$ and $g_{j,k}$ lie in the subgroup $\GN$ of order $p'q'$, and so do every digest $h$ and the randomizer entry $K_0=g_0^\sigma$, since each of them is a product of powers of those bases. The other encoding-key entries lie in cosets of $\GN$. The entry $K_{j,k}=g_{j,k}^\sigma\Expo(2^ka_j)$ lies in $\GN\Expo(2^ka_j)$, and the encoder product
\[
	K_0^{\rho'}\prod_{j,k}K_{j,k}^{x_{j,k}}=h^\sigma\Expo(\Lina(\bvec x))
\]
lies in $\GN\Expo(\Lina(\bvec x))$. By \cref{lem:dj-structure:c} of \cref{lem:dj-structure}, each of these products has a unique factorization $c\Expo(y)$ with $c\in\GN$ and $y\in\ZN$, so we can recover the predicate value $\Lina(\bvec x)$ with $\DDLog_N$ without knowing $\sigma$. The message space $\ZN$, which $\Expo$ maps isomorphically onto the easy subgroup, contains the \ELTDH{} input residues, the predicate coefficients, the predicate value, and the two shares of the encoder and the decoder.

\end{relocated}%

\begin{relocated}{ddlog-shift}
Two parties who hold $c\Expo(x)$ and $c$ for an unknown $c\in\ZNsqx$ obtain additive shares of $x$ without communication, since $\DDLog_N$ maps $c\Expo(x)$ and $c$ to two residues whose difference is $x$. \cite{C:RoySin21} and~\cite{EC:OrlSchYak21} prove this identity, and we state it at the modulus $N^2$ for every $N\ge2$, which includes the fallback modulus.

\begin{lemmaE}[Shift Invariance]\label{thm:ddlog-shift}
	Let $N\ge2$ be an integer. For every $c\in\ZNsqx$ and every $x\in\ZN$, $\DDLog_N\left(c\Expo(x)\right)-\DDLog_N(c)=x$.
\end{lemmaE}

\begin{proofE}
	Since $\Expo(x)=1+Nx\equiv1\pmod N$, we have $c\Expo(x)\equiv c\pmod N$, so the two arguments have the same representative $w\in[1..N-1]$ in \cref{def:ddlog}. Since $\Logg$ is an isomorphism from $\easypart$ onto $(\ZN)^+$ by \cref{lem:easy-part-arbitrary-modulus}, we obtain
	\begin{align*}
		\DDLog_N\left(c\Expo(x)\right)-\DDLog_N(c)
		 & =\Logg\left(c\Expo(x)w^{-1}\right)
		-\Logg\left(cw^{-1}\right)                      \\
		 & =\Logg\left(
		\left(c\Expo(x)w^{-1}\right)
		\left(cw^{-1}\right)^{-1}
		\right)                                         \\
		 & =\Logg(\Expo(x))                             \\
		 & =x,
	\end{align*}
	where we use $\Logg\circ\Expo=\IdMap$ in the last equality.
\end{proofE}

\end{relocated}%

Since the encoding key and the trapdoor share the exponent $\sigma$, we obtain an exact identity between the two shares from shift invariance of $\DDLog_N$~\cite{EC:OrlSchYak21,C:RoySin21}. We adapt this identity of~\cite{C:RoySin21} to hash bases with bounded exponents and to the fallback key $\hkbot$.

\begin{theoremE}[Exact Share Identity]
	\label{thm:eltdh-share-correctness}
	\ifConferenceVersion\label{cor:eltdh-correctness}\fi
	Let $\secparam\in\NN$, $d\in\NN_{>0}$, $r\in[\Rfull(\secparam,d)]$, and $\hk\in\Supp(\ELTDHSetupLos(1^\secparam,1^d,r))$ with modulus $N$. For every $\bvec a\in(\ZN)^d$, every $(\ek,\td)\in\Supp(\ELTDHGen(\hk,\bvec a))$, every $\bvec x\in(\ZN)^d$, and every tape $\rho$, the digest $h\cqq\ELTDHHash(\hk,\bvec x;\rho)$, the encoding $e\cqq\ELTDHEnc(\ek,\bvec x;\rho)$, and the decoded value $e'\cqq\ELTDHDec(\td,h)$ satisfy $e-e'=\Lina(\bvec x)$.\ifConferenceVersion\ Hence, the scheme is perfectly correct, and $\Deltahk(\bvec x)=\Lina(\bvec x)$ holds simultaneously for every input under one fixed key.\fi
\end{theoremE}

\ConferenceProofSketch{proof:eltdh-share-identity}{
	The encoding key multiplies every hash base by the common exponent $\sigma$, and the offset selected by bit $k$ of $x_j$ is $\Expo(2^ka_j)$, so the encoder product equals $h^\sigma\Expo(\Lina(\bvec x))$. By shift invariance of $\DDLog_N$~\cite{EC:OrlSchYak21,C:RoySin21}, $\DDLog_N(h^\sigma\Expo(\Lina(\bvec x)))-\DDLog_N(h^\sigma)=\Lina(\bvec x)$ at every modulus $N\ge2$, so the identity holds for every supported key, including $\hkbot$, and every tape. }{}{}

\begin{proofE}
	\label{proof:eltdh-share-identity}
	We factor the encoder product as $h^\sigma\Expo(\Lina(\bvec x))$ and apply shift invariance of $\DDLog_N$. We write $\binw(x_j)=(x_{j,0},\dots,x_{j,w-1})$. By construction, $K_0=g_0^\sigma$ and $K_{j,k}=g_{j,k}^\sigma\Expo(2^ka_j)$, and the same $\rho'$ appears in hashing and in encoding. Since $\Expo$ is a homomorphism, the product $c$ computed by $\ELTDHEnc$ satisfies
	\begin{align*}
		c
		&=K_0^{\rho'}\prod_{j=1}^d\prod_{k=0}^{w-1}K_{j,k}^{x_{j,k}}\\
		&=h^\sigma\Expo\left(\sum_{j=1}^d\sum_{k=0}^{w-1}2^ka_jx_{j,k}\right)\\
		&=h^\sigma\Expo\left(\sum_{j=1}^da_jx_j\right),
	\end{align*}
	where the last equality holds because $\binw(x_j)$ is the binary expansion of $\rep(x_j)$. Hence, by \cref{thm:ddlog-shift} at $h^\sigma\in\ZNsqx$ and $\Lina(\bvec x)\in\ZN$,
	\[
		e-e'=\DDLog_N\left(h^\sigma\Expo(\Lina(\bvec x))\right)-\DDLog_N\left(h^\sigma\right)=\Lina(\bvec x).
	\]
	The calculation holds for every choice of bases, and \cref{lem:easy-part-arbitrary-modulus,thm:ddlog-shift} hold at every modulus $N\ge2$, so the identity also holds at $\hkbot$.
\end{proofE}

\ifConferenceVersion\else
Since the identity holds for every tape, it implies perfect correctness and, at the all-zero tape, one identity for all inputs under one key.

\begin{corollaryE}[Perfect Correctness]
	\label{cor:eltdh-correctness}
	The scheme $\ELTDH$ is $0$-correct. Moreover, for every $\secparam\in\NN$, $d\in\NN_{>0}$, $r\in[\Rfull(\secparam,d)]$, $\hk\in\Supp(\ELTDHSetupLos(1^\secparam,1^d,r))$ with modulus $N$, $\bvec a\in(\ZN)^d$, and $(\ek,\td)\in\Supp(\ELTDHGen(\hk,\bvec a))$, we have $\Deltahk(\bvec x)=\Lina(\bvec x)$ for every $\bvec x\in(\ZN)^d$.
\end{corollaryE}

\begin{proofE}
	The encoder returns $e$ and the decoder returns $e'$, so $0$-correctness follows from \cref{thm:eltdh-share-correctness}. At the all-zero tape, we obtain $\Deltahk(\bvec x)=\Lina(\bvec x)$ for every $\bvec x\in(\ZN)^d$ under the same keys from \cref{thm:eltdh-share-correctness}.
\end{proofE}
\fi

The digest must hide the input at the full parameter. There, the randomizer base $g_0$ is uniform in $\GN$, so it generates $\GN$ except with probability at most $2^{-\ell(\secparam)+3}$, and for a generator $g_0$, the digest is close to uniform on $\GN$ for every input. We adapt the statistical input-privacy argument of~\cite{C:RoySin21} to the exponent range $[0..2^{\expl}-1]$ and to the fallback key.

\begin{theoremE}[Statistical Input Privacy]
	\label{thm:eltdh-input-privacy}
	Let $d\colon\NN\to\NN_{>0}$ be polynomially bounded, and let $\advS$ be a selector as in the input-privacy clause of \cref{def:eltdh}. For every $\secparam\in\NN$, the statistical distance between $\Dinpriv_{0,\secparam}$ and $\Dinpriv_{1,\secparam}$ is at most $2^{-\secparam-1}+2^{-\ell(\secparam)+4}$. In particular, $\ELTDH$ has statistical input privacy.
\end{theoremE}

\ConferenceProofSketch{proof:eltdh-input-privacy}{
	For a fixed hash key and a fixed input, the digest is $g_0^{\rho'}$ times a fixed element of $\GN$. If $g_0$ generates $\GN$, then by exponent smoothing~\cite{C:BraGol10}, $g_0^{\rho'}$ is within $2^{-\secparam-2}$ of uniform on $\GN$, and multiplication by the fixed element preserves the uniform distribution. The base $g_0$ fails to generate $\GN$ with probability at most $2^{-\ell(\secparam)+3}$, the digest at $\hkbot$ is $1$ for both inputs, and we compare both inputs with the same reference distribution. }{}{}

\begin{proofE}
	\label{proof:eltdh-input-privacy}
	We compare both ensembles with one reference distribution and apply the triangle inequality. We write $\Dref_\secparam$ for the distribution of $(\hk,\var{st},u)$, where $\hk$ and $\var{st}$ are sampled as in $\Dinpriv_{b,\secparam}$, $u$ is uniform in $\GN$ if $\hk\ne\hkbot$, and $u=1$ if $\hk=\hkbot$.

	We fix $\hk\ne\hkbot$ in the support of the setup at the full parameter and an output $(x^0,x^1,\var{st})$ of $\advS(1^\secparam,\hk)$, and we write $c_b\in\GN$ for the product of the bases selected by the bits of $x^b$, so that $h^b=g_0^{\rho'}c_b$. Conditioned on these values, $\rho'$ is uniform on $[0..2^{\expl}-1]$. If $g_0$ generates $\GN$, then $s\mapsto g_0^s$ is a bijection from $\ZZ/\abs{\GN}\ZZ$ onto $\GN$, so by \cref{lem:exponent-smoothing} with $M'=\abs{\GN}\le N$, the element $g_0^{\rho'}$ is within statistical distance $2^{-\secparam-2}$ of uniform on $\GN$. Multiplication by $c_b$ is a bijection of $\GN$, so $h^b$ is within the same distance of uniform on $\GN$. If $g_0$ does not generate $\GN$, we bound the distance by $1$.

	Conditioned on $\hk\ne\hkbot$ and on the modulus $N=pq$, the base $g_0$ is uniform in the cyclic group $\GN$ of order $p'q'$ by \cref{lem:dj-structure}. Since $p'\ge2^{\ell(\secparam)-2}$ and $q'\ge2^{\ell(\secparam)-2}$, we obtain
	\[
		\Prob\left[\langle g_0\rangle\ne\GN\mbar g_0\sample\GN\right]\le\frac1{p'}+\frac1{q'}\le2^{-\ell(\secparam)+3}.
	\]
	At $\hkbot$, the digest is $1$ for both inputs, as in $\Dref_\secparam$. Hence, for each $b\in\bits$, the statistical distance between $\Dinpriv_{b,\secparam}$ and $\Dref_\secparam$ is at most $2^{-\secparam-2}+2^{-\ell(\secparam)+3}$, and by the triangle inequality, we obtain the bound of the theorem.
\end{proofE}

At a nonfull parameter, the hash with the all-zero tape depends on the input only through one subset sum of the base exponents, which takes at most $r$ values.

\begin{theoremE}[Image Bound]
	\label{thm:eltdh-lossy-image}
	Let $\secparam\in\NN$, $d\in\NN_{>0}$, and $r\in[\Rfull(\secparam,d)]$. For every $\hk\in\Supp(\ELTDHSetupLos(1^\secparam,1^d,r))$, we have $\abs{\Image(\Phihk)}\le r$. In particular, $\ELTDH$ satisfies the lossiness clause of \cref{def:eltdh} with $\nulos=0$.
\end{theoremE}

\ConferenceProofSketch{proof:eltdh-lossy-image}{
	At a hash key other than $\hkbot$ with $\gamma\ge1$, every digest at the all-zero tape equals $g^{A_{\bvec x}}$ for an integer subset sum $A_{\bvec x}\in[0..dw\gamma]$ of the base exponents, and since $\gamma=\floor{(\rcap-1)/(dw)}$, this interval contains at most $\rcap\le r$ integers. At the full parameter, the bound follows from $N^d<2^{dw}$, and at $\hkbot$ and for $\gamma=0$, the hash is constant. }{}{}

\begin{proofE}
	\label{proof:eltdh-lossy-image}
	We count the values of one bounded subset sum at nonfull parameters and treat the full parameter, the fallback key, and the branch $\gamma=0$ separately. We write $w=2\ell(\secparam)$ and $R=2^{dw}$. If $r=R$, then $\abs{\Image(\Phihk)}\le\abs{\calX}=N^d<2^{dw}=r$, since $N<2^w$. If $r<R$, we write $\rcap=\min\{r,2^\secparam\}$ and $\gamma\cqq\floor{(\rcap-1)/(dw)}$.

	At $\hkbot$ and in the branch $\gamma=0$, all hash bases are $1$, so $\Phihk$ is constant and $\abs{\Image(\Phihk)}=1\le r$. We suppose that $\gamma\ge1$, and we write $s_{j,k}\in[\gamma]$ for the exponents of the bases $g_{j,k}=g^{s_{j,k}}$. For $\binw(x_j)=(x_{j,0},\dots,x_{j,w-1})$, we define
	\[
		A_{\bvec x}\cqq\sum_{j=1}^d\sum_{k=0}^{w-1}s_{j,k}x_{j,k}.
	\]
	Then $0\le A_{\bvec x}\le dw\gamma$, and at the all-zero tape, $\Phihk(\bvec x)=g^{A_{\bvec x}}$. Hence, $\Phihk$ takes at most $dw\gamma+1$ values, and $\abs{\Image(\Phihk)}\le dw\gamma+1 \le \rcap \le r$.
\end{proofE}

\ifConferenceVersion\else
We count group elements and bits to verify the compactness and rate clauses of \cref{def:eltdh}.

\begin{lemmaE}[Efficiency, Rate, and Compactness]
	\label{lem:eltdh-efficiency}
	Let $\secparam\in\NN$, $d\in\NN_{>0}$, and $r\in[\Rfull(\secparam,d)]$, let $\hk\in\Supp(\ELTDHSetupLos(1^\secparam,1^d,r))$ have modulus $N$ and exponent length $\expl$, and let $w=2\ell(\secparam)$. Excluding the modulus, the generator, and integer metadata, the hash key contains $dw+1$ group elements, every encoding key generated from it contains $dw+1$ group elements, and every corresponding trapdoor contains one exponent of $\expl$ bits. A digest occupies at most $2\clog{N}$ bits and an encoding exactly $\clog{N}$ bits. Every algorithm runs in time $\poly(\secparam+d)$, and the scheme has rate one.
\end{lemmaE}

\begin{proofE}
	We count the entries of the two public key vectors and their representation lengths. Each vector contains one randomizer entry and one entry for each pair $(j,k)\in[d]\times[0..w-1]$. Since $w=2\ell(\secparam)=\poly(\secparam)$ and every exponent has at most $\expl=\poly(\secparam)$ bits, the lengths of both vectors and the work needed to form or evaluate them are polynomial in $\secparam+d$. The digest is one unit modulo $N^2$, of at most $2\clog{N}$ bits, and the encoding is one residue modulo $N$, of $\clog{N}$ bits. Apart from $N$, the trapdoor consists of $\sigma\in[0..2^{\expl}-1]$. The predicate value $\Lina(\bvec x)$ and the encoder output both lie in $\ZN$, so they have the same representation length and the rate is one. By \cref{def:fallback-modulus}, the same counts hold at $\hkbot$.
\end{proofE}
\fi

\paperpart{Function Privacy and Mode Indistinguishability}{subsec:eltdh-security} For the two remaining properties, we change the predicate offsets of the encoding key and the distribution of the hash bases, one at a time. With the interactive vector game of~\cite{C:BraGol10} under decisional composite residuosity~\cite{EC:Paillier99}, as in~\cite{C:DGIMMO19}, we remove polynomially many offsets in $\easypart$ from elements of $\GN$ with one common exponent, and under tunable \DSEDL{} as in \cref{def:tunable-dsedl}, we replace bounded bases by uniform ones.

\begin{relocated}{vector-game}
We remove the predicate-dependent offsets from the encoding key with the interactive vector game of~\cite{C:BraGol10}, in which the adversary receives a vector of subgroup elements and then chooses offsets adaptively. \cite{C:BraGol10} bounds the advantage in this game under quadratic residuosity, and~\cite{C:BraGol10} proves the same bound for every subgroup-indistinguishability problem, which we recall in our notation.

\begin{lemma}[Interactive Vector Game]\label{lem:interactive-vector-game}
	Let $\GG_U=\GG_M\times\GG_L$ be a finite abelian group with $\gcd(\abs{\GG_M},\abs{\GG_L})=1$, where $\GG_M$ is cyclic and has a generator that is efficiently computable from the description of $\GG_U$. Suppose that the group operations in $\GG_U$ and uniform sampling from $\GG_M$ and from $\GG_L$ are efficient, and let $T\ge\abs{\GG_U}$ be a known integer. For an algorithm $\advB$, let
	\[
		\varepsilon_{\var{SG}}(\advB)\cqq\abs{\Prob\left[\advB(u)=1\mbar u\sample\GG_U\right]-\Prob\left[\advB(u)=1\mbar u\sample\GG_L\right]}.
	\]
	Let $D\in\NN_{>0}$. In the interactive $D$-vector game, the challenger samples $b\sample\bits$ and $\bvec g\sample\GG_L^D$ and gives $\bvec g$ to the adversary. On each query $\bvec a\in\GG_M^D$, which the adversary chooses adaptively after it receives $\bvec g$, the challenger samples $r\sample[T^2]$ and returns the coordinatewise product $\bvec a^{1-b}\cdot\bvec g^{\,r}$. Finally, the adversary outputs a bit $b'$. For an adversary $\advA$, let
	\[
		\varepsilon_{\var{IV}}(\advA)\cqq\abs{\Prob[b'=1\mid b=0]-\Prob[b'=1\mid b=1]}.
	\]
	Then for every adversary $\advA$ that makes at most $t$ queries, there exists an adversary $\advB$ such that
	\[
		\varepsilon_{\var{IV}}(\advA)\le4tD\cdot\varepsilon_{\var{SG}}(\advB)+\frac{2tD}{T}.
	\]
\end{lemma}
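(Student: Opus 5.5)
The plan is a hybrid argument: each hash-subgroup base $g_i\in\GG_L$ is replaced by an element of the full group $\GG_U$, and in that world the offsets $\bvec a$ are hidden statistically. Two facts are used repeatedly. First, $r\sample[T^2]$ satisfies $r\bmod\abs{\GG_U}$ within statistical distance $\abs{\GG_U}/T^2\le1/T$ of uniform, by the counting argument of \cref{lem:exponent-smoothing} with $T^2$ in place of $2^{\expl}$. Second, by the Chinese remainder theorem and $\gcd(\abs{\GG_M},\abs{\GG_L})=1$, the residues $r_M\cqq r\bmod\abs{\GG_M}$ and $r_L\cqq r\bmod\abs{\GG_L}$ are then, up to that distance, independent and uniform. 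So every answer $\bvec a^{1-b}\bvec g^{\,r}$ depends on $r$ essentially through the pair $(r_L,r_M)$, and for $\bvec g\in\GG_L^D$ only $r_L$ matters.

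\textbf{Step 1 (replacing the bases).} I would fix $b$ and move from the real game with $\bvec g\sample\GG_L^D$ to a game in which each $g_i$ is multiplied by an independent masking component in $\GG_M$. This is done one coordinate, and if necessary one query, at a time. In the transition at position $(k,i)$, the reduction $\advB$ embeds its subgroup challenge $u$ in that slot. It samples every other component itself, which it can do because uniform sampling from $\GG_L$ and $\GG_M$ is efficient, and it computes the answers, since every answer is a product of known elements and powers of $u$. Run separately for $b=0$ and for $b=1$, these transitions give at most $2tD$ subgroup distinguishers. Choosing the transition index uniformly at random yields one $\advB$, and counting both the base replacement and its reversal gives the factor $4tD\,\varepsilon_{\var{SG}}(\advB)$.

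\textbf{Step 2 (the masked game).} Once the slot carries a $\GG_M$ mask, the answer coordinate becomes $a_i^{1-b}g_i^{r_L}m^{r_M}$. Here $r_M$ is close to uniform and independent of $r_L$, so $m^{r_M}$ is close to uniform on $\langle m\rangle$. If $m$ generates $\GG_M$, multiplying by $a_i$ does not change this distribution, so the answers for $b=0$ and $b=1$ coincide up to the smoothing error. Summing that error over the $tD$ (query, coordinate) pairs and over both values of $b$ gives the additive term $2tD/T$.

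\textbf{Main obstacle.} The hard step is guaranteeing that the mask covers all of $\GG_M$, that is, that $\langle m\rangle=\GG_M$. A uniform $u\in\GG_U$ has a $\GG_M$-component $\widehat g^{\,x}$ with $x$ uniform, and $\widehat g^{\,x}$ need not be a generator. This is where I would use the hypothesis that a generator $\widehat g$ of $\GG_M$ is efficiently computable. I would first try having the reduction multiply the embedded challenge by $\widehat g$ itself, so the masked slot always contains a full-order $\GG_M$-component, while in the unmasked hybrid it re-randomizes $\widehat g$ away. If that does not work, I would fall back on spending a fresh challenge for each (query, coordinate) pair, so that the only losses are the subgroup advantage and the smoothing term. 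I expect this per-query accounting to be exactly what yields the factor $t$ in both terms of the bound.
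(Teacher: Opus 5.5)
Your Step~2 fails as soon as two bases carry masks, and with it the plan of fixing $b$, masking every $g_i$, and then arguing statistically. All coordinates of one answer share the exponent $r$, so its $\GG_M$-parts are $(a_i^{1-b}m_i^{r_M})_{i\in[D]}$ with a \emph{common} $r_M$. The vector $(m_i^{r_M})_i$ therefore ranges over a cyclic subgroup of $\GG_M^D$ of order at most $\abs{\GG_M}$, and multiplication by an arbitrary $\bvec a\in\GG_M^D$ moves it to another coset. For instance, take $m_1=m_2=\widehat g$ and the query $\bvec a=(\widehat g,1)$. The $\GG_M$-parts are then $(\widehat g^{\,r_M+1},\widehat g^{\,r_M})$ for $b=0$ and $(\widehat g^{\,r_M},\widehat g^{\,r_M})$ for $b=1$, which have disjoint supports. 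Your one-coordinate computation is valid only when the masked base is the \emph{only} masked one, so that the other coordinates of the same answer have bases in $\GG_L$ and depend on $r$ only modulo $\abs{\GG_L}$. Relatedly, a base-replacement transition ``at position $(k,i)$'' with $b$ fixed is not well defined, since each $g_i$ is published once and enters every answer. So the count of $2tD$ distinguishers, doubled for ``reversal'', is not attached to any actual hybrid sequence; with $b$ fixed there are only $D$ bases to move, and no factor $t$ arises. The fallback of a fresh challenge per entry is unavailable for the same reason. It would not address the generator issue anyway: a fresh uniform $u\in\GG_U$ still has a non-generating $\GG_M$-component with constant probability, e.g.\ $1/2$ when $\abs{\GG_M}=2$ as for quadratic residuosity.

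The missing idea is to change the bit one answer entry at a time instead of fixing it. Order the at most $tD$ pairs (query, coordinate), and let the $j$th hybrid answer the first $j$ entries with $g_i^{\,r}$ and the others with $a_i\,g_i^{\,r}$. In the transition at entry $(k,i)$, proceed in three moves. First, move \emph{only} $g_i$ along $\GG_L\approx\GG_U=\widehat g\cdot\GG_U\approx\widehat g\cdot\GG_L$, with the two reductions setting $g_i\gets u$ and $g_i\gets u\widehat g$. This is your first idea made precise: the mask is then exactly the generator $\widehat g$. Second, flip entry $(k,i)$. This step is statistical: after $r\sample[T^2]$ is replaced by a uniform residue modulo $\abs{\GG_U}$, at cost at most $1/T$ in each of the two hybrids, $\widehat g^{\,r}$ is uniform on $\GG_M$ and independent of the remaining coordinates of query $k$. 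Third, move $g_i$ back. Each transition thus costs four subgroup-indistinguishability steps and at most $2/T$. Choose the transition and the sub-reduction uniformly at random, flipping the output where needed so that the signed differences telescope; this gives exactly $4tD\cdot\varepsilon_{\var{SG}}(\advB)+2tD/T$. So the factor $t$ counts flipped entries, not fresh challenges, and the $4$ comes from two subgroup steps in each direction per entry rather than from the two values of $b$. (The paper does not reprove this lemma; it imports it from \cite{C:BraGol10}.)
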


\cite{C:BraGol10} gives no bound on the running time of $\advB$ in this lemma. In the proof of~\cite{C:BraGol10}, $\advB$ runs $\advA$ once and otherwise performs $\poly(t,D,\log T)$ group operations in $\GG_U$ and samplings, so $\advB$ is \PPT{} whenever $\advA$ is \PPT{} and $t$, $D$, and $\log T$ are polynomially bounded.

Since the adversary chooses the offsets after it receives the bases, the predicates can depend on the modulus, and every coordinate of an answer contains a power of the corresponding base with the same exponent $r$. By~\cite{C:BraGol10}, the \DamgardJurik group is an instance, with $\GG_U=\ZNsqx$, with $\GG_M=\easypart$, which is cyclic of order $N$ and generated by $1+N$, and with $\GG_L$ the subgroup of $N$-th residues, of order $\varphi$. These two orders are coprime, because $N$ is odd and $p',q'\notin\{p,q\}$, and $T=N^2$ is a valid bound, since $\abs{\GG_U}=N\varphi<N^2$ by \cref{lem:dj-structure:a} of \cref{lem:dj-structure}.

\end{relocated}%
\ifConferenceVersion\else

In our encoding keys, the bases are uniform in $\GN$, the exponent is uniform in $[0..2^{\expl}-1]$, and the offsets are chosen after the modulus and the bases. The function-privacy arguments of~\cite{C:DGIMMO19} and~\cite{C:RoySin21} remove such offsets for bases $r^N$ with $r$ uniform in $\ZNsqx$ and for exponents uniform in $[0..N-1]$. We adapt the offset removal of~\cite{C:BraGol10,C:DGIMMO19} to our distributions and reduce it to \cref{lem:interactive-vector-game} at one query.

\fi
\begin{relocated}{offset-removal}
\begin{lemmaE}[Offset Removal]
	\label{lem:dcr-offset-removal}
	Suppose that $\DJGen(\cdot)$ satisfies $\DCR$ against nonuniform $\PPT$ adversaries, and let $d\colon\NN\to\NN_{>0}$ be polynomially bounded. For a nonuniform $\PPT$ algorithm $\advA$ that on input $(N,\bvec g)$ returns a state $\var{st}$ and offsets $\beta_1,\dots,\beta_{d(\secparam)}\in\ZN$, and for $b\in\bits$, define the ensemble $\Doff_b=(\Doff_{b,\secparam})_{\secparam\in\NN}$ by
	\[
		\Doff_{b,\secparam} \cqq
		\left\{\left(N,\bvec g,\var{st},\left(g_j^\sigma\Expo(b\beta_j)\right)_{j\in[d(\secparam)]}\right)\ \middle|\
		\begin{array}{l}
			(N,g)\sample\DJGen(1^\secparam),                                  \\[-1pt]
			\bvec g\cqq(g_j)_{j\in[d(\secparam)]}\sample\GN^{d(\secparam)}, \\[-1pt]
			(\var{st},(\beta_j)_j)\sample\advA(N,\bvec g),                \\[-1pt]
			\sigma\sample[0..2^{\expl}-1]
		\end{array}\right\},
	\]
	where $\expl=\clog{N}+\secparam$, and where the sample is $\bot$ if $\DJGen(1^\secparam)$ or one of the $d(\secparam)$ samples from $\GN$ returns $\bot$. Then $\Doff_1\cindist\Doff_0$ for every such $\advA$.
\end{lemmaE}

\begin{proofE}
	\label{proof:dcr-offset-removal}
	We reduce to \cref{lem:interactive-vector-game} at one query. Since squaring maps the $N$-th powers of the game onto $\GN$ and doubles the offsets, we halve the offsets first, and we then account for the exponent range of the game and for the failures of the sampler of $\GN$.

	\subparhead{The Instance} For a successful output $(N,g)$ of $\DJGen(1^\secparam)$, we write $N=pq$, $\varphi=(p-1)(q-1)$, and $\GG_L\cqq\{z^N\mid z\in\ZNsqx\}$. We repeat the computation of \cref{lem:dj-structure} with exponent $N$ in place of $2N$ and find $\abs{\GG_L}=\varphi$. As we noted after \cref{lem:interactive-vector-game}, the hypotheses of that lemma hold for $D\cqq d(\secparam)$, $\GG_U\cqq\ZNsqx$, $\GG_M\cqq\easypart$, this $\GG_L$, and $T\cqq N^2$. Since $z\mapsto z^N$ maps the uniform distribution on $\ZNsqx$ to the uniform distribution on $\GG_L$, distinguishing a uniform element of $\GG_U$ from a uniform element of $\GG_L$ is distinguishing the two ensembles of \cref{def:dcr} at the modulus $N$. The group in \cref{lem:interactive-vector-game} is fixed, and~\cite{C:BraGol10} proves the lemma for a group output by a generator, with both advantages taken over the generator. We apply it in that form, with the group determined by the modulus of $\DJGen(1^\secparam)$ conditioned on success.

	\subparhead{The Reduction} We fix $\advA$ and a nonuniform $\PPT$ distinguisher $\advD$, which outputs a bit on every input, including $\bot$. We define an adversary $\advC$ for the interactive $D$-vector game as follows. On input $N$ and the bases $\bvec u=(u_j)_{j\in[D]}\in\GG_L^D$, the adversary $\advC$ sets $g_j\gets u_j^2$ for $j\in[D]$ and $\bvec g\cqq(g_j)_{j\in[D]}$, runs $(\var{st},(\beta_j)_j)\sample\advA(N,\bvec g)$, and makes the single query $\bvec a\cqq(\Expo(2^{-1}\beta_j))_{j\in[D]}\in\GG_M^D$, where $2^{-1}$ is the inverse of $2$ in $\ZN$, which exists because $N$ is odd. On the answer $\bvec y=(y_j)_{j\in[D]}$, it outputs $\advD(N,\bvec g,\var{st},(y_j^2)_{j\in[D]})$. The adversary $\advC$ is nonuniform $\PPT$ because $\advA$ and $\advD$ are.

	\subparhead{The Simulated Distribution} Since $\GN=\{(z^N)^2\mid z\in\ZNsqx\}$, squaring is a surjective homomorphism from $\GG_L$ onto $\GN$, so $\bvec g$ is uniform on $\GN^D$. For the challenge bit $b'$ of the game, the answer has the coordinates $y_j=\Expo(2^{-1}\beta_j)^{1-b'}u_j^{r}$ with $r\sample[T^2]$. Since $\Expo$ is a homomorphism and $2\cdot2^{-1}\beta_j=\beta_j$ in $\ZN$,
	\[
		y_j^2=\Expo(\beta_j)^{1-b'}g_j^{r}
		\qquad(j\in[D]).
	\]
	Hence, for $b=1-b'$ and conditioned on the modulus $N$, the input of $\advD$ is distributed as a sample of $\Doff_{b,\secparam}$ conditioned on $N$ and on the success of the $D$ samples from $\GN$, except that the exponent is $r\sample[T^2]$.

	\subparhead{The Exponent Range} Every $g_j$ lies in $\GN$, so the input of $\advD$ depends on the exponent only through its residue modulo $M\cqq\abs{\GN}\le N$. For $L$ consecutive integers, the computation in the proof of \cref{lem:exponent-smoothing} bounds the statistical distance of their residue modulo $M$ from uniform on $\ZZ/M\ZZ$ by $M/(4L)$. For $r\sample[T^2]$, where $L=N^4$, this distance is at most $1/(4N^3)\le2^{-\secparam-2}$, and for $\sigma\sample[0..2^{\expl}-1]$, it is at most $2^{-\secparam-2}$ by \cref{lem:exponent-smoothing}. Hence, replacing $r$ by $\sigma$ changes each acceptance probability of $\advD$ by at most $2^{-\secparam-1}$.

	\subparhead{The Bound} For a successful output of $\DJGen(1^\secparam)$ with modulus $N$, we write $\pi_N$ for the probability that $\DJGen(1^\secparam)$ returns a pair with modulus $N$ and $f(N)$ for the probability that one of the $D$ samples from $\GN$ fails. We write $\Delta(N)$ for the probability that $\advD$ outputs $1$ on $\Doff_{1,\secparam}$ minus the probability that it outputs $1$ on $\Doff_{0,\secparam}$, both conditioned on $N$ and on the success of these samples, and $\Delta_{\var{IV}}(N)$ for the probability that $\advC$ outputs $1$ at $b'=0$ minus the probability that it outputs $1$ at $b'=1$ in the game at the modulus $N$. After a failed sample, the output is $\bot$ for both values of $b$, $f(N)\le D\nuG(\secparam)$ by \cref{def:group-sampler}, and $\abs{\Delta(N)-\Delta_{\var{IV}}(N)}\le2^{-\secparam}$ by the exponent range. For the adversary $\advB$ that \cref{lem:interactive-vector-game} constructs from $\advC$, with output $0$ on input $\bot$, we write
	\[
		\varepsilon_{\DCR}(\advB)\cqq\abs{\Prob\left[\advB(X)=1\mbar X\sample\DDCR_{0,\secparam}\right]-\Prob\left[\advB(X)=1\mbar X\sample\DDCR_{1,\secparam}\right]}.
	\]
	We sum over the successful outputs of $\DJGen(1^\secparam)$ and apply \cref{lem:interactive-vector-game} with one query and $N>2^{2\ell(\secparam)-2}$, and we obtain
	\begin{align*}
		&\abs{\Prob\left[\advD(X)=1\mbar X\sample\Doff_{1,\secparam}\right]-\Prob\left[\advD(X)=1\mbar X\sample\Doff_{0,\secparam}\right]}\\
		&\qquad=\abs{\sum_N\pi_N\left(1-f(N)\right)\Delta(N)}\\
		&\qquad\le\abs{\sum_N\pi_N\Delta_{\var{IV}}(N)}+2^{-\secparam}+D\nuG(\secparam)\\
		&\qquad\le4D\cdot\varepsilon_{\DCR}(\advB)+2D\cdot2^{-4\ell(\secparam)+4}+2^{-\secparam}+D\nuG(\secparam).
	\end{align*}
	The adversary $\advB$ is nonuniform $\PPT$, so $\varepsilon_{\DCR}(\advB)$ is negligible under $\DCR$, and since $D=d(\secparam)$ is polynomially bounded and $\nuG$ is negligible, the right-hand side is negligible.
\end{proofE}

\end{relocated}%
\ifConferenceVersion We remove the offsets of each of the two predicates in turn\else We apply \cref{lem:dcr-offset-removal} once for each of the two predicates\fi{} and obtain function privacy at the full parameter, adapting the function-privacy argument of~\cite{C:RoySin21} to encoding keys with one entry for every bit of every residue and to the fallback key.

\begin{theoremE}[Function Privacy]
	\label{thm:eltdh-function-privacy}
	Let $d\colon\NN\to\NN_{>0}$ be polynomially bounded. If $\DJGen(\cdot)$ satisfies $\DCR$ against nonuniform $\PPT$ adversaries, then $\ELTDH$ is function private for $\Flin$ in the sense of \cref{def:eltdh}.
\end{theoremE}

\ConferenceProofSketch{proof:eltdh-function-privacy}{
	We compare both encoding keys with the encoding key in which every offset is removed. By offset removal under \DCR{}~\cite{C:BraGol10,C:DGIMMO19} at $dw+1$ coordinates, each of the two transitions changes the acceptance probability of a distinguisher by a negligible amount, up to a statistical loss of at most $2^{-\ell(\secparam)+4}$ at each end of a transition, which we pay because our reduction publishes $z^{2N}$ for a uniform $z$ in place of the generator of $\DJGen$. The fallback event $\Efull$ adds its negligible probability $\ffull$. }{}{}

\begin{proofE}
\label{proof:eltdh-function-privacy}
We compare both encoding keys with an encoding key without offsets and apply \cref{lem:dcr-offset-removal} at each of the two transitions, and we account separately for the fallback event $\Efull$ and for replacing the generator of $\DJGen$ with a uniform element of $\GN$. In the hybrids of \cref{fig:eltdh-function-privacy-hybrids}, the hash key at the full parameter is sampled in the same way, and we write $\bvec g=(g_{j,k})_{j,k}$ and $\bvec K=(K_{j,k})_{j,k}$.

\begin{proofbox}{Function-privacy hybrids for \ELTDH{}. Each hybrid outputs $(\hk,\var{st},\ek)$, and the highlighted line is the line changed from the previous hybrid.}{fig:eltdh-function-privacy-hybrids}
		\begin{minipage}[t]{0.32\linewidth}
			\vspace{0pt}
			{\footnotesize\underline{$\Hfp_0$}\codelinesrightalign{first predicate}\par}
			\smallskip
			\begin{codelines}
				\item $w\gets2\ell(\secparam)$; $(N,g)\sample\DJGen(1^\secparam)$
				\item $\expl\gets\clog{N}+\secparam$; $(g_0,\bvec g)\sample\GN^{dw+1}$
				\item $\hk\gets((N,g),d,w,\expl,g_0,\bvec g)$
				\item $(\bvec a^0,\bvec a^1,\var{st})\sample\advS(1^\secparam,\hk)$
				\item $\sigma\sample[0..2^{\expl}-1]$; $K_0\gets g_0^\sigma$
				\item \pcfor $(j,k)\in[d]\times[0..w-1]$
				\codelinesplus
				\item $K_{j,k}\gets g_{j,k}^\sigma\Expo(2^ka^0_j)$
				\codelinesminus
				\item $\ek\gets(N,d,w,\expl,K_0,\bvec K)$
				\item \pcreturn $(\hk,\var{st},\ek)$
			\end{codelines}
		\end{minipage}\hfill%
		\begin{minipage}[t]{0.32\linewidth}
			\vspace{0pt}
			{\footnotesize\underline{$\Hfp_1$}\codelinesrightalign{remove offsets}\par}
			\smallskip
			\begin{codelines}
				\item $w\gets2\ell(\secparam)$; $(N,g)\sample\DJGen(1^\secparam)$
				\item $\expl\gets\clog{N}+\secparam$; $(g_0,\bvec g)\sample\GN^{dw+1}$
				\item $\hk\gets((N,g),d,w,\expl,g_0,\bvec g)$
				\item $(\bvec a^0,\bvec a^1,\var{st})\sample\advS(1^\secparam,\hk)$
				\item $\sigma\sample[0..2^{\expl}-1]$; $K_0\gets g_0^\sigma$
				\item \pcfor $(j,k)\in[d]\times[0..w-1]$
				\codelinesplus
				\item \label{line:eltdh-fp-zero}\hybchange{$K_{j,k}\gets g_{j,k}^\sigma$}
				\codelinesminus
				\item $\ek\gets(N,d,w,\expl,K_0,\bvec K)$
				\item \pcreturn $(\hk,\var{st},\ek)$
			\end{codelines}
		\end{minipage}\hfill%
		\begin{minipage}[t]{0.32\linewidth}
			\vspace{0pt}
			{\footnotesize\underline{$\Hfp_2$}\codelinesrightalign{second predicate}\par}
			\smallskip
			\begin{codelines}
				\item $w\gets2\ell(\secparam)$; $(N,g)\sample\DJGen(1^\secparam)$
				\item $\expl\gets\clog{N}+\secparam$; $(g_0,\bvec g)\sample\GN^{dw+1}$
				\item $\hk\gets((N,g),d,w,\expl,g_0,\bvec g)$
				\item $(\bvec a^0,\bvec a^1,\var{st})\sample\advS(1^\secparam,\hk)$
				\item $\sigma\sample[0..2^{\expl}-1]$; $K_0\gets g_0^\sigma$
				\item \pcfor $(j,k)\in[d]\times[0..w-1]$
				\codelinesplus
				\item \label{line:eltdh-fp-second}\hybchange{$K_{j,k}\gets g_{j,k}^\sigma\Expo(2^ka^1_j)$}
				\codelinesminus
				\item $\ek\gets(N,d,w,\expl,K_0,\bvec K)$
				\item \pcreturn $(\hk,\var{st},\ek)$
			\end{codelines}
		\end{minipage}
	\end{proofbox}

	We fix a selector $\advS$ and a distinguisher $\advA$, and we write $d=d(\secparam)$, $w=2\ell(\secparam)$, and $\ffull(\secparam)\cqq \ffull(\secparam,d)$ for the probability of the fallback event $\Efull$ of \cref{subsec:eltdh-construction}. For $b\in\bits$, we write $p_b$ for the probability that $\advA$ outputs $1$ on a sample of $\Dfnpriv_{b,\secparam}$. The hybrids make the same calls to $\DJGen$ and to the sampler of $\GN$ as the setup at $\Rfull(\secparam,d)$, so all these calls succeed with probability $1-\ffull(\secparam)$. For $i\in\{0,1,2\}$, we write $\pcond_i$ for the probability that $\advA$ outputs $1$ on the output of $\Hfp_i$, conditioned on the success of all these calls, and $q_i\cqq(1-\ffull(\secparam))\pcond_i$.

	\subparhead{The Fallback Event} Conditioned on the complement of $\Efull$, the setup at $\Rfull(\secparam,d)$ samples $(N,g)$ and the $dw+1$ bases as the hybrids do, and $\ELTDHGen$ samples $\sigma$ from the same range $[0..2^{\expl}-1]$. Hence, conditioned on the complement of $\Efull$, a sample of $\Dfnpriv_{0,\secparam}$ is distributed as the output of $\Hfp_0$ and a sample of $\Dfnpriv_{1,\secparam}$ as the output of $\Hfp_2$, each conditioned on the success of all sampling calls, so
	\[
		p_b=(1-\ffull(\secparam))\pcond_{2b}+\ffull(\secparam)\pbot_b
		\qquad(b\in\bits),
	\]
	where $\pbot_b$ is the probability that $\advA$ outputs $1$ conditioned on $\Efull$, that is, at $\hk=\hkbot$. At $\hkbot$, the encoding key has the entries $K_{j,k}=\Expo(2^ka^b_j)$, so we use only the bound $\abs{\pbot_0-\pbot_1}\le1$. Hence,
	\[
		\abs{p_0-p_1}
		\le \ffull(\secparam)+(1-\ffull(\secparam))\abs{\pcond_0-\pcond_2}
		=\ffull(\secparam)+\abs{q_0-q_2}.
	\]

	\subparhead{The Two Transitions} We bound $\abs{q_0-q_1}$ and $\abs{q_1-q_2}$ with \cref{lem:dcr-offset-removal} at $dw+1$ coordinates, a number that is polynomially bounded because $d$ and $w$ are. For $i\in\bits$, we define an offset-choosing algorithm $\advB_i$ and a distinguisher $\advD_i$ for that lemma. On input $(N,(g_0,\bvec g))$, the algorithm $\advB_i$ samples $z\sample\ZZ/N^2\ZZ$, sets $g\gets z^{2N}$, $\expl\gets\clog{N}+\secparam$, and $\hk\gets((N,g),d,w,\expl,g_0,\bvec g)$, runs $(\bvec a^0,\bvec a^1,\var{st})\sample\advS(1^\secparam,\hk)$, and returns the state $(\hk,\var{st})$ and the offsets $(0,(2^ka^i_j)_{(j,k)\in[d]\times[0..w-1]})$. On input $\bot$, the distinguisher $\advD_i$ outputs $0$, and on input $(N,(g_0,\bvec g),(\hk,\var{st}),(K_0,\bvec K))$, it sets $\ek\gets(N,d,w,\expl,K_0,\bvec K)$ and outputs $\advA(\hk,\var{st},\ek)$. Both algorithms are nonuniform $\PPT$ because $\advS$ and $\advA$ are. We write $\Doff_{b,\secparam}$ for the ensemble of \cref{lem:dcr-offset-removal} at $dw+1$ coordinates with the algorithm $\advB_i$, and
	\[
		\varepsilon_i(\secparam)\cqq\abs{\Prob\left[\advD_i(X)=1\mbar X\sample\Doff_{1,\secparam}\right]-\Prob\left[\advD_i(X)=1\mbar X\sample\Doff_{0,\secparam}\right]},
	\]
	which is negligible by \cref{lem:dcr-offset-removal}.

	A sample of $\Doff_{b,\secparam}$ is $\bot$ exactly when $\DJGen$ or one of the $dw+1$ samples from $\GN$ fails, which has probability $\ffull(\secparam)$, and then $\advD_i$ outputs $0$. Conditioned on success, the input of $\advA$ is distributed as the output of $\Hfp_{2i}$ for $b=1$ and as the output of $\Hfp_1$ for $b=0$, each conditioned on the success of all sampling calls, except for the generator: it is $z^{2N}$ with $z$ uniform in $\ZZ/N^2\ZZ$, whereas the hybrids use the generator of $\DJGen$. Conditioned on $N$, both generators are independent of the bases, and the generator of $\DJGen$ is uniform among the generators of $\GN$. By \cref{lem:dj-structure}, the element $z^{2N}$ is within statistical distance $\Prob[\gcd(z,N)\ne1]<2^{-\ell(\secparam)+2}$ of uniform on $\GN$, and a uniform element of $\GN$ fails to generate $\GN$ with probability at most $1/p'+1/q'\le2^{-\ell(\secparam)+3}$. Hence, conditioned on $N$, the two generators are within statistical distance $\eta(\secparam)\cqq2^{-\ell(\secparam)+4}$. Hence, each of the two acceptance probabilities of $\advD_i$ differs from the corresponding $q$ by at most $(1-\ffull(\secparam))\eta(\secparam)$, and we obtain
	\[
		\abs{q_i-q_{i+1}}
		\le\varepsilon_i(\secparam)+2\left(1-\ffull(\secparam)\right)\eta(\secparam)
		\qquad(i\in\bits).
	\]

	\subparhead{The Bound} By the triangle inequality and $\ell(\secparam)\ge\secparam+2$,
	\[
		\begin{aligned}
			\abs{p_0-p_1}
			& \le \ffull(\secparam)+\varepsilon_0(\secparam)+\varepsilon_1(\secparam)+4\left(1-\ffull(\secparam)\right)\eta(\secparam) \\
			& \le \ffull(\secparam)+\varepsilon_0(\secparam)+\varepsilon_1(\secparam)+2^{-\secparam+4}.
		\end{aligned}
	\]
	In \cref{subsec:eltdh-construction}, we bounded $\ffull$ by a negligible function, so $\abs{p_0-p_1}$ is negligible.
\end{proofE}

For mode indistinguishability, we change only the distribution of the hash bases, from bounded powers to uniform elements, with one application of \cref{lem:tunable-scalar-to-vec-dsedl} to all $dw+1$ bases. Since $\rcap\le2^\secparam$ at every nonfull parameter, we have $\gamma<2^\secparam$, so $\gamma$ is smaller than the order of every \DamgardJurik subgroup $\GN$, as that lemma requires.\ifConferenceVersion\else{} The same inequality places these parameters within the scope of \cref{thm:generic-vector-dsedl}, since $\ell(\secparam)\ge\secparam+2$ and the smallest prime factor $\piq$ of $\abs{\GN}=p'q'$ is at least $2^{\ell(\secparam)-2}$, so $\gamma<2^\secparam\le\piq$.\fi

\begin{theoremE}[Mode Indistinguishability]
	\label{thm:eltdh-mode-indistinguishability}
	Let $d\colon\NN\to\NN_{>0}$ be polynomially bounded. If $\DJGen(\cdot)$ satisfies tunable $\DSEDL$, then $\ELTDH$ satisfies the mode-indistinguishability clause of \cref{def:eltdh} at the family $d$.
\end{theoremE}

\ConferenceProofSketch{proof:eltdh-mode-indistinguishability}{
	With one application of \cref{lem:tunable-scalar-to-vec-dsedl} to the $dw+1$ public bases, we change them from bounded powers to uniform elements, which maps the setup distribution at a nonfull parameter $r$ to the one at $\Rfull(\secparam,d)$. We choose the image threshold $Q$ so that every nonfull parameter in $[Q(\secparam)..\Rfull(\secparam,d)]$ has an exponent bound $\gamma$ above the threshold of that lemma, enlarge the security threshold until $Q(\secparam)\le2^\secparam$, and compare two arbitrary parameters through the full one. }{}{}

\begin{proofE}
	\label{proof:eltdh-mode-indistinguishability}
	We change all $dw+1$ hash bases at once with \cref{lem:tunable-scalar-to-vec-dsedl}, choose the image threshold so that every nonfull parameter above it has an exponent bound above the threshold of that lemma, and compare two arbitrary parameters through the full one.

	We fix a polynomial $t$ and a positive polynomial $\delta$, write $d=d(\secparam)$ and $w=2\ell(\secparam)$, and choose a polynomial $p_d$ with $p_d(\secparam)\ge d(\secparam)$ for every $\secparam$. We write $D(\secparam)\cqq d(\secparam)w(\secparam)+1$ and choose a polynomial $p_{\hk}$ that bounds the size of a circuit computing the map $\Psi$ that sends $\bot$ to $\hkbot$ and a challenge $(1^\secparam,(N,g),\gamma,(G_0,(G_{j,k})_{j,k}))$ to the hash key $((N,g),d,w,\clog{N}+\secparam,G_0,(G_{j,k})_{j,k})$. We apply \cref{lem:tunable-scalar-to-vec-dsedl} to the cyclic group sampler $\DJGen$ of \cref{lem:dj-structure} with dimension $D$, size bound $t+p_{\hk}$, and target polynomial $2\delta$, and obtain a polynomial $\Qvec$ and a threshold $\secparam_{\var{vec}}\in\NN$. We choose an integer-valued polynomial $\Qbar$ with $\Qbar(\secparam)\ge\max\{1,\ceil{\Qvec(\secparam)}\}$ for every $\secparam$ and define the polynomial
	\[
		Q(\secparam)\cqq 2p_d(\secparam)\ell(\secparam)\Qbar(\secparam)+1.
	\]
	Since $Q$ is a polynomial, there is $\secparam_1\in\NN$ with $Q(\secparam)\le2^\secparam$ for every $\secparam\ge\secparam_1$, and we take $\secparam_0\cqq\max\{\secparam_{\var{vec}},\secparam_1\}$.

	We fix $\secparam\ge\secparam_0$. Since $d\ge1$ and $2\ell(\secparam)\ge2\secparam+4$, we have $\Rfull(\secparam,d)=2^{dw}\ge2^{2\ell(\secparam)}>2^\secparam\ge Q(\secparam)$, so the interval $[Q(\secparam)..\Rfull(\secparam,d)]$ is nonempty. For $r\in[Q(\secparam)..\Rfull(\secparam,d)-1]$, we write $\rcap=\min\{r,2^\secparam\}$, so $\rcap\ge Q(\secparam)$, and the exponent bound of the setup at $r$ satisfies
	\[
		\gamma=\floor{\frac{\rcap-1}{dw}}\ge\floor{\frac{p_d(\secparam)w\Qbar(\secparam)}{dw}}\ge\Qbar(\secparam)\ge\max\{1,\Qvec(\secparam)\}.
	\]
	Moreover, $\gamma\le\rcap-1<2^\secparam$, and every modulus output by $\DJGen(1^\secparam)$ satisfies $\abs{\GN}=p'q'>2^{2\ell(\secparam)-4}\ge2^{2\secparam}$ by \cref{lem:dj-structure}, so $\gamma$ is smaller than the order of every group in the support of $\DJGen(1^\secparam)$. Since also $\secparam\ge\secparam_{\var{vec}}$, \cref{lem:tunable-scalar-to-vec-dsedl} implies
	\[
		\DvDSEDL_{0,\secparam,\gamma}\indist_{t+p_{\hk},1/(2\delta)}\DvDSEDL_{1,\secparam,\gamma}
	\]
	at dimension $D(\secparam)$.

	The map $\Psi$ sends $\DvDSEDL_{0,\secparam,\gamma}$ to $\Dmode_{\secparam,r}$, because the setup at $r$ returns $\hkbot$ when $\DJGen$ fails and otherwise publishes $g_0=g^{s_0}$ and $g_{j,k}=g^{s_{j,k}}$ for independent $s_0,s_{j,k}\sample[\gamma]$, with this $\gamma\ge1$. It sends $\DvDSEDL_{1,\secparam,\gamma}$ to $\Dmode_{\secparam,\Rfull(\secparam,d)}$, because the setup at $\Rfull(\secparam,d)$ returns $\hkbot$ when $\DJGen$ or one of the $dw+1$ samples from $\GN$ fails and otherwise publishes $dw+1$ independent uniform elements of $\GN$, and the challenge of $\DvDSEDL_{1,\secparam,\gamma}$ is $\bot$ in the same event and otherwise consists of the same elements. For every circuit $C$ of size at most $t(\secparam)$, the composition of $C$ with $\Psi$ has size at most $t(\secparam)+p_{\hk}(\secparam)$, so
	\[
		\Dmode_{\secparam,r}\indist_{t,1/(2\delta)}\Dmode_{\secparam,\Rfull(\secparam,d)}.
	\]
	For $r_0,r_1\in[Q(\secparam)..\Rfull(\secparam,d)]$, we compare each nonfull one with $\Rfull(\secparam,d)$, and by the triangle inequality, $\Dmode_{\secparam,r_0}\indist_{t,1/\delta}\Dmode_{\secparam,r_1}$. This holds for every $\secparam\ge\secparam_0$, which is the mode-indistinguishability clause of \cref{def:eltdh}.
\end{proofE}

We combine \cref{thm:eltdh-share-correctness,thm:eltdh-input-privacy,thm:eltdh-lossy-image,thm:eltdh-function-privacy,thm:eltdh-mode-indistinguishability}\ifConferenceVersion\else{} and \cref{lem:eltdh-efficiency}\fi{} to verify every clause of \cref{def:eltdh}.

\begin{theoremE}[Extremely Lossy Trapdoor Hash]
	\label{thm:eltdh}
	\ifConferenceVersion\label{lem:eltdh-efficiency}\fi
	Suppose that $\DJGen(\cdot)$ satisfies $\DCR$ against nonuniform $\PPT$ adversaries and tunable $\DSEDL$. Then, for every polynomially bounded $d\colon\NN\to\NN_{>0}$, the construction of \cref{fig:eltdh-dcr} is a perfectly correct, statistically input-private, function-private, rate-one extremely lossy trapdoor-hashing scheme for residue-linear predicates, with $\nulos=0$. It also satisfies the perfect simultaneous identity of \cref{cor:eltdh-correctness} at every image parameter.\ifConferenceVersion\ Excluding common parameters and integer metadata, each hash or encoding key contains $dw+1$ group elements for $w=2\ell(\secparam)$, a trapdoor contains one $\expl$-bit exponent, a digest contains at most $2\clog{N}$ bits, and an encoding contains $\clog{N}$ bits.\fi
\end{theoremE}

\ConferenceProofSketch{proof:eltdh}{
	We obtain perfect correctness from \cref{thm:eltdh-share-correctness}, statistical input privacy from \cref{thm:eltdh-input-privacy}, lossiness with $\nulos=0$ from \cref{thm:eltdh-lossy-image}, function privacy from \cref{thm:eltdh-function-privacy}, and mode indistinguishability from \cref{thm:eltdh-mode-indistinguishability}. We count the $dw+1$ public bases and their fixed representation lengths and obtain compactness and rate one, since the predicate value and the encoding both lie in $\ZN$. }{}{}

\begin{proofE}
	\label{proof:eltdh}
	\ifConferenceVersion The key counts follow from the $dw+1$ base positions in \cref{fig:eltdh-dcr}. All group elements and exponents have representation length $\poly(\secparam)$, so every algorithm runs in time $\poly(\secparam+d)$, a digest, which is one unit modulo $N^2$, has at most $2\clog{N}$ bits, and an encoding, which is one residue modulo $N$, has $\clog{N}$ bits. The predicate value and the encoding both lie in $\ZN$, so the rate is one.\else We obtain compact descriptions, the rate, and the running-time bounds from \cref{lem:eltdh-efficiency}.\fi\ Perfect correctness follows from \cref{cor:eltdh-correctness}, statistical input privacy from \cref{thm:eltdh-input-privacy}, and lossiness from \cref{thm:eltdh-lossy-image}. Function privacy follows from \cref{thm:eltdh-function-privacy} and mode indistinguishability from \cref{thm:eltdh-mode-indistinguishability}. Hence, $\ELTDH$ satisfies every clause of \cref{def:eltdh}.
\end{proofE}

\ifConferenceVersion
Restricting the inputs to bits and reducing both shares modulo $2$, we recover the rate-one trapdoor-hashing setting of~\cite{C:RoySin21} from \cref{thm:eltdh}. At nonfull parameters with polynomially bounded $\gamma$, in contrast, we can recover the offsets from the encoding key, so we require function privacy only at $\Rfull(\secparam,d)$.
\else
At nonfull parameters with polynomially bounded $\gamma$, we can recover the offsets from the encoding key, so we require function privacy only at $\Rfull(\secparam,d)$.
\fi

\begin{remark}[Offset Recovery for Polynomially Bounded $\gamma$]\label{rem:tdh-polynomial-gamma}
	At $\hkbot$ and in the branch $\gamma=0$, the encoding-key entries are $K_{j,k}=\Expo(2^ka_j)$, which determine $\bvec a$. We suppose that the setup returned a hash key other than $\hkbot$ at a nonfull parameter with polynomially bounded $\gamma\ge1$. With \BSGS~\cite{Sha1971}, we recover $s_0$ from $g_0=g^{s_0}$ and $s_{j,0}$ from $g_{j,0}=g^{s_{j,0}}$, for every $j\in[d]$, with $\bigO(\sqrt\gamma)$ group operations each. The corresponding encoding-key entries are $K_0=g^{s_0\sigma}$ and $K_{j,0}=g^{s_{j,0}\sigma}\Expo(a_j)$, so $K_{j,0}^{s_0}K_0^{-s_{j,0}}=\Expo(s_0a_j)$. Since $s_0\le\gamma<2^\secparam$ and the prime factors of $N$ exceed $2^{\ell(\secparam)-1}\ge2^{\secparam+1}$, the integer $s_0$ is invertible modulo $N$, and we recover $a_j=s_0^{-1}\Logg(\Expo(s_0a_j))$ exactly. Hence, at such a parameter, a selector that returns two distinct coefficient vectors $\bvec a^0\ne\bvec a^1$ and a distinguisher that recovers the coefficient vector from the encoding key have advantage one, so function privacy fails there.
\end{remark}

\ifConferenceVersion
\begin{remark}[Pointwise Versus Simultaneous Correctness]
	\label{rem:simultaneous-correctness}
	In \cref{def:eltdh}, we require pointwise correctness. To build a trapdoor function, we need one key that decodes every point of the domain, and since lossiness concerns only the hash, we also need the encoding for the constant predicate to be determined by the digest under that key. Both facts follow from the simultaneous identity. The \DamgardJurik construction has this stronger property because \cref{thm:eltdh-share-correctness} holds for every input and every tape under one fixed key.
\end{remark}
\fi

An \ELF{} that is injective in one mode has at least $2^n$ outputs on $\bits^n$, and since fewer than $2^n$ bit strings have length less than $n$, some output has length at least $n$. In contrast, the public map of \ELTDH{} has outputs of length $\poly(\secparam)$ for every arity $d$, and with the trapdoor, we recover the predicate value from each output at a lossy parameter.

\begin{corollaryE}[Image and Output Length of the Public Map]
	\label{cor:eltdh-compressing-lossy}
	Let $\secparam\in\NN$ and $d\in\NN_{>0}$, let $r\in[\Rfull(\secparam,d)]$ and $\hk\in\Supp(\ELTDHSetupLos(1^\secparam,1^d,r))$ with modulus $N$, let $\bvec a\in(\ZN)^d$ and $(\ek,\td)\in\Supp(\ELTDHGen(\hk,\bvec a))$, and let $\Fhk(\bvec x)\cqq(\Phihk(\bvec x),\ELTDHEnc(\ek,\bvec x;0))$ be the public map of \cref{rem:eltdh-gives-telf}. Then $\abs{\Fhk(\calD)}\le r\cdot\abs{\Lina(\calD)}$ for every $\calD\subseteq(\ZN)^d$, every $\bvec x\in(\ZN)^d$ satisfies $e-\ELTDHDec(\td,h)=\Lina(\bvec x)$ at $(h,e)=\Fhk(\bvec x)$, and every output of $\Fhk$ has at most $3\clog{N}$ bits.
\end{corollaryE}

\begin{proofE}
	Since the share identity determines the second coordinate from the digest and the predicate value, the image has at most the product of their two ranges. By \cref{cor:eltdh-correctness}, every $\bvec x$ satisfies $e-\ELTDHDec(\td,h)=\Lina(\bvec x)$ at the all-zero tape, so $e$ is determined by $h$ and $\Lina(\bvec x)$. Hence, $\Fhk$ factors through $\bvec x\mapsto(\Phihk(\bvec x),\Lina(\bvec x))$, and $\abs{\Fhk(\calD)}\le\abs{\Phihk(\calD)}\cdot\abs{\Lina(\calD)} \le r\cdot\abs{\Lina(\calD)}$ by \cref{thm:eltdh-lossy-image}. By \cref{lem:eltdh-efficiency}, a digest has at most $2\clog{N}$ bits and an encoding has $\clog{N}$ bits, so every output has at most $3\clog{N}$ bits.
\end{proofE}

\ifConferenceVersion
\begin{remark}[A Compressing Lossy Map]\label{rem:eltdh-compression}
\fi
We take $\calD=\bits^d$ and $\bvec a=(1,\dots,1)$, so that $\Lina(\bvec x)$ is the Hamming weight of $\bvec x$, and we take $6\ell(\secparam)<d<2^{2\ell(\secparam)-2}$. Every modulus $N$ of a hash key satisfies $2^{2\ell(\secparam)-2}<N<2^{2\ell(\secparam)}$, so $3\clog{N}\le6\ell(\secparam)<d$ and $d<N$. Since $d<N$, the $d+1$ possible weights are distinct in $\ZN$, so $\abs{\Lina(\calD)}=d+1$, and since $d>3\clog{N}$, every output of $\Fhk$ is shorter than the $d$-bit input. Hence, at a polynomial image parameter $r$, the map $\Fhk$ has at most $r(d+1)$ outputs on a domain of $2^d$ points, and with the trapdoor, we recover the Hamming weight from each output.
\ifConferenceVersion
\end{remark}
\fi

\ifConferenceVersion\else
\subsection{Rate-One Trapdoor Hashing}
\label{subsec:eltdh-instantiations}

The predicate value of \ELTDH{} and its encoding both lie in $\ZN$, so \ELTDH{} has rate one in the sense of the rate clause of \cref{def:eltdh} by \cref{lem:eltdh-efficiency}. Trapdoor hashing in the sense of \cref{def:trapdoor-hashing} has bit-valued predicates, so we restrict \ELTDH{} to bit inputs, $\FF_2$-linear predicates, and one-bit encodings, and we recover the rate-one trapdoor hash of~\cite{C:RoySin21}. For this restriction, we measure the rate with the definition of~\cite{C:DGIMMO19}, which we state for the syntax of \cref{def:trapdoor-hashing}.

\begin{definition}[Rate]\label{def:tdh-rate}
	Let $\mu\colon\NN\to\RR_{\ge 0}$. A trapdoor-hashing scheme $\THF$ for $\FuncClass$ has rate at least $\mu$ if $\replen{f(x)}/\replen{\TDHEnc(\ek,x;\rho)}\ge\mu(\secparam)$ for every $\secparam,n\in\NN$, every $x\in\bits^n$, every $\hk\in\Supp(\TDHSetup(1^\secparam,1^n))$, every $f\in\FuncClass_{\secparam,n,\hk}$, every $(\ek,\td)\in\Supp(\TDHGen(\hk,f))$, and every random tape $\rho$, where $\replen{u}$ denotes the length of the canonical representation of $u$. The scheme has rate one if $\replen{f(x)}=\replen{\TDHEnc(\ek,x;\rho)}$ for all these objects.
\end{definition}

To obtain one-bit encodings, we add a uniform mask $\alpha\in\ZN$ to both shares of \cref{thm:eltdh-share-correctness} and keep the parities of their representatives. We adapt the rate-one trapdoor hash of~\cite{C:RoySin21} in this way to the bounded-exponent hash keys of \ELTDH{}, and the parities differ by the predicate value except with the probability of a wraparound.

\begin{corollaryE}[Rate-One Trapdoor Hash for $\FF_2$-Linear Predicates]
	\label{cor:eltdh-parity}
	Let $n\colon\NN\to\NN$ be polynomially bounded with $n(\secparam)\ge1$ for every $\secparam$, and suppose that $\DJGen(\cdot)$ satisfies $\DCR$ against nonuniform $\PPT$ adversaries. Take $\ELTDH$ at arity $d=n(\secparam)$ and at the full parameter, restrict the input coordinates and the coefficient vector $\bvec a$ to $\bits$, and take $\rep(\Lina(\bvec x))\bmod2$ as the $\FF_2$-linear predicate. Generation additionally samples $\alpha\sample\ZN$ and places it in both $\ek$ and $\td$. The encoder returns $t\cqq\rep(e+\alpha)\bmod2$. Given $e'\gets\ELTDHDec(\td,h)$, the decoder sets $t_0\cqq\rep(e'+\alpha)\bmod2$ and returns the candidate pair $(t_0,t_0\oplus1)$. The resulting trapdoor-hashing scheme is $\deltapar$-correct, statistically input-private, function-private, and rate one, where $\deltapar(\secparam)\cqq\min\{1,n(\secparam)2^{2-2\ell(\secparam)}\}$. Its digest is one element of $\ZNsqx$.
\end{corollaryE}

\begin{proofE}
	The only correctness error comes from a wraparound when we add the integer predicate value to the masked decoded share. Since the mask is uniform, the masked share is uniform, so we first bound this event and then obtain privacy and rate from \cref{thm:eltdh-input-privacy,thm:eltdh-function-privacy}.

	For a bit input $\bvec x$ and a bit coefficient vector $\bvec a$, we write
	\[
		z\cqq\sum_{j=1}^{n(\secparam)}a_jx_j\in[0..n(\secparam)],
	\]
	so that $\rep(\Lina(\bvec x))=z$. By \cref{thm:eltdh-share-correctness}, $e=e'+z$ in $\ZN$. Since $\alpha$ is uniform in $\ZN$ and independent of $e'$, the element $e'+\alpha$ is uniform in $\ZN$. If $\rep(e'+\alpha)+z<N$, then $\rep(e+\alpha)=\rep(e'+\alpha)+z$, so $t=t_0\oplus(z\bmod2)$, which is the candidate indexed by $z\bmod2=\rep(\Lina(\bvec x))\bmod2$, and the two candidates differ. The complementary event $\rep(e'+\alpha)+z\ge N$ has probability $z/N\le n(\secparam)/N$. Every supported modulus exceeds $2^{2\ell(\secparam)-2}$, so the failure probability is at most $\deltapar(\secparam)$.

	The mask $\alpha$ is independent of the predicate and of the remaining key material, and the digest is the digest of \ELTDH{}. Therefore, statistical input privacy follows from \cref{thm:eltdh-input-privacy}, and function privacy follows from \cref{thm:eltdh-function-privacy}, since a reduction can append a fresh mask to the encoding key of \ELTDH{}. The predicate value and the encoding are both one bit, so the rate is one.
\end{proofE}

The error $\deltapar$ is negligible, whereas the composite-residuosity variant of~\cite{C:DGIMMO19} has an inverse-polynomial error. From quadratic residuosity, \cite{C:DGIMMO19} constructs a fully correct rate-one trapdoor hash for $\FF_2$-linear predicates with a message group of order two. The composite-residuosity variant of~\cite{C:DGIMMO19} has a message group of order $N$, and the compilation of its group-valued encoding to one bit in~\cite{C:DGIMMO19} has an inverse-polynomial error. The work~\cite{C:RoySin21} replaces the distance function of~\cite{C:DGIMMO19} by $\DDLog_N$, which reduces the error to the probability of a wraparound, and for the scheme of \cref{cor:eltdh-parity}, this probability is at most $n(\secparam)/N$.
\fi

\section{Extremely Lossy Functions}\label{sec:elf}
From the extremely lossy trapdoor hash of \cref{sec:hashing} at residue arity one, we obtain a trapdoor \ELF{} on the domain $[m]$ by hashing the single residue $x=u-1$ of an input $u\in[m]$, with the identity predicate in the injective mode and the zero predicate in the lossy mode.

\ifConferenceVersion\else
\phantomsection\label{subsec:telf-syntax}

\fi
\begin{relocated}{telf-definition}
The injective generator of a trapdoor extremely lossy function also returns an inversion trapdoor. We use the trapdoor extremely lossy functions of~\cite{EC:Zhandry19a} with four changes. We write $m$ and $\Nout(m)$ for the domain size and the bound on the range, which~\cite{EC:Zhandry19a} calls $M$ and $N$, because $N$ denotes a modulus in this paper. We replace the inverse function by an inversion trapdoor and a deterministic inversion algorithm, and we give evaluation as a separate deterministic algorithm. We bound the lengths of descriptions, trapdoors, and outputs by $\poly(\log m)$. We state mode indistinguishability for probabilistic circuits of size $t$, where~\cite{EC:Zhandry19a} quantifies over adversaries whose running time is at most a polynomial $p$.

\begin{definition}[Trapdoor \ELF{}]\label{def:trapdoor-extremely-lossy-function}
	For every $m\ge2$, write $n_m\cqq\clog{m}$. A trapdoor extremely lossy function $\TELF=(\TELFGenInj,\allowbreak\TELFGenLos,\allowbreak\TELFEval,\allowbreak\TELFInv)$ consists of a function $\Nout$ with $\log\Nout(m)=\poly(n_m)$ and the following four algorithms.
	\begin{itemize}
		\item $(f,\td)\sample\TELFGenInj(m)$. On input $m\ge2$, the probabilistic injective-generation algorithm samples a public description $f$ of a function $f\colon[m]\to[\Nout(m)]$ and an inversion trapdoor $\td$.
		\item $f\sample\TELFGenLos(m,r)$. On input $m\ge2$ and $r\in[m]$, the probabilistic lossy-generation algorithm samples a public description $f$ of a function $f\colon[m]\to[\Nout(m)]$.
		\item $y\gets\TELFEval(f,u)$. On input a public description $f$ and $u\in[m]$, the deterministic evaluation algorithm outputs $y=f(u)$.
		\item $u\gets\TELFInv(f,\td,y)$. On input a public description $f$, a trapdoor $\td$, and $y\in[\Nout(m)]$, the deterministic inversion algorithm outputs a candidate $u\in[m]$ or $\bot$.
	\end{itemize}
	We identify a public description $f$ with the function $u\mapsto\TELFEval(f,u)$. The tuple $\TELF$ satisfies the following four properties.
	\begin{itemize}
		\item \textbf{Compact Descriptions.} Public descriptions, trapdoors, and outputs have length $\poly(n_m)$, and every algorithm of $\TELF$ runs in time polynomial in the length of its input.
		\item \textbf{Injective Mode.} If $(f,\td)\sample\TELFGenInj(m)$, then except with probability negligible in $n_m$, every $u\in[m]$ satisfies $\TELFInv(f,\td,\TELFEval(f,u))=u$.
		\item \textbf{Lossy Mode.} For every $r\in[m]$, if $f\sample\TELFGenLos(m,r)$, then $\abs{\Image(f)}\le r$ except with probability negligible in $n_m$.
		\item \textbf{Mode Indistinguishability.} For every polynomial $t$ and every positive polynomial $\delta$, there exists a polynomial $Q$ such that for every sufficiently large $m$ and every $r\in[Q(n_m)..m]$,
		      \[
			      \left\{
			      \begin{array}{@{}c|l@{}}
				      f & (f,\td)\sample\TELFGenInj(m)
			      \end{array}
			      \right\}
			      \indist_{t,1/\delta}
			      \left\{
			      \begin{array}{@{}c|l@{}}
				      f & f\sample\TELFGenLos(m,r)
			      \end{array}
			      \right\},
		      \]
		      where $t$ and $\delta$ are evaluated at $n_m$.
	\end{itemize}
\end{definition}

\end{relocated}%
\begin{relocated}{enumerability}
For some applications of extremely lossy functions, \cite{C:Zhandry16} also requires an efficient algorithm that outputs the lossy image, and it distinguishes an algorithm that outputs a superset of the image from one that outputs the image exactly. We use these two notions of~\cite{C:Zhandry16}, with the overwhelming probability written as $1-\varepsilon(\clog{m})$ for a negligible function $\varepsilon$.

\begin{definition}[Enumerable Image]\label{def:efficiently-enum-img}
	Let $\ELF=(\ELFGen,\ELFEval)$ be an extremely lossy function. For a randomized algorithm $\Enumerate$, integers $m\ge2$, and $r\in[m]$, define
	\[
		p_{\subseteq}^{\Enumerate}(m,r)\cqq\Prob\left[\Image(f)\subseteq S\mbar
		\begin{array}{l}
			f\sample\ELFGen(m,r), \\[-1pt]
			S\sample\Enumerate(f,r)
		\end{array}\right],
	\]
	and define $p_{=}^{\Enumerate}(m,r)$ in the same way, with the event $\Image(f)=S$. The image of $\ELF$ is weakly efficiently enumerable if there exist a randomized algorithm $\Enumerate$ running in time $\poly(r,\log m)$ and a negligible function $\varepsilon$ such that $p_{\subseteq}^{\Enumerate}(m,r)\ge1-\varepsilon(\clog{m})$ for every $m\ge2$ and $r\in[m]$. It is strongly efficiently enumerable if the same holds with $p_{=}^{\Enumerate}$ in place of $p_{\subseteq}^{\Enumerate}$.
\end{definition}

\end{relocated}%
\begin{relocated}{enumerability-telf}
Following~\cite{C:Zhandry16}, we apply the same two notions to the lossy generator $\TELFGenLos$ of a trapdoor extremely lossy function.

\begin{definition}[Enumerable Trapdoor \ELF{}]\label{def:efficiently-enumerable-telf}
	A trapdoor extremely lossy function $\TELF$ has a weakly (respectively, strongly) efficiently enumerable image if \cref{def:efficiently-enum-img} holds for $\TELF$ with $\ELFGen(m,r)$ replaced by $\TELFGenLos(m,r)$.
\end{definition}

\end{relocated}%
\ifConferenceVersion\else
\subsection{The Trapdoor Construction}
\label{subsec:telf-from-eltdh}
\fi

For every $m\ge2$, we choose the security parameter of the hash as a function of $n\cqq\clog{m}$. We fix an efficiently computable, strictly increasing, polynomially bounded function $\groupsec\colon\NN\to\NN$ with $\groupsec(n)\ge n$ for every $n\in\NN$, and we set $\secparam\cqq\groupsec(n)$, $\wlam\cqq2\ell(\secparam)$, and $\Rlam\cqq2^{\wlam}$. Every modulus $N$ of a hash key in the support of $\ELTDHSetupLos(1^\secparam,1^1,\cdot)$, including the fallback modulus, satisfies $m\le2^n<N$, so for every $u\in[m]$, the least nonnegative representative of the residue $u-1\in\ZN$ is $u-1$.

For a public description $\fk$ of this construction, we write $\ffk\colon[m]\to\ZNsqx\times\ZN$ for the function described by $\fk$. We encode the first coordinate of an output in exactly $4\ell(\secparam)$ bits and the second in exactly $2\ell(\secparam)$ bits, with leading zeros, and we interpret the resulting $6\ell(\secparam)$-bit string as its nonnegative integer value plus one. Hence, under this fixed-width encoding, every output lies in $[\NTelf(m)]$ for $\NTelf(m)\cqq2^{6\ell(\groupsec(\clog{m}))}$, and $\log\NTelf(m)=\poly(\log m)$, as required in \cref{def:trapdoor-extremely-lossy-function}.

\ifConferenceVersion
\subsection{The Trapdoor Construction}
\label{subsec:telf-from-eltdh}
\fi

Throughout this subsection, $\ELTDH$ denotes the construction of \cref{thm:eltdh} at residue arity $d=1$, and we write $\IdMap(x)=x$ and $0(x)=0$ for the two linear predicates of the construction. In injective generation, we take a hash key at the full image parameter $\Rlam$ and an encoding key for $\IdMap$, so that the trapdoor recovers $x=u-1$ from one decoded share. In lossy generation with $r<m$, we take a hash key at the image parameter $r$ and an encoding key for $0$.

\begin{proofbox}{Trapdoor \ELF{} from \ELTDH{} at residue arity one, for $m\ge2$ and $r\in[m]$. Here $\ell(\secparam)$ is the bit length of the prime factors of $N$, and the hash key carries the exponent length $\expl$ and the bases $g_{1,k}$ for $k\in[0..\wlam-1]$ of \cref{fig:eltdh-dcr}.}{fig:small-exponent-telf}
		\begin{minipage}[t]{0.51\linewidth}
			\vspace{0pt}
			{\footnotesize\underline{$\TELFGenInj(m)$}\par}
			\smallskip
			\begin{codelines}
				\item \pcassert $m\ge2$
				\item $n\gets\clog{m}$; $\secparam\gets\groupsec(n)$
				\item $\Rlam\gets2^{2\ell(\secparam)}$
				\item $\hk\sample\ELTDHSetupLos(1^\secparam,1^1,\Rlam)$
				\item \label{line:telf-inj-offsets}$(\ek,\td)\sample\ELTDHGen(\hk,(1))$
				\item $\fk\gets(m,n,\secparam,\hk,\ek)$
				\item \pcreturn $(\fk,\td)$
			\end{codelines}

			\medskip
			{\footnotesize\underline{$\TELFGenLos(m,r)$}\par}
			\smallskip
			\begin{codelines}
				\item \pcassert $m\ge2\land r\in[m]$
				\item \pcif $r=m$
				\codelinesplus
				\item $(\fk,\td)\sample\TELFGenInj(m)$
				\item \pcreturn $\fk$
				\codelinesminus
				\item $n\gets\clog{m}$; $\secparam\gets\groupsec(n)$
				\item \label{line:telf-los-bases}$\hk\sample\ELTDHSetupLos(1^\secparam,1^1,r)$
				\item \label{line:telf-los-offsets}$(\ek,\td)\sample\ELTDHGen(\hk,(0))$
				\item \pcreturn $\fk\gets(m,n,\secparam,\hk,\ek)$
			\end{codelines}
		\end{minipage}\hfill%
		\begin{minipage}[t]{0.44\linewidth}
			\vspace{0pt}
			{\footnotesize\underline{$\TELFEval(\fk,u)$}\par}
			\smallskip
			\begin{codelines}
				\item \pcparse $\fk=(m,n,\secparam,\hk,\ek)$
				\item \pcparse $\hk=((N,g),1,\wlam,\expl,g_0,(g_{1,k})_k)$
				\item \pcassert $u\in[m]$
				\item $x\gets u-1\in\ZN$
				\item $h\gets\ELTDHHash(\hk,(x);0)$
				\item $e\gets\ELTDHEnc(\ek,(x);0)$
				\item \pcreturn $y\gets(h,e)$
			\end{codelines}

			\medskip
			{\footnotesize\underline{$\TELFInv(\fk,\td,y)$}\par}
			\smallskip
			\begin{codelines}
				\item \pcparse $\fk=(m,n,\secparam,\hk,\ek)$
				\item \pcparse $\hk=((N,g),1,\wlam,\expl,g_0,(g_{1,k})_k)$
				\item \pcif $y\notin\ZNsqx\times\ZN$ \pcreturn $\bot$
				\item \pcparse $y=(h,e)$
				\item $v\gets e-\ELTDHDec(\td,h)$; $x\gets\rep(v)$
				\item \pcif $x\ge m$ \pcreturn $\bot$
				\item \pcreturn $u\gets x+1$
			\end{codelines}
		\end{minipage}
	\end{proofbox}

For the compact-description condition of \cref{def:trapdoor-extremely-lossy-function} and for the costs compared in \cref{sec:comparison}, we count the group elements in public descriptions, trapdoors, and outputs.

\begin{lemmaE}[Efficiency]\label{lem:telf-efficiency}
	Let $m\ge2$ and $r\in[m]$. The algorithms in \cref{fig:small-exponent-telf} run in time $\poly(\log m)$. Excluding the modulus, the generator, and integer metadata, a public description contains $2\wlam+2$ elements of $\ZNsqx$, a trapdoor contains one exponent, and an output contains one element of $\ZNsqx$ and one element of $\ZN$.
\end{lemmaE}
\ConferenceProofSketch{proof:telf-efficiency}{
	At residue arity one, each of $\hk$ and $\ek$ contains $\wlam+1$ group elements by \cref{lem:eltdh-efficiency}, the trapdoor is the exponent sampled by $\ELTDHGen$, and an output is one digest and one residue. Since all representation lengths are fixed and $\secparam=\groupsec(\clog{m})=\poly(\log m)$, every algorithm runs in time $\poly(\log m)$. }{}{}
\begin{proofE}
	\label{proof:telf-efficiency}
	The underlying \ELTDH{} construction has residue arity one, so by \cref{lem:eltdh-efficiency} each of $\hk$ and $\ek$ contains $\wlam+1$ group elements, the trapdoor contains one exponent, and an output contains one digest and one residue. Since $\wlam=\poly(\secparam)$ and $\secparam=\poly(n)$, all descriptions and algorithms have size and running time $\poly(n)=\poly(\log m)$. For $u\in[m]$, the bits of $u-1$ at the positions $k\in[n..\wlam-1]$ of its padded binary expansion are zero, so we compute each of the products in $\ELTDHHash$ and $\ELTDHEnc$ with $\bigO(n)$ group multiplications, and $\ELTDHEnc$ adds one evaluation of $\DDLog_N$.
\end{proofE}

With the single trapdoor $\td$, we invert $\ffk$ on all of $[m]$ by the identity $e-\ELTDHDec(\td,h)=x$ of \cref{cor:eltdh-correctness} at coefficient vector $(1)$.

\begin{theoremE}[Perfect Inversion]\label{thm:telf-injective-inversion}
	Let $m\ge2$ and $u\in[m]$. For every pair $(\fk,\td)$ in the support of $\TELFGenInj(m)$ and for $y=\TELFEval(\fk,u)$, we have $\TELFInv(\fk,\td,y)=u$. In particular, $\ffk$ is injective.
\end{theoremE}

\ConferenceProofSketch{proof:telf-injective-inversion}{
	By \cref{cor:eltdh-correctness} at coefficient vector $(1)$, every output $(h,e)$ at $u\in[m]$ satisfies $e-\ELTDHDec(\td,h)=x$ for $x=u-1$. Since $0\le x<m<N$, the least nonnegative representative of this residue is $x$, and $\TELFInv$ returns $x+1=u$. }{}{}

\begin{proofE}
	\label{proof:telf-injective-inversion}
	We fix $u\in[m]$ and write $x=u-1$ and $(h,e)=\TELFEval(\fk,u)$, so that $h\in\ZNsqx$ and $e\in\ZN$. By \cref{cor:eltdh-correctness} at coefficient vector $(1)$, we have $e-\ELTDHDec(\td,h)=x$ in $\ZN$. Since $0\le x<m\le2^n<N$, the least nonnegative representative of this residue is $x<m$, so $\TELFInv$ returns $x+1=u$. Since $\TELFInv$ is deterministic, $\ffk(u)=\ffk(u')$ for $u,u'\in[m]$ implies
	\[
		u=\TELFInv(\fk,\td,\ffk(u))=\TELFInv(\fk,\td,\ffk(u'))=u'.
	\]
\end{proofE}

\ifConferenceVersion\else
In \cref{fig:small-exponent-telf}, we recover the complete $n$-bit input from one decoded residue in $\ZN$, which is possible because $N>2^n$. When every decoded value lies in a set of at most $B$ elements, we need at least $n/\log_2B$ of them, by a count of the possible tuples of decoded values.

\fi
\begin{relocated}{packed-row-remark}
\ifConferenceVersion
\begin{remark}[Lower Bound on the Number of Decoded Coordinates]\label{rem:packed-row-counting}
\else
\begin{lemma}[Lower Bound on the Number of Decoded Coordinates]\label{rem:packed-row-counting}
\fi
	Let $f\colon\bits^n\to\calY$ be injective, let $B\ge2$ and $k\ge1$ be integers, and let $d_i\colon\calY\to\calV_i$ be maps with $\abs{\calV_i}\le B$ for $i\in[k]$ such that $x\mapsto(d_i(f(x)))_{i\in[k]}$ is injective on $\bits^n$. Then $k\ge n/\log_2B$.\ifConferenceVersion\ Indeed, the map $x\mapsto(d_i(f(x)))_{i\in[k]}$ embeds $\bits^n$ into the product of the sets $\calV_i$, so
	\[
		2^n\le\prod_{i=1}^k\abs{\calV_i}\le B^k,
	\]
	and hence $k\ge n/\log_2B$. In particular, $k=\Omega(n/\log n)$ when $B=\poly(n)$. In \cref{fig:small-exponent-telf}, we decode with $k=1$ and $\calV_1=\ZN$, which suffices for the complete $n$-bit input because $N>2^n$.
\end{remark}
\else
\end{lemma}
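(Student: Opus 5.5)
The proof is a counting argument: we embed $\bits^n$ into the product of the decoded ranges and compare cardinalities. Consider the composite map
\[
	D\colon\bits^n\to\calV_1\times\dots\times\calV_k,
	\qquad
	D(x)\cqq\bigl(d_i(f(x))\bigr)_{i\in[k]},
\]
which is injective by hypothesis. Injectivity of $f$ itself is not needed beyond this; it only motivates the setting, since the decoded tuple cannot be injective unless $f$ is.

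The first step is to compare sizes. Because $D$ is injective, its image has exactly $2^n$ elements and lies inside the product set. Therefore
\[
	2^n\le\prod_{i=1}^k\abs{\calV_i}\le B^k,
\]
where the last inequality uses $\abs{\calV_i}\le B$ for every $i\in[k]$.

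The second step is to take base-two logarithms, which gives $n\le k\log_2B$. Since $B\ge2$, we have $\log_2B\ge1>0$, so dividing preserves the inequality and yields $k\ge n/\log_2B$.

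The only care point is the division step, where the hypothesis $B\ge2$ guarantees a positive denominator; there is no real obstacle. As a closing remark I would record the instance that motivates the lemma. When $B=\poly(n)$, the bound reads $k=\Omega(n/\log n)$. In \cref{fig:small-exponent-telf}, by contrast, one decoded residue suffices: there $k=1$ and $\calV_1=\ZN$, which has enough room because $N>2^n$.
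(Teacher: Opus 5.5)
Your proof is correct and is the same counting argument the paper uses. The injective map $x\mapsto(d_i(f(x)))_{i\in[k]}$ embeds $\bits^n$ into $\prod_{i=1}^k\calV_i$, so $2^n\le B^k$, and taking base-two logarithms (with $\log_2B>0$ because $B\ge2$) gives $k\ge n/\log_2B$.
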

\begin{proof}
	The map $x\mapsto(d_i(f(x)))_{i\in[k]}$ is an injection of $\bits^n$ into the product of the sets $\calV_i$, so
	\[
		2^n\le\prod_{i=1}^k\abs{\calV_i}\le B^k,
	\]
	and we take logarithms to obtain $k\ge n/\log_2B$.
\end{proof}

By \cref{rem:packed-row-counting}, $k=\Omega(n/\log n)$ when $B=\poly(n)$, whereas in \cref{fig:small-exponent-telf}, we decode with $k=1$ and $\calV_1=\ZN$.
\fi

\end{relocated}%

In lossy generation with $r<m$, we use the predicate $0$, so by \cref{cor:eltdh-correctness} the encoding coordinate of $\ffk(u)$ is a function of the digest $\Phihk(u-1)$, which is the first coordinate of $\ffk(u)$. Hence, $\abs{\Image(\ffk)}\le\abs{\Image(\Phihk)}$, and for $r=m$, the domain has $m=r$ points.

\begin{theoremE}[Lossy Image Bound]\label{thm:telf-lossy-image}
	Let $m\ge2$ and $r\in[m]$. Every $\fk\in\Supp(\TELFGenLos(m,r))$ satisfies $\abs{\Image(\ffk)}\le r$.
\end{theoremE}
\ConferenceProofSketch{proof:telf-lossy-image}{
	For $r=m$, we have $\abs{\Image(\ffk)}\le m=r$. For $r<m$ and a trapdoor $\td$ sampled with $\ek$, \cref{cor:eltdh-correctness} at coefficient vector $(0)$ implies $\ffk(u)=(h,\ELTDHDec(\td,h))$ for $h=\Phihk(u-1)$, so $\abs{\Image(\ffk)}\le\abs{\Image(\Phihk)}\le r$ by \cref{thm:eltdh-lossy-image}. }{}{}
\begin{proofE}
	\label{proof:telf-lossy-image}
	If $r=m$, then $\abs{\Image(\ffk)}\le\abs{[m]}=r$. We suppose that $r<m$ and fix $\fk=(m,n,\secparam,\hk,\ek)\in\Supp(\TELFGenLos(m,r))$. By the definition of $\TELFGenLos$, there is a trapdoor $\td$ with $(\ek,\td)\in\Supp(\ELTDHGen(\hk,(0)))$, and we define $\JtdZero(h)\cqq(h,\ELTDHDec(\td,h))$ for $h\in\ZNsqx$. For every $u\in[m]$ and $(h,e)=\ffk(u)$, we have $h=\Phihk(u-1)$, and by \cref{cor:eltdh-correctness} at coefficient vector $(0)$, we have $e=\ELTDHDec(\td,h)$, so $\ffk(u)=\JtdZero(\Phihk(u-1))$. Since $\{u-1\mid u\in[m]\}=[0..m-1]$, we obtain
	\[
		\abs{\Image(\ffk)}
		=\abs{\JtdZero\left(\Phihk([0..m-1])\right)}
		\le\abs{\Image(\Phihk)}
		\le r,
	\]
	where the last inequality is \cref{thm:eltdh-lossy-image}.
\end{proofE}

Under the predicate $0$, the first coordinate of $\ffk(u)$ is $g^{A(u)}$ for a bounded subset sum $A(u)$ of the exponents of the hash bases, and since $A(u)$ is smaller than the order of $g$, we count the image of $\ffk$ exactly through the values of $A(u)$.

\begin{lemmaE}[Exact Lossy Image]\label{cor:telf-exact-lossy-image}
	Let $m\ge2$, $r\in[m-1]$, and $\fk=(m,n,\secparam,\hk,\ek)\in\Supp(\TELFGenLos(m,r))$, and write $\gamma\cqq\floor{(r-1)/\wlam}$. If $\hk=\hkbot$ or $\gamma=0$, let $s_{1,k}\cqq0$ for $k\in[0..\wlam-1]$. Otherwise, let $s_{1,k}\in[\gamma]$ be the exponents with $g_{1,k}=g^{s_{1,k}}$ for the generator $g$ and the bases $g_{1,k}$ in $\hk$. For $u\in[m]$, define
	\[
		A(u)\cqq\sum_{j=1}^ns_{1,j-1}x_j(u).
	\]
	Then $\abs{\Image(\ffk)}=\abs{\{A(u)\mid u\in[m]\}}$.
\end{lemmaE}

\begin{proofE}
	We identify the digest of every input with $g^{A(u)}$ and then use that $A(u)$ is smaller than the order of $g$.

	We fix a trapdoor $\td$ with $(\ek,\td)\in\Supp(\ELTDHGen(\hk,(0)))$. As in the proof of \cref{thm:telf-lossy-image}, every $u\in[m]$ satisfies $\ffk(u)=\JtdZero(\Phihk(u-1))$ for $\JtdZero(h)\cqq(h,\ELTDHDec(\td,h))$, and $\JtdZero$ is injective because its first coordinate is $h$. Hence,
	\[
		\abs{\Image(\ffk)}=\abs{\{\Phihk(u-1)\mid u\in[m]\}}.
	\]

	If $\hk=\hkbot$ or $\gamma=0$, then every base $g_{1,k}$ equals $1$, so $\Phihk$ is constant, every $A(u)$ is zero, and both sides of the claim equal one. Otherwise, setup sampled a \DamgardJurik group and the bounded bases. Since $r<m\le2^n\le2^\secparam$, the capped parameter $\min\{r,2^\secparam\}$ of \cref{fig:eltdh-dcr} equals $r$, so $\gamma\ge1$ is the interval bound of setup and $g_{1,k}=g^{s_{1,k}}$ with $s_{1,k}\in[\gamma]$. By \cref{lem:dj-structure}, the order $M=p'q'$ of $g$ satisfies $M>2^{2\ell(\secparam)-4}\ge2^{2\secparam}\ge2^{2n}>m$. Since $\gamma\le r-1<M$, each $s_{1,k}$ is the unique exponent in $[\gamma]$ with $g_{1,k}=g^{s_{1,k}}$. For $u\in[m]$, we have $u-1<2^n$, so the bit of $u-1$ at position $k$ is $x_{k+1}(u)$ for $k\in[0..n-1]$ and zero for $k\in[n..\wlam-1]$, and at the all-zero tape,
	\[
		\Phihk(u-1)=\prod_{j=1}^ng_{1,j-1}^{x_j(u)}=g^{A(u)}.
	\]
	Since $0\le A(u)\le n\gamma\le\wlam\gamma\le r-1<M$, the map $a\mapsto g^a$ is injective on $[0..r-1]$. Hence, $\Phihk(u-1)=\Phihk(u'-1)$ holds exactly when $A(u)=A(u')$, and the two sets $\{\Phihk(u-1)\mid u\in[m]\}$ and $\{A(u)\mid u\in[m]\}$ have the same cardinality.
\end{proofE}

\subsection{Mode Indistinguishability}\label{subsec:telf-mode-indistinguishability}

By \cref{thm:telf-injective-inversion,thm:telf-lossy-image}, each generator has the property required of its mode, and it remains to show that the two distributions of public descriptions are indistinguishable. The two generators differ in the predicate of the encoding key and in the image parameter of the hash key, and we change these one at a time.

\begin{theoremE}[Mode Indistinguishability]\label{thm:telf-mode-indistinguishability}
	Suppose that $\DJGen(\cdot)$ satisfies $\DCR$ against nonuniform $\PPT$ adversaries and tunable $\DSEDL$. Then the generators in \cref{fig:small-exponent-telf} satisfy the mode-indistinguishability condition of \cref{def:trapdoor-extremely-lossy-function}.
\end{theoremE}
\ConferenceProofSketch{proof:telf-mode-indistinguishability}{
	We pass through the public description with an encoding key for the zero predicate under a hash key at the full image parameter $\Rlam$. By \cref{thm:eltdh-function-privacy} at a uniform security threshold, we replace the identity predicate by zero, and by \cref{thm:eltdh-mode-indistinguishability} at residue arity one, we lower the image parameter from $\Rlam$ to $r$. We bound each of these two transitions by $1/(2\delta(n))$ for every sufficiently large $m$ and add the two bounds. }{}{}
\begin{proofE}
	\label{proof:telf-mode-indistinguishability}
	We pass from injective to lossy generation through one intermediate hybrid. In the first transition, we replace the identity predicate by the zero predicate at the full image parameter $\Rlam$, using function privacy, and in the second, we lower the image parameter from $\Rlam$ to $r$ with the zero predicate, using \ELTDH{} mode indistinguishability.

	We fix a polynomial $t$ and a positive polynomial $\delta$, and we choose constants $C>0$ and $D\in\NN$ such that $t(z)\le C(z+1)^D$ and $2\delta(z)\le C(z+1)^D$ for every $z\in\NN$. We also fix a polynomial $p$ such that, for every $\secparam$ and every hash key $\hk$ in the support of $\ELTDHSetupLos(1^\secparam,1^1,\cdot)$, a circuit of size at most $p(\secparam)$ samples $(\ek,\td)\sample\ELTDHGen(\hk,(0))$ and assembles $(m,n,\secparam,\hk,\ek)$ from $\hk$, $\ek$, and hardwired $m$ and $n$ with $\groupsec(n)=\secparam$. We set
	\[
		\that(\secparam)\cqq C(\secparam+1)^D+p(\secparam),
		\qquad
		\dhat(\secparam)\cqq C(\secparam+1)^D.
	\]
	Since $\groupsec(n)\ge n$, every $n\in\NN$ satisfies $t(n)+p(\groupsec(n))\le\that(\groupsec(n))$ and $\dhat(\groupsec(n))\ge2\delta(n)$.

	For $m\ge2$ and $r\in[m-1]$, with $n=\clog{m}$ and $\secparam=\groupsec(n)$, each of the three hybrids samples a hash key and an encoding key and outputs the public description $(m,n,\secparam,\hk,\ek)$.

	\begin{proofbox}{Hybrids for mode indistinguishability of the trapdoor \ELF{}, at $m\ge2$ and $r\in[m-1]$. The highlighted line of $\Hybrid_i$ is the line in which it differs from $\Hybrid_{i-1}$.}{fig:telf-mode-hybrids}
			\begin{minipage}[t]{\telfhybwidth}
				\vspace{0pt}
				{\footnotesize\underline{$\Hybrid_0$}\codelinesrightalign{$\TELFGenInj(m)$}\par}
				\smallskip
				\begin{codelines}
					\item $\hk\sample\ELTDHSetupLos(1^\secparam,1^1,\Rlam)$
					\item $(\ek,\td)\sample\ELTDHGen(\hk,(1))$
					\item \pcreturn $(m,n,\secparam,\hk,\ek)$
				\end{codelines}
			\end{minipage}\hfill%
			\begin{minipage}[t]{\telfhybwidth}
				\vspace{0pt}
				{\footnotesize\underline{$\Hybrid_1$}\codelinesrightalign{zero predicate}\par}
				\smallskip
				\begin{codelines}
					\item $\hk\sample\ELTDHSetupLos(1^\secparam,1^1,\Rlam)$
					\item \label{line:telf-hyb-predicate}\hybchange{$(\ek,\td)\sample\ELTDHGen(\hk,(0))$}
					\item \pcreturn $(m,n,\secparam,\hk,\ek)$
				\end{codelines}
			\end{minipage}\telfhybbreak
			\begin{minipage}[t]{\telfhybwidth}
				\vspace{0pt}
				{\footnotesize\underline{$\Hybrid_2$}\codelinesrightalign{$\TELFGenLos(m,r)$}\par}
				\smallskip
				\begin{codelines}
					\item \label{line:telf-hyb-bases}\hybchange{$\hk\sample\ELTDHSetupLos(1^\secparam,1^1,r)$}
					\item $(\ek,\td)\sample\ELTDHGen(\hk,(0))$
					\item \pcreturn $(m,n,\secparam,\hk,\ek)$
				\end{codelines}
			\end{minipage}
			\end{proofbox}

	By the definitions in \cref{fig:small-exponent-telf}, $\Hybrid_0$ in \cref{fig:telf-mode-hybrids} is distributed as the public description output by $\TELFGenInj(m)$, and $\Hybrid_2$ is distributed as $\TELFGenLos(m,r)$. For a circuit $\advD$ of size at most $t(n)$, we write $p_i$ for the probability that $\advD$ outputs $1$ on a sample of $\Hybrid_i$, so that
	\[
		\abs{p_0-p_2}\le\abs{p_0-p_1}+\abs{p_1-p_2}.
	\]

	At \cref{line:telf-hyb-predicate}, we replace the identity predicate by zero at the full image parameter. We take the selector in the function-privacy condition of \cref{def:eltdh} at residue arity $d\equiv1$ that outputs the coefficient vectors $(1)$ and $(0)$ and the empty state, and we recall that $\Rfull(\secparam,1)=\Rlam$. By \cref{thm:eltdh-function-privacy} and \cref{lem:uniform-quantitative-threshold} with size bound $\that$ and target polynomial $\dhat$, there is $\secparam_1\in\NN$ such that, for every $\secparam\ge\secparam_1$, every circuit of size at most $\that(\secparam)$ distinguishes the two function-privacy ensembles at $\secparam$ with advantage at most $1/\dhat(\secparam)$. Our reduction $\advB_1$ receives $(\hk,\var{st},\ek)$ from one of these ensembles and outputs $\advD(m,n,\secparam,\hk,\ek)$. Its input to $\advD$ is distributed as $\Hybrid_0$ when $\ek$ is generated for $(1)$ and as $\Hybrid_1$ when $\ek$ is generated for $(0)$, and its size is at most $t(n)+p(\secparam)\le\that(\secparam)$. Hence, $\abs{p_0-p_1}\le1/\dhat(\secparam)$ whenever $\secparam\ge\secparam_1$.

	At \cref{line:telf-hyb-bases}, we lower the image parameter from $\Rlam$ to $r$ with the zero predicate. We apply \cref{thm:eltdh-mode-indistinguishability} at residue arity $d\equiv1$ with size bound $\that$ and target polynomial $\dhat$, and we obtain a polynomial $\Qhash$ and $\secparam_2\in\NN$ such that, for every $\secparam\ge\secparam_2$ and all $r_0,r_1\in[\Qhash(\secparam)..\Rlam]$, every circuit of size at most $\that(\secparam)$ distinguishes $\ELTDHSetupLos(1^\secparam,1^1,r_0)$ from $\ELTDHSetupLos(1^\secparam,1^1,r_1)$ with advantage at most $1/\dhat(\secparam)$. Our reduction $\advB_2$ receives a hash key $\hk$, samples $(\ek,\td)\sample\ELTDHGen(\hk,(0))$, and outputs $\advD(m,n,\secparam,\hk,\ek)$. Its input to $\advD$ is distributed as $\Hybrid_1$ when $\hk$ is sampled at the image parameter $\Rlam$ and as $\Hybrid_2$ when $\hk$ is sampled at $r$, and its size is at most $\that(\secparam)$. Since $\groupsec$ is polynomially bounded, there is a polynomial $\Qtelf$ with $\Qtelf(n)\ge\Qhash(\groupsec(n))$ for every $n\in\NN$. For $r\ge\Qtelf(n)$, we have $r\ge\Qhash(\secparam)$ and $r<m\le2^n\le2^\secparam<\Rlam$, so $\abs{p_1-p_2}\le1/\dhat(\secparam)$ whenever $\secparam\ge\secparam_2$.

	We take $\Qtelf$ as the polynomial of the mode-indistinguishability condition and consider every $m\ge2$ with $n=\clog{m}\ge\max\{\secparam_1,\secparam_2\}$, which holds for every sufficiently large $m$ and implies $\secparam=\groupsec(n)\ge\max\{\secparam_1,\secparam_2\}$, and every $r\in[\Qtelf(n)..m]$. If $r=m$, then $\TELFGenLos(m,m)$ outputs the public description of a run of $\TELFGenInj(m)$, so the two distributions are identical. If $r<m$, we add the bounds of the two transitions and use $\dhat(\secparam)\ge2\delta(n)$ to obtain
	\[
		\abs{p_0-p_2}\le\frac2{\dhat(\secparam)}\le\frac1{\delta(n)}.
	\]
	Hence, every circuit of size at most $t(n)$ distinguishes the two distributions of public descriptions with advantage at most $1/\delta(n)$, which is the condition of \cref{def:trapdoor-extremely-lossy-function}.
\end{proofE}

We have proved the four conditions of \cref{def:trapdoor-extremely-lossy-function} for the construction in \cref{lem:telf-efficiency,thm:telf-injective-inversion,thm:telf-lossy-image,thm:telf-mode-indistinguishability}.

\begin{theoremE}[Trapdoor Extremely Lossy Function]\label{thm:small-exponent-telf}
	Suppose that $\DJGen(\cdot)$ satisfies tunable $\DSEDL$ and $\DCR$ against nonuniform $\PPT$ adversaries. Then the algorithms in \cref{fig:small-exponent-telf} form a trapdoor extremely lossy function, which has perfect inversion and satisfies the lossy image bound at every lossy public description in the support.
\end{theoremE}
\ConferenceProofSketch{proof:telf}{
	Compact descriptions hold by \cref{lem:telf-efficiency}, the injective mode holds with probability one by \cref{thm:telf-injective-inversion}, the lossy mode holds with probability one by \cref{thm:telf-lossy-image}, and mode indistinguishability holds by \cref{thm:telf-mode-indistinguishability}. }{}{}
\begin{proofE}
	\label{proof:telf}
	We verify the conditions of \cref{def:trapdoor-extremely-lossy-function} for the interval $[\NTelf(m)]$. Compact descriptions follow from \cref{lem:telf-efficiency} and $\log\NTelf(m)=6\ell(\groupsec(\clog{m}))=\poly(\log m)$. The injective mode holds for every $(\fk,\td)\in\Supp(\TELFGenInj(m))$ by \cref{thm:telf-injective-inversion}, and the lossy mode holds for every $\fk\in\Supp(\TELFGenLos(m,r))$ by \cref{thm:telf-lossy-image}. Mode indistinguishability is \cref{thm:telf-mode-indistinguishability}.
\end{proofE}

\subsection{Enumerating the Lossy Image}\label{subsec:telf-image-enumeration}

For an efficiently enumerable image in the sense of~\cite{C:Zhandry16}, we compute the lossy image $\Image(\ffk)$ exactly from $\fk$ and $r$ in time $\poly(r,\log m)$.\ifConferenceVersion\ Since every partial product of the public bases in the first coordinate is $g^a$ with $a\in[0..r-1]$, we keep at most $r$ distinct products in a prefix dynamic program over the bits of $m-1$.\fi

\ifConferenceVersion\else
\phantomsection\label{app:telf-enumeration-algorithm}
\fi
\begin{relocated}{telf-enumeration-algorithm}
Since every output of $\ffk$ is a product of public bases selected by the bits of $u-1$, we enumerate the image exactly from the public description, without the secret exponents. Working downwards from the most significant bit of $m-1$, we keep the products for the processed prefixes of $u-1$ that are equal to the corresponding prefix of $m-1$ or smaller than it.

We state the procedure over an arbitrary finite abelian group $\bbA$ written multiplicatively, since its correctness depends only on the group operation. For $z,C\in\bbA$, we write $z\odot C$ for their product, and for $S\subseteq\bbA$, we write $S\odot C\cqq\{z\odot C\mid z\in S\}$. For a finite set $S$ of elements with canonical representations, $\Deduplicate(S)$ is the list of the distinct elements of $S$ in increasing order of the integers that their canonical representations encode, so that this list is uniquely determined by $S$.

\begin{proofbox}{Exact enumeration of the image of a lossy public description of the trapdoor \ELF{}, over $\bbA=\ZNsqx\times\ZNsqx$ with coordinatewise multiplication. Here $b_j=x_j(m)$ is the bit of $m-1$ at position $j$, $E$ holds the product for the processed prefix of $u-1$ equal to the corresponding prefix of $m-1$, and $L$ holds the products for smaller processed prefixes.}{fig:telf-image-enumeration}
		{\footnotesize\underline{$\Enumerate(\fk,r)$}\par}
		\smallskip
		\begin{codelines}
			\item \pcparse $\fk=(m,n,\secparam,\hk,\ek)$
			\item \pcparse $\hk=((N,g),1,\wlam,\expl,g_0,(g_{1,k})_k)$
			\item \pcparse $\ek=(N,1,\wlam,\expl,K_0,(K_{1,k})_k)$
			\item \pcfor $j\in[n]$
			\codelinesplus
			\item $C_j\gets(g_{1,j-1},K_{1,j-1})$; $b_j\gets x_j(m)$
			\codelinesminus
			\item $E\gets\{(1,1)\}$; $L\gets\varnothing$
			\item \pcfor $j=n,n-1,\dots,1$
			\codelinesplus
			\item \pcif $b_j=0$
			\codelinesplus
			\item $E'\gets E$; $L'\gets\Deduplicate(L\cup(L\odot C_j))$
			\codelinesminus
			\item \pcelse
			\codelinesplus
			\item $E'\gets E\odot C_j$
			\item $L'\gets\Deduplicate(L\cup(L\odot C_j)\cup E)$
			\codelinesminus
			\item $(E,L)\gets(E',L')$
            \codelinesplus
			\item \pcif $\abs E+\abs L>r+1$ \pcreturn $\varnothing$
			\codelinesminus
			\item \pcreturn $\Deduplicate\{(h,\DDLog_N(c))\mid(h,c)\in E\cup L\}$
		\end{codelines}
	\end{proofbox}

For a lossy public description, every product in $E\cup L$ is determined by a partial sum of the bounded exponents of the hash bases, so that $\abs E+\abs L\le r+1$ at every position. On every input, one iteration of \cref{fig:telf-image-enumeration} forms at most $2(r+1)$ products, because we return $\varnothing$ as soon as $\abs E+\abs L>r+1$.

\end{relocated}%

\begin{theoremE}[Strong Enumeration for the Trapdoor \ELF{}]\label{thm:telf-strong-enumeration}
	There is a deterministic algorithm $\Enumerate$ such that, for every $m\ge2$, $r\in[m]$, and $\fk\in\Supp(\TELFGenLos(m,r))$, $\Enumerate(\fk,r)$ returns $\Image(\ffk)$ in time $\poly(r,\log m)$. In particular, the construction in \cref{fig:small-exponent-telf} has a strongly efficiently enumerable image\ifConferenceVersion\else{} in the sense of \cref{def:efficiently-enumerable-telf}\fi.
\end{theoremE}
\ConferenceProofSketch{proof:telf-strong-enumeration}{
	In a prefix dynamic program over the bits of $m-1$, we keep the product of the pairs $(g_{1,j-1},K_{1,j-1})$ of public bases selected by the processed prefix of $m-1$, and a deduplicated set of the products for all smaller processed prefixes. After the last bit, these are the products for the inputs $u\in[m]$, and we obtain $\Image(\ffk)$ by applying $\DDLog_N$ to their second coordinates, by the definition of $\ELTDHEnc$. For a lossy public description, every partial product is $(g^a,g^{\sigma a})$ for a partial sum $a$ of the bounded exponents of \cref{cor:telf-exact-lossy-image}, so at most $r$ distinct products remain at each position. }{}{}
\begin{proofE}
	\label{proof:telf-strong-enumeration}
	We prove an invariant of the state sets of the algorithm $\Enumerate$ in \cref{fig:telf-image-enumeration} by induction over the processed positions, and we bound their sizes through the bounded subset sums of \cref{cor:telf-exact-lossy-image}.

	We write $\bbA\cqq\ZNsqx\times\ZNsqx$ and take $C_j\in\bbA$ and $b_j=x_j(m)$ as in \cref{fig:telf-image-enumeration}. For $u\in[2^n]$ and $j\in[n+1]$, we write $\pi_j(u)\cqq\floor{(u-1)/2^{j-1}}$ for the prefix of $u-1$ formed by the bits at positions $n,n-1,\dots,j$, and
	\[
		P_j(u)\cqq\prod_{i=j}^nC_i^{x_i(u)}\in\bbA
	\]
	for the corresponding product, so that $\pi_{n+1}(u)=0$ and $P_{n+1}(u)=(1,1)$. We show that after the iteration for $j$, and before the first iteration for $j=n+1$, the state sets satisfy
	\[
		E=\{P_j(m)\},
		\qquad
		L=\{P_j(u)\mid u\in[2^n],\ \pi_j(u)<\pi_j(m)\}.
	\]
	For $j=n+1$, we have $E=\{(1,1)\}$ and $L=\varnothing$. We suppose that the invariant holds after the iteration for $j+1$. Every $u\in[2^n]$ satisfies $\pi_j(u)=2\pi_{j+1}(u)+x_j(u)$ and $P_j(u)=P_{j+1}(u)\odot C_j^{x_j(u)}$, and $P_{j+1}(u)$ depends only on $\pi_{j+1}(u)$. Hence, $\pi_j(u)=\pi_j(m)$ holds exactly when $\pi_{j+1}(u)=\pi_{j+1}(m)$ and $x_j(u)=b_j$, and $\pi_j(u)<\pi_j(m)$ holds exactly when $\pi_{j+1}(u)<\pi_{j+1}(m)$, or when $\pi_{j+1}(u)=\pi_{j+1}(m)$ and $x_j(u)<b_j$. Every value of $\pi_{j+1}$ occurs with both values of the bit at position $j$, so the prefixes of the first kind contribute the products $L\cup(L\odot C_j)$. If $b_j=0$, the prefixes of the second kind do not occur, and the new state is $E'=E$ and $L'=L\cup(L\odot C_j)$. If $b_j=1$, the equal prefix has the product $P_{j+1}(m)\odot C_j$, and the only prefix of the second kind has the product $P_{j+1}(m)$, so the new state is $E'=E\odot C_j$ and $L'=L\cup(L\odot C_j)\cup E$. These are the two update rules of \cref{fig:telf-image-enumeration}, and $\Deduplicate$ preserves the set of elements, so the invariant holds after the iteration for $j$.

	After the iteration for $j=1$, we have $\pi_1(u)=u-1$, so $E\cup L=\{P_1(u)\mid u\in[m]\}$. For $u\in[m]$, the bits of $u-1$ at the positions $k\in[n..\wlam-1]$ are zero, so by the definitions of $\ELTDHHash$ and $\ELTDHEnc$ in \cref{fig:eltdh-dcr} at the all-zero tape, $P_1(u)=(h,c)$ for the digest $h=\Phihk(u-1)$ and the product $c$ formed in $\ELTDHEnc(\ek,(u-1);0)$, which returns $\DDLog_N(c)$. Hence, if the guard condition $\abs E+\abs L>r+1$ is false at every position, the last line of \cref{fig:telf-image-enumeration} returns the list of the distinct elements of $\{\ffk(u)\mid u\in[m]\}=\Image(\ffk)$.

	It remains to show that $\abs E+\abs L\le r+1$ after every iteration. If $r=m$, then $\abs E=1$ and $\abs L\le\pi_j(m)\le m-1$. If $r<m$, we write $\gamma\cqq\floor{(r-1)/\wlam}$ and take the exponents $s_{1,k}$ of \cref{cor:telf-exact-lossy-image}. If $\hk=\hkbot$ or $\gamma=0$, every base $g_{1,k}$ equals $1$, and the entries of the encoding key for the zero predicate are $K_{1,k}=g_{1,k}^\sigma=1$, so every product equals $(1,1)$ and $\abs E+\abs L\le2\le r+1$. Otherwise, $g_{1,k}=g^{s_{1,k}}$ and $K_{1,k}=g^{\sigma s_{1,k}}$, so $P_j(u)=(g^a,g^{\sigma a})$ for the partial sum
	\[
		a\cqq\sum_{i=j}^ns_{1,i-1}x_i(u)\in[0..n\gamma].
	\]
	Hence, $E\cup L$ has at most $n\gamma+1\le\wlam\gamma+1\le r$ elements, and $\abs E+\abs L\le\abs{E\cup L}+1\le r+1$.

	Therefore, for every supported public description, the guard condition is false at every position, and $\Enumerate(\fk,r)$ returns $\Image(\ffk)$. On every input, $\abs E+\abs L\le r+1$ holds before each iteration, so each iteration forms at most $2(r+1)$ products in $\bbA$ and deduplicates at most $2(r+1)$ elements. Over the $n$ iterations, the algorithm uses $\bigO(nr)$ multiplications in $\bbA$, sorts lists of at most $2(r+1)$ canonical representations, and ends with at most $r+1$ evaluations of $\DDLog_N$, each operation in time $\poly(\secparam)=\poly(\log m)$. Hence, it runs in time $\poly(r,\log m)$.
\end{proofE}

\ifConferenceVersion
The list returned by $\Enumerate(\fk,r)$ consists of the distinct elements of $\Image(\ffk)$. Hence, by choosing a uniform element of this list, we sample $\Image(\ffk)$ uniformly in time $\poly(r,\log m)$, as considered in~\cite{C:Zhandry16}.
\else
Uniform sampling from the lossy image, as considered in~\cite{C:Zhandry16}, follows from exact enumeration, since a uniform element of a list without repetitions is uniform on the set of its elements for every distribution of preimage sizes.

\begin{corollaryE}[Uniform Image Sampling]\label{cor:telf-image-sampling}
	Let $m\ge2$ and $r\in[m]$. For every $\fk\in\Supp(\TELFGenLos(m,r))$, we can sample uniformly from $\Image(\ffk)$ in time $\poly(r,\log m)$.
\end{corollaryE}
\begin{proofE}
	We run the algorithm of \cref{thm:telf-strong-enumeration} on $(\fk,r)$ and return a uniformly random element of the returned list. Since this list consists of the distinct elements of $\Image(\ffk)$, every element of $\Image(\ffk)$ is returned with probability $1/\abs{\Image(\ffk)}$, and the running time is $\poly(r,\log m)$.
\end{proofE}

\cite{EC:ABGMM26} constructs anamorphic-resistant encryption, in which every anamorphic triplet that is correct on average and semi-adaptively secure has an anamorphic message space of polynomial size, in two ways. The first construction combines a perfectly correct public-key encryption scheme, a universal hash family, fully unique non-interactive zero-knowledge arguments, which are known from indistinguishability obfuscation, and an \ELF{} with inputs of length $3\secparam$ and outputs of length polynomial in $\secparam$, and it uses the \ELF{} only through its two modes and mode indistinguishability, where injective generation must always return an injective function and the lossy image must be smaller than the image bound. The second construction avoids indistinguishability obfuscation. It combines public-key encryption and a lossy trapdoor function with a variant of the trapdoor \ELF{} of~\cite{EC:Zhandry19a}, and \cite{EC:ABGMM26} proves its anamorphic resistance by a direct reduction to exponential matrix $\DDH$ for that variant. Without its trapdoor, our trapdoor \ELF{} is an \ELF{} in the sense of the first construction.

\begin{corollaryE}[Always-Injective \ELF{}]\label{cor:telf-always-injective-elf}
	Suppose that $\DJGen(\cdot)$ satisfies $\DCR$ against nonuniform $\PPT$ adversaries and tunable $\DSEDL$. For $m\ge2$ and $r\in[2..m]$, let $\ELFGenAI(m,r)$ return the public description of $\TELFGenInj(m)$ if $r=m$, and the public description of $\TELFGenLos(m,r-1)$ if $r<m$, and let $\TELFEval$ evaluate it. Then the following hold.
	\begin{itemize}
		\item \textbf{Injective Mode.} Every output of $\ELFGenAI(m,m)$ describes an injective function.
		\item \textbf{Lossy Mode.} For every $r\in[2..m-1]$, every output $\fk$ of $\ELFGenAI(m,r)$ satisfies $\abs{\Image(\ffk)}<r$.
		\item \textbf{Mode Indistinguishability.} With evaluation by $\TELFEval$, the generator $\ELFGenAI$ satisfies the mode-indistinguishability condition of \cref{def:extremely-lossy-function}.
	\end{itemize}
\end{corollaryE}
\begin{proofE}
	The injective mode holds by \cref{thm:telf-injective-inversion}, and the lossy mode holds by \cref{thm:telf-lossy-image} at the image bound $r-1\in[m]$. For mode indistinguishability, we fix $t$ and $\delta$ and take the polynomial $\Qtelf$ that \cref{thm:telf-mode-indistinguishability} provides for them, and we set $Q\cqq\Qtelf+1$. For $r=m$, the two distributions are equal. For $r\in[Q(n_m)..m-1]$, the output of $\ELFGenAI(m,r)$ is that of $\TELFGenLos(m,r-1)$ with $r-1\ge\Qtelf(n_m)$, so \cref{thm:telf-mode-indistinguishability} bounds the advantage of every circuit of size $t(n_m)$ by $1/\delta(n_m)$ for every sufficiently large $m$.
\end{proofE}

By \cref{cor:telf-always-injective-elf} at $m=2^{3\secparam}$, where $\log\Nout(m)$ is polynomial in $\secparam$, the first construction of~\cite{EC:ABGMM26} is anamorphic-resistant and \INDCPA-secure with our \ELF{} in place of an \ELF{} from exponential $\DDH$, under tunable $\DSEDL$ and $\DCR$ together with the other building blocks of that construction. The same holds for the result of~\cite{EC:ABGMM26} that the first construction, applied to a public-key encryption scheme without anamorphic bits in the adaptive setting, returns such a scheme.

\begin{remark}[Partial Trapdoors]\label{rem:partial-trapdoors}
	The direct reduction for the second construction publishes one layer as $[\mat A]$ with $\mat A=\mat B\mat C$ for a uniform matrix $\mat B$, which serves as a partial trapdoor: with $\mat B$, the reduction decides membership in $[\Image\mat B]$, a superset of the image of that layer in both modes. Soundness of this test in the lossy mode uses the fact that the group of the layer has polynomial order, so every element that an adversary computes from $[\mat B\mat C]$ with $\mat C$ of rank one lies in a set of polynomial size. In our construction, the lossy image is small because the exponents are small in $\GN$, whose order is exponential. From the bases $g_{1,k}$ and the keys $K_{1,k}$, every party computes $h=\prod_kg_{1,k}^{y_k}$ and its consistent share $\DDLog_N(\prod_kK_{1,k}^{y_k})$ for arbitrary integers $y_k$. A public test that accepts these values for $y\in\bits^w$ and rejects them for large $y_k$ separates $g^z$ for $z$ in a short interval from $g^z$ for a large $z$, which tunable $\DSEDL$ assumes hard for observers of bounded size, and the trapdoor $\sigma$, which rejects them in the injective mode, determines the encoded predicate. Our proof of mode indistinguishability changes the predicate under $\DCR$, where $\sigma$ is unknown to the reduction, and the bases under tunable $\DSEDL$, where the size of the reduction is bounded independently of $\gamma$, so the reduction for the second construction does not carry over, and whether our construction admits a partial trapdoor is open.
\end{remark}
\fi

\subsection{Linear-Size Plain \texorpdfstring{\ELF{}}{ELF}}\label{subsec:plain-elf}

\ifConferenceVersion
With the same small-exponent subset products, we obtain a linear-size \ELF{} over every cyclic group sampler satisfying tunable $\DSEDL$, and its mode indistinguishability depends only on tunable $\DSEDL$. For $m\ge2$, we write $n\cqq\clog{m}$ and $x(u)=(x_1(u),\ldots,x_n(u))$ for the binary expansion of $u-1$. The public description consists of bases $g_1,\ldots,g_n$, and evaluation is
\[
	f(u)=\prod_{j=1}^ng_j^{x_j(u)}.
\]
In injective generation, we sample the bases uniformly. In the lossy mode, we set $\gamma\cqq\floor{(r-1)/n}$, and every base is the identity when $\gamma=0$, while for $\gamma\ge1$ we sample $s_j\sample[\gamma]$ and set $g_j=g^{s_j}$. Hence, the lossy image is indexed by a subset sum in $[0..n\gamma]$, and injectivity follows by a union bound over the at most $3^n-1$ nonzero relations in $\{-1,0,1\}^n$. With $\secparam=\ceil{cn}$ for a rational $c>\log_23$, the construction is an \ELF{} under tunable $\DSEDL$, with $n$ group elements in the public description, one group element in the output, and $\bigO(n)$ group operations per evaluation. For $\CGen=\DJGen$, whose group order exceeds $2^{2\secparam}$, every rational $c>(\log_23)/2$ suffices.
\else
\label{app:plain-elf}
\fi
\begin{relocated}{plain-elf-full}
With the same small-exponent subset products and without a trapdoor, we obtain an \ELF{} whose public description consists of $n$ group elements in addition to the group parameters. Its mode indistinguishability depends only on tunable $\DSEDL$, whereas in the trapdoor construction we use $\DCR$ to hide the predicate in the encoding key, and its injective and lossy modes hold unconditionally.

In the construction and its proofs, we use only uniform sampling in the group, exponentiation of the generator, and a lower bound on the group order, so we state the construction over an arbitrary cyclic group sampler $\CGen$ in the sense of \cref{def:group-sampler}. We fix such a sampler and a positive rational constant $c$, and for $m\ge2$, we set $n\cqq\clog{m}$ and, in this subsection, $\secparam\cqq\ceil{cn}$ in place of $\groupsec(n)$. The first two operations are part of \cref{def:group-sampler}, and the lower bound on the order determines the admissible range of $c$, because we use it to show that, except with negligible probability, the uniform exponents of the injective bases satisfy only the trivial relation with coefficients in $\{-1,0,1\}$. We count group elements, and each of them has a representation of at most $\grplen(\secparam)$ bits, so the public description is linear in $n$ as a count of elements and superlinear in bits.

To identify outputs with elements of an interval as in \cref{def:extremely-lossy-function}, we encode a group element by the length of its canonical representation, written in exactly $\ceil{\log_2(\grplen(\secparam)+1)}$ bits, followed by the representation padded with zeros to length $\grplen(\secparam)$. This encoding is injective, has the fixed length $L(\secparam)\cqq\grplen(\secparam)+\ceil{\log_2(\grplen(\secparam)+1)}$, and is computable and invertible in polynomial time. We interpret the resulting $L(\secparam)$-bit string as its nonnegative binary value plus one, which is an element of $[2^{L(\secparam)}]$, and we take $\Nout(m)\cqq2^{L(\ceil{c\clog{m}})}$, so that $\log \Nout(m)=\poly(\log m)$.

We replace every failure of $\CGen$ or of the base sampler by the fixed description $\fbot\cqq\left(1,m,n,\secparam,\bot,(1)_{j\in[n]}\right)$. Since its first entry is $1$, $\ELFEval(\fbot,u)=1\in[\Nout(m)]$ for every $u\in[m]$, and $\fbot$ can lie in the support of $\ELFGen(m,r)$. Both generators output $\fbot$ when $\CGen$ fails, and $\ELFGen(m,m)$ also outputs $\fbot$ when one of its $n$ base samples fails. The function $\ELFEval(\fbot,\cdot)$ satisfies every lossy image bound, and since $\fbot$ is output with negligible probability, its probability is a negligible term of the injectivity error, which \cref{def:extremely-lossy-function} requires to be negligible.

\begin{proofbox}{Linear-size \ELF{} over a cyclic group sampler $\CGen$, for $m\ge2$ and $r\in[m]$, with $n=\clog{m}$ and $\secparam=\ceil{cn}$. The first entry $\chi$ of a description equals $1$ exactly for $\fbot$.}{fig:plain-elf}
		\begin{minipage}[t]{0.54\linewidth}
			\vspace{0pt}
			{\footnotesize\underline{$\ELFGen(m,r)$}\par}
			\smallskip
			\begin{codelines}
				\item \pcassert $m\ge2\land r\in[m]$
				\item $n\gets\clog{m}$; $\secparam\gets\ceil{cn}$
				\item $\pp\sample\CGen(1^\secparam)$
				\item \pcif $\pp=\bot$ \pcreturn $f\gets \fbot$
				\item $g\gets g(\pp)$
				\item \pcif $r=m$
				\codelinesplus
				\item \pcfor $j\in[n]$
				\codelinesplus
					\item \label{line:plain-elf-uniform-base}$g_j\sample\GG(\pp)$
					\codelinesminus
					\item \pcif $\exists j\colon g_j=\bot$ \pcreturn $f\gets \fbot$
					\codelinesminus
				\item \pcelse
				\codelinesplus
				\item $\gamma\gets\floor{(r-1)/n}$
				\item \pcfor $j\in[n]$
				\codelinesplus
				\item \pcif $\gamma=0$\quad $g_j\gets1$
				\item \pcelse
				\codelinesplus
				\item \label{line:plain-elf-bounded-base}$s_j\sample[\gamma]$
				\item $g_j\gets g^{s_j}$
				\codelinesminus
				\codelinesminus
				\codelinesminus
				\item $f\gets(0,m,n,\secparam,\pp,(g_j)_{j\in[n]})$
				\item \pcreturn $f$
			\end{codelines}
		\end{minipage}\hfill%
		\begin{minipage}[t]{0.41\linewidth}
			\vspace{0pt}
			{\footnotesize\underline{$\ELFEval(f,u)$}\par}
			\smallskip
			\begin{codelines}
				\item \pcparse $f=(\chi,m,n,\secparam,\pp,(g_j)_{j\in[n]})$
				\item \pcassert $u\in[m]$
				\item \pcif $\chi=1$ \pcreturn $1$
				\item \pcreturn $\prod_{j=1}^ng_j^{x_j(u)}$
			\end{codelines}
		\end{minipage}
	\end{proofbox}

For mode indistinguishability, we identify each generator with one distribution of \cref{lem:tunable-scalar-to-vec-dsedl}, since a public description differs from a vector of bounded or uniform powers only in deterministic entries.

\begin{theoremE}[Generators as Vector \texorpdfstring{\DSEDL{}}{DSEDL} Samples]\label{thm:plain-mode-characterization}
	Let $m\ge2$ and $r\in[m-1]$, write $n=\clog{m}$, $\secparam=\ceil{cn}$, and $\gamma=\floor{(r-1)/n}$, and suppose that $1\le\gamma<\ord(\pp)$ for every $\pp\in\Supp(\CGen(1^\secparam))$. Let $\DvDSEDL_{0,\secparam,\gamma}$ and $\DvDSEDL_{1,\secparam,\gamma}$ be the distributions of \cref{lem:tunable-scalar-to-vec-dsedl} for the constant coordinate count $n$, and define the deterministic polynomial-time map $\psi$ by $\psi(\bot)\cqq\fbot$ and
	\[
		\psi\left(1^\secparam,\pp,\gamma,(G_j)_{j\in[n]}\right)\cqq\left(0,m,n,\secparam,\pp,(G_j)_{j\in[n]}\right).
	\]
	Then
	\[
		\begin{aligned}
			\left\{\begin{array}{@{}c|l@{}}f & f\sample\ELFGen(m,m)\end{array}\right\}
			&=\left\{\begin{array}{@{}c|l@{}}\psi(X) & X\sample\DvDSEDL_{1,\secparam,\gamma}\end{array}\right\},\\
			\left\{\begin{array}{@{}c|l@{}}f & f\sample\ELFGen(m,r)\end{array}\right\}
			&=\left\{\begin{array}{@{}c|l@{}}\psi(X) & X\sample\DvDSEDL_{0,\secparam,\gamma}\end{array}\right\},
		\end{aligned}
	\]
	and the polynomial-time map $(0,m,n,\secparam,\pp,(G_j)_{j\in[n]})\mapsto(1^\secparam,\pp,\gamma,(G_j)_{j\in[n]})$ inverts $\psi$ on the supports of both vector distributions without $\bot$.
\end{theoremE}

\begin{proofE}
	We compare the samplings of the generators and of the vector distributions step by step.

	Both generators and both vector distributions first sample $\pp\sample\CGen(1^\secparam)$, and on $\pp=\bot$ the generators output $\fbot=\psi(\bot)$, while the vector distributions output $\bot$. For $\pp\ne\bot$, $\ELFGen(m,m)$ samples $g_1,\dots,g_n\sample\GG(\pp)$ and outputs $\fbot$ if one of these samples fails and $(0,m,n,\secparam,\pp,(g_j)_{j\in[n]})$ otherwise, and $\DvDSEDL_{1,\secparam,\gamma}$ samples the same $n$ elements and outputs $\bot$ if one of them fails and $(1^\secparam,\pp,\gamma,(g_j)_{j\in[n]})$ otherwise, which implies the first identity. Since $r<m$ and $\floor{(r-1)/n}=\gamma\ge1$, $\ELFGen(m,r)$ samples $(s_j)_{j\in[n]}\sample[\gamma]^n$ and outputs $(0,m,n,\secparam,\pp,(g^{s_j})_{j\in[n]})$, and $\DvDSEDL_{0,\secparam,\gamma}$ samples the same exponents and outputs $(1^\secparam,\pp,\gamma,(g^{s_j})_{j\in[n]})$, which implies the second identity. Outside $\bot$, the map $\psi$ replaces the entries $1^\secparam$ and $\gamma$ by $(0,m,n)$, which depend only on $m$, and keeps $\secparam$, so the map in the statement inverts $\psi$.
\end{proofE}

From \cref{thm:plain-mode-characterization,lem:tunable-scalar-to-vec-dsedl}, a union bound over relations among uniform exponents, and a count of bounded subset sums, we obtain an \ELF{} for every rational $c>\log_23$, where we need $c>\log_23$ to compare the $3^n-1$ nonzero relations with the group order $\ord(\pp)\ge2^{\ceil{cn}}$.

\begin{theoremE}[Linear-Size \ELF{}]\label{thm:plain-elf}
	Let $c>\log_23$, and suppose that $\CGen(\cdot)$ satisfies tunable $\DSEDL$. Then the construction in \cref{fig:plain-elf} is an extremely lossy function. Excluding the group parameters $\pp$, its public description contains $n$ group elements, and evaluation returns one group element using $\bigO(n)$ group operations.
\end{theoremE}
\begin{proofE}
	We rule out nonzero relations with coefficients in $\{-1,0,1\}$ among the uniform exponents with a union bound, we count bounded subset sums for the lossy mode, and for mode indistinguishability we combine \cref{thm:plain-mode-characterization} with \cref{lem:tunable-scalar-to-vec-dsedl}.

	\subparhead{Efficiency and Output Encoding} The parameters satisfy $n=\Theta(\log m)$ and $\secparam=\Theta(n)$. In generation, we sample $n$ group elements or compute $g^{s_j}$ for $n$ exponents $s_j\le\gamma<m$, and in evaluation, we multiply the bases $g_j$ with $x_j(u)=1$, using at most $n-1$ group multiplications. By \cref{def:group-sampler}, these operations take time $\poly(\secparam)=\poly(\log m)$, and every group element has a representation of at most $\grplen(\secparam)$ bits, so under the padded encoding, the output lies in $[2^{L(\secparam)}]=[\Nout(m)]$ with $\log \Nout(m)=L(\secparam)=\poly(\log m)$.

	\subparhead{Injective Mode} We fix $\pp\in\Supp(\CGen(1^\secparam))$ and write $M\cqq\ord(\pp)$, so that $g=g(\pp)$ generates the cyclic group $\GG(\pp)$ of order $M$. Conditioned on the event that all $n$ base samples succeed, the bases are independent and uniform in $\GG(\pp)$ by \cref{def:group-sampler}, so $g_j=g^{a_j}$ for independent uniform $a_j\in\ZZ/M\ZZ$. We write $\Einj$ for the event that two distinct inputs $u,u'\in[m]$ have the same output. From such a collision, we obtain distinct $x(u),x(u')\in\bits^n$ and a nonzero vector $v\cqq x(u)-x(u')\in\{-1,0,1\}^n$ with
	\[
		\sum_{j=1}^nv_ja_j=0
	\]
	in $\ZZ/M\ZZ$. We fix a nonzero $v\in\{-1,0,1\}^n$ and choose $\jstar\in[n]$ with $v_{\jstar}\in\{-1,1\}$. Since $v_{\jstar}$ is a unit modulo $M$, after we condition on $(a_j)_{j\neq\jstar}$, exactly one value of $a_{\jstar}$ satisfies this equation, namely
	\[
		a_{\jstar}=-v_{\jstar}^{-1}\sum_{j\neq\jstar}v_ja_j.
	\]
	Since $a_{\jstar}$ is uniform in $\ZZ/M\ZZ$ and independent of $(a_j)_{j\neq\jstar}$, the equation holds with probability $1/M$. We take a union bound over the $3^n-1$ nonzero vectors $v$, use $M\ge2^\secparam\ge2^{cn}$ from \cref{def:group-sampler}, and obtain
	\[
		\begin{aligned}
			\Prob\left[\begin{array}{@{}c|l@{}}\Einj & (a_j)_{j\in[n]}\sample(\ZZ/M\ZZ)^n\end{array}\right]
			&\le\sum_{v\in\{-1,0,1\}^n\setminus\{0\}}\frac1M=\frac{3^n-1}{M}\\
			&<\frac{3^n}{2^{cn}}=2^{-(c-\log_23)n}.
		\end{aligned}
	\]
	This bound holds for every $\pp$, and it is negligible in $n$ because $c>\log_23$. Generation outputs $\fbot$ with probability at most
	\[
		\Prob\left[\begin{array}{@{}c|l@{}}\pp=\bot & \pp\sample\CGen(1^\secparam)\end{array}\right]+n\nuG(\secparam),
	\]
	which is negligible in $n$ because $cn\le\secparam\le cn+1$. Hence, $f$ is injective except with probability at most the sum of these two negligible terms.

	\subparhead{Lossy Mode} If $r=m$, the domain has $r$ points, and $\abs{\Image(f)}\le r$. We suppose that $r<m$. The fallback description $\fbot$ describes a constant function, and on every other description in the support, $\gamma=\floor{(r-1)/n}$. If $\gamma=0$, every base is $1$, and the function is constant. If $\gamma\ge1$, every output equals $g^{A(u)}$ for
	\[
		A(u)\cqq\sum_{j=1}^ns_jx_j(u)\in[0..n\gamma].
	\]
	In all cases, $\abs{\Image(f)}\le n\gamma+1=n\floor{(r-1)/n}+1\le r$.

	\subparhead{Mode Indistinguishability} We fix a polynomial $t$ and a positive polynomial $\delta$, and we choose constants $C>0$ and $D\in\NN$ such that $t(z)\le C(z+1)^D$ and $2\delta(z)\le C(z+1)^D$ for every $z\in\NN$. We write $\kappa(\secparam)\cqq\ceil{\secparam/c}$, and since $\secparam=\ceil{cn}\ge cn$, we have $n\le\secparam/c\le\kappa(\secparam)$. We fix a polynomial $p$ such that, for every $\secparam$, a circuit of size at most $p(\secparam)$ maps a challenge of \cref{lem:tunable-scalar-to-vec-dsedl} with $\kappa(\secparam)$ coordinates to the image under $\psi$ of its first $n$ coordinates, for hardwired $m$ and $n$, and we set
	\[
		\that(\secparam)\cqq C(\secparam/c+1)^D+p(\secparam),
		\qquad
		\dhat(\secparam)\cqq C(\secparam/c+1)^D.
	\]
	Both are polynomials because $c$ is a positive rational and $D\in\NN$, and from $n\le\secparam/c$, we obtain $t(n)+p(\secparam)\le\that(\secparam)$ and $\dhat(\secparam)\ge2\delta(n)$.

	We apply \cref{lem:tunable-scalar-to-vec-dsedl} with coordinate count $\kappa$, size bound $\that$, and target polynomial $\dhat$, and we obtain a polynomial $Q$ and $\secparam_0\in\NN$ such that, for every $\secparam\ge\secparam_0$ and every integer $\gamma$ with $Q(\secparam)\le\gamma<\ord(\pp)$ for all $\pp\in\Supp(\CGen(1^\secparam))$, every circuit of size at most $\that(\secparam)$ distinguishes $\DvDSEDL_{0,\secparam,\gamma}$ from $\DvDSEDL_{1,\secparam,\gamma}$ with $\kappa(\secparam)$ coordinates with advantage at most $1/\dhat(\secparam)$. We choose an integer-valued polynomial $\Qbar$ with $\Qbar(\secparam)\ge\max\{1,\ceil{Q(\secparam)}\}$ for every $\secparam$, and an integer-valued polynomial $\Qplain$ with $\Qplain(n)\ge n\Qbar(\ceil{cn})+1$ for every $n\in\NN$. Since $\kappa$ is polynomially bounded and $\nuG$ is negligible, $\epsunused(\secparam)\cqq\kappa(\secparam)\nuG(\secparam)$ is negligible, so there is $\secparam_1\in\NN$ with $\epsunused(\secparam)\le1/\dhat(\secparam)$ for every $\secparam\ge\secparam_1$.

	We fix $m\ge2$ with $\secparam=\ceil{cn}\ge\max\{\secparam_0,\secparam_1\}$, which holds for every sufficiently large $m$, and $r\in[\Qplain(n)..m]$ and a circuit $\advD$ of size at most $t(n)$, and we write $p_0$ and $p_1$ for the probabilities that $\advD$ outputs $1$ on a sample of $\ELFGen(m,m)$ and of $\ELFGen(m,r)$, respectively. If $r=m$, then $p_0=p_1$. We suppose that $r<m$ and set $\gamma\cqq\floor{(r-1)/n}$. From $r\ge\Qplain(n)\ge n\Qbar(\secparam)+1$, we obtain $\gamma\ge\Qbar(\secparam)\ge\max\{1,Q(\secparam)\}$. Since $c>\log_23>1$, \cref{def:group-sampler} implies $\gamma<r<m\le2^n<2^{cn}\le2^\secparam\le\ord(\pp)$ for every $\pp\in\Supp(\CGen(1^\secparam))$. Hence, the hypotheses of \cref{thm:plain-mode-characterization} hold.

	\begin{proofbox}{Hybrids for mode indistinguishability of the plain \ELF{}, at $r\in[m-1]$ with $\gamma=\floor{(r-1)/n}\ge1$. The highlighted line of $\Hybrid_1$ is the line in which it differs from $\Hybrid_0$.}{fig:plain-elf-mode-hybrids}
			\begin{minipage}[t]{0.47\linewidth}
				\vspace{0pt}
				{\footnotesize\underline{$\Hybrid_0$}\codelinesrightalign{$\ELFGen(m,m)$}\par}
				\smallskip
				\begin{codelines}
					\item $\pp\sample\CGen(1^\secparam)$
					\item \pcif $\pp=\bot$ \pcreturn $\fbot$
					\item $g\gets g(\pp)$
					\item $(g_j)_{j\in[n]}\sample\GG(\pp)^n$
					\item \pcif $\exists j\colon g_j=\bot$ \pcreturn $\fbot$
					\item \pcreturn $(0,m,n,\secparam,\pp,(g_j)_{j\in[n]})$
				\end{codelines}
			\end{minipage}\hfill%
			\begin{minipage}[t]{0.47\linewidth}
				\vspace{0pt}
				{\footnotesize\underline{$\Hybrid_1$}\codelinesrightalign{$\ELFGen(m,r)$}\par}
				\smallskip
				\begin{codelines}
					\item $\pp\sample\CGen(1^\secparam)$
					\item \pcif $\pp=\bot$ \pcreturn $\fbot$
					\item $g\gets g(\pp)$
					\item \label{line:plain-hyb-bases}\hybchange{$(s_j)_{j\in[n]}\sample[\gamma]^n$; $g_j\gets g^{s_j}$}
					\item \pcreturn $(0,m,n,\secparam,\pp,(g_j)_{j\in[n]})$
				\end{codelines}
			\end{minipage}
		\end{proofbox}

	In \cref{fig:plain-elf-mode-hybrids}, $\Hybrid_0$ and $\Hybrid_1$ are $\ELFGen(m,m)$ and $\ELFGen(m,r)$ for this $\gamma\ge1$, and at \cref{line:plain-hyb-bases}, we replace the uniform bases by bounded powers. Our reduction $\advB$ receives a challenge $X$ from $\DvDSEDL_{b,\secparam,\gamma}$ with $\kappa(\secparam)$ coordinates. If $X=\bot$, it outputs $\advD(\fbot)$, and otherwise, for $X=(1^\secparam,\pp,\gamma,(G_j)_{j\in[\kappa(\secparam)]})$, it outputs $\advD(0,m,n,\secparam,\pp,(G_j)_{j\in[n]})$. Its size is at most $t(n)+p(\secparam)\le\that(\secparam)$, and we write $q_b$ for the probability that it outputs $1$. For $b=0$, the first $n$ coordinates of $X$, with $\bot$ kept as $\bot$, are distributed as $\DvDSEDL_{0,\secparam,\gamma}$ with $n$ coordinates, because a bounded challenge equals $\bot$ exactly when $\CGen$ fails, so $q_0=p_1$ by \cref{thm:plain-mode-characterization}. For $b=1$, the first $n$ coordinates of $X$ are distributed as $\DvDSEDL_{1,\secparam,\gamma}$ with $n$ coordinates except on the event that $\CGen$ and the first $n$ coordinate samples succeed while one of the last $\kappa(\secparam)-n$ coordinate samples fails, whose probability is at most $\epsunused(\secparam)$, so $\abs{q_1-p_0}\le\epsunused(\secparam)$ by \cref{thm:plain-mode-characterization}. By the choice of $Q$, $\secparam_0$, and $\secparam_1$, we obtain
	\[
		\abs{p_0-p_1}\le\abs{q_0-q_1}+\epsunused(\secparam)\le\frac2{\dhat(\secparam)}\le\frac1{\delta(n)}.
	\]
	Hence, every circuit of size at most $t(n)$ distinguishes $\ELFGen(m,m)$ from $\ELFGen(m,r)$ with advantage at most $1/\delta(n)$, and with the efficiency and the injective and lossy modes established in this proof, the construction satisfies every condition of \cref{def:extremely-lossy-function}.
\end{proofE}

In the proof of \cref{thm:plain-elf}, we use $c>\log_23$ only through the lower bound $\ord(\pp)\ge2^\secparam$ of \cref{def:group-sampler}, in the union bound of the injective mode and in the inequality $2^n<\ord(\pp)$ of the mode transition. For $\CGen=\DJGen$, the order $M=p'q'$ exceeds $2^{2\ell(\secparam)-4}\ge2^{2\secparam}$ by \cref{lem:dj-structure}, so a smaller constant suffices, and since $\secparam=\ceil{cn}$, group elements are shorter for smaller $c$ at the same $m$.

\begin{corollaryE}[Linear-Size \ELF{} over \DamgardJurik Groups]\label{cor:plain-elf-dj}
	Let $c>(\log_23)/2$, and suppose that $\DJGen(\cdot)$ satisfies tunable $\DSEDL$. Then the construction in \cref{fig:plain-elf} at $\CGen=\DJGen$ is an extremely lossy function, with the description size, output size, and evaluation cost of \cref{thm:plain-elf}.
\end{corollaryE}

\begin{proofE}
	We repeat the proof of \cref{thm:plain-elf} with $\CGen=\DJGen$, which is a cyclic group sampler by \cref{lem:dj-structure}, and we replace the lower bound $\ord(\pp)\ge2^\secparam$ by $M=p'q'>2^{2\ell(\secparam)-4}\ge2^{2\secparam}\ge2^{2cn}$, which holds by \cref{lem:dj-structure} because $\ell(\secparam)\ge\secparam+2$. In the injective mode, the union bound over the $3^n-1$ nonzero vectors $v$ becomes
	\[
		\Prob\left[\begin{array}{@{}c|l@{}}\Einj & (a_j)_{j\in[n]}\sample(\ZZ/M\ZZ)^n\end{array}\right]
		\le\frac{3^n-1}{M}
		<\frac{3^n}{2^{2cn}}
		=2^{-(2c-\log_23)n},
	\]
	which is negligible because $2c>\log_23$. Since $2c>\log_23>1$, we have $2^n<2^{2cn}<M$ for every $n\ge1$, so $\gamma<r<m\le2^n<M$ in the mode transition. The remaining steps of the proof of \cref{thm:plain-elf} hold for every rational $c>0$, and $\fbot$ is the same for every sampler.
\end{proofE}

For $c<1$, the interval bound $\gamma=\floor{(r-1)/n}<2^n$ of the lossy mode can exceed the smallest prime factor $\min\{p',q'\}$ of the group order $M=p'q'$, since $p',q'<2^{\ell(\secparam)-1}$ and $\secparam=\ceil{cn}$. The generic-group bound of \cref{thm:generic-vector-dsedl} holds for every $\gamma$ smaller than the group order $q$, with the term $1/\gamma+2/\piq$ for each pair of forms, and hence also when $\gamma$ exceeds the smallest prime factor $\piq$ of $q$.

For the plain construction, we specialize \cref{fig:telf-image-enumeration} to one coordinate, since the factors of every output are the bases $g_j$ and, in the lossy mode, a partial product is determined by a bounded subset sum.

\begin{theoremE}[Strong Enumeration for the Plain \ELF{}]\label{thm:plain-elf-strong-enumeration}
	The construction in \cref{fig:plain-elf} has strongly efficiently enumerable images in the sense of \cref{def:efficiently-enum-img}. More precisely, there is a deterministic algorithm that, on input $(f,r)$ with $m\ge2$, $r\in[m]$, and $f\in\Supp(\ELFGen(m,r))$, returns $\{\ELFEval(f,u)\mid u\in[m]\}$ using $\bigO(nr)$ group multiplications and runs in time $\poly(r,\log m)$.
\end{theoremE}

\begin{proofE}
	We specialize the prefix dynamic program of \cref{fig:telf-image-enumeration} to one coordinate and omit its final $\DDLog_N$ step. If $f=\fbot$, then $\ELFEval(\fbot,\cdot)$ is constant, and we return the list $(1)$. Otherwise, we write $f=(0,m,n,\secparam,\pp,(g_j)_{j\in[n]})$, take $\bbA\cqq\GG(\pp)$ and $C_j\cqq g_j$, and run the updates of \cref{fig:telf-image-enumeration} from $E=\{1\}$ and $L=\varnothing$. By the prefix invariant in the proof of \cref{thm:telf-strong-enumeration}, which holds over every finite abelian group, the final deduplicated set is
	\[
		\left\{\prod_{j=1}^n g_j^{x_j(u)}\ \middle|\ u\in[m]\right\}
		=
		\{\ELFEval(f,u)\mid u\in[m]\}.
	\]

	If $r=m$, then $\abs E=1$ and at most $m-1$ smaller processed prefixes occur at any position, so $\abs E+\abs L\le r+1$. If $r<m$, we write $\gamma=\floor{(r-1)/n}$, and we set $s_j\cqq0$ for $j\in[n]$ when $\gamma=0$. When $\gamma\ge1$, the definition of $\ELFGen$ gives $g_j=g^{s_j}$ with $s_j\in[\gamma]$. Every partial product is then $g^a$ for a partial sum $a\in[0..n\gamma]$, so at most $n\gamma+1\le r$ elements remain after deduplication, and $\abs E+\abs L\le r+1$. As in \cref{fig:telf-image-enumeration}, we return $\varnothing$ when $\abs E+\abs L>r+1$, and by these bounds, this condition is false at every position for every $f$ in the support. Therefore, each iteration forms at most $2(r+1)$ products, and the algorithm uses $\bigO(nr)$ group multiplications and runs in time $\poly(r,\log m)$.
\end{proofE}

Uniform sampling from the lossy image, as considered in~\cite{C:Zhandry16}, follows from the exact image list, since a uniform position of a list without repetitions is a uniform element of the image for every distribution of preimage sizes.

\begin{corollaryE}[Uniform Sampling for the Plain \ELF{}]\label{cor:plain-elf-image-sampling}
	For every $m\ge2$, every $r\in[m]$, and every $f\in\Supp(\ELFGen(m,r))$, we can sample uniformly from $\{\ELFEval(f,u)\mid u\in[m]\}$ in time $\poly(r,\log m)$.
\end{corollaryE}

\begin{proofE}
	We enumerate the image with the algorithm of \cref{thm:plain-elf-strong-enumeration} and return a uniformly random element of the returned list. Since this list consists of the distinct elements of the image, each element of the image is returned with probability one over the size of the image, and the running time is $\poly(r,\log m)$.
\end{proofE}

\end{relocated}%

\section{All-But-Some and All-But-One Trapdoor \texorpdfstring{\ELF{}s}{ELFs}}\label{sec:branch-telf}
Following~\cite{EC:Zhandry19a}, we give trapdoor \ELF{}s branches: the functions on a selected set of polynomially many branches are lossy, the functions on every other branch are invertible, and the two generation modes are indistinguishable even given inversion queries on the nonlossy branches. If we evaluate several lossy functions in parallel, as in all-but-one constructions of lossy trapdoor functions~\cite{STOC:PeiWat08}, the lossy image size is the product of their image sizes, which is not polynomially bounded~\cite{EC:Zhandry19a}. We use one \DamgardJurik hash key of \cref{sec:hashing} for all branches and subtract a branch term from the encodings, so that on branch $b$ the input is multiplied by $P(\beta(b))$ for an integer $\beta(b)$ and a polynomial $P$ whose roots are the integers of the selected branches.

\ifConferenceVersion\else
\subsection{All-But-Some and All-But-One Syntax}
\label{subsec:abo-syntax}
We write $a\ge1$ for the branch length, so that a branch is an element of $\bits^a$. In the selective notion of~\cite{EC:Zhandry19a}, the adversary chooses the lossy branches before it receives the public description, and its second stage receives its first-stage state and an inversion oracle on every remaining branch.
\fi
\begin{relocated}{abs-telf-definition}
Our definition generalizes the all-but-one trapdoor \ELF{} of~\cite{EC:Zhandry19a}, which has one lossy branch, to a selected set $\Sstar$ of $s$ lossy branches: both generators receive $1^s$, the lossy generator receives $\Sstar$, the length bounds and the failure probabilities are polynomial and negligible in $n_m+a+s$, and the selective-security threshold depends on polynomial bounds $A$ and $S$ for the branch length and for $s$.

\begin{definition}[All-But-Some Trapdoor \ELF{}]\label{def:abs-telf}
	For every $m\ge2$, write $n_m\cqq\clog{m}$. An all-but-some trapdoor extremely lossy function consists of two probabilistic polynomial-time generation algorithms and two deterministic polynomial-time algorithms, $\ABSTELF=(\ABSGenInj,\allowbreak\ABSGenLos,\allowbreak\ABSEval,\allowbreak\ABSInv)$. Every public description $\fk$ determines a finite output space $\Yfk$ with an injective canonical encoding of length $\poly(n_m+a+s)$, and membership testing, encoding, and decoding run in time $\poly(n_m+a+s)$. The algorithms have the following syntax.
	\begin{itemize}
		\item $(\fk,\td)\sample\ABSGenInj(m,1^a,1^s)$. On input integers $m\ge2$ and $a,s\ge1$ with $s\le2^a$, the injective generator samples a public description $\fk$ and an inversion trapdoor $\td$ for the branches in $\bits^a$.
		\item $(\fk,\td)\sample\ABSGenLos(m,1^a,1^s,r,\Sstar)$. On input $m$, $a$, and $s$ as for $\ABSGenInj$, $r\in[m]$, and $\Sstar\subseteq\bits^a$ with $\abs{\Sstar}=s$, the lossy generator samples a public description $\fk$ and an inversion trapdoor $\td$ for the branches in $\bits^a\setminus\Sstar$.
		\item $y\gets\ABSEval(\fk,b,u)$. On input a public description $\fk$ and $(b,u)\in\bits^a\times[m]$, the evaluation algorithm returns the value $y\in\Yfk$ of branch $b$ at $u$. We write $f_{\fk,b}(u)\cqq\ABSEval(\fk,b,u)$.
		\item $u\gets\ABSInv(\fk,\td,b,y)$. On input a public description $\fk$, a trapdoor $\td$, a branch $b$, and a string $y$, the inversion algorithm returns an element of $[m]\cup\{\bot\}$, and it returns $\bot$ whenever $(b,y)\notin\bits^a\times\Yfk$.
	\end{itemize}
	We call $\ABSTELF$ an all-but-some trapdoor \ELF{} if it has the following properties.
	\begin{itemize}
		\item \textbf{Compact descriptions.} Public descriptions, trapdoors, branches, and outputs have length $\poly(n_m+a+s)$, and every algorithm runs in time polynomial in its input length.
		\item \textbf{Injective mode.} There is a negligible function $\nuinj\colon\NN\to[0,1]$ such that for all $m\ge2$ and $a,s\ge1$ with $s\le2^a$, with $z\cqq n_m+a+s$, the event $\Einj(\fk,\td)$ that
		\[
			\forall(b,u)\in\bits^a\times[m]\colon\ \ABSInv(\fk,\td,b,f_{\fk,b}(u))=u
		\]
		satisfies
		\[
			\Prob\left[
			\begin{array}{@{}c|l@{}}
				\Einj(\fk,\td) & (\fk,\td)\sample\ABSGenInj(m,1^a,1^s)
			\end{array}
			\right]
			\ge1-\nuinj(z).
		\]
		\item \textbf{Lossy mode.} There is a negligible function $\nulos\colon\NN\to[0,1]$ such that for all $m\ge2$ and $a,s\ge1$ with $s\le2^a$, every $r\in[m]$, and every $\Sstar\subseteq\bits^a$ with $\abs{\Sstar}=s$, with $z\cqq n_m+a+s$, the event $\Elos(\fk,\td)$ that
		\[
			\begin{aligned}
				&\forall b\in\Sstar\colon\ \abs{\Image(f_{\fk,b})}\le r\\
				\land\ &\forall(b,u)\in(\bits^a\setminus\Sstar)\times[m]\colon\ \ABSInv(\fk,\td,b,f_{\fk,b}(u))=u
			\end{aligned}
		\]
		satisfies
		\[
			\Prob\left[
			\begin{array}{@{}c|l@{}}
				\Elos(\fk,\td) & (\fk,\td)\sample\ABSGenLos(m,1^a,1^s,r,\Sstar)
			\end{array}
			\right]
			\ge1-\nulos(z).
		\]
		\item \textbf{Selective security.} For $c\in\bits$, integers $m\ge2$, $a,s\ge1$ with $s\le2^a$, $r\in[m]$, and a two-stage nonuniform probabilistic circuit $\advA=(\advA_1,\advA_2)$ whose second stage returns a bit, the experiment $\Game_{\ABSTELF,\advA}^c(m,a,s,r)$ with $n\cqq n_m$ runs as follows.
		\begin{enumerate}
			\item It samples $(\Sstar,\var{st})\sample\advA_1(1^n,m,1^a,1^s,r)$ and returns $0$ unless $\Sstar\subseteq\bits^a$ and $\abs{\Sstar}=s$.
			\item If $c=0$, it samples $(\fk,\td)\sample\ABSGenInj(m,1^a,1^s)$. If $c=1$, it samples $(\fk,\td)\sample\ABSGenLos(m,1^a,1^s,r,\Sstar)$.
			\item It returns $\advA_2^{\Oinv}(\var{st},\fk)$ for the total oracle
			\[
				\Oinv(b,y)\cqq
				\begin{cases}
					\bot,&(b,y)\notin(\bits^a\setminus\Sstar)\times\Yfk,\\
					\ABSInv(\fk,\td,b,y),&(b,y)\in(\bits^a\setminus\Sstar)\times\Yfk.
				\end{cases}
			\]
		\end{enumerate}
		Let $A$ and $S$ be positive integer-valued polynomials, the bounds on the branch length and on the selected-set size. For every polynomial $p$ and every positive polynomial $\delta$, there is a polynomial $Q_{\ABS}$ such that for every sufficiently large $m$, with $n\cqq n_m$, every $a\in[A(n)]$, every $s\in[S(n+a)]$ with $s\le2^a$, every $r\in[Q_{\ABS}(n+a)..m]$, and every two-stage circuit $\advA=(\advA_1,\advA_2)$ of size at most $p(n+a)$,
		\[
			p_c\cqq\Prob\left[\Game_{\ABSTELF,\advA}^c(m,a,s,r)=1\right]\ (c\in\bits),
			\quad
			\abs{p_0-p_1}\le\frac{1}{\delta(n+a)}.
		\]
	\end{itemize}
\end{definition}

\end{relocated}%
\ifConferenceVersion\else

For $s=1$, the selected set is one branch $\bstar$, and we obtain the all-but-one notion of~\cite{EC:Zhandry19a}, with signatures that omit the unary input $1^1$ and take $\bstar$ in place of $\{\bstar\}$.

\begin{definition}[All-But-One Trapdoor \ELF{}]\label{def:abo-telf}\label{rem:abo-gives-telf}
	An all-but-one trapdoor \ELF{} is a tuple of algorithms $\ABOTELF=(\ABOGenInj,\allowbreak\ABOGenLos,\allowbreak\ABOEval,\allowbreak\ABOInv)$ such that the tuple defined by $\ABSEval\cqq\ABOEval$, $\ABSInv\cqq\ABOInv$, and
	\[
		\begin{aligned}
			\ABSGenInj(m,1^a,1^1)&\cqq\ABOGenInj(m,1^a),\\
			\ABSGenLos(m,1^a,1^1,r,\{\bstar\})&\cqq\ABOGenLos(m,1^a,r,\bstar)
		\end{aligned}
	\]
	satisfies \cref{def:abs-telf} with every quantifier over $s$ restricted to $s=1$ and with $S\equiv1$. We write $Q_{\ABO}$ for the polynomial $Q_{\ABS}$ of its selective-security clause, and $\Game_{\ABOTELF,\advA}^c(m,a,r)$ for the experiment $\Game_{\ABSTELF,\advA}^c(m,a,1,r)$ in which $\advA_1$ receives $(1^n,m,1^a,r)$ and returns $\bstar$ in place of $\{\bstar\}$.
\end{definition}

For $a=1$ and $\bstar=0$, we keep only branch $0$ and obtain a trapdoor \ELF{} in the sense of \cref{def:trapdoor-extremely-lossy-function}: except with negligible probability, under $\ABOGenInj$, $\ABOInv$ inverts branch $0$ on every input, and under $\ABOGenLos$, branch $0$ has image size at most $r$, and selective security for adversaries that make no oracle query implies mode indistinguishability.
\fi

\subsection{Branch Polynomials}
\label{subsec:abo-branch-arithmetic}
On every nonselected branch $b$, we invert by dividing by the value of a branch polynomial at $\beta(b)$, so we need these values to be units modulo every honestly sampled modulus. For $b=(b_1,\dots,b_a)\in\bits^a$ and $\tau\cqq2^a$, we write
\[
	\beta(b)\cqq\sum_{k=1}^a2^{k-1}b_k .
\]
Hence, $\beta$ is a bijection from $\bits^a$ to $[0..\tau-1]$, and $\tau\notin[0..\tau-1]$. In the branch constructions, every \ELTDH{} input is a single residue $x\in\ZN$, which we use only as an input of $\ELTDHHash$ and $\ELTDHEnc$, so its bit decomposition appears only inside \cref{fig:eltdh-dcr}. Whenever we pass an integer such as $\tau$, $\beta(b)$, or a coefficient of a branch polynomial to $\ELTDHGen$, we pass its residue class modulo the modulus of $\hk$.

For a set $\Bstar=\{\beta^\star_1,\dots,\beta^\star_s\}\subseteq[0..\tau-1]$ of size $s$, we define the integer polynomials
\[
	P_{\Bstar}(T)\cqq\prod_{j=1}^s(\beta^\star_j-T),
	\qquad
	\Ptau(T)\cqq(\tau-T)^s,
\]
both of degree $s$ with leading coefficient $(-1)^s$. On the branches where we divide by them, both polynomials take values that are nonzero products of $s$ integers of absolute value at most $2^a$, and at the group parameter $\groupsec(n+as)$ both prime factors of the modulus exceed $2^{as}$.

\begin{lemmaE}[Invertible Branch Multipliers]\label{lem:branch-multipliers}
	Let $n,a,s\in\NN_{>0}$ with $s\le2^a$, let $\tau\cqq2^a$ and $\secparam\cqq\groupsec(n+as)$, and let $(N,g)\in\Supp(\DJGen(1^\secparam))$. Then $\gcd(\Ptau(\beta),N)=1$ for every $\beta\in[0..\tau-1]$, and $\gcd(P_{\Bstar}(\beta),N)=1$ for every $\Bstar\subseteq[0..\tau-1]$ with $\abs{\Bstar}=s$ and every $\beta\in[0..\tau-1]\setminus\Bstar$.
\end{lemmaE}
\ConferenceProofSketch{proof:branch-multipliers}{
	The factors of the two products are $\tau-\beta\in[1..2^a]$ and $\beta^\star_j-\beta$, which is nonzero for $\beta\notin\Bstar$ and has absolute value at most $2^a-1$, so both products are nonzero integers of absolute value at most $2^{as}$. By \cref{def:dj-group}, both prime factors of $N$ exceed $2^{\ell(\secparam)-1}$ with $\ell(\secparam)\ge\secparam+2$, and $\secparam=\groupsec(n+as)\ge n+as$ by the choice of $\groupsec$ in \cref{sec:elf}, so neither prime divides either product. }{}{}
\begin{proofE}
	\label{proof:branch-multipliers}
	Every factor of either product is a nonzero integer of absolute value at most $2^a$: we have $1\le\tau-\beta\le\tau=2^a$ because $0\le\beta\le\tau-1$, and $1\le\abs{\beta^\star_j-\beta}\le\tau-1$ because $\beta,\beta^\star_j\in[0..\tau-1]$ and $\beta\notin\Bstar$. Hence, $\Ptau(\beta)$ and $P_{\Bstar}(\beta)$ are nonzero integers of absolute value at most $2^{as}$. We write $N=pq$ as in \cref{def:dj-group}, so $p,q>2^{\ell(\secparam)-1}$ with $\ell(\secparam)\ge\secparam+2$, and $\secparam=\groupsec(n+as)\ge n+as$ by the choice of $\groupsec$ in \cref{sec:elf}. Hence, $p,q>2^{\secparam+1}>2^{as}$, so neither prime divides either product, and both greatest common divisors equal one.
\end{proofE}

\ifConferenceVersion
\subsection{All-But-Some Trapdoor \texorpdfstring{\ELF{}s}{ELFs}}
\label{subsec:abs-construction}\label{subsec:abo-construction}
In an all-but-some trapdoor \ELF{}, every branch of a selected set $\Sstar$ is lossy, and the size $s=\abs{\Sstar}$ can grow polynomially in $n+a$.
\PlaceRelocated{abs-telf-definition}
\else
\subsection{All-But-Some Construction}
\label{subsec:abs-construction}
\fi

The polynomial of the construction is $P\cqq\Ptau$ in the injective mode and $P\cqq P_{\beta(\Sstar)}$ in the lossy mode, where $\beta(\Sstar)=\{\beta(b)\mid b\in\Sstar\}$ has size $s$ because $\beta$ is injective. We write its coefficients as
\[
	P(T)=(-1)^sT^s+\sum_{i=0}^{s-1}c_iT^i,
	\qquad
	c_0,\dots,c_{s-1}\in\ZZ.
\]
We generate two independent bundles $q\in\bits$ of $s$ encoding keys each, where key $(q,i)$ encodes the coefficient $c_i$, and on branch $b$ we combine the encodings of bundle $q$ with the weights $1,\beta(b),\dots,\beta(b)^{s-1}$. For a public description with modulus $N$, we take $\Yfk\cqq\ZNsqx\times(\ZN)^2$ with $\secparam\cqq\groupsec(n+as)$. Following the padding convention of \cref{sec:elf}, we encode the group coordinate in exactly $4\ell(\secparam)$ bits and each residue coordinate in exactly $2\ell(\secparam)$ bits, and we read the concatenated $8\ell(\secparam)$-bit string as its nonnegative integer value plus one. The resulting encoding is an injective map from $\Yfk$ to $[2^{8\ell(\secparam)}]$, and membership testing, encoding, and decoding take time $\poly(\secparam)$, which is polynomial in $n+a+s$ because $n+as\le(n+a+s)^2$. We define the construction in \cref{fig:abs-telf}.

\begin{proofbox}{All-but-some trapdoor extremely lossy function with selected-set size $s$. The \ELTDH{} algorithms are those of \cref{fig:eltdh-dcr}, and $c_0,\dots,c_{s-1}$ are the coefficients of $T^0,\dots,T^{s-1}$ in $P$.}{fig:abs-telf}
		\begin{minipage}[t]{0.52\linewidth}
			\vspace{0pt}
			{\footnotesize\underline{$\ABSGenInj(m,1^a,1^s)$}\par}
			\smallskip
			\begin{codelines}
				\item \pcassert $m\ge2\land a\ge1\land1\le s\le2^a$
				\item $n\gets\clog{m}$; $\secparam\gets\groupsec(n+as)$; $\tau\gets2^a$
				\item $P\gets\Ptau$
				\item $R\gets2^{2\ell(\secparam)}$; $\hk\sample\ELTDHSetupLos(1^\secparam,1^1,R)$
				\item \pcfor $(q,i)\in\bits\times[0..s-1]$
				\codelinesplus
				\item $(\ek_{q,i},\td_{q,i})\sample\ELTDHGen(\hk,(c_i))$
				\codelinesminus
				\item $\fk\gets(m,n,a,s,\secparam,\hk,(\ek_{q,i})_{q,i})$
				\item \pcreturn $(\fk,(P,(\td_{q,i})_{q,i}))$
			\end{codelines}

			\medskip
			{\footnotesize\underline{$\ABSGenLos(m,1^a,1^s,r,\Sstar)$}\par}
			\smallskip
			\begin{codelines}
				\item \pcassert $m\ge2\land a\ge1\land1\le s\le2^a\land r\in[m]\land\Sstar\subseteq\bits^a\land\abs{\Sstar}=s$
				\item \pcif $r=m$ \pcreturn $\ABSGenInj(m,1^a,1^s)$
				\item $n\gets\clog{m}$; $\secparam\gets\groupsec(n+as)$
				\item $P\gets P_{\beta(\Sstar)}$
				\item $\hk\sample\ELTDHSetupLos(1^\secparam,1^1,r)$
				\item \pcfor $(q,i)\in\bits\times[0..s-1]$
				\codelinesplus
				\item $(\ek_{q,i},\td_{q,i})\sample\ELTDHGen(\hk,(c_i))$
				\codelinesminus
				\item $\fk\gets(m,n,a,s,\secparam,\hk,(\ek_{q,i})_{q,i})$
				\item \pcreturn $(\fk,(P,(\td_{q,i})_{q,i}))$
			\end{codelines}
		\end{minipage}\hfill
		\begin{minipage}[t]{0.45\linewidth}
			\vspace{0pt}
			{\footnotesize\underline{$\ABSEval(\fk,b,u)$}\par}
			\smallskip
			\begin{codelines}
				\item \pcparse $\fk=(m,n,a,s,\secparam,\hk,(\ek_{q,i})_{q,i})$
				\item \pcparse $\hk=((N,g),1,w,\expl,g_0,(g_{1,k})_k)$
				\item \pcassert $b\in\bits^a\land u\in[m]$
				\item $x\gets u-1\in\ZN$; $\beta\gets\beta(b)$
				\item $h\gets\ELTDHHash(\hk,(x);0)$
				\item \pcfor $q\in\bits$
				\codelinesplus
				\item $E_q\gets(-1)^s\beta^sx$
				\item \pcfor $i\in[0..s-1]$
				\codelinesplus
				\item $e_{q,i}\gets\ELTDHEnc(\ek_{q,i},(x);0)$; $E_q\gets E_q+\beta^ie_{q,i}$
				\codelinesminus
				\codelinesminus
				\item \pcreturn $(h,E_0,E_1)$
			\end{codelines}

			\medskip
			{\footnotesize\underline{$\ABSInv(\fk,\td,b,y)$}\par}
			\smallskip
			\begin{codelines}
				\item \pcparse $\fk=(m,n,a,s,\secparam,\hk,(\ek_{q,i})_{q,i})$; $\td=(P,(\td_{q,i})_{q,i})$
				\item \pcparse $\hk=((N,g),1,w,\expl,g_0,(g_{1,k})_k)$
				\item \pcif $b\notin\bits^a\lor y\notin\Yfk$ \pcreturn $\bot$
				\item \pcparse $y=(h,E_0,E_1)$; $\beta\gets\beta(b)$; $d\gets P(\beta)\bmod N$
				\item \pcif $\gcd(\rep(d),N)\ne1$ \pcreturn $\bot$
				\item $D\gets0$
				\item \pcfor $i\in[0..s-1]$
				\codelinesplus
				\item $D\gets D+\beta^i\ELTDHDec(\td_{0,i},h)$
				\codelinesminus
				\item $x\gets\rep(d^{-1}(E_0-D))$
				\item \pcif $x\ge m$ \pcreturn $\bot$
				\item $u\gets x+1$
				\item \pcif $\ABSEval(\fk,b,u)\ne y$ \pcreturn $\bot$
				\item \pcreturn $u$
			\end{codelines}
		\end{minipage}
	\end{proofbox}

From the decoded values of each bundle, we reconstruct the value of its polynomial at the branch. We fix a public description in \cref{fig:abs-telf} whose hash key $\hk\in\Supp(\ELTDHSetupLos(1^\secparam,1^1,r'))$, for some $r'$, has modulus $N$, a bundle $q\in\bits$, and integers $c_0,\dots,c_{s-1}$ with $(\ek_{q,i},\td_{q,i})\in\Supp(\ELTDHGen(\hk,(c_i)))$ for every $i\in[0..s-1]$, and we let $P$ be the polynomial with these coefficients and leading term $(-1)^sT^s$. For a branch $b$ with $\beta\cqq\beta(b)$ and an input $u\in[m]$ with $x=u-1$, let $(h,E_0,E_1)=\ABSEval(\fk,b,u)$. By \cref{cor:eltdh-correctness}, $e_{q,i}-\ELTDHDec(\td_{q,i},h)=c_ix$ for every $i$, and we weight these $s$ identities by $1,\beta,\dots,\beta^{s-1}$, add $(-1)^s\beta^sx$, and obtain
\begin{equation}\label{eq:abs-bundle-identity}
	E_q-\sum_{i=0}^{s-1}\beta^i\ELTDHDec(\td_{q,i},h)=P(\beta)x .
\end{equation}
Hence, if $P(\beta)$ is a unit modulo $N$, then we recover $x$ from bundle $q$, and if $P(\beta)=0$, then $E_q$ is a function of the digest $h$.

\ifConferenceVersion\else
In the security proof, we answer inversion queries through whichever bundle encodes a polynomial that is a unit at the queried branch, so we need the answer to be the same for every such bundle. For $q\in\bits$, we write $\ABSInvBundle{q}$ for the algorithm $\ABSInv$ of \cref{fig:abs-telf} in which we compute $D$ from the trapdoors $\td_{q,i}$ and use $E_q$ in place of $E_0$, so that $\ABSInvBundle{0}=\ABSInv$, and for a polynomial $P$, we write $\td^P\cqq(P,(\td_{q,i})_{q,i})$.

\begin{lemmaE}[Inversion Through One Bundle]\label{lem:abs-bundle-inversion}\label{lem:abo-inversion-public}
	Let $\fk$, $N$, $q$, $c_0,\dots,c_{s-1}$, and $P$ be defined as in \cref{eq:abs-bundle-identity}, let $b\in\bits^a$, and let $y$ be any string. If $\gcd(P(\beta(b)),N)\ne1$, then $\ABSInvBundle{q}(\fk,\td^P,b,y)=\bot$. If $\gcd(P(\beta(b)),N)=1$, then
	\begin{equation}\label{eq:abs-bundle-inversion}
		\ABSInvBundle{q}(\fk,\td^P,b,y)=
		\begin{cases}
			u, & u\in[m]\land\ABSEval(\fk,b,u)=y,\\
			\bot, & y\notin\Image(f_{\fk,b}),
		\end{cases}
	\end{equation}
	and the preimage $u$ in the first case is unique.
\end{lemmaE}
\begin{proofE}
	We decode bundle $q$ and divide by $P(\beta(b))$, and after the reevaluation test we return only a genuine preimage. The first claim is the test on $\gcd(\rep(d),N)$ in \cref{fig:abs-telf}. For the second, if $\ABSInvBundle{q}$ returns some $u\ne\bot$, then $\ABSEval(\fk,b,u)=y$ by its last test. Conversely, let $u\in[m]$ satisfy $\ABSEval(\fk,b,u)=y=(h,E_0,E_1)$, and let $x=u-1$. Every supported modulus, including the fallback modulus, satisfies $N>2^{2\ell(\secparam)-2}\ge2^\secparam\ge2^n\ge m$ by \cref{def:fallback-modulus}, so $x\in[0..m-1]$ is its own representative. By \cref{eq:abs-bundle-identity}, $E_q-D=P(\beta(b))x$, so $\ABSInvBundle{q}$ computes $\rep(d^{-1}(E_q-D))=u-1<m$, passes the reevaluation test, and returns $u$. Any two preimages $u,u'$ satisfy $P(\beta(b))(u-1)=P(\beta(b))(u'-1)$ in $\ZN$ by the same identity, and dividing by the unit $P(\beta(b))$, we obtain $u=u'$.
\end{proofE}
\fi

Functionality follows from \cref{eq:abs-bundle-identity} and \cref{lem:branch-multipliers}, and selective security follows from the function privacy and the mode indistinguishability of \ELTDH{}.

\begin{theoremE}[Selective All-But-Some Trapdoor \ELF{}]\label{thm:abs-telf}
	Suppose that $\DJGen$ satisfies $\DCR$ against nonuniform $\PPT$ adversaries and tunable $\DSEDL$. Then the construction in \cref{fig:abs-telf} is a selectively secure all-but-some trapdoor \ELF{} in the sense of \cref{def:abs-telf}. For all $m\ge2$ and $a,s\ge1$ with $s\le2^a$, with $n\cqq\clog{m}$ and $w\cqq2\ell(\groupsec(n+as))$, and excluding the modulus, the generator, and integer metadata, its public description contains $(2s+1)(w+1)$ group elements, its trapdoor contains $2s$ \ELTDH{} exponents, and its output contains one group element and two residues.
\end{theoremE}
\ConferenceProofSketch{proof:abs-telf}{
	By \cref{cor:eltdh-correctness}, \cref{eq:abs-bundle-identity} holds for every supported key, so by \cref{lem:branch-multipliers}, we invert every branch outside $\Sstar$, and for $b\in\Sstar$ we have $P_{\beta(\Sstar)}(\beta(b))=0$, so the output on $b$ is a function of its digest, which takes at most $r$ values by \cref{thm:eltdh-lossy-image}. For selective security, we change the $2s$ encoding keys one at a time from the coefficients of $\Ptau$ to those of $P_{\beta(\Sstar)}$ by \cref{thm:eltdh-function-privacy}, first in bundle $0$ and then in bundle $1$, and during each change we answer every inversion query with the other bundle, whose polynomial is a unit at every queried branch. We lose the negligible probability of the full-setup fallback event $\Efull$ once, when we switch the oracle between the two bundles, and we finally change the hash-key parameter from the full one to $r$ by \cref{thm:eltdh-mode-indistinguishability}, with a reduction that generates every encoding key and trapdoor itself. }{}{}
\begin{proofE}
	\label{proof:abs-telf}
	First, we derive functionality from \cref{eq:abs-bundle-identity} according to whether $P(\beta)$ is a unit or zero, and we check compactness for growing $s$. After that, we prove selective security with $2s$ function-privacy transitions, one change of the bundle that answers the oracle, and one hash-key mode transition.
	\ifConferenceVersion
	We begin with inversion through one bundle. For $q\in\bits$, we write $\ABSInvBundle{q}$ for the algorithm $\ABSInv$ of \cref{fig:abs-telf} in which we compute $D$ from the trapdoors $\td_{q,i}$ and use $E_q$ in place of $E_0$, so that $\ABSInvBundle{0}=\ABSInv$, and for a polynomial $P$, we write $\td^P\cqq(P,(\td_{q,i})_{q,i})$. Let $\fk$, $N$, $q$, $c_0,\dots,c_{s-1}$, and $P$ be as in \cref{eq:abs-bundle-identity}, and let $b\in\bits^a$ and $y$ be a string. If $\gcd(P(\beta(b)),N)\ne1$, then $\ABSInvBundle{q}$ returns $\bot$ on $(b,y)$ by its test on $\gcd(\rep(d),N)$. Otherwise, a returned $u\ne\bot$ satisfies $\ABSEval(\fk,b,u)=y$ by the last test. Conversely, let $u\in[m]$ satisfy $\ABSEval(\fk,b,u)=y=(h,E_0,E_1)$, and let $x=u-1$. Every supported modulus, including the fallback modulus, satisfies $N>2^{2\ell(\secparam)-2}\ge2^\secparam\ge2^n\ge m$ by \cref{def:fallback-modulus}, so $x\in[0..m-1]$ is its own representative. By \cref{eq:abs-bundle-identity}, $E_q-D=P(\beta(b))x$, so $\ABSInvBundle{q}$ computes $\rep(d^{-1}(E_q-D))=u-1$ and returns $u$, and any two preimages $u,u'$ satisfy $P(\beta(b))(u-1)=P(\beta(b))(u'-1)$ in $\ZN$, hence $u=u'$. Therefore, if $\gcd(P(\beta(b)),N)=1$, then
	\begin{equation}\label{eq:abs-bundle-inversion}
		\ABSInvBundle{q}(\fk,\td^P,b,y)=
		\begin{cases}
			u, & u\in[m]\land\ABSEval(\fk,b,u)=y,\\
			\bot, & y\notin\Image(f_{\fk,b}),
		\end{cases}
	\end{equation}
	with a unique preimage $u$ in the first case.

	\fi
	For the functionality, we fix $m\ge2$ and $a,s\ge1$ with $s\le2^a$, and we write $n\cqq\clog{m}$ and $\secparam\cqq\groupsec(n+as)$. In the injective mode, $P=\Ptau$, and setup samples $\hk$ at the full parameter $R=2^{2\ell(\secparam)}=\Rfull(\secparam,1)$. Outside the fallback event $\Efull$, the hash key has a modulus $N$ with $(N,g)\in\Supp(\DJGen(1^\secparam))$, so $\gcd(\Ptau(\beta(b)),N)=1$ for every $b\in\bits^a$ by \cref{lem:branch-multipliers}, and by \cref{eq:abs-bundle-inversion} with $q=0$, we have $\ABSInv(\fk,\td,b,f_{\fk,b}(u))=u$ for every $(b,u)\in\bits^a\times[m]$. If $r=m$, then $\ABSGenLos$ returns the output of $\ABSGenInj$, so the inversion guarantee is the injective one, and every branch has image size at most $m=r$. If $r<m$, then $P=P_{\beta(\Sstar)}$, and for $b\in\Sstar$, we have $P(\beta(b))=0$, so by \cref{eq:abs-bundle-identity}, both residue coordinates of $f_{\fk,b}(u)$ are functions of the digest $h=\ELTDHHash(\hk,(u-1);0)$, which is the first coordinate. Hence, $\abs{\Image(f_{\fk,b})}\le\abs{\Image(\Phihk)}\le r$ by \cref{thm:eltdh-lossy-image}, for every hash key in the support, the fallback key included. Outside the event that setup returns the fallback key, $\gcd(P_{\beta(\Sstar)}(\beta(b)),N)=1$ for every $b\in\bits^a\setminus\Sstar$ by \cref{lem:branch-multipliers} with $\Bstar=\beta(\Sstar)$, so by \cref{eq:abs-bundle-inversion} with $q=0$, we invert every input on these branches.

	In both modes, inversion can fail only if setup returns the fallback key. At the full parameter this is the event $\Efull$, of probability $\ffull(\secparam,1)$, and at a nonfull parameter it is the failure of $\DJGen(1^\secparam)$, on which the full setup also returns the fallback key, so its probability is at most $\ffull(\secparam,1)$. Since $n+a+s-1\le n+as\le\secparam$, both failure probabilities are at most
	\[
		\nuinj(k)=\nulos(k)\cqq\sup_{\secparam'\ge k-1}\ffull(\secparam',1)
	\]
	at $k=n+a+s$, and this function is negligible because $\ffull(\cdot,1)$ is negligible by the analysis in \cref{subsec:eltdh-construction} and the supremum over the tail of a negligible function is negligible.

	\subparhead{Compactness} Every root of $\Ptau$ or $P_{\beta(\Sstar)}$ is an integer of absolute value at most $2^a$, and each coefficient $c_i$ is, up to sign, a sum of $\binom{s}{i}\le2^s$ products of $s-i$ roots. Hence, $\abs{c_i}\le2^{s(a+1)}$, and every coefficient has $\bigO(s(a+1))$ bits. We compute all coefficients by multiplying the $s$ linear factors one at a time, with $\bigO(s^2)$ integer operations on integers of $\bigO(s(a+1))$ bits, so both generators run in polynomial time. Since $n+as\le(n+a+s)^2$, the parameters $\secparam=\groupsec(n+as)$ and $w=2\ell(\secparam)$ are polynomially bounded in $n+a+s$. By \cref{lem:eltdh-efficiency}, the hash key and each of the $2s$ encoding keys contain $w+1$ group elements and each trapdoor one exponent, and evaluation and inversion make $\bigO(s)$ calls to \ELTDH{} algorithms. Hence, the compactness clause of \cref{def:abs-telf} holds, with the element counts of the statement.

	\subparhead{Hybrids} We fix polynomials $A$, $S$, and $p$ and a positive polynomial $\delta$ as in \cref{def:abs-telf}. Since the conclusion for nondecreasing polynomial upper bounds of $p$, $\delta$, and $S$ implies the conclusion for $p$, $\delta$, and $S$, we assume that all three are nondecreasing. We fix $m$, with $n\cqq\clog{m}$, and $a\in[A(n)]$, $s\in[S(n+a)]$ with $s\le2^a$, $r\in[Q_{\ABS}(n+a)..m]$, where $Q_{\ABS}$ is a polynomial that depends only on $p$, $\delta$, and $S$, and an adversary $\advA=(\advA_1,\advA_2)$ of size at most $p(n+a)$, and we write $z\cqq n+a$ and $\secparam\cqq\groupsec(n+as)$. If $r=m$, then $\ABSGenLos$ returns the output of $\ABSGenInj$ and the two experiments are identical, so we suppose henceforth that $r<m$. Since $\advA_1$ returns $\Sstar$ before key generation, the coefficient lists
	\[
		\Ptau(T)=(-1)^sT^s+\sum_{i=0}^{s-1}c^\tau_iT^i,
		\qquad
		P_{\beta(\Sstar)}(T)=(-1)^sT^s+\sum_{i=0}^{s-1}c^\star_iT^i
	\]
	are functions of the output of $\advA_1$.

	We define the hybrids $\Hybrid_0,\dots,\Hybrid_{2s+1}$ and $\Hybrid_s'$. As in the experiments, each of them samples $(\Sstar,\var{st})\sample\advA_1(1^n,m,1^a,1^s,r)$ and returns $0$ unless $\Sstar$ is a valid selected set. Otherwise, it generates $\fk$ as in \cref{fig:abs-security-hybrids}, where $j$ ranges over $[s]$ and $\Hybrid_s'$ generates $\fk$ as $\Hybrid_s$ does, and it returns $\advA_2^{\calO}(\var{st},\fk)$. The oracle $\calO$ returns $\bot$ on every query $(b,y)$ with $b\in\Sstar$. On every other query, it returns $\ABSInvBundle{1}(\fk,\td^{\Ptau},b,y)$ in $\Hybrid_0,\dots,\Hybrid_s$, and it returns $\ABSInvBundle{0}(\fk,\td^{P_{\beta(\Sstar)}},b,y)$ in $\Hybrid_s'$ and in $\Hybrid_{s+1},\dots,\Hybrid_{2s+1}$. In $\Hybrid_0$, both bundles encode $\Ptau$, and $\ABSInvBundle{0}$ and $\ABSInvBundle{1}$ return the same answer for a common polynomial on every hash key, by the common test on $\gcd(\rep(d),N)$ and by \cref{eq:abs-bundle-inversion}. Hence, $\Hybrid_0$ is the experiment $\Game_{\ABSTELF,\advA}^0(m,a,s,r)$, and $\Hybrid_{2s+1}$, in which both bundles encode $P_{\beta(\Sstar)}$ under a hash key at parameter $r$ and the oracle is $\ABSInv$, is the experiment $\Game_{\ABSTELF,\advA}^1(m,a,s,r)$.

	\begin{proofbox}{Key generation in the hybrids for selective all-but-some security, for $j\in[s]$. The \ELTDH{} algorithms are those of \cref{fig:eltdh-dcr}, and the highlighted line is the line that changes from the preceding hybrid. The hybrid $\Hybrid_s'$ generates its keys as $\Hybrid_s$ does.}{fig:abs-security-hybrids}
	\begin{minipage}[t]{0.49\linewidth}
	\vspace{0pt}
	{\footnotesize\underline{$\Hybrid_0$}\codelinesrightalign{injective key}\par}
	\smallskip
	\begin{codelines}
	\item $n\gets\clog{m}$; $\secparam\gets\groupsec(n+as)$; $R\gets2^{2\ell(\secparam)}$
	\item $\hk\sample\allowbreak\ELTDHSetupLos(1^\secparam,1^1,R)$
	\item \pcfor $(q,i)\in\bits\times[0..s-1]$
	\codelinesplus
	\item $(\ek_{q,i},\td_{q,i})\sample\allowbreak\ELTDHGen(\hk,(c^\tau_i))$
	\codelinesminus
	\item \pcreturn $\fk\gets(m,n,a,s,\secparam,\hk,(\ek_{q,i})_{q,i})$
	\end{codelines}
	\end{minipage}\hfill%
	\begin{minipage}[t]{0.49\linewidth}
	\vspace{0pt}
	{\footnotesize\underline{$\Hybrid_j$}\codelinesrightalign{bundle $0$, key $j-1$}\par}
	\smallskip
	\begin{codelines}
	\item $n\gets\clog{m}$; $\secparam\gets\groupsec(n+as)$; $R\gets2^{2\ell(\secparam)}$
	\item $\hk\sample\allowbreak\ELTDHSetupLos(1^\secparam,1^1,R)$
	\item \pcfor $i\in[0..j-2]$
	\codelinesplus
	\item $(\ek_{0,i},\td_{0,i})\sample\allowbreak\ELTDHGen(\hk,(c^\star_i))$
	\codelinesminus
	\item \label{line:abs-hyb-bundle-zero}\hybchange{$(\ek_{0,j-1},\td_{0,j-1})\sample\allowbreak\ELTDHGen(\hk,(c^\star_{j-1}))$}
	\item \pcfor $i\in[j..s-1]$
	\codelinesplus
	\item $(\ek_{0,i},\td_{0,i})\sample\allowbreak\ELTDHGen(\hk,(c^\tau_i))$
	\codelinesminus
	\item \pcfor $i\in[0..s-1]$
	\codelinesplus
	\item $(\ek_{1,i},\td_{1,i})\sample\allowbreak\ELTDHGen(\hk,(c^\tau_i))$
	\codelinesminus
	\item \pcreturn $\fk\gets(m,n,a,s,\secparam,\hk,(\ek_{q,i})_{q,i})$
	\end{codelines}
	\end{minipage}

	\medskip

	\begin{minipage}[t]{0.49\linewidth}
	\vspace{0pt}
	{\footnotesize\underline{$\Hybrid_{s+j}$}\codelinesrightalign{bundle $1$, key $j-1$}\par}
	\smallskip
	\begin{codelines}
	\item $n\gets\clog{m}$; $\secparam\gets\groupsec(n+as)$; $R\gets2^{2\ell(\secparam)}$
	\item $\hk\sample\allowbreak\ELTDHSetupLos(1^\secparam,1^1,R)$
	\item \pcfor $i\in[0..s-1]$
	\codelinesplus
	\item $(\ek_{0,i},\td_{0,i})\sample\allowbreak\ELTDHGen(\hk,(c^\star_i))$
	\codelinesminus
	\item \pcfor $i\in[0..j-2]$
	\codelinesplus
	\item $(\ek_{1,i},\td_{1,i})\sample\allowbreak\ELTDHGen(\hk,(c^\star_i))$
	\codelinesminus
	\item \label{line:abs-hyb-bundle-one}\hybchange{$(\ek_{1,j-1},\td_{1,j-1})\sample\allowbreak\ELTDHGen(\hk,(c^\star_{j-1}))$}
	\item \pcfor $i\in[j..s-1]$
	\codelinesplus
	\item $(\ek_{1,i},\td_{1,i})\sample\allowbreak\ELTDHGen(\hk,(c^\tau_i))$
	\codelinesminus
	\item \pcreturn $\fk\gets(m,n,a,s,\secparam,\hk,(\ek_{q,i})_{q,i})$
	\end{codelines}
	\end{minipage}\hfill%
	\begin{minipage}[t]{0.49\linewidth}
	\vspace{0pt}
	{\footnotesize\underline{$\Hybrid_{2s+1}$}\codelinesrightalign{lossy key}\par}
	\smallskip
	\begin{codelines}
	\item $n\gets\clog{m}$; $\secparam\gets\groupsec(n+as)$
	\item \label{line:abs-hyb-bases}\hybchange{$\hk\sample\allowbreak\ELTDHSetupLos(1^\secparam,1^1,r)$}
	\item \pcfor $(q,i)\in\bits\times[0..s-1]$
	\codelinesplus
	\item $(\ek_{q,i},\td_{q,i})\sample\allowbreak\ELTDHGen(\hk,(c^\star_i))$
	\codelinesminus
	\item \pcreturn $\fk\gets(m,n,a,s,\secparam,\hk,(\ek_{q,i})_{q,i})$
	\end{codelines}
	\end{minipage}
	\end{proofbox}

	\subparhead{Key transitions} For $j\in[s]$, we change key $(0,j-1)$ at \cref{line:abs-hyb-bundle-zero} from $c^\tau_{j-1}$ to $c^\star_{j-1}$ in the transition from $\Hybrid_{j-1}$ to $\Hybrid_j$, and we bound this transition by a function-privacy reduction at residue arity one. Its selector receives $(1^\secparam,\hk)$, runs $(\Sstar,\var{st})\sample\advA_1(1^n,m,1^a,1^s,r)$, and, if $\Sstar$ is invalid, returns two equal predicates and a state on which the distinguisher returns $0$. Otherwise, the selector computes $(c^\tau_i)_i$ and $(c^\star_i)_i$, samples the keys $(0,i)$ with $i\ne j-1$ and all keys of bundle $1$ as in $\Hybrid_j$, with their trapdoors, and returns the predicates with coefficient vectors $(c^\tau_{j-1})$ and $(c^\star_{j-1})$ and the state $\var{st}'$ consisting of $\Sstar$, $\var{st}$, and these keys and trapdoors. The distinguisher receives $(\hk,\var{st}',\ek)$ for the state $\var{st}'$ of the selector, sets $\ek_{0,j-1}\cqq\ek$, assembles $\fk$, runs $\advA_2(\var{st},\fk)$, answers each permitted query with $\ABSInvBundle{1}$ and the trapdoors of bundle $1$, which encodes $\Ptau$, and returns the output bit of $\advA_2$. For every hash key, the fallback key included, the output of the distinguisher is distributed as the output of $\Hybrid_{j-1}$ for the first predicate and as the output of $\Hybrid_j$ for the second. With two bundles, we keep a complete bundle and its trapdoors while the other bundle is under the function-privacy challenge, whose trapdoor the reduction does not know. For $j\in[s]$, we change key $(1,j-1)$ at \cref{line:abs-hyb-bundle-one} in the transition from $\Hybrid_{s+j-1}$ to $\Hybrid_{s+j}$, with $\Hybrid_{s+0}\cqq\Hybrid_s'$, and we use the same reduction with the two bundles exchanged, where the distinguisher answers each permitted query with $\ABSInvBundle{0}$ and the trapdoors of bundle $0$, which encodes $P_{\beta(\Sstar)}$.

	From $\Hybrid_s$ to $\Hybrid_s'$, we change only the bundle that answers the oracle. Outside $\Efull$, the hash key of $\Hybrid_s$ has a modulus $N$ with $(N,g)\in\Supp(\DJGen(1^\secparam))$, so by \cref{lem:branch-multipliers}, $\Ptau(\beta(b))$ and $P_{\beta(\Sstar)}(\beta(b))$ are units modulo $N$ for every $b\in\bits^a\setminus\Sstar$, and by \cref{eq:abs-bundle-inversion}, both oracles return the unique preimage or $\bot$. Hence, the two hybrids differ only on $\Efull$, and
	\[
		\abs{\Prob\left[\Hybrid_s=1\right]-\Prob\left[\Hybrid_s'=1\right]}\le\Prob\left[\Efull\right]=\ffull(\secparam,1).
	\]

	\subparhead{Mode transition} At \cref{line:abs-hyb-bases}, both bundles encode $P_{\beta(\Sstar)}$, so our reduction receives a hash key $\hk$ from the distribution $\Dmode_{\secparam,r'}$ of \cref{def:eltdh} at residue arity one, runs $\advA_1$, samples every encoding key with its trapdoor for the coefficients $c^\star_i$, runs $\advA_2(\var{st},\fk)$, answers each permitted query with $\ABSInv$, and returns the output bit of $\advA_2$. For every hash key in either distribution, its output is distributed as the output of $\Hybrid_{2s}$ for $r'=\Rfull(\secparam,1)$ and as the output of $\Hybrid_{2s+1}$ for $r'=r$.

	\subparhead{Thresholds} Since $a,s\ge1$ and $\groupsec(v)\ge v$, we have $z\le n+as\le\secparam$ and $s\le\secparam$. Each reduction of this proof runs $\advA$, of size at most $p(z)\le p(\secparam)$, generates at most $2s$ further \ELTDH{} keys, and answers each query with $\bigO(s)$ \ELTDH{} decodings and one evaluation, so there is one polynomial $\phat$ such that every reduction, for every admissible choice of $m$, $a$, $s$, $r$, $\advA$, and the changed key, has size at most $\phat(\secparam)$. We write $\dhat(v)\cqq2(2v+1)\delta(v)$. There is $\secparam_{\var{fp}}$ such that for every $\secparam\ge\secparam_{\var{fp}}$ and every admissible choice with $\groupsec(n+as)=\secparam$, the function-privacy reduction for this choice has advantage at most $1/\dhat(\secparam)$. Otherwise, for infinitely many $\secparam$, we fix an admissible choice whose reduction has advantage greater than $1/\dhat(\secparam)$, and we let $\advS'$ and $\advD'$ be the nonuniform families that equal the selector and the distinguisher of this reduction at these $\secparam$ and that return two equal predicates and $0$ at every other $\secparam$. Since $\advS'$ is a nonuniform selector of size at most $\phat(\secparam)$, its two ensembles are computationally indistinguishable by \cref{thm:eltdh-function-privacy}, and by \cref{lem:uniform-quantitative-threshold} with size bound $\phat$ and target $\dhat$, the advantage of $\advD'$ is at most $1/\dhat(\secparam)$ for every sufficiently large $\secparam$, which contradicts the advantages of the fixed reductions.

	We apply \cref{thm:eltdh-mode-indistinguishability} at residue arity one with size bound $\phat$ and target $\dhat$, and we obtain a polynomial $\Qhash$ and $\secparam_{\var{mode}}$ such that $\Dmode_{\secparam,r_0}\indist_{\phat,1/\dhat}\Dmode_{\secparam,r_1}$ for every $\secparam\ge\secparam_{\var{mode}}$ and all $r_0,r_1\in[\Qhash(\secparam)..\Rfull(\secparam,1)]$. Since $r\le m\le2^n<2^\secparam<\Rfull(\secparam,1)$, the parameter $r$ lies in this interval once $r\ge\Qhash(\secparam)$. We take a polynomial upper bound $G$ on the increasing function $\groupsec$ and a nondecreasing polynomial upper bound $\Qhashbar$ on $\Qhash$. From $n,a\le z$ and $s\le S(z)$, we obtain $n+as\le z+zS(z)$ and $\secparam\le G(z+zS(z))$, and we choose a polynomial $Q_{\ABS}$ with $Q_{\ABS}(z)\ge\Qhashbar(G(z+zS(z)))$. Then $r\ge Q_{\ABS}(z)\ge\Qhash(\secparam)$ for every admissible $a$ and $s$. Finally, since $\ffull(\cdot,1)$ is negligible, there is $\secparam_{\var{full}}$ with $\ffull(\secparam,1)\le1/(2\delta(\secparam))$ for every $\secparam\ge\secparam_{\var{full}}$.

	\subparhead{Total} Since $\secparam\ge n=\clog{m}$, for every sufficiently large $m$, we have $\secparam\ge\max\{\secparam_{\var{fp}},\secparam_{\var{mode}},\secparam_{\var{full}}\}$. For such $m$, we sum the $2s$ key transitions, the oracle change, and the mode transition, and we use $s,z\le\secparam$ and that $\delta$ is nondecreasing:
	\[
		\begin{aligned}
			\abs{\Prob\left[\Hybrid_0=1\right]-\Prob\left[\Hybrid_{2s+1}=1\right]}
			&\le\frac{2s+1}{\dhat(\secparam)}+\ffull(\secparam,1)\\
			&\le\frac{2\secparam+1}{2(2\secparam+1)\delta(\secparam)}+\frac{1}{2\delta(\secparam)}
			\le\frac{1}{\delta(z)}.
		\end{aligned}
	\]
	Since $\Hybrid_0$ and $\Hybrid_{2s+1}$ are the two selective-security experiments, the selective-security clause of \cref{def:abs-telf} holds.
\end{proofE}

\ifConferenceVersion\else
\subsection{Enumerating Selected-Branch Images}
\label{subsec:abs-enumeration}
\fi
For $r<m$ and every $\bstar\in\Sstar$, the integer $\beta(\bstar)$ is a root of $P$, so by \cref{eq:abs-bundle-identity}, both residue coordinates on branch $\bstar$ are functions of the digest, and with the digest enumeration of \cref{thm:telf-strong-enumeration}, we obtain the exact selected-branch image, with an additional polynomial dependence on $s$.

\begin{corollaryE}[Strong Enumeration on Selected Branches]\label{cor:abs-strong-enumeration}
	There is a deterministic algorithm $\calE$ such that, for all $m\ge2$, $a,s\ge1$ with $s\le2^a$, $r\in[m]$, $\Sstar\subseteq\bits^a$ with $\abs{\Sstar}=s$, $(\fk,\td)\in\Supp(\ABSGenLos(m,1^a,1^s,r,\Sstar))$, and $\bstar\in\Sstar$, the output of $\calE(\fk,\bstar,r)$ is $\Image(\ABSEval(\fk,\bstar,\cdot))$ and its running time is $\poly(r,\log m+a+s)$.
\end{corollaryE}
\begin{proofE}
	We list one representative input for every digest and evaluate the branch on these representatives. If $r=m$, we evaluate $\ABSEval(\fk,\bstar,u)$ for all $u\in[m]$ and deduplicate the outputs, in time $m\cdot\poly(\log m+a+s)=\poly(r,\log m+a+s)$. We suppose henceforth that $r<m$. Then $P=P_{\beta(\Sstar)}$ and $P(\beta(\bstar))=0$, so by \cref{eq:abs-bundle-identity}, $\ABSEval(\fk,\bstar,u)$ is a function of the digest $\ELTDHHash(\hk,(u-1);0)$.

	We enumerate the distinct digests of the inputs $u-1$ for $u\in[m]$ with the prefix dynamic program of \cref{fig:telf-image-enumeration}, run on the hash bases of $\hk$, and we attach to each deduplicated state one processed prefix that produces it, keeping either prefix when two products coincide. By the bounded subset-sum argument in the proof of \cref{thm:telf-strong-enumeration}, at most $r$ digests occur at every position, so the state bound and the running time are unchanged, and we know an input for every final digest. We evaluate $\ABSEval(\fk,\bstar,\cdot)$ on these representatives and deduplicate. Every output on $\bstar$ arises from some digest, and equal digests have equal outputs, so the resulting list is the exact image, computed in time $\poly(r,\secparam)\cdot\bigO(s)=\poly(r,\log m+a+s)$ because $\secparam=\groupsec(n+as)$ and $n+as\le(n+a+s)^2$.
\end{proofE}

With the exact image list of \cref{cor:abs-strong-enumeration}, we can sample uniformly from every selected-branch image by choosing a uniform position in the deduplicated list, as for the list of \cref{thm:telf-strong-enumeration}, in time $\poly(r,\log m+a+s)$, which is $\poly(r,\log m+a)$ for $s=1$.

\ifConferenceVersion\else
\subsection{The All-But-One Case}
\label{subsec:abo-construction}\label{subsec:abo-security}
\fi
For $s=1$ and $\Sstar=\{\bstar\}$, we obtain the all-but-one trapdoor \ELF{} of~\cite{EC:Zhandry19a}, whose inversion oracle is available on every branch other than $\bstar$. Each bundle then consists of one encoding key, and in the proof of \cref{thm:abs-telf}, we answer inversion queries with the trapdoor of the second key while we change the first key under function privacy, and then we exchange the two keys.

\begin{corollaryE}[Selective All-But-One Trapdoor \ELF{}]\label{cor:abo-telf}\label{thm:abo-telf}\label{thm:abo-strong-enumeration}\label{thm:abo-functionality}
	Suppose that $\DJGen$ satisfies $\DCR$ against nonuniform $\PPT$ adversaries and tunable $\DSEDL$, and let $\ABOEval\cqq\ABSEval$, $\ABOInv\cqq\ABSInv$, and
	\[
		\begin{aligned}
			\ABOGenInj(m,1^a)&\cqq\ABSGenInj(m,1^a,1^1),\\
			\ABOGenLos(m,1^a,r,\bstar)&\cqq\ABSGenLos(m,1^a,1^1,r,\{\bstar\})
		\end{aligned}
	\]
	for the algorithms of \cref{fig:abs-telf}. Then $\ABOTELF$ is a selectively secure all-but-one trapdoor \ELF{}\ifConferenceVersion\else{} in the sense of \cref{def:abo-telf}\fi. For all $m\ge2$ and $a\ge1$, with $n\cqq\clog{m}$ and $w\cqq2\ell(\groupsec(n+a))$, and excluding the modulus, the generator, and integer metadata, its public description contains $3(w+1)$ group elements, its trapdoor contains two \ELTDH{} exponents, and its output contains one group element and two residues. There is a deterministic algorithm $\calE$ such that, for all $m\ge2$, $a\ge1$, $r\in[m]$, $\bstar\in\bits^a$, and $(\fk,\td)\in\Supp(\ABOGenLos(m,1^a,r,\bstar))$, the output of $\calE(\fk,\bstar,r)$ is $\Image(\ABOEval(\fk,\bstar,\cdot))$ and its running time is $\poly(r,\log m+a)$.
\end{corollaryE}
\begin{proofE}
	For $s=1$, the polynomials in \cref{fig:abs-telf} are $\Ptau(T)=\tau-T$ and $P_{\{\betastar\}}(T)=\betastar-T$ with $\betastar\cqq\beta(\bstar)$, so each bundle consists of one key encoding $\tau$ or $\betastar$, and $E_q=e_{q,0}-\beta(b)x$. The selected set $\{\bstar\}$ determines its unique branch, so the tuple \ifConferenceVersion obtained from $\ABOTELF$ by this identification \else that \cref{def:abo-telf} derives from $\ABOTELF$ \fi is the construction of \cref{fig:abs-telf} at $s=1$. Hence, the security, functionality, and size claims are \cref{thm:abs-telf} at $s=1$ and $S\equiv1$, with $Q_{\ABO}\cqq Q_{\ABS}$, and the enumeration claim is \cref{cor:abs-strong-enumeration} at $s=1$, since $\poly(r,\log m+a+1)=\poly(r,\log m+a)$.
\end{proofE}

\begin{aidisclosure}
	Commercial AI tools, including ChatGPT and Claude, were used to assist with grammar, proofreading, expository polishing, and \LaTeX{} formatting. These tools were not used in developing the scientific content, results, proofs, analyses, or conclusions, all of which were produced by the authors.
\end{aidisclosure}
\ifAnonymous\else
	\begin{acknowledgements}
        A.\ Datta and A.\ Dutta thank Nilabha Saha for discussions on the presentation of this work. A.\ Datta and D.\ Abram were supported by Input Output (iog.io) through their funding of the Edinburgh Blockchain Technology Lab. A.\ Dutta was supported by Cyberagentur under the project ``Multi-Party Computation in the Confidential Cloud (MPCC)''. L.\ Roy was supported by the European Research Council (ERC) under the European Union's Horizon 2020 research and innovation programme under grant agreement number 101124977 (DECRYPSIS), and by the Swiss National Science Foundation (SNSF) under the Starting grant ``Low-Entropy Cryptography for the Quantum Era''.
	\end{acknowledgements}
\fi

\ifConferenceVersion\else

	\linespread{1}
\fi
\PrintPaperBibliography

\ifConferenceVersion
	\BeginPaperAppendices

	\section{Additional Preliminaries}\label{app:supplementary-preliminaries}

\PlaceRelocated{notation}
\PlaceRelocated{elf-definition}
\PlaceRelocated{enumerability}
\PlaceRelocated{enumerability-telf}
\PlaceRelocated{group-samplers}
\PlaceRelocated{dj-groups}

\subsection{Offset Removal}\label{app:offset-removal}

For \cref{thm:eltdh-function-privacy}, we remove the predicate offsets from encoding keys whose bases are uniform in $\GN$, whose exponent is uniform in $[0..2^{\expl}-1]$, and whose offsets are chosen after the modulus and the bases. The function-privacy arguments of~\cite{C:DGIMMO19} and~\cite{C:RoySin21} remove such offsets for bases $r^N$ with $r$ uniform in $\ZNsqx$ and for exponents uniform in $[0..N-1]$. We adapt the offset removal of~\cite{C:BraGol10,C:DGIMMO19} to our distributions and reduce it to \cref{lem:interactive-vector-game} at one query.

\PlaceRelocated{offset-removal}

	\section{Image-Size Bounds and Generic-Group Analysis}\label{app:small-exponent-details}

We prove the birthday bounds behind the lower bound on the \ELF{} threshold in \cref{sec:limits-small-exponent} and the generic-group bounds for bounded powers summarized in \cref{subsec:generic-group-dsedl}.

\subsection{Birthday Collision Bounds}\label{app:collision-bounds}

\PlaceRelocated{collision-bounds}

\PlaceRelocated{elf-threshold-corollary}

\subsection{Generic-Group Bounds for Bounded Powers}\label{app:generic-group-evidence}

\PlaceRelocated{generic-evidence}

\PlaceRelocated{aux-generic-evidence}

	\section{Image Enumeration and the Plain \texorpdfstring{\ELF{}}{ELF}}\label{app:instantiations-enumeration}

We give the prefix dynamic program for \cref{thm:telf-strong-enumeration}, a counting bound for decoding from values in bounded sets, and the complete construction and proofs for the linear-size plain \ELF{} of \cref{subsec:plain-elf}.

\subsection{Enumerating the Trapdoor \texorpdfstring{\ELF{}}{ELF} Image}\label{app:telf-enumeration-algorithm}

\PlaceRelocated{telf-enumeration-algorithm}

In the trapdoor \ELF{} of \cref{subsec:telf-from-eltdh}, we recover the complete $n$-bit input from one decoded residue in $\ZN$. When every decoded value lies in a set of polynomial size, we need many decoded values, by a count of the possible tuples of decoded values.

\PlaceRelocated{packed-row-remark}

\subsection{Linear-Size Plain \texorpdfstring{\ELF{}}{ELF}}\label{app:plain-elf}

\PlaceRelocated{plain-elf-full}

	\section{Related Work}\label{sec:related-work}
	\PlaceRelocated{related-work}

	\PrintDeferredProofAppendix
\fi


\newcommand{\etalchar}[1]{$^{#1}$}
\begin{thebibliography}{OPWW15}

\bibitem[ABG{\etalchar{+}}26]{EC:ABGMM26}
Gennaro Avitabile, Vincenzo Botta, Emanuele Giunta, Marcin Mielniczuk, and
  Francesco Migliaro.
\newblock The malice of {ELFs}: Practical anamorphic-resistant encryption
  without random oracles.
\newblock In Joan Daemen and Emmanuel Thom{\'e}, editors, {\em Advances in
  Cryptology -- {EUROCRYPT}~2026, Part~V}, volume 16545 of {\em Lecture Notes
  in Computer Science}, pages 303--333, Rome, Italy, May~10--14, 2026.
  Springer, Cham, Switzerland.
\newblock \href {https://doi.org/10.1007/978-3-032-25330-9_11}
  {\path{doi:10.1007/978-3-032-25330-9_11}}.

\bibitem[ACK23]{PKC:AttCapKoh23}
Thomas Attema, Pedro Capit{\~a}o, and Lisa Kohl.
\newblock On homomorphic secret sharing from polynomial-modulus {LWE}.
\newblock In Alexandra Boldyreva and Vladimir Kolesnikov, editors, {\em
  PKC~2023: 26th International Conference on Theory and Practice of Public Key
  Cryptography, Part~II}, volume 13941 of {\em Lecture Notes in Computer
  Science}, pages 3--32, Atlanta, GA, USA, May~7--10, 2023. Springer, Cham,
  Switzerland.
\newblock \href {https://doi.org/10.1007/978-3-031-31371-4_1}
  {\path{doi:10.1007/978-3-031-31371-4_1}}.

\bibitem[ADOS22]{C:ADOS22}
Damiano Abram, Ivan Damg{\r a}rd, Claudio Orlandi, and Peter Scholl.
\newblock An algebraic framework for silent preprocessing with trustless setup
  and active security.
\newblock In Yevgeniy Dodis and Thomas Shrimpton, editors, {\em Advances in
  Cryptology -- {CRYPTO}~2022, Part~IV}, volume 13510 of {\em Lecture Notes in
  Computer Science}, pages 421--452, Santa Barbara, CA, USA, August~15--18,
  2022. Springer, Cham, Switzerland.
\newblock \href {https://doi.org/10.1007/978-3-031-15985-5_15}
  {\path{doi:10.1007/978-3-031-15985-5_15}}.

\bibitem[AFLT12]{EC:AFLT12}
Michel Abdalla, Pierre-Alain Fouque, Vadim Lyubashevsky, and Mehdi Tibouchi.
\newblock Tightly-secure signatures from lossy identification schemes.
\newblock In David Pointcheval and Thomas Johansson, editors, {\em Advances in
  Cryptology -- {EUROCRYPT}~2012}, volume 7237 of {\em Lecture Notes in
  Computer Science}, pages 572--590, Cambridge, UK, April~15--19, 2012.
  Springer Berlin Heidelberg, Germany.
\newblock \href {https://doi.org/10.1007/978-3-642-29011-4_34}
  {\path{doi:10.1007/978-3-642-29011-4_34}}.

\bibitem[AFLT16]{JC:AFLT16}
Michel Abdalla, Pierre-Alain Fouque, Vadim Lyubashevsky, and Mehdi Tibouchi.
\newblock Tightly secure signatures from lossy identification schemes.
\newblock {\em Journal of Cryptology}, 29(3):597--631, July 2016.
\newblock \href {https://doi.org/10.1007/s00145-015-9203-7}
  {\path{doi:10.1007/s00145-015-9203-7}}.

\bibitem[AMR25a]{TCC:AbrMalRoy25}
Damiano Abram, Giulio Malavolta, and Lawrence Roy.
\newblock Slightly sublinear trapdoor hash functions and {PIR} from low-noise
  {LPN}.
\newblock In Benny Applebaum and Huijia~(Rachel) Lin, editors, {\em TCC~2025:
  23rd Theory of Cryptography Conference, Part~I}, volume 16268 of {\em Lecture
  Notes in Computer Science}, pages 280--300, Aarhus, Denmark, December~1--5,
  2025. Springer, Cham, Switzerland.
\newblock \href {https://doi.org/10.1007/978-3-032-12287-2_10}
  {\path{doi:10.1007/978-3-032-12287-2_10}}.

\bibitem[AMR25b]{STOC:AbrMalRoy25}
Damiano Abram, Giulio Malavolta, and Lawrence Roy.
\newblock Succinct oblivious tensor evaluation and applications:
  Adaptively-secure laconic function evaluation and trapdoor hashing for all
  circuits.
\newblock In Michal Kouck\'{y} and Nikhil Bansal, editors, {\em 57th Annual
  {ACM} Symposium on Theory of Computing}, pages 1875--1886, Prague, Czechia,
  June~23--27, 2025. {ACM} Press.
\newblock \href {https://doi.org/10.1145/3717823.3718164}
  {\path{doi:10.1145/3717823.3718164}}.

\bibitem[BCD{\etalchar{+}}25]{EC:BCDJMS25}
Pedro Branco, Arka~Rai Choudhuri, Nico D{\"o}ttling, Abhishek Jain, Giulio
  Malavolta, and Akshayaram Srinivasan.
\newblock Black-box non-interactive zero knowledge from vector trapdoor hash.
\newblock In Serge Fehr and Pierre-Alain Fouque, editors, {\em Advances in
  Cryptology -- {EUROCRYPT}~2025, Part~IV}, volume 15604 of {\em Lecture Notes
  in Computer Science}, pages 64--92, Madrid, Spain, May~4--8, 2025. Springer,
  Cham, Switzerland.
\newblock \href {https://doi.org/10.1007/978-3-031-91134-7_3}
  {\path{doi:10.1007/978-3-031-91134-7_3}}.

\bibitem[BCG{\etalchar{+}}17]{CCS:BCGIO17}
Elette Boyle, Geoffroy Couteau, Niv Gilboa, Yuval Ishai, and Michele Orr{\`u}.
\newblock Homomorphic secret sharing: Optimizations and applications.
\newblock In Bhavani~M. Thuraisingham, David Evans, Tal Malkin, and Dongyan Xu,
  editors, {\em ACM CCS 2017: 24th Conference on Computer and Communications
  Security}, pages 2105--2122, Dallas, TX, USA, October~31~--~November~2, 2017.
  {ACM} Press.
\newblock \href {https://doi.org/10.1145/3133956.3134107}
  {\path{doi:10.1145/3133956.3134107}}.

\bibitem[BDHL25]{AC:BDHL25}
Rishabh Bhadauria, Nico D{\"o}ttling, Carmit Hazay, and Chuanwei Lin.
\newblock Laconic cryptography with preprocessing.
\newblock In Goichiro Hanaoka and Bo-Yin Yang, editors, {\em Advances in
  Cryptology -- {ASIACRYPT}~2025, Part~V}, volume 16249 of {\em Lecture Notes
  in Computer Science}, pages 228--261, Melbourne, VIC, Australia,
  December~8--12, 2025. Springer, Singapore, Singapore.
\newblock \href {https://doi.org/10.1007/978-981-95-5116-3_8}
  {\path{doi:10.1007/978-981-95-5116-3_8}}.

\bibitem[BDSZ24]{PKC:BDSZ24}
Pedro Branco, Nico D{\"o}ttling, Akshayaram Srinivasan, and Riccardo Zanotto.
\newblock Rate-1 fully local somewhere extractable hashing from {DDH}.
\newblock In Qiang Tang and Vanessa Teague, editors, {\em PKC~2024: 27th
  International Conference on Theory and Practice of Public Key Cryptography,
  Part~III}, volume 14603 of {\em Lecture Notes in Computer Science}, pages
  356--386, Sydney, NSW, Australia, April~15--17, 2024. Springer, Cham,
  Switzerland.
\newblock \href {https://doi.org/10.1007/978-3-031-57725-3_12}
  {\path{doi:10.1007/978-3-031-57725-3_12}}.

\bibitem[Ber80]{Ber1980}
Geoffrey~C. Berresford.
\newblock {The Uniformity Assumption in the Birthday Problem}.
\newblock {\em Mathematics Magazine}, 53(5):286--288, 1980.
\newblock \href {https://doi.org/10.1080/0025570x.1980.11976872}
  {\path{doi:10.1080/0025570x.1980.11976872}}.

\bibitem[BFO08]{C:BolFehONe08}
Alexandra Boldyreva, Serge Fehr, and Adam O'Neill.
\newblock On notions of security for deterministic encryption, and efficient
  constructions without random oracles.
\newblock In David Wagner, editor, {\em Advances in Cryptology --
  {CRYPTO}~2008}, volume 5157 of {\em Lecture Notes in Computer Science}, pages
  335--359, Santa Barbara, CA, USA, August~17--21, 2008. Springer Berlin
  Heidelberg, Germany.
\newblock \href {https://doi.org/10.1007/978-3-540-85174-5_19}
  {\path{doi:10.1007/978-3-540-85174-5_19}}.

\bibitem[BG10]{C:BraGol10}
Zvika Brakerski and Shafi Goldwasser.
\newblock Circular and leakage resilient public-key encryption under subgroup
  indistinguishability - (or: Quadratic residuosity strikes back).
\newblock In Tal Rabin, editor, {\em Advances in Cryptology -- {CRYPTO}~2010},
  volume 6223 of {\em Lecture Notes in Computer Science}, pages 1--20, Santa
  Barbara, CA, USA, August~15--19, 2010. Springer Berlin Heidelberg, Germany.
\newblock \href {https://doi.org/10.1007/978-3-642-14623-7_1}
  {\path{doi:10.1007/978-3-642-14623-7_1}}.

\bibitem[BG25]{EC:BitGar25}
Nir Bitansky and Rachit Garg.
\newblock Succinct randomized encodings from laconic function evaluation,
  faster and simpler.
\newblock In Serge Fehr and Pierre-Alain Fouque, editors, {\em Advances in
  Cryptology -- {EUROCRYPT}~2025, Part~VII}, volume 15607 of {\em Lecture Notes
  in Computer Science}, pages 406--436, Madrid, Spain, May~4--8, 2025.
  Springer, Cham, Switzerland.
\newblock \href {https://doi.org/10.1007/978-3-031-91098-2_15}
  {\path{doi:10.1007/978-3-031-91098-2_15}}.

\bibitem[BGI15]{EC:BoyGilIsh15}
Elette Boyle, Niv Gilboa, and Yuval Ishai.
\newblock Function secret sharing.
\newblock In Elisabeth Oswald and Marc Fischlin, editors, {\em Advances in
  Cryptology -- {EUROCRYPT}~2015, Part~II}, volume 9057 of {\em Lecture Notes
  in Computer Science}, pages 337--367, Sofia, Bulgaria, April~26--30, 2015.
  Springer Berlin Heidelberg, Germany.
\newblock \href {https://doi.org/10.1007/978-3-662-46803-6_12}
  {\path{doi:10.1007/978-3-662-46803-6_12}}.

\bibitem[BGI16a]{C:BoyGilIsh16}
Elette Boyle, Niv Gilboa, and Yuval Ishai.
\newblock Breaking the circuit size barrier for secure computation under {DDH}.
\newblock In Matthew Robshaw and Jonathan Katz, editors, {\em Advances in
  Cryptology -- {CRYPTO}~2016, Part~I}, volume 9814 of {\em Lecture Notes in
  Computer Science}, pages 509--539, Santa Barbara, CA, USA, August~14--18,
  2016. Springer Berlin Heidelberg, Germany.
\newblock \href {https://doi.org/10.1007/978-3-662-53018-4_19}
  {\path{doi:10.1007/978-3-662-53018-4_19}}.

\bibitem[BGI16b]{CCS:BoyGilIsh16}
Elette Boyle, Niv Gilboa, and Yuval Ishai.
\newblock Function secret sharing: Improvements and extensions.
\newblock In Edgar~R. Weippl, Stefan Katzenbeisser, Christopher Kruegel,
  Andrew~C. Myers, and Shai Halevi, editors, {\em ACM CCS 2016: 23rd Conference
  on Computer and Communications Security}, pages 1292--1303, Vienna, Austria,
  October~24--28, 2016. {ACM} Press.
\newblock \href {https://doi.org/10.1145/2976749.2978429}
  {\path{doi:10.1145/2976749.2978429}}.

\bibitem[BGI{\etalchar{+}}18]{ITCS:BGILT18}
Elette Boyle, Niv Gilboa, Yuval Ishai, Huijia Lin, and Stefano Tessaro.
\newblock Foundations of homomorphic secret sharing.
\newblock In Anna~R. Karlin, editor, {\em ITCS 2018: 9th Innovations in
  Theoretical Computer Science Conference}, volume~94, pages 21:1--21:21,
  Cambridge, MA, USA, January~11--14, 2018. Leibniz International Proceedings
  in Informatics (LIPIcs).
\newblock \href {https://doi.org/10.4230/LIPIcs.ITCS.2018.21}
  {\path{doi:10.4230/LIPIcs.ITCS.2018.21}}.

\bibitem[BHK13]{C:BelHoaKee13}
Mihir Bellare, Viet~Tung Hoang, and Sriram Keelveedhi.
\newblock Instantiating random oracles via {UCEs}.
\newblock In Ran Canetti and Juan~A. Garay, editors, {\em Advances in
  Cryptology -- {CRYPTO}~2013, Part~II}, volume 8043 of {\em Lecture Notes in
  Computer Science}, pages 398--415, Santa Barbara, CA, USA, August~18--22,
  2013. Springer Berlin Heidelberg, Germany.
\newblock \href {https://doi.org/10.1007/978-3-642-40084-1_23}
  {\path{doi:10.1007/978-3-642-40084-1_23}}.

\bibitem[BK04]{EC:BelKoh04}
Mihir Bellare and Tadayoshi Kohno.
\newblock Hash function balance and its impact on birthday attacks.
\newblock In Christian Cachin and Jan Camenisch, editors, {\em Advances in
  Cryptology -- {EUROCRYPT}~2004}, volume 3027 of {\em Lecture Notes in
  Computer Science}, pages 401--418, Interlaken, Switzerland, May~2--6, 2004.
  Springer Berlin Heidelberg, Germany.
\newblock \href {https://doi.org/10.1007/978-3-540-24676-3_24}
  {\path{doi:10.1007/978-3-540-24676-3_24}}.

\bibitem[BKPW12]{EC:BKPW12}
Mihir Bellare, Eike Kiltz, Chris Peikert, and Brent Waters.
\newblock Identity-based (lossy) trapdoor functions and applications.
\newblock In David Pointcheval and Thomas Johansson, editors, {\em Advances in
  Cryptology -- {EUROCRYPT}~2012}, volume 7237 of {\em Lecture Notes in
  Computer Science}, pages 228--245, Cambridge, UK, April~15--19, 2012.
  Springer Berlin Heidelberg, Germany.
\newblock \href {https://doi.org/10.1007/978-3-642-29011-4_15}
  {\path{doi:10.1007/978-3-642-29011-4_15}}.

\bibitem[BKS19]{EC:BoyKohSch19}
Elette Boyle, Lisa Kohl, and Peter Scholl.
\newblock Homomorphic secret sharing from lattices without {FHE}.
\newblock In Yuval Ishai and Vincent Rijmen, editors, {\em Advances in
  Cryptology -- {EUROCRYPT}~2019, Part~II}, volume 11477 of {\em Lecture Notes
  in Computer Science}, pages 3--33, Darmstadt, Germany, May~19--23, 2019.
  Springer, Cham, Switzerland.
\newblock \href {https://doi.org/10.1007/978-3-030-17656-3_1}
  {\path{doi:10.1007/978-3-030-17656-3_1}}.

\bibitem[BL17]{C:BoyLi17}
Xavier Boyen and Qinyi Li.
\newblock All-but-many lossy trapdoor functions from lattices and applications.
\newblock In Jonathan Katz and Hovav Shacham, editors, {\em Advances in
  Cryptology -- {CRYPTO}~2017, Part~III}, volume 10403 of {\em Lecture Notes in
  Computer Science}, pages 298--331, Santa Barbara, CA, USA, August~20--24,
  2017. Springer, Cham, Switzerland.
\newblock \href {https://doi.org/10.1007/978-3-319-63697-9_11}
  {\path{doi:10.1007/978-3-319-63697-9_11}}.

\bibitem[BW10]{ACNS:BoyWat10}
Xavier Boyen and Brent Waters.
\newblock Shrinking the keys of discrete-log-type lossy trapdoor functions.
\newblock In Jianying Zhou and Moti Yung, editors, {\em ACNS 2010: 8th
  International Conference on Applied Cryptography and Network Security},
  volume 6123 of {\em Lecture Notes in Computer Science}, pages 35--52,
  Beijing, China, June~22--25, 2010. Springer Berlin Heidelberg, Germany.
\newblock \href {https://doi.org/10.1007/978-3-642-13708-2_3}
  {\path{doi:10.1007/978-3-642-13708-2_3}}.

\bibitem[CDG{\etalchar{+}}17]{C:CDGGMP17}
Chongwon Cho, Nico D{\"o}ttling, Sanjam Garg, Divya Gupta, Peihan Miao, and
  Antigoni Polychroniadou.
\newblock Laconic oblivious transfer and its applications.
\newblock In Jonathan Katz and Hovav Shacham, editors, {\em Advances in
  Cryptology -- {CRYPTO}~2017, Part~II}, volume 10402 of {\em Lecture Notes in
  Computer Science}, pages 33--65, Santa Barbara, CA, USA, August~20--24, 2017.
  Springer, Cham, Switzerland.
\newblock \href {https://doi.org/10.1007/978-3-319-63715-0_2}
  {\path{doi:10.1007/978-3-319-63715-0_2}}.

\bibitem[CDG18]{C:CorDodGuo18}
Sandro Coretti, Yevgeniy Dodis, and Siyao Guo.
\newblock Non-uniform bounds in the random-permutation, ideal-cipher, and
  generic-group models.
\newblock In Hovav Shacham and Alexandra Boldyreva, editors, {\em Advances in
  Cryptology -- {CRYPTO}~2018, Part~I}, volume 10991 of {\em Lecture Notes in
  Computer Science}, pages 693--721, Santa Barbara, CA, USA, August~19--23,
  2018. Springer, Cham, Switzerland.
\newblock \href {https://doi.org/10.1007/978-3-319-96884-1_23}
  {\path{doi:10.1007/978-3-319-96884-1_23}}.

\bibitem[CGH98]{STOC:CanGolHal98}
Ran Canetti, Oded Goldreich, and Shai Halevi.
\newblock The random oracle methodology, revisited (preliminary version).
\newblock In {\em 30th Annual {ACM} Symposium on Theory of Computing}, pages
  209--218, Dallas, TX, USA, May~23--26, 1998. {ACM} Press.
\newblock \href {https://doi.org/10.1145/276698.276741}
  {\path{doi:10.1145/276698.276741}}.

\bibitem[CHP25]{EC:CouHegPu25}
Geoffroy Couteau, Aditya Hegde, and Sihang Pu.
\newblock Enhanced trapdoor hashing from {DDH} and {DCR}.
\newblock In Serge Fehr and Pierre-Alain Fouque, editors, {\em Advances in
  Cryptology -- {EUROCRYPT}~2025, Part~VI}, volume 15606 of {\em Lecture Notes
  in Computer Science}, pages 123--151, Madrid, Spain, May~4--8, 2025.
  Springer, Cham, Switzerland.
\newblock \href {https://doi.org/10.1007/978-3-031-91095-1_5}
  {\path{doi:10.1007/978-3-031-91095-1_5}}.

\bibitem[CHW26]{EC:CorHenWu26}
Henry {Corrigan-Gibbs}, Alexandra Henzinger, and David~J. Wu.
\newblock The structured generic-group model.
\newblock In Joan Daemen and Emmanuel Thom{\'e}, editors, {\em Advances in
  Cryptology -- {EUROCRYPT}~2026, Part~V}, volume 16545 of {\em Lecture Notes
  in Computer Science}, pages 95--124, Rome, Italy, May~10--14, 2026. Springer,
  Cham, Switzerland.
\newblock \href {https://doi.org/10.1007/978-3-032-25330-9_4}
  {\path{doi:10.1007/978-3-032-25330-9_4}}.

\bibitem[CK18]{EC:CorKog18}
Henry {Corrigan-Gibbs} and Dmitry Kogan.
\newblock The discrete-logarithm problem with preprocessing.
\newblock In Jesper~Buus Nielsen and Vincent Rijmen, editors, {\em Advances in
  Cryptology -- {EUROCRYPT}~2018, Part~II}, volume 10821 of {\em Lecture Notes
  in Computer Science}, pages 415--447, Tel Aviv, Israel, April~29~--~May~3,
  2018. Springer, Cham, Switzerland.
\newblock \href {https://doi.org/10.1007/978-3-319-78375-8_14}
  {\path{doi:10.1007/978-3-319-78375-8_14}}.

\bibitem[CS06]{PKC:ComSem06}
An~Commeine and Igor Semaev.
\newblock An algorithm to solve the discrete logarithm problem with the number
  field sieve.
\newblock In Moti Yung, Yevgeniy Dodis, Aggelos Kiayias, and Tal Malkin,
  editors, {\em PKC~2006: 9th International Conference on Theory and Practice
  of Public Key Cryptography}, volume 3958 of {\em Lecture Notes in Computer
  Science}, pages 174--190, New York, NY, USA, April~24--26, 2006. Springer
  Berlin Heidelberg, Germany.
\newblock \href {https://doi.org/10.1007/11745853_12}
  {\path{doi:10.1007/11745853_12}}.

\bibitem[DGGM19]{FOCS:DGGM19}
Nico D{\"o}ttling, Sanjam Garg, Vipul Goyal, and Giulio Malavolta.
\newblock Laconic conditional disclosure of secrets and applications.
\newblock In David Zuckerman, editor, {\em 60th Annual Symposium on Foundations
  of Computer Science}, pages 661--685, Baltimore, MD, USA, November~9--12,
  2019. {IEEE} Computer Society Press.
\newblock \href {https://doi.org/10.1109/FOCS.2019.00046}
  {\path{doi:10.1109/FOCS.2019.00046}}.

\bibitem[DGI{\etalchar{+}}19]{C:DGIMMO19}
Nico D{\"o}ttling, Sanjam Garg, Yuval Ishai, Giulio Malavolta, Tamer Mour, and
  Rafail Ostrovsky.
\newblock Trapdoor hash functions and their applications.
\newblock In Alexandra Boldyreva and Daniele Micciancio, editors, {\em Advances
  in Cryptology -- {CRYPTO}~2019, Part~III}, volume 11694 of {\em Lecture Notes
  in Computer Science}, pages 3--32, Santa Barbara, CA, USA, August~18--22,
  2019. Springer, Cham, Switzerland.
\newblock \href {https://doi.org/10.1007/978-3-030-26954-8_1}
  {\path{doi:10.1007/978-3-030-26954-8_1}}.

\bibitem[DGKV22]{FOCS:DGKV22}
Lalita Devadas, Rishab Goyal, Yael Kalai, and Vinod Vaikuntanathan.
\newblock Rate-1 non-interactive arguments for batch-{NP} and applications.
\newblock In {\em 63rd Annual Symposium on Foundations of Computer Science},
  pages 1057--1068, Denver, CO, USA, October~31~--~November~3, 2022. {IEEE}
  Computer Society Press.
\newblock \href {https://doi.org/10.1109/FOCS54457.2022.00103}
  {\path{doi:10.1109/FOCS54457.2022.00103}}.

\bibitem[DIJL23]{C:DIJL23}
Quang Dao, Yuval Ishai, Aayush Jain, and Huijia Lin.
\newblock Multi-party homomorphic secret sharing and sublinear {MPC} from
  sparse {LPN}.
\newblock In Helena Handschuh and Anna Lysyanskaya, editors, {\em Advances in
  Cryptology -- {CRYPTO}~2023, Part~II}, volume 14082 of {\em Lecture Notes in
  Computer Science}, pages 315--348, Santa Barbara, CA, USA, August~20--24,
  2023. Springer, Cham, Switzerland.
\newblock \href {https://doi.org/10.1007/978-3-031-38545-2_11}
  {\path{doi:10.1007/978-3-031-38545-2_11}}.

\bibitem[DJ01]{PKC:DamJur01}
Ivan Damg{\aa}rd and Mats Jurik.
\newblock A generalisation, a simplification and some applications of
  {Paillier}'s probabilistic public-key system.
\newblock In Kwangjo Kim, editor, {\em PKC~2001: 4th International Workshop on
  Theory and Practice in Public Key Cryptography}, volume 1992 of {\em Lecture
  Notes in Computer Science}, pages 119--136, Cheju Island, South Korea,
  February~13--15, 2001. Springer Berlin Heidelberg, Germany.
\newblock \href {https://doi.org/10.1007/3-540-44586-2_9}
  {\path{doi:10.1007/3-540-44586-2_9}}.

\bibitem[DJ24]{C:DaoJai24}
Quang Dao and Aayush Jain.
\newblock Lossy cryptography from code-based assumptions.
\newblock In Leonid Reyzin and Douglas Stebila, editors, {\em Advances in
  Cryptology -- {CRYPTO}~2024, Part~III}, volume 14922 of {\em Lecture Notes in
  Computer Science}, pages 34--75, Santa Barbara, CA, USA, August~18--22, 2024.
  Springer, Cham, Switzerland.
\newblock \href {https://doi.org/10.1007/978-3-031-68382-4_2}
  {\path{doi:10.1007/978-3-031-68382-4_2}}.

\bibitem[DKK18]{C:DinKelKle18}
Itai Dinur, Nathan Keller, and Ohad Klein.
\newblock An optimal distributed discrete log protocol with applications to
  homomorphic secret sharing.
\newblock In Hovav Shacham and Alexandra Boldyreva, editors, {\em Advances in
  Cryptology -- {CRYPTO}~2018, Part~III}, volume 10993 of {\em Lecture Notes in
  Computer Science}, pages 213--242, Santa Barbara, CA, USA, August~19--23,
  2018. Springer, Cham, Switzerland.
\newblock \href {https://doi.org/10.1007/978-3-319-96878-0_8}
  {\path{doi:10.1007/978-3-319-96878-0_8}}.

\bibitem[DM13]{EC:DotMul13}
Nico D{\"o}ttling and J{\"o}rn {M{\"u}ller-Quade}.
\newblock Lossy codes and a new variant of the learning-with-errors problem.
\newblock In Thomas Johansson and Phong~Q. Nguyen, editors, {\em Advances in
  Cryptology -- {EUROCRYPT}~2013}, volume 7881 of {\em Lecture Notes in
  Computer Science}, pages 18--34, Athens, Greece, May~26--30, 2013. Springer
  Berlin Heidelberg, Germany.
\newblock \href {https://doi.org/10.1007/978-3-642-38348-9_2}
  {\path{doi:10.1007/978-3-642-38348-9_2}}.

\bibitem[Fel68]{Fel1968a}
William Feller.
\newblock {\em {An Introduction to Probability Theory and Its Applications}},
  volume~1 of {\em Wiley Series in Probability and Statistics}.
\newblock John Wiley \& Sons, New York, 3 edition, 1968.

\bibitem[FGK{\etalchar{+}}10]{PKC:FGKRS10}
David~Mandell Freeman, Oded Goldreich, Eike Kiltz, Alon Rosen, and Gil Segev.
\newblock More constructions of lossy and correlation-secure trapdoor
  functions.
\newblock In Phong~Q. Nguyen and David Pointcheval, editors, {\em PKC~2010:
  13th International Conference on Theory and Practice of Public Key
  Cryptography}, volume 6056 of {\em Lecture Notes in Computer Science}, pages
  279--295, Paris, France, May~26--28, 2010. Springer Berlin Heidelberg,
  Germany.
\newblock \href {https://doi.org/10.1007/978-3-642-13013-7_17}
  {\path{doi:10.1007/978-3-642-13013-7_17}}.

\bibitem[FR23]{TCC:FisRoh23}
Marc Fischlin and Felix Rohrbach.
\newblock Searching for {ELFs} in the cryptographic forest.
\newblock In Guy~N. Rothblum and Hoeteck Wee, editors, {\em TCC~2023: 21st
  Theory of Cryptography Conference, Part~III}, volume 14371 of {\em Lecture
  Notes in Computer Science}, pages 207--236, Taipei, Taiwan,
  November~29~--~December~2, 2023. Springer, Cham, Switzerland.
\newblock \href {https://doi.org/10.1007/978-3-031-48621-0_8}
  {\path{doi:10.1007/978-3-031-48621-0_8}}.

\bibitem[GI14]{EC:GilIsh14}
Niv Gilboa and Yuval Ishai.
\newblock Distributed point functions and their applications.
\newblock In Phong~Q. Nguyen and Elisabeth Oswald, editors, {\em Advances in
  Cryptology -- {EUROCRYPT}~2014}, volume 8441 of {\em Lecture Notes in
  Computer Science}, pages 640--658, Copenhagen, Denmark, May~11--15, 2014.
  Springer Berlin Heidelberg, Germany.
\newblock \href {https://doi.org/10.1007/978-3-642-55220-5_35}
  {\path{doi:10.1007/978-3-642-55220-5_35}}.

\bibitem[GOR11]{TCC:GoyONeRao11}
Vipul Goyal, Adam O'Neill, and Vanishree Rao.
\newblock Correlated-input secure hash functions.
\newblock In Yuval Ishai, editor, {\em TCC~2011: 8th Theory of Cryptography
  Conference}, volume 6597 of {\em Lecture Notes in Computer Science}, pages
  182--200, Providence, RI, USA, March~28--30, 2011. Springer Berlin
  Heidelberg, Germany.
\newblock \href {https://doi.org/10.1007/978-3-642-19571-6_12}
  {\path{doi:10.1007/978-3-642-19571-6_12}}.

\bibitem[GW11]{STOC:GenWic11}
Craig Gentry and Daniel Wichs.
\newblock Separating succinct non-interactive arguments from all falsifiable
  assumptions.
\newblock In Lance Fortnow and Salil~P. Vadhan, editors, {\em 43rd Annual {ACM}
  Symposium on Theory of Computing}, pages 99--108, San Jose, CA, USA,
  June~6--8, 2011. {ACM} Press.
\newblock \href {https://doi.org/10.1145/1993636.1993651}
  {\path{doi:10.1145/1993636.1993651}}.

\bibitem[HLOV11]{AC:HLOV11}
Brett Hemenway, Beno{\^i}t Libert, Rafail Ostrovsky, and Damien Vergnaud.
\newblock Lossy encryption: Constructions from general assumptions and
  efficient selective opening chosen ciphertext security.
\newblock In Dong~Hoon Lee and Xiaoyun Wang, editors, {\em Advances in
  Cryptology -- {ASIACRYPT}~2011}, volume 7073 of {\em Lecture Notes in
  Computer Science}, pages 70--88, Seoul, South Korea, December~4--8, 2011.
  Springer Berlin Heidelberg, Germany.
\newblock \href {https://doi.org/10.1007/978-3-642-25385-0_4}
  {\path{doi:10.1007/978-3-642-25385-0_4}}.

\bibitem[HO13]{AC:HemOst13}
Brett Hemenway and Rafail Ostrovsky.
\newblock Building lossy trapdoor functions from lossy encryption.
\newblock In Kazue Sako and Palash Sarkar, editors, {\em Advances in Cryptology
  -- {ASIACRYPT}~2013, Part~II}, volume 8270 of {\em Lecture Notes in Computer
  Science}, pages 241--260, Bengalore, India, December~1--5, 2013. Springer
  Berlin Heidelberg, Germany.
\newblock \href {https://doi.org/10.1007/978-3-642-42045-0_13}
  {\path{doi:10.1007/978-3-642-42045-0_13}}.

\bibitem[Hoe63]{Hoe1963}
Wassily Hoeffding.
\newblock {Probability Inequalities for Sums of Bounded Random Variables}.
\newblock {\em Journal of the American Statistical Association},
  58(301):13--30, 1963.
\newblock \href {https://doi.org/10.1080/01621459.1963.10500830}
  {\path{doi:10.1080/01621459.1963.10500830}}.

\bibitem[Hof12]{EC:Hofheinz12}
Dennis Hofheinz.
\newblock All-but-many lossy trapdoor functions.
\newblock In David Pointcheval and Thomas Johansson, editors, {\em Advances in
  Cryptology -- {EUROCRYPT}~2012}, volume 7237 of {\em Lecture Notes in
  Computer Science}, pages 209--227, Cambridge, UK, April~15--19, 2012.
  Springer Berlin Heidelberg, Germany.
\newblock \href {https://doi.org/10.1007/978-3-642-29011-4_14}
  {\path{doi:10.1007/978-3-642-29011-4_14}}.

\bibitem[HW15]{ITCS:HubWic15}
Pavel Hubacek and Daniel Wichs.
\newblock On the communication complexity of secure function evaluation with
  long output.
\newblock In Tim Roughgarden, editor, {\em ITCS 2015: 6th Conference on
  Innovations in Theoretical Computer Science}, pages 163--172, Rehovot,
  Israel, January~11--13, 2015. Association for Computing Machinery.
\newblock \href {https://doi.org/10.1145/2688073.2688105}
  {\path{doi:10.1145/2688073.2688105}}.

\bibitem[IKNP03]{C:IKNP03}
Yuval Ishai, Joe Kilian, Kobbi Nissim, and Erez Petrank.
\newblock Extending oblivious transfers efficiently.
\newblock In Dan Boneh, editor, {\em Advances in Cryptology -- {CRYPTO}~2003},
  volume 2729 of {\em Lecture Notes in Computer Science}, pages 145--161, Santa
  Barbara, CA, USA, August~17--21, 2003. Springer Berlin Heidelberg, Germany.
\newblock \href {https://doi.org/10.1007/978-3-540-45146-4_9}
  {\path{doi:10.1007/978-3-540-45146-4_9}}.

\bibitem[IL25]{FOCS:IlaLom25}
Rahul Ilango and Alex Lombardi.
\newblock Cryptography meets worst-case complexity: Optimal security and more
  from {iO} and worst-case assumptions.
\newblock In {\em 66th Annual Symposium on Foundations of Computer Science},
  pages 585--614, Sydney, Australia, December~14--17, 2025. {IEEE} Computer
  Society Press.
\newblock \href {https://doi.org/10.1109/FOCS63196.2025.00032}
  {\path{doi:10.1109/FOCS63196.2025.00032}}.

\bibitem[ILM21]{PKC:IshLaiMal21}
Yuval Ishai, Russell W.~F. Lai, and Giulio Malavolta.
\newblock A geometric approach to homomorphic secret sharing.
\newblock In Juan Garay, editor, {\em PKC~2021: 24th International Conference
  on Theory and Practice of Public Key Cryptography, Part~II}, volume 12711 of
  {\em Lecture Notes in Computer Science}, pages 92--119, Virtual Event,
  May~10--13, 2021. Springer, Cham, Switzerland.
\newblock \href {https://doi.org/10.1007/978-3-030-75248-4_4}
  {\path{doi:10.1007/978-3-030-75248-4_4}}.

\bibitem[KK04]{PKC:KosKur04}
Takeshi Koshiba and Kaoru Kurosawa.
\newblock Short exponent {Diffie}-{Hellman} problems.
\newblock In Feng Bao, Robert Deng, and Jianying Zhou, editors, {\em PKC~2004:
  7th International Workshop on Theory and Practice in Public Key
  Cryptography}, volume 2947 of {\em Lecture Notes in Computer Science}, pages
  173--186, Singapore, March~1--4, 2004. Springer Berlin Heidelberg, Germany.
\newblock \href {https://doi.org/10.1007/978-3-540-24632-9_13}
  {\path{doi:10.1007/978-3-540-24632-9_13}}.

\bibitem[LMS18]{AC:LaiMalSch18}
Russell W.~F. Lai, Giulio Malavolta, and Dominique Schr{\"o}der.
\newblock Homomorphic secret sharing for low degree polynomials.
\newblock In Thomas Peyrin and Steven Galbraith, editors, {\em Advances in
  Cryptology -- {ASIACRYPT}~2018, Part~III}, volume 11274 of {\em Lecture Notes
  in Computer Science}, pages 279--309, Brisbane, Queensland, Australia,
  December~2--6, 2018. Springer, Cham, Switzerland.
\newblock \href {https://doi.org/10.1007/978-3-030-03332-3_11}
  {\path{doi:10.1007/978-3-030-03332-3_11}}.

\bibitem[LSSS17]{C:LSSS17}
Beno{\^i}t Libert, Amin Sakzad, Damien Stehl{\'e}, and Ron Steinfeld.
\newblock All-but-many lossy trapdoor functions and selective opening
  chosen-ciphertext security from {LWE}.
\newblock In Jonathan Katz and Hovav Shacham, editors, {\em Advances in
  Cryptology -- {CRYPTO}~2017, Part~III}, volume 10403 of {\em Lecture Notes in
  Computer Science}, pages 332--364, Santa Barbara, CA, USA, August~20--24,
  2017. Springer, Cham, Switzerland.
\newblock \href {https://doi.org/10.1007/978-3-319-63697-9_12}
  {\path{doi:10.1007/978-3-319-63697-9_12}}.

\bibitem[Mac29]{Mac1729}
Colin MacLaurin.
\newblock {{IV}. A Second Letter from Mr. Colin {McLaurin}, Professor of
  Mathematicks in the University of Edinburgh and {F. R. S.} to Martin Folkes,
  Esq.; Concerning the Roots of Equations, with the Demonstration of Other
  Rules in Algebra; Being the Continuation of the Letter Published in the
  {Philosophical Transactions}, No. 394}.
\newblock {\em Philosophical Transactions of the Royal Society of London},
  36(408):59--96, 1729.
\newblock \href {https://doi.org/10.1098/rstl.1729.0011}
  {\path{doi:10.1098/rstl.1729.0011}}.

\bibitem[Mau05]{IMA:Maurer05}
Ueli~M. Maurer.
\newblock Abstract models of computation in cryptography (invited paper).
\newblock In Nigel~P. Smart, editor, {\em 10th IMA International Conference on
  Cryptography and Coding}, volume 3796 of {\em Lecture Notes in Computer
  Science}, pages 1--12, Cirencester, UK, December~19--21, 2005. Springer
  Berlin Heidelberg, Germany.
\newblock \href {https://doi.org/10.1007/11586821_1}
  {\path{doi:10.1007/11586821_1}}.

\bibitem[MMN06]{EPRINT:MirMitNis06}
Ilya Mironov, Anton Mityagin, and Kobbi Nissim.
\newblock Hard instances of the constrained discrete logarithm problem.
\newblock Cryptology ePrint Archive, Report 2006/253, 2006.
\newblock URL: \url{https://eprint.iacr.org/2006/253}.

\bibitem[Mun77]{Mun1977}
A.~G. Munford.
\newblock {A Note on the Uniformity Assumption in the Birthday Problem}.
\newblock {\em The American Statistician}, 31(3):119, 1977.
\newblock \href {https://doi.org/10.1080/00031305.1977.10479214}
  {\path{doi:10.1080/00031305.1977.10479214}}.

\bibitem[MY10]{PKC:MolYil10}
Petros Mol and Scott Yilek.
\newblock Chosen-ciphertext security from slightly lossy trapdoor functions.
\newblock In Phong~Q. Nguyen and David Pointcheval, editors, {\em PKC~2010:
  13th International Conference on Theory and Practice of Public Key
  Cryptography}, volume 6056 of {\em Lecture Notes in Computer Science}, pages
  296--311, Paris, France, May~26--28, 2010. Springer Berlin Heidelberg,
  Germany.
\newblock \href {https://doi.org/10.1007/978-3-642-13013-7_18}
  {\path{doi:10.1007/978-3-642-13013-7_18}}.

\bibitem[Nao03]{C:Naor03}
Moni Naor.
\newblock On cryptographic assumptions and challenges (invited talk).
\newblock In Dan Boneh, editor, {\em Advances in Cryptology -- {CRYPTO}~2003},
  volume 2729 of {\em Lecture Notes in Computer Science}, pages 96--109, Santa
  Barbara, CA, USA, August~17--21, 2003. Springer Berlin Heidelberg, Germany.
\newblock \href {https://doi.org/10.1007/978-3-540-45146-4_6}
  {\path{doi:10.1007/978-3-540-45146-4_6}}.

\bibitem[Nec94]{Nechaev94}
Vassiliy~Ilyich Nechaev.
\newblock Complexity of a determinate algorithm for the discrete logarithm.
\newblock {\em Mathematical Notes}, 55(2):165--172, February 1994.
\newblock \href {https://doi.org/10.1007/BF02113297}
  {\path{doi:10.1007/BF02113297}}.

\bibitem[OPWW15]{AC:OPWW15}
Tatsuaki Okamoto, Krzysztof Pietrzak, Brent Waters, and Daniel Wichs.
\newblock New realizations of somewhere statistically binding hashing and
  positional accumulators.
\newblock In Tetsu Iwata and Jung~Hee Cheon, editors, {\em Advances in
  Cryptology -- {ASIACRYPT}~2015, Part~I}, volume 9452 of {\em Lecture Notes in
  Computer Science}, pages 121--145, Auckland, New Zealand,
  November~30~--~December~3, 2015. Springer Berlin Heidelberg, Germany.
\newblock \href {https://doi.org/10.1007/978-3-662-48797-6_6}
  {\path{doi:10.1007/978-3-662-48797-6_6}}.

\bibitem[OSY21]{EC:OrlSchYak21}
Claudio Orlandi, Peter Scholl, and Sophia Yakoubov.
\newblock The rise of paillier: Homomorphic secret sharing and public-key
  silent {OT}.
\newblock In Anne Canteaut and Fran\c{c}ois-Xavier Standaert, editors, {\em
  Advances in Cryptology -- {EUROCRYPT}~2021, Part~I}, volume 12696 of {\em
  Lecture Notes in Computer Science}, pages 678--708, Zagreb, Croatia,
  October~17--21, 2021. Springer, Cham, Switzerland.
\newblock \href {https://doi.org/10.1007/978-3-030-77870-5_24}
  {\path{doi:10.1007/978-3-030-77870-5_24}}.

\bibitem[Pai99]{EC:Paillier99}
Pascal Paillier.
\newblock Public-key cryptosystems based on composite degree residuosity
  classes.
\newblock In Jacques Stern, editor, {\em Advances in Cryptology --
  {EUROCRYPT}'99}, volume 1592 of {\em Lecture Notes in Computer Science},
  pages 223--238, Prague, Czech Republic, May~2--6, 1999. Springer Berlin
  Heidelberg, Germany.
\newblock \href {https://doi.org/10.1007/3-540-48910-X_16}
  {\path{doi:10.1007/3-540-48910-X_16}}.

\bibitem[Pol78]{Pol1978}
John~M. Pollard.
\newblock {Monte Carlo Methods for Index Computation (Mod P)}.
\newblock {\em Mathematics of Computation}, 32(143):918--924, July 1978.
\newblock \href {https://doi.org/10.2307/2006496} {\path{doi:10.2307/2006496}}.

\bibitem[Pol00]{JC:Pollard00}
John~M. Pollard.
\newblock Kangaroos, monopoly and discrete logarithms.
\newblock {\em Journal of Cryptology}, 13(4):437--447, September 2000.
\newblock \href {https://doi.org/10.1007/s001450010010}
  {\path{doi:10.1007/s001450010010}}.

\bibitem[PRS12]{TCC:PieRosSeg12}
Krzysztof Pietrzak, Alon Rosen, and Gil Segev.
\newblock Lossy functions do not amplify well.
\newblock In Ronald Cramer, editor, {\em TCC~2012: 9th Theory of Cryptography
  Conference}, volume 7194 of {\em Lecture Notes in Computer Science}, pages
  458--475, Taormina, Sicily, Italy, March~19--21, 2012. Springer Berlin
  Heidelberg, Germany.
\newblock \href {https://doi.org/10.1007/978-3-642-28914-9_26}
  {\path{doi:10.1007/978-3-642-28914-9_26}}.

\bibitem[PS98]{C:PatSun98}
Sarvar Patel and Ganapathy~S. Sundaram.
\newblock An efficient discrete log pseudo random generator.
\newblock In Hugo Krawczyk, editor, {\em Advances in Cryptology --
  {CRYPTO}'98}, volume 1462 of {\em Lecture Notes in Computer Science}, pages
  304--317, Santa Barbara, CA, USA, August~23--27, 1998. Springer Berlin
  Heidelberg, Germany.
\newblock \href {https://doi.org/10.1007/BFb0055737}
  {\path{doi:10.1007/BFb0055737}}.

\bibitem[PW08]{STOC:PeiWat08}
Chris Peikert and Brent Waters.
\newblock Lossy trapdoor functions and their applications.
\newblock In Richard~E. Ladner and Cynthia Dwork, editors, {\em 40th Annual
  {ACM} Symposium on Theory of Computing}, pages 187--196, Victoria, BC,
  Canada, May~17--20, 2008. {ACM} Press.
\newblock \href {https://doi.org/10.1145/1374376.1374406}
  {\path{doi:10.1145/1374376.1374406}}.

\bibitem[QWW18]{FOCS:QuaWeeWic18}
Willy Quach, Hoeteck Wee, and Daniel Wichs.
\newblock Laconic function evaluation and applications.
\newblock In Mikkel Thorup, editor, {\em 59th Annual Symposium on Foundations
  of Computer Science}, pages 859--870, Paris, France, October~7--9, 2018.
  {IEEE} Computer Society Press.
\newblock \href {https://doi.org/10.1109/FOCS.2018.00086}
  {\path{doi:10.1109/FOCS.2018.00086}}.

\bibitem[QWW21]{C:QuaWatWic21}
Willy Quach, Brent Waters, and Daniel Wichs.
\newblock Targeted lossy functions and applications.
\newblock In Tal Malkin and Chris Peikert, editors, {\em Advances in Cryptology
  -- {CRYPTO}~2021, Part~IV}, volume 12828 of {\em Lecture Notes in Computer
  Science}, pages 424--453, Virtual Event, August~16--20, 2021. Springer, Cham,
  Switzerland.
\newblock \href {https://doi.org/10.1007/978-3-030-84259-8_15}
  {\path{doi:10.1007/978-3-030-84259-8_15}}.

\bibitem[RS21]{C:RoySin21}
Lawrence Roy and Jaspal Singh.
\newblock Large message homomorphic secret sharing from {DCR} and applications.
\newblock In Tal Malkin and Chris Peikert, editors, {\em Advances in Cryptology
  -- {CRYPTO}~2021, Part~III}, volume 12827 of {\em Lecture Notes in Computer
  Science}, pages 687--717, Virtual Event, August~16--20, 2021. Springer, Cham,
  Switzerland.
\newblock \href {https://doi.org/10.1007/978-3-030-84252-9_23}
  {\path{doi:10.1007/978-3-030-84252-9_23}}.

\bibitem[Sha71]{Sha1971}
Daniel Shanks.
\newblock {Class Number, a Theory of Factorization, and Genera}.
\newblock In {\em 1969~{N}umber Theory Institute}, volume~20 of {\em
  Proceedings of Symposia in Pure Mathematics}, pages 415--440. American
  Mathematical Society, 1971.
\newblock \href {https://doi.org/10.1090/pspum/020/0316385}
  {\path{doi:10.1090/pspum/020/0316385}}.

\bibitem[Sho97]{EC:Shoup97}
Victor Shoup.
\newblock Lower bounds for discrete logarithms and related problems.
\newblock In Walter Fumy, editor, {\em Advances in Cryptology --
  {EUROCRYPT}'97}, volume 1233 of {\em Lecture Notes in Computer Science},
  pages 256--266, Konstanz, Germany, May~11--15, 1997. Springer Berlin
  Heidelberg, Germany.
\newblock \href {https://doi.org/10.1007/3-540-69053-0_18}
  {\path{doi:10.1007/3-540-69053-0_18}}.

\bibitem[vOW96]{EC:vOoWie96}
Paul~C. van Oorschot and Michael~J. Wiener.
\newblock On {Diffie}-{Hellman} key agreement with short exponents.
\newblock In Ueli~M. Maurer, editor, {\em Advances in Cryptology --
  {EUROCRYPT}'96}, volume 1070 of {\em Lecture Notes in Computer Science},
  pages 332--343, Saragossa, Spain, May~12--16, 1996. Springer Berlin
  Heidelberg, Germany.
\newblock \href {https://doi.org/10.1007/3-540-68339-9_29}
  {\path{doi:10.1007/3-540-68339-9_29}}.

\bibitem[vzGS13]{VS2013}
Joachim von~zur Gathen and Igor~E. Shparlinski.
\newblock {Generating Safe Primes}.
\newblock {\em Journal of Mathematical Cryptology}, 7(4):333--365, 2013.
\newblock \href {https://doi.org/10.1515/jmc-2013-5011}
  {\path{doi:10.1515/jmc-2013-5011}}.

\bibitem[Web95]{EC:Weber95}
Damian Weber.
\newblock An implementation of the general number field sieve to compute
  discrete logarithms mod p.
\newblock In Louis~C. Guillou and Jean-Jacques Quisquater, editors, {\em
  Advances in Cryptology -- {EUROCRYPT}'95}, volume 921 of {\em Lecture Notes
  in Computer Science}, pages 95--105, Saint-Malo, France, May~21--25, 1995.
  Springer Berlin Heidelberg, Germany.
\newblock \href {https://doi.org/10.1007/3-540-49264-X_8}
  {\path{doi:10.1007/3-540-49264-X_8}}.

\bibitem[Wee12]{EC:Wee12}
Hoeteck Wee.
\newblock Dual projective hashing and its applications - lossy trapdoor
  functions and more.
\newblock In David Pointcheval and Thomas Johansson, editors, {\em Advances in
  Cryptology -- {EUROCRYPT}~2012}, volume 7237 of {\em Lecture Notes in
  Computer Science}, pages 246--262, Cambridge, UK, April~15--19, 2012.
  Springer Berlin Heidelberg, Germany.
\newblock \href {https://doi.org/10.1007/978-3-642-29011-4_16}
  {\path{doi:10.1007/978-3-642-29011-4_16}}.

\bibitem[YYHK16]{C:YYHK16}
Takashi Yamakawa, Shota Yamada, Goichiro Hanaoka, and Noboru Kunihiro.
\newblock Adversary-dependent lossy trapdoor function from hardness of
  factoring semi-smooth {RSA} subgroup moduli.
\newblock In Matthew Robshaw and Jonathan Katz, editors, {\em Advances in
  Cryptology -- {CRYPTO}~2016, Part~II}, volume 9815 of {\em Lecture Notes in
  Computer Science}, pages 3--32, Santa Barbara, CA, USA, August~14--18, 2016.
  Springer Berlin Heidelberg, Germany.
\newblock \href {https://doi.org/10.1007/978-3-662-53008-5_1}
  {\path{doi:10.1007/978-3-662-53008-5_1}}.

\bibitem[Zha16]{C:Zhandry16}
Mark Zhandry.
\newblock The magic of {ELFs}.
\newblock In Matthew Robshaw and Jonathan Katz, editors, {\em Advances in
  Cryptology -- {CRYPTO}~2016, Part~I}, volume 9814 of {\em Lecture Notes in
  Computer Science}, pages 479--508, Santa Barbara, CA, USA, August~14--18,
  2016. Springer Berlin Heidelberg, Germany.
\newblock \href {https://doi.org/10.1007/978-3-662-53018-4_18}
  {\path{doi:10.1007/978-3-662-53018-4_18}}.

\bibitem[Zha19]{EC:Zhandry19a}
Mark Zhandry.
\newblock On {ELFs}, deterministic encryption, and correlated-input security.
\newblock In Yuval Ishai and Vincent Rijmen, editors, {\em Advances in
  Cryptology -- {EUROCRYPT}~2019, Part~III}, volume 11478 of {\em Lecture Notes
  in Computer Science}, pages 3--32, Darmstadt, Germany, May~19--23, 2019.
  Springer, Cham, Switzerland.
\newblock \href {https://doi.org/10.1007/978-3-030-17659-4_1}
  {\path{doi:10.1007/978-3-030-17659-4_1}}.

\end{thebibliography}
\end{document}